\let\cite\citep
\def\moverlay{\mathpalette\mov@rlay}
\def\mov@rlay#1#2{\leavevmode\vtop{%
    \baselineskip\z@skip \lineskiplimit-\maxdimen
    \ialign{\hfil$\m@th#1##$\hfil\cr#2\crcr}}}
\newcommand{\charfusion}[3][\mathord]{
  #1{\ifx#1\mathop\vphantom{#2}\fi
    \mathpalette\mov@rlay{#2\cr#3}
  }
  \ifx#1\mathop\expandafter\displaylimits\fi}
\DeclareRobustCommand\bigop[1]{%
  \mathop{\vphantom{\sum}\mathpalette\bigop@{#1}}\slimits@
}
\newcommand{\bigop@}[2]{%
  \vcenter{%
    \sbox\z@{$#1\sum$}%
    \hbox{\resizebox{\ifx#1\displaystyle.9\fi\dimexpr\ht\z@+\dp\z@}{!}{$\m@th#2$}}%
  }%
}
\newcommand{\PROBLEM}[1]{{\sc #1}}
\newtheorem{xmpl}{Example}{\bfseries\itshape}{\slshape}
\newtheorem{Xclaim}{Claim}{\itshape}{\rmfamily}
\newtheorem{fact}{Observation}{\bfseries}{\itshape}
\newenvironment{claim-proof}%
{\begin{description}[leftmargin = 0.2cm, labelsep = 0.2cm]
\item[\emph{Proof:}]}{\hfill$\diamond$\end{description}}
\DeclareMathOperator*{\bigjoin}{\bigop{\triangledown}}
\DeclareMathOperator{\join}{\triangledown}
\newcommand{\cupdot}{\charfusion[\mathbin]{\cup}{\cdot}}
\DeclareMathOperator*{\bigcupdot}{\charfusion[\mathop]{\bigcup}{\cdot}}
\DeclareMathOperator{\child}{child}
\DeclareMathOperator{\parent}{par}
\DeclareMathOperator{\lca}{lca}
\DeclareMathOperator{\Aho}{Aho}
\newcommand{\Tri}{\ensuremath{\mathfrak{T}}}
\newcommand{\Sri}{\ensuremath{\mathfrak{S}}}
\newcommand{\mfscen}{\ensuremath{\mathscr{T}}}
\newcommand{\scen}{\ensuremath{\mathscr{S}}}
\newcommand{\rel}{\ensuremath{\mathfrak{R}}}
\newcommand{\partS}{\mathcal{C}_{S}}
\newcommand{\tT}{\ensuremath{\tau_{T}}}
\newcommand{\tS}{\ensuremath{\tau_{S}}}
\newcommand{\Gu}{\ensuremath{G_{_{<}}}} 
\newcommand{\gfitch}{\digamma}
\newcommand{\auxfitch}{\mathscr{A}_{\gfitch}} 
\newenvironment{ctheorem}[1]
  {\innercustomthm}
  {\endinnercustomthm}
\newenvironment{clemma}[1]
  {\innercustomlem}
  {\endinnercustomlem}
\newenvironment{cdefinition}[1]
  {\innercustomdef}
  {\endinnercustomdef}
\newenvironment{cproposition}[1]
  {\innercustomprop}
  {\endinnercustomprop}
\newenvironment{ccorollary}[1]
  {\innercustomcor}
  {\endinnercustomcor}
\newenvironment{cfact}[1]
  {\innercustomfact}
  {\endinnercustomfact}
\newtheorem{theorem}{Theorem}
\newtheorem{lemma}{Lemma}
\newtheorem{corollary}{Corollary}
\newtheorem{definition}{Definition}
\newtheorem{proposition}{Proposition}
\newtheorem{problem}{Problem}
\newtheorem{remark}{Remark}
\providecommand{\keywords}[1]{\textbf{\textit{Keywords: }} #1}
\title{Indirect Identification of Horizontal Gene Transfer}
\author[1,2]{David Schaller}
\author[3]{Manuel Lafond} 
\author[1,2,4-7]{Peter F. Stadler}    
\author[8]{Nicolas Wieseke} 
\author[9,*]{Marc Hellmuth}
\affil[1]{Max Planck Institute for Mathematics in the Sciences,
  Inselstra{\ss}e 22, D-04103 Leipzig, Germany}
\affil[2]{Bioinformatics Group, Department of Computer Science \&
  Interdisciplinary Center for Bioinformatics, Universit{\"a}t Leipzig,
  H{\"a}rtelstra{\ss}e~16--18, D-04107 Leipzig, Germany.}
\affil[3]{  Department of Computer Science, Universit{\'e} de Sherbrooke,
  2500 boul. de l'Universit{\'e} Sherbrooke, Qc, Canada, J1K 2R1}
\affil[4]{German Centre for Integrative Biodiversity Research
  (iDiv) Halle-Jena-Leipzig, Competence Center for Scalable Data Services
  and Solutions Dresden-Leipzig, Leipzig Research Center for Civilization
  Diseases, and Centre for Biotechnology and Biomedicine at Leipzig
  University at Universit{\"a}t Leipzig}
\affil[5]{Institute for Theoretical Chemistry, University of Vienna,
  W{\"a}hringerstrasse 17, A-1090 Wien, Austria}
\affil[6]{Facultad de Ciencias, Universidad National de Colombia, Sede
  Bogot{\'a}, Colombia}
\affil[7]{Santa Fe Insitute, 1399 Hyde Park Rd., Santa Fe NM 87501,
  USA}
\affil[8]{Swarm Intelligence and Complex Systems Group, 
  Department of Computer Science, Leipzig University,
  Augustusplatz 10, D-04109 Leipzig, Germany}
\affil[9]{Department of Mathematics, Faculty of Science, Stockholm University,
  SE - 106 91 Stockholm,   Sweden \newline \texttt{marc.hellmuth@math.su.se}}
\affil[*]{corresponding author}
\date{\ }
\begin{document}

\maketitle 

\abstract{  
Several implicit methods to infer Horizontal Gene Transfer (HGT) focus on
pairs of genes that have diverged only after the divergence of the two
species in which the genes reside. This situation defines the edge set of
a graph, the later-divergence-time (LDT) graph, whose vertices correspond
to genes colored by their species. We investigate these graphs in the
setting of relaxed scenarios, i.e., evolutionary scenarios that
encompass all commonly used variants of duplication-transfer-loss
scenarios in the literature. We characterize LDT graphs as a subclass of
properly vertex-colored cographs, and provide a polynomial-time
recognition algorithm as well as an algorithm to construct a relaxed
scenario that explains a given LDT. An edge in an LDT graph implies that
the two corresponding genes are separated by at least one HGT event. The
converse is not true, however. We show that the complete xenology
relation is described by an rs-Fitch graph, i.e., a complete multipartite
graph satisfying constraints on the vertex coloring. This class of
vertex-colored graphs is also recognizable in polynomial time. We finally
address the question ``how much information about all HGT events is
contained in LDT graphs'' with the help of simulations of evolutionary
scenarios with a wide range of duplication, loss, and HGT events. In
particular, we show that a simple greedy graph editing scheme can be used
to efficiently detect HGT events that are implicitly contained in LDT
graphs.
}

\bigskip
\noindent
\keywords{
gene families; xenology; binary relation; indirect phylogenetic methods; 
horizontal gene transfer; Fitch graph; later-divergence-time; polynomial-time recognition algorithm}

\sloppy

\section{Introduction}

Horizontal gene transfer (HGT) laterally introduces foreign genetic
material into a genome. The phenomenon is particularly frequent in
prokaryotes \cite{Soucy:15,NelsonSathi:15} but also contributed to shaping
eukaryotic genomes
\cite{Keeling:08,Husnik:18,Acuna:12,Li:14,Moran:10,Schonknecht:13}. HGT may
be additive, in which case its effect is similar to gene duplications, or
lead to the replacement of a vertically inherited homolog. From a
phylogenetic perspective, HGT leads to an incongruence of gene trees and
species trees, thus complicating the analysis of gene family histories.

A broad spectrum of computational methods have been developed to identify
horizontally transferred genes and/or HGT events, recently reviewed by
\citet{Ravenhall:15}.  Parametric methods use genomic signatures, i.e.,
sequence features specific to a (group of) species identify horizontally
inserted material. Genomic signatures include e.g.\ GC content, $k$-mer
distributions, sequence autocorrelation, or DNA deformability
\cite{Dufraigne:05,Becq:10}.  Direct (or ``explicit'') phylogenetic methods
start from a given gene tree $T$ and species tree $S$ and compute a
reconciliation, i.e., a mapping of the gene tree into the species
tree. This problem first arose in the context of host/parasite assemblages
\cite{Page:94,Charleston:98} considering the equivalent problem of mapping
a parasite tree $T$ to a host phylogeny $S$ such that the number of
events such as host-switches, i.e., horizontal transfers, is minimized. For
a review of the early literature we refer to \cite{Charleston:06}. A major
difficulty is to enforce time consistency in the presence of multiple
horizontal transfer events, which renders the problem of finding optimal
reconciliations NP-hard \cite{Hallett:01,Ovadia:11,Tofigh:11,Hasic:19}.
Nevertheless several practical approaches have become available, see e.g.\
\cite{Tofigh:11,Chen:12,Ma:18}.

Indirect (or ``implicit'') phylogenetic methods forego the reconstruction
of trees and start from sequence similarity or evolutionary distances and
use unexpectedly small or large distances between genes as indicators of
HGT. While indirect methods have been used successfully in the past,
reviewed by \citet{Ravenhall:15}, they have received very little attention
from a more formal point of view. In this contribution, we focus on a
particular type of implicit phylogenetic information, following the ideas
of \citet{Novichkov:04}. The basic idea is that the evolutionary distance
between orthologous genes is approximately proportional to the distances
between their species. Xenologous gene pairs as well as duplicate genes
thus appear as outliers
\cite{Lawrence:92,Clarke:02,Novichkov:04,Dessimoz:08}. More precisely,
consider a family of homologous genes in a set of species and plot the
phylogenetic distance of pairs of most similar homologs as a function of
the phylogenetic distances between the species in which they
reside. Since distances between orthologous genes can be expected to be
approximately proportional to the distances between the species,
orthologous pairs fall onto a regression line that defines equal
divergence time for the last common ancestor of corresponding gene and
species pairs. The gene pairs with ``later divergence times'', i.e.,
those that are more closely related than expected from their species,
fall below the regression line \cite{Novichkov:04}. \citet{Kanhere:09}
complemented this idea with a statistical test based on the Cook distance
to identify xenologous pairs in a statistically sound manner. For the
mathematical analysis we assume that we can perfectly identify all pairs
of genes $a$ and $b$ that are more closely related than expected from the
phylogenetic distance of their respective genomes. Naturally, this defines
a graph $(G,\sigma)$, whose vertices $x$ (the genes) are colored by the
species $\sigma(x)$ in which they appear. Here, we are interested in two
questions:
\begin{enumerate}
  \item[(1)]  What are the
  mathematical properties that characterize these
  ``\emph{later-divergence-time}'' (\emph{LDT}) graphs? 
  \item[(2)] What kind of
  information about HGT events, the gene and species tree, and the
  reconciliation map between them is contained implicitly in an LDT graph?
\end{enumerate}
In Sec.~\ref{sect:edit} we will briefly consider the situation that
later-divergence-time information is fraught with experimental errors.

These questions are motivated by a series of recent publications that
characterized the mathematical structure of orthology
\cite{Hellmuth:13a,Lafond:14}, the xenology relation \textit{sensu} Fitch
\cite{Geiss:18a,Hellmuth:18a,Hellmuth:2019a}, and the (reciprocal) best
match relation \cite{Geiss:19a,Geiss:20a,Schaller:20x,Schaller:21g}. Each
of these relations satisfies stringent mathematical conditions that -- at
least in principle -- can be used to correct empirical estimates and thus
serve as a potential means of noise reduction
\cite{Hellmuth:15a,Stadler:20a}. This approach has also lead to efficient
algorithms to extract gene trees, species trees, and reconciliations from
the relation data. Although the resulting representations of gene family
histories are usually not fully resolved, they can provide important
constraints for subsequent refinements. The advantage of the relation-based
approach is primarily robustness. While the inference of phylogenetic trees
relies on detailed probability models or the additivity of distance
metrics, our approach starts from yes/no answers to simple, pairwise
comparisons.  These data can therefore be represented as edges in a graph,
possibly augmented by a measure of confidence. Noise and inaccuracies in
the initial estimates then translate into violations of the required
mathematical properties of the graphs in question. Graph editing approaches
can therefore be harnessed as a means of noise reduction
\cite{Hellmuth:15a,dondi2017approximating,Lafond:14,Lafond:16,
  Hellmuth:20a,Hellmuth:20b,Schaller:20y}.

Previous work following this paradigm has largely been confined to
duplication-loss (DL) scenarios, excluding horizontal transfer. As shown in
\cite{Hellmuth:2017}, it is possible to partition a gene set into HGT-free
classes separated by HGTs. Within each class, the reconstruction problems
then simplify to the much easier DL scenarios. It is of utmost interest,
therefore, to find robust methods to infer this partition directly from
(dis)similarity data. Here, we explore the usefulness and limitations of
LDT graphs for this purpose.

This contribution is organized as follows. After introducing the necessary
notation, we introduce \emph{relaxed scenarios}, a very general
framework to describe evolutionary scenarios that emphasizes time
consistency of reconciliation rather than particular types of evolutionary
events. In Sec.~\ref{sect:LDT}, LDT graphs are defined formally and
characterized as those properly colored cographs for which a set of
accompanying rooted triples is consistent
(Thm.~\ref{thm:characterization}). The proof is constructive and provides a
method (Algorithm~\ref{alg:Ru-recognition}) to compute a  relaxed 
scenario for a given LDT graph. Sec.~\ref{sect:HGT} defines HGT events,
shows that every edge in a LDT graph corresponds to an HGT event, and
characterizes those LDT graphs that already capture all HGT events. In
addition, we provide a characterization of ``rs-Fitch graphs'' (general
vertex-colored graphs that capture all HGT events) in terms of their
coloring.  These properties can be verified in polynomial time.  Since LDT
graphs do not usually capture all HGT events, we discuss in
Sec.~\ref{app:edit} several ways to obtain a plausible set of HGT
candidates from LDT graphs. In Sec.~\ref{sect:simul}, we address the
question ``how much information about all HGT events is contained in LDT
graphs'' with the help of simulations of evolutionary scenarios with a wide
range of duplication, loss, and HGT events. We find that LDT graphs cover
roughly a third of xenologous pairs, while a simple greedy graph editing
scheme can more than double the recall at moderate false positive
rates. This greedy approach already yields a median accuracy of $89 \%$,
and in $99.8 \%$ of the cases produces biologically feasible solutions in
the sense that the inferred graphs are rs-Fitch graphs. We close with a
discussion of several open problems and directions for future research in
Sec.~\ref{sect:concl}.

The material of this contribution is extensive and contains several
lengthy, very technical proofs. We therefore divided the presentation into
a Narrative Part that contains only those mathematical results that
contribute to our main conclusions, and a Technical Part providing
additional results and all proofs. To facilitate cross-referencing between
the two parts, the same numbering of Definitions, Lemmas, Theorems, etc.,
is used. Sections \ref{TP:sect:LDT}, \ref{TP:sect:HGT}, and \ref{app:edit}
contain the technical material corresponding to Sections \ref{sect:LDT},
\ref{sect:HGT}, and \ref{sect:edit}, respectively.

\section{Notation}

\paragraph{Graphs.} 
We consider undirected graphs $G=(V,E)$ with vertex set
$V(G)\coloneqq V$ and edge set $E(G)\coloneqq E$, and denote edges
connecting vertices $x,y\in V$ by $xy$. The graphs $K_1$ and $K_2$
denote the complete graphs on one and two vertices, respectively. The
graph $K_2+K_1$ is the disjoint union of a $K_2$ and a $K_1$.

The join $G\join H$ of two graphs $G=(V,E)$ and $H=(W,F)$ is the graph
with vertex set $V\cupdot W$ and edge set
$E\cupdot F\cupdot \{xy\mid x\in V,y\in W\}$.  We write $H\subseteq G$
if $V(H)\subseteq V(G)$ and $E(H)\subseteq E(G)$, in which case $H$ is
called a \emph{subgraph of $G$}.  Given a graph $G=(V,E)$, we write $G[W]$
for the graph induced by $W\subseteq V$.  A \emph{connected component} $C$
of $G$ is an inclusion-maximal vertex set such that $G[C]$ is connected. A
\emph{(maximal) clique} $C$ in an undirected graph $G$ is an
(inclusion-maximal) vertex set such that, for all vertices $x,y\in C$, it
holds that $xy\in E(G)$, i.e., $G[C]$ is \emph{complete}.  A subset
$W\subseteq V$ is a \emph{(maximal) independent set} if $G[W]$ is edgeless
(and $W$ is maximal w.r.t.\ inclusion).  A graph $G = (V,E)$ is
\emph{complete multipartite} if $V$ consists of $k\ge 1$ pairwise disjoint
independent sets $I_1,\dots, I_k$ and $xy\in E$ if and only if $x\in I_i$
and $y\in I_j$ with $i\neq j$.

A graph $G$ together with a vertex coloring $\sigma$, denoted by
$(G,\sigma)$, is \emph{properly colored} if $uv \in E(G)$ implies
$\sigma(u)\neq \sigma(v)$.  For a coloring $\sigma\colon V\to M$ and a
subset $W\subseteq V$, we write
$\sigma(W) \coloneqq \{\sigma(w)\mid w\in W\}$ for the set of colors that
appear on the vertices in $W$. Throughout, we will need
restrictions of the coloring map $\sigma$.
\begin{definition}
  Let $\sigma\colon L\to M$ be a map, $L'\subseteq L$ and
  $\sigma(L') \subseteq M' \subseteq M$.  Then, the map
  $\sigma_{|L',M'}\colon L'\to M'$ is defined by putting
  $\sigma_{|L',M'}(v) = \sigma(v)$ for all $v\in L'$.  If we only restrict
  the domain of $\sigma$, we just write $\sigma_{|L'}$ instead of
  $\sigma_{|L',M}$.
  \label{def:sigma-restrictions}
\end{definition}
We do neither assume that $\sigma$ nor that its restriction $\sigma_{|L',M'}$ is
surjective.

\paragraph{Rooted Trees.} All trees appearing in this contribution are
rooted in one of their vertices. We write $x \preceq_{T} y$ if $y$ lies on
the unique path from the root to $x$, in which case $y$ is called an
ancestor of $x$, and $x$ is called a descendant of $y$. We may also write
$y \succeq_{T} x$ instead of $x \preceq_{T} y$. We use $x \prec_T y$ for
$x \preceq_{T} y$ and $x \neq y$.  In the latter case, $y$ is a
\emph{strict ancestor} of $x$. If $x \preceq_{T} y$ or $y \preceq_{T} x$,
the vertices $x$ and $y$ are \emph{comparable} and, otherwise,
\emph{incomparable}. We write $L(T)$ for the set of leaves of the tree $T$,
i.e., the $\preceq_T$-minimal vertices and say that $T$ is a tree \emph{on
  $L(T)$}. We write $T(u)$ for the subtree of $T$ rooted in $u$. The
\emph{last common ancestor} of a vertex set $W\subseteq V(T)$ is the
$\preceq_T$-minimal vertex $u\coloneqq \lca_T(W)$ for which $w\preceq_T u$
for all $w\in W$. For brevity we write $\lca_T(x,y)=\lca_T(\{x,y\})$.

We employ the convention that edges $(x,y)$ in a tree are always written
such that $y \preceq_{T} x$ is satisfied.  If $(x,y)$ is an edge in $T$,
then $\parent(y)\coloneqq x$ is the \emph{parent} of $y$, and $y$ the
\emph{child} of $x$. We denote with $\child_T(x)$ the set of all children
of $x$ in $T$. It will be convenient for the discussion below to extend the
ancestor relation $\preceq_T$ on $V$ to the union of the edge and vertex
sets of $T$. More precisely, for a vertex $x\in V(T)$ and an edge
$e=(u,v)\in E(T)$ we put $x \prec_T e$ if and only if $x\preceq_T v$; and
$e \prec_T x$ if and only if $u\preceq_T x$.  In addition, for edges
$e=(u,v)$ and $f=(a,b)$ in $T$ we put $e\preceq_T f$ if and only if
$v \preceq_T b$.

A rooted tree is \emph{phylogenetic} if all vertices that are adjacent to
at least two vertices have at least two children.  A rooted tree $T$ is
planted if its root has degree $1$. In this case, we denote the ``planted
root'' by $0_T$.  In planted phylogenetic trees there is a unique ``planted
edge'' $(0_T,\rho_T)$ where $\rho_T\coloneqq \lca_T(L(T))$.  Note that by
definition $0_T\notin L(T)$.

\emph{Throughout, we will assume that all trees are rooted and
  phylogenetic unless explicitly stated otherwise. Whenever there is no
  danger of confusion, we will refer also to planted phylogenetic trees
  simply as trees.}

The set of \emph{inner vertices} is given by
$V^0(T)\coloneqq V(T)\setminus (L(T)\cup \{0_T\})$.  An edge $(u,v)$ is an
\emph{inner} edge if both vertices $u$ and $v$ are inner vertices and,
otherwise, an \emph{outer} edge. The restriction of $T$ to a subset
$L'\subseteq L(T)$ of leaves, denoted by $T_{|L'}$ is obtained by
identifying the (unique) minimal subtree of $T$ that connects all leaves in
$L'$, and suppressing all vertices with degree two except possibly the root
$\rho_{T_{L'}}=\lca_T(L')$. $T$ \emph{displays} a tree $T'$, in symbols
$T'\le T$, if $T'$ can be obtained from a restriction $T_{|L'}$ of $T$ by a
series of inner edge contractions \cite{Bryant:95}.  If, in addition,
$L(T)=L(T')$, then $T$ is a \emph{refinement} of $T'$.  Throughout this
contribution, we will consider leaf-colored trees $(T,\sigma)$ with
$\sigma$ being defined for $L(T)$ only.

\paragraph{Rooted Triples.} A rooted triple is a tree $T$ on three leaves
and two internal vertices.  We write $ab|c$ for the triple with
$\lca_T(a,b)\prec \lca_T(a,c)=\lca_T(b,c)$. For a set $\mathscr{R}$ of
triples we write
$L(\mathscr{R})\coloneqq
\bigcup_{\mathsf{t}\in\mathscr{R}}L(\mathsf{t})$.  The set $\mathscr{R}$
is \emph{compatible} if there is a tree $T$ with
$L(\mathscr{R}) \subseteq L(T)$ that displays every triple
$\mathsf{t}\in\mathscr{R}$. The construction of such a tree $T$ from a
triple set $\mathscr{R}$ on $L$ makes use of an auxiliary graph that will
play a prominent role in this contribution.
\begin{definition}{\cite{Aho:81}}
  Let $\mathscr{R}$ be a set of rooted triples on the vertex set $L$.  The
  \emph{Aho graph} $[\mathscr{R},L]$ has vertex set $L$ and edge set
  $\{ xy \mid \exists z\in L:\, xy|z \in\mathscr{R}\}$.
\end{definition}
The algorithm \texttt{BUILD} \cite{Aho:81} uses Aho graphs in a top-down
recursion starting from a given set of triples $\mathscr{R}$ and returns
for compatible triple sets $\mathscr{R}$ on $L$ an unambiguously defined
tree $\Aho(\mathscr{R}, L)$ on $L$, which is known as the \emph{Aho
  tree}. \texttt{BUILD} runs in polynomial time. The key property of the
Aho graph that ensures the correctness of \texttt{BUILD} can be stated as
follows:
\begin{proposition}{\cite{Aho:81,Bryant:95}}
  A set of triples $\mathscr{R}$ is compatible if and only if for each
  subset $L\subseteq L(\mathscr{R})$ with $|L|>1$ the graph
  $[\mathscr{R},L]$ is disconnected.
  \label{prop:ahograph}
\end{proposition}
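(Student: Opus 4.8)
The plan is to prove the two implications separately. For the ``only if'' direction I would argue directly from a displaying tree using the $\lca$ structure, while the ``if'' direction amounts to establishing the correctness of the recursive \texttt{BUILD} construction, which I would handle by induction on $|L|$.

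For the forward direction, assume $\mathscr{R}$ is compatible and fix a tree $T$ with $L(\mathscr{R})\subseteq L(T)$ that displays every triple of $\mathscr{R}$. Let $L\subseteq L(\mathscr{R})$ with $|L|>1$ and put $\rho\coloneqq \lca_T(L)$. Since $|L|>1$, the vertex $\rho$ has at least two children whose subtrees each contain a leaf of $L$; assigning each $x\in L$ to the child of $\rho$ below which it lies therefore partitions $L$ into at least two nonempty blocks. I would then show that every edge of $[\mathscr{R},L]$ lies inside a single block: if $xy$ is an edge of $[\mathscr{R},L]$, there is $z\in L$ with $xy|z\in\mathscr{R}$, and since $T$ displays this triple we have $\lca_T(x,y)\prec_T \lca_T(x,z)\preceq_T \rho$, the last inequality because $x,z\in L$. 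Hence $\lca_T(x,y)\prec_T\rho$, so $x$ and $y$ descend from the same child of $\rho$ and lie in the same block. As there are at least two blocks and no edges run between them, $[\mathscr{R},L]$ is disconnected.

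For the converse I would run \texttt{BUILD} and prove both that it never aborts and that its output displays $\mathscr{R}$, by induction on $|L|$ starting from $L=L(\mathscr{R})$. The key bookkeeping observation is that, by the definition of the Aho graph, $[\mathscr{R},L]$ depends only on triples all of whose leaves lie in $L$; consequently, when the recursion descends into a connected component $C$ of $[\mathscr{R},L]$, the graph rebuilt there is exactly $[\mathscr{R},C]$, and since $C\subseteq L(\mathscr{R})$ the hypothesis guarantees it is disconnected whenever $|C|>1$. Thus every Aho graph encountered on a vertex set of size greater than one is disconnected, \texttt{BUILD} splits into at least two strictly smaller components at each step, and the recursion terminates with a tree $T$ on $L(\mathscr{R})$. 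To see that $T$ displays a given triple $ab|c\in\mathscr{R}$, I would locate $a,b,c$ within the top-level partition: the edge $ab$ is present in $[\mathscr{R},L(\mathscr{R})]$, witnessed by $z=c$, so $a,b$ lie in a common component $C_1$. If $c\notin C_1$, then $\lca_T(a,b)$ lies strictly below the root while $\lca_T(a,c)=\lca_T(b,c)$ equals the root, so $ab|c$ is displayed; if $c\in C_1$, then $ab|c$ is a triple of $\mathscr{R}$ with all leaves in $C_1$, and the inductive hypothesis applied to the subtree built on $C_1$ yields that this subtree, and hence $T$, displays $ab|c$.

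I expect the main obstacle to be the reverse direction, and specifically the verification that the disconnectedness hypothesis is inherited by the recursive subproblems. This hinges on the fact that the Aho graph $[\mathscr{R},C]$ formed inside a component $C$ coincides with the graph obtained from the restricted triple set $\{\mathsf{t}\in\mathscr{R}\mid L(\mathsf{t})\subseteq C\}$, which is immediate from the definition but is the linchpin that lets the global ``for each subset $L$'' hypothesis feed the induction. The remaining care is routine: confirming that the $\lca$ relations inside a subtree agree with those in $T$, so that ``displays'' is preserved when passing to the subtree on $C_1$, and handling the base case $|L|=1$, where there are no triples to display.
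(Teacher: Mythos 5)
The paper does not prove this proposition at all---it is quoted as a known result from \citet{Aho:81} and \citet{Bryant:95}---so there is no in-paper argument to compare against; your blind proof is correct and is essentially the standard argument from those sources (the $\lca$-partition of $L$ below $\rho=\lca_T(L)$ for the forward direction, and induction on $|L|$ establishing that \texttt{BUILD} neither aborts nor fails to display any triple for the converse). You also correctly isolate and discharge the one step that genuinely needs care: that the Aho graph recomputed inside a component $C$ coincides with $[\mathscr{R},C]$ for the restricted triple set $\{\mathsf{t}\in\mathscr{R}\mid L(\mathsf{t})\subseteq C\}$, which is what allows the global ``for each subset $L\subseteq L(\mathscr{R})$'' hypothesis to feed the recursion.
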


\paragraph{Cographs} are recursively defined as undirected graphs that can
be generated as joins or disjoint unions of cographs, starting from
single-vertex graphs $K_1$. The recursive construction defines a rooted
tree $(T,t)$, called \emph{cotree}, whose leaves are the vertices of the
cograph $G$, i.e., the $K_1$s, while each of its inner vertices $u$ of $T$
represent the join or disjoint union operations, labeled as $t(u)=1$ and
$t(u)=0$, respectively. Hence, for a given cograph $G$ and its cotree
$(T,t)$, we have $xy\in E(G)$ if and only if $t(\lca_T(x,y))=1$.
Contraction of all tree edges $(u,v)\in E(T)$ with $t(u)=t(v)$ results in
the \emph{discriminating cotree} $(T_G,\hat t)$ of $G$ with cotree-labeling
$\hat t$ such that $\hat t(u)\ne \hat t(v)$ for any two adjacent interior
vertices of $T_G$. The discriminating cotree $(T_G,\hat t)$ is uniquely
determined by $G$ \cite{Corneil:81}. Cographs have a large number of
equivalent characterizations. In this contribution, we will need the
following classical results:
\begin{proposition}{\cite{Corneil:81}}
  Given an undirected graph $G$, the following statements are equivalent:
  \begin{enumerate}
    \item $G$ is a cograph.
    \item $G$ does not contain a $P_4$, i.e., a path on four vertices, as an
    induced subgraph.
    \item $\mathrm{diam}(H) \leq 2$ for all connected induced subgraphs $H$
    of $G$.
    \item Every induced subgraph $H$ of $G$ is a cograph.
  \end{enumerate}
  \label{prop:cograph}
\end{proposition}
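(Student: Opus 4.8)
The plan is to establish the four statements as equivalent by proving a short cycle of implications, grouping the routine steps and isolating the one genuinely substantive direction. I would first dispatch the purely metric equivalence $(2)\Leftrightarrow(3)$. For $(3)\Rightarrow(2)$, note that an induced $P_4$ is itself a connected induced subgraph of diameter $3$, immediately violating $(3)$. For $(2)\Rightarrow(3)$, suppose some connected induced subgraph $H$ had $\mathrm{diam}(H)\ge 3$; then the first four vertices of a shortest path between two vertices at distance $\ge 3$ induce a $P_4$, since a shortest path admits no chords, contradicting $(2)$.

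Next I would handle $(1)\Leftrightarrow(4)$. The implication $(4)\Rightarrow(1)$ is immediate, as $G$ is an induced subgraph of itself. For $(1)\Rightarrow(4)$ I would argue from the cotree: deleting a leaf from the cotree $(T,t)$ and suppressing any resulting degree-two vertex yields a valid cotree for the graph obtained by removing the corresponding vertex, so the class of cographs is closed under vertex deletion and hence under passing to induced subgraphs.

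It remains to tie the pair $\{(1),(4)\}$ to the pair $\{(2),(3)\}$. The direction $(1)\Rightarrow(2)$ I would prove by induction along the cotree: $K_1$ contains no $P_4$, and if $G$ is a disjoint union $G_1\cup G_2$ or a join $G_1\join G_2$ of $P_4$-free cographs, then any putative induced $P_4$ must have all four of its vertices in a single factor. In the disjoint-union case this holds because $P_4$ is connected; in the join case it holds because the three non-adjacent pairs of a $P_4$ each force their endpoints into the same part, collapsing all four vertices into one part. Either way this contradicts the inductive hypothesis.

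The hard part is $(2)\Rightarrow(1)$. Here I would invoke the structural lemma that a $P_4$-free graph on at least two vertices is disconnected or has disconnected complement; this is the crux, established by showing that if both $G$ and $\overline{G}$ are connected then $G$ must contain an induced $P_4$ (for instance via a distance or BFS-layering argument exploiting that the non-edges of $G$ form the edges of the connected graph $\overline{G}$). Granting this lemma, I would conclude by induction on $|V(G)|$. A single vertex is a cograph. If $G$ is disconnected, each component is $P_4$-free by heredity (every induced subgraph of a $P_4$-free graph is again $P_4$-free) and hence a cograph by induction, so $G$ is their disjoint union. If $G$ is connected, the lemma forces $\overline{G}$ to be disconnected, so $G$ is the join of the subgraphs induced on the vertex sets of the components of $\overline{G}$, each of which is $P_4$-free and therefore a cograph by induction. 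In both cases $G$ is a cograph, closing the cycle and completing the equivalence.
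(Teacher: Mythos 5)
A preliminary remark: the paper does not prove this proposition at all — it is quoted with a citation to \citet{Corneil:81} as a classical result — so your attempt can only be measured against the standard literature proofs, not an in-paper argument.

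Your architecture is sound, and the routine implications are correct and complete: $(2)\Leftrightarrow(3)$ via the observation that shortest paths are chordless, $(1)\Leftrightarrow(4)$ via the cotree (or, equivalently, via the fact that induced subgraphs distribute over disjoint union and join), and $(1)\Rightarrow(2)$ by induction along the cotree, where your use of the three non-edges of a $P_4$ to force all four vertices into one factor of a join is exactly right. The genuine gap sits precisely where you yourself put the weight: the lemma that a $P_4$-free graph on at least two vertices is disconnected or has disconnected complement (equivalently, that a connected, co-connected graph on at least two vertices contains an induced $P_4$) is invoked but never proved, and the parenthetical hint — a ``distance or BFS-layering argument'' — would fail as sketched. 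The reason is that $P_4$-freeness is self-complementary (the complement of a $P_4$ is again a $P_4$), so by your own implication $(2)\Rightarrow(3)$ applied to both $G$ and $\overline{G}$, in the critical case both $G$ and $\overline{G}$ are connected with diameter at most $2$. A shortest-path argument, whether run in $G$ or in $\overline{G}$, only produces an induced $P_4$ when some distance is at least $3$; when both diameters equal $2$ (e.g.\ $G=C_5$, where an induced $P_4$ exists but no chordless path of length $3$ arises as a shortest path) the method terminates after three vertices and yields nothing, and this is exactly the case that carries all the content. Closing it requires a different idea — the standard one is Seinsche's induction: delete a vertex $v$, apply the statement to $G-v$, and perform a case analysis on how $v$ attaches to the components of $G-v$ or of its complement, showing that either the disconnectedness (of the graph or its complement) survives or an induced $P_4$ appears. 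Without this, or an equivalent argument, the implication $(2)\Rightarrow(1)$ — and hence the full equivalence — is not established; note that this lemma is essentially the substance of the cited theorem, so invoking it here is circular rather than a simplification.
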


\section{Relaxed Reconciliation Maps and Relaxed Scenarios}
\label{sec:reconciliation}

\citet{Tofigh:11} and \citet{Bansal:12} define
``Duplication-Transfer-Loss'' (DTL) scenarios in terms of a vertex-only map
$\gamma:V(T)\to V(S)$. The H-trees introduced by
\citet{Gorecki:2010,Gorecki:2012} formalize the same concept in a very
different manner.  A definition of a DTL-like class of scenarios in
terms of a reconciliation map $\mu: V(T)\to V(S)\cup E(S)$ was analyzed by
\citet{Nojgaard:18a}.  For binary trees, the two definitions are
equivalent; for non-binary trees, however, the DTL-scenarios are a proper
subset, see \cite[Fig.~1]{Nojgaard:18a} for an example. Several other
mathematical frameworks have been used in the literature to specify
evolutionary scenarios. Examples include the DLS-trees of
\citet{Gorecki:06}, which can be seen as event-labeled gene trees with
leaves denoting both surviving genes and loss-events, maps
$g:V(S')\to 2^{V(T)}$ from a suitable subdivision $S'$ of the species tree
$S$ to the gene tree as used by \citet{Hallett:01}, and associations of
edges, i.e., subsets of $E(T)\times E(S)$ \cite{Wieseke:13}.

In the presence of HGT, the relationships of gene trees and species are not
only constrained by local conditions corresponding to the admissible local
evolutionary events (duplication, speciation, gene loss, and HGT) but also
by the global condition that the HGT events within each lineage admit a
temporal order \cite{Merkle:05,Gorbunov:09,Tofigh:11}. In order to capture
time consistency from the outset and to establish the mathematical
framework, we consider here trees with explicit timing information
\cite{Merkle:05}. 
\begin{definition}[Time Map]
  \label{def:time-map}
  The map $\tT : V(T) \rightarrow \mathbb{R}$ is a time map for a 
  tree $T$ if $x\prec_T y$ implies $\tT(x)<\tT(y)$ for all $x,y\in V(T)$.
\end{definition}
It is important to note that only \emph{qualitative}, relative timing
information will be used in practice, i.e., we will never need the actual
value of time maps but only information on whether an event pre-dates,
post-dates, or is concurrent with another.  Def.~\ref{def:time-map}
ensures that the ancestor relation $\preceq_T$ and the timing of the
vertices are not in conflict.  For later reference, we provide the
following simple result.
\begin{lemma}
  Given a tree $T$, a time map $\tT$ for $T$ satisfying
  $\tT(x)=\tau_0(x)$ with arbitrary choices of $\tau_0(x)$ for all
  $x\in L(T)$ can be constructed in linear time.
  \label{lem:arbitrary-tT}
\end{lemma}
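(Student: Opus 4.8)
The plan is to construct $\tT$ explicitly by a single bottom-up sweep of the tree and then verify that the resulting map has the two required properties. Since the leaves are exactly the $\preceq_T$-minimal vertices, the prescribed values $\tau_0$ on $L(T)$ impose no mutual constraints among leaves (no leaf is an ancestor of another), so the only work is to assign to each inner vertex a value strictly exceeding those of all its descendants.

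Concretely, I would process the vertices in post-order, so that every vertex is handled only after all of its children. For a leaf $x$ I set $\tT(x) \coloneqq \tau_0(x)$. For an inner vertex (or the planted root) $u$, all children $v\in\child_T(u)$ have already received a value, and I set
\[
  \tT(u) \coloneqq \max_{v\in\child_T(u)} \tT(v) + 1,
\]
the constant $1$ being any fixed positive increment. The leaf condition $\tT(x)=\tau_0(x)$ then holds by construction.

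To verify that $\tT$ is a time map, it suffices to check that $\tT(v)<\tT(u)$ holds for every edge $(u,v)\in E(T)$ (i.e.\ whenever $v$ is a child of $u$), which is immediate from the defining formula. For a general pair with $x\prec_T y$ there is a path $x=z_0\prec_T z_1\prec_T\dots\prec_T z_k=y$ in which each $z_i$ is the parent of $z_{i-1}$; applying the edge inequality along this path and using transitivity of $<$ yields $\tT(x)<\tT(y)$. Hence the point to get right is not the parent--child step but the observation that taking the strict maximum over the \emph{immediate} children already forces strict dominance over the \emph{entire} subtree rooted at $u$, which is precisely what transitivity along the tree path delivers.

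Finally, a post-order traversal visits each vertex exactly once, and the work spent at an inner vertex $u$ is proportional to $|\child_T(u)|$; summing over all vertices gives total cost $O(|V(T)|+|E(T)|)=O(|V(T)|)$ since $T$ is a tree. Thus the construction runs in linear time. I expect no genuine obstacle here: the lemma is elementary, and the only care needed is to phrase the correctness argument so that the ancestor condition for all comparable pairs follows from the parent--child inequalities by transitivity.
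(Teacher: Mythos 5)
Your construction is exactly the paper's proof: a post-order traversal setting $\tT(x)=\tau_0(x)$ on leaves and assigning each inner vertex a value strictly above the maximum over its children (the paper takes an arbitrary $t'>t$ where you take $t+1$), with the same linear-time accounting. Your additional verification that the parent--child inequalities propagate to all comparable pairs by transitivity is a correct elaboration of a step the paper leaves implicit, so the proposal is correct and takes essentially the same route.
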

\begin{proof}
  We traverse $T$ in postorder. If $x$ is a leaf, we set
  $\tT(x)=\tau_0(x)$, and otherwise compute
  $t\coloneqq\max_{u\in\child(x)} \tT(u)$ and set $\tT(x)=t'$ with an
  arbitrary value $t'> t$. Clearly the total effort is
  $O(|V(T)|+|E(T)|)$, and thus also linear in the number of leaves $L(T)$.
\end{proof}
Lemma~\ref{lem:arbitrary-tT} will be useful for the construction of time
maps as it, in particular, allows us to put $\tT(x)=\tT(y)$ for all
$x,y\in L(T)$.

\begin{definition}[Time Consistency]
  \label{def:tc-map}
  Let $T$ and $S$ be two  trees.  A map
  $\mu \colon V(T) \to V(S) \cup E(S)$ is called \emph{time-consistent} if
  there are time maps $\tT$ for $T$ and $\tS$ for $S$ satisfying the
  following conditions for all $u \in V(T)$:
  \begin{description}
    \item[(C1)] If $\mu(u) \in V(S)$, then $\tT(u) = \tS(\mu(u))$.
    \item[(C2)] Else, if $\mu(u) = (x,y) \in E(S)$, then
    $\tS(y)<\tT(u)<\tS(x)$.
  \end{description}
\end{definition}
Conditions (C1) and (C2) ensure that the  reconciliation map $\mu$
preserves time in the following sense: If vertex $u$ of the gene tree is
mapped to a vertex $\mu(u)=v$ in the species tree, then $u$ and $v$
receive the same time stamp by Condition (C1). If $u$ is mapped to an edge
$\mu(u) = (x,y)$, then the time stamp of $u$ falls within the time range
$[\tS(x),\tS(y)]$ of the edge $xy$ in the species tree.
The following definition of reconciliation is designed (1) to be general
enough to encompass the notions of reconciliation that have been studied in
the literature, and (2) to separate the mapping between gene tree and
species tree from specific types of events.  Event types such as
duplication or horizontal transfer therefore are considered here as a
matter of \emph{interpreting} scenarios, not as part of their definition.

\begin{definition}[Relaxed Reconciliation Map] 
  Let $T$ and $S$ be two planted trees with leaf 
  sets $L(T)$ and $L(S)$, respectively and let $\sigma:L(T)\to L(S)$ be a map. A
  map $\mu \colon V(T)\to V(S)\cup E(S)$ is a 
  \emph{relaxed  reconciliation map} for $(T,S,\sigma)$ if the 
  following conditions are satisfied:
  \begin{description}
    \item[(G0)] \emph{Root Constraint.}
    $\mu(x) = 0_{S}$ if and only if $x = 0_{T}$
    \item[(G1)] \emph{Leaf Constraint.}
    $\mu(x)=\sigma(x)$ if and only if $x\in L(T)$.
    \item[(G2)] \emph{Time Consistency Constraint.}  The map $\mu$ is
    time-consistent for some time maps $\tT$ for $T$ and $\tS$ 
    for $S$.
  \end{description}
  \label{def:relaxed-reconc}
\end{definition}

Condition (G0) is used to map the respective planted roots. (G1)
ensures that genes are mapped to the species in which they reside.  (G2)
enforces time consistency. The reconciliation maps most commonly used in
the literature, see e.g.\ \cite{Tofigh:11,Bansal:12}, usually not only
satisfy (G0)--(G2) but also impose additional conditions. We therefore
call the map $\mu$ defined here ``relaxed''.

\begin{definition}[ relaxed  Scenario] 
  The 6-tuple $\scen = (T,S,\sigma,\mu,\tT,\tS)$ is a
  \emph{relaxed scenario}
  if $\mu$ is a relaxed reconciliation
  map for $(T,S,\sigma)$ that satisfies (G2) w.r.t.\ the time maps $\tT$ 
  and $\tS$.
  \label{def:relaxed-scenario}
\end{definition}
By definition, relaxed reconciliation maps are time-consistent.
Moreover, $\tT(x)=\tS(\sigma(x))$ for all $x \in L(T)$ by
Def.~\ref{def:tc-map}(C1) and Def.~\ref{def:relaxed-reconc}(G1,G2).  In
the following we will refer to the map $\sigma:L(T)\to L(S)$ as the
\emph{coloring of $\scen$}.

\section{Later-Divergence-Time Graphs}
\label{sect:LDT}

\subsection{LDT Graphs and $\mu$-free Scenarios}

In the absence of horizontal gene transfer, the last common ancestor of two
species $A$ and $B$ should mark the latest possible time point at which two
genes $a$ and $b$ residing in $\sigma(a)=A$ and $\sigma(b)=B$,
respectively, may have diverged. Situations in which this constraint is
violated are therefore indicative of HGT. To address this issue in some
more detail, we next define ``$\mu$-free scenarios'' that
eventually will lead us to the class of ``LDT graphs'' that contain all
information about genes that diverged after the species in which they
reside.

\begin{cdefinition}{\ref{def:mu-free}}[{$\mathbf{\mu}$}-free scenario]
  Let $T$ and $S$ be planted trees, $\sigma\colon L(T)\to L(S)$ be a map,
  and $\tT$ and $\tS$ be time maps of $T$ and $S$, respectively, such that
  $\tT(x) = \tS(\sigma(x))$ for all $x\in L(T)$.  Then,
  $\mfscen=(T,S,\sigma,\tT,\tS)$ is called a \emph{$\mu$-free scenario}.
\end{cdefinition}

This definition of a scenario without a reconciliation map $\mu$ is
mainly a technical convenience that simplifies the arguments in various
proofs by avoiding the construction of a reconciliation map.  It is
motivated by the observation that the ``later-divergence-time'' of two
genes in comparison with their species is independent from any such
$\mu$. Every relaxed scenario $\scen=(T,S,\sigma,\mu,\tT,\tS)$
implies an underlying $\mu$-free scenario
$\mfscen=(T,S,\sigma,\tT,\tS)$. Statements proved for $\mu$-free
scenarios therefore also hold for relaxed scenarios. Note that, by
Lemma~\ref{lem:arbitrary-tT}, given the time map $\tS$, one can easily
construct a time map $\tT$ such that $\tT(x) = \tS(\sigma(x))$ for all
$x\in L(T)$.  In particular, when constructing relaxed scenarios
explicitly, we may simply choose $\tT(u)=0$ and $\tS(x)=0$ as common time
for all leaves $u\in L(T)$ and $x\in L(S)$. Although not all
$\mu$-free scenarios admit a reconciliation map and thus can be turned into
relaxed scenarios, Lemma~\ref{lem:mfscen} below implies that for every
$\mu$-free scenario $\mfscen$ there is a relaxed scenario with possibly slightly
distorted time maps that encodes the same LDT graph as $\mfscen$.

\begin{cdefinition}{\ref{def:LDTgraph}}[LDT  graph]
  For a $\mu$-free scenario $\mfscen=(T,S,\sigma,\tT,\tS)$, we define
  $\Gu(\mfscen) = \Gu(T,S,\sigma,\tT,\tS) = (V,E)$ as the graph with vertex
  set $V\coloneqq L(T)$ and edge set
  \begin{equation*}
    E \coloneqq \{ab\mid a,b\in L(T),
    \tT(\lca_T(a,b))<\tS(\lca_S(\sigma(a),\sigma(b))). \}
  \end{equation*}
  A vertex-colored graph $(G,\sigma)$ is a \emph{later-divergence-time graph
    (LDT graph)}, if there is a $\mu$-free scenario
  $\mfscen=(T,S,\sigma,\tT,\tS)$  such that
  $G=\Gu(\mfscen)$.  In this case, we say that $\mfscen$ \emph{explains}
  $(G,\sigma)$.
\end{cdefinition}

It is easy to see that the edge set of $\Gu(\mfscen)$ defines an
\emph{undirected} graph and that two genes $a$ and $b$ form an edge if
the divergence time of $a$ and $b$ is strictly less than the divergence
time of the underlying species $\sigma(a)$ and $\sigma(b)$. Moreover,
there are no edges of the form $aa$, since
$\tT(\lca_T(a,a)) = \tT(a) = \tS(\sigma(a))
=\tS(\lca_S(\sigma(a),\sigma(a)))$. Hence $\Gu(\mfscen)$ is a simple graph.

By definition, every relaxed scenario $\scen =(T,S,\sigma,\mu,\tT,\tS)$
satisfies $\tT(x)=\tS(\sigma(x))$ all $x \in L(T)$. Therefore, removing
$\mu$ from $\scen$ yields a $\mu$-free scenario
$\mfscen=(T,S,\sigma,\tT,\tS)$. Thus, we will use the following simplified
notation.
\begin{cdefinition}{\ref{def:Gu-scen}}
  We put $\Gu(\scen) \coloneqq \Gu(T,S,\sigma,\tT,\tS)$ for a given relaxed
  scenario $\scen =(T,S,\sigma,\mu,\tT,\tS)$ and the underlying $\mu$-free
  scenario $(T,S,\sigma,\tT,\tS)$ and say, by slight abuse of notation,
  that $\scen$ \emph{explains} $(\Gu(\scen),\sigma)$.
\end{cdefinition}
The next two results show that the existence of a reconciliation map $\mu$
does not impose additional constraints on LDT graphs.

\begin{clemma}{\ref{lem:mfscen}}
  For every $\mu$-free scenario $\mfscen=(T,S,\sigma,\tT,\tS)$, there is a
  relaxed scenario $\scen=(T,S,\sigma,\mu,\widetilde\tT,\widetilde\tS)$ for
  $T, S$ and $\sigma$ such that
  $(\Gu(\mfscen),\sigma) = (\Gu(\scen), \sigma)$.
\end{clemma}

\begin{ctheorem}{\ref{thm:LDT-scen}}
  $(G,\sigma)$ is an LDT graph if and only if there is a relaxed scenario
  $\scen = (T,S,\sigma,\mu,\tT,\tS)$  such that
  $(G,\sigma) = (\Gu(\scen),\sigma)$.
\end{ctheorem}

\begin{remark}
  From here on, we omit the explicit reference to Lemma~\ref{lem:mfscen}
  and Thm.~\ref{thm:LDT-scen} and assume that the reader is aware of the
  fact that every LDT graph is explained by some relaxed scenario $\scen$
  and that for every $\mu$-free scenario $\mfscen=(T,S,\sigma,\tT,\tS)$,
  there is a relaxed scenario $\scen$ for $T, S$ and $\sigma$ such that
  $(\Gu(\mfscen),\sigma) = (\Gu(\scen), \sigma)$.
\end{remark}

\begin{figure}[t]
  \begin{center}
    \includegraphics[width=0.7\textwidth]{./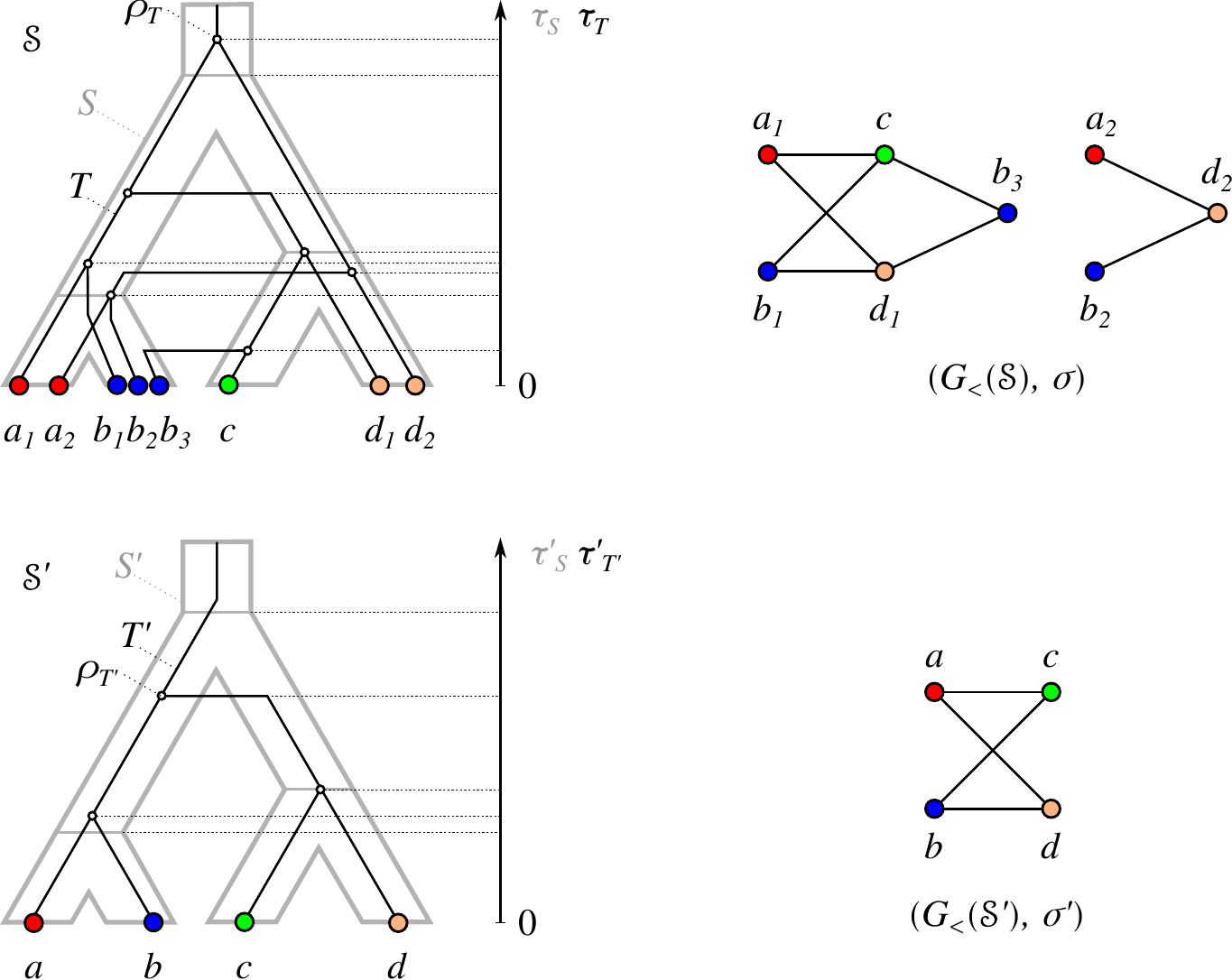}
  \end{center}
  \caption{Top row: A relaxed scenario $\scen =(T,S,\sigma,\mu,\tT,\tS)$
    (left) with its LDT graph $(\Gu(\scen),\sigma)$ (right). The
    reconciliation map $\mu$ is shown implicitly by the embedding of the
    gene tree $T$ into the species tree $S$. The times $\tT$ and $\tS$ are
    indicated by the position on the vertical axis, i.e., if a vertex $x$
    is drawn higher than a vertex $y$, this implies $\tT(y)<\tT(x)$. In
    subsequent figures we will not show the time maps explicitly.  Bottom
    row: Another relaxed scenario $\scen' =(T',S',\sigma',\mu',\tT',\tS')$
    with a connected LDT graph $(\Gu(\scen'),\sigma')$. As we shall see,
    connectedness of an LDT graph depends on the relative timing of the
    roots of the gene and species tree (cf.\ Lemma~\ref{lem:Gu-connected}).
  }
  \label{fig:Gu-example}
\end{figure}

\subsection{Properties of LDT Graphs} 

We continue by deriving several interesting characteristics LDT graphs.
\begin{cproposition}{\ref{prop:properCol}}
  Every LDT graph $(G,\sigma)$ is properly colored.
\end{cproposition}

As we shall see below, LDT graphs $(G,\sigma)$ contain detailed information
about both the underlying gene trees $T$ and species trees $S$ for
\emph{all} $\mu$-scenarios that explain $(G,\sigma)$, and thus by 
Lemma~\ref{lem:mfscen} and Thm.~\ref{thm:LDT-scen} also about every relaxed
scenario $\scen$ satisfying $G=\Gu(\scen)$. This information is encoded in
the form of certain rooted triples that can be retrieved directly from
local features in the colored graphs $(G,\sigma)$.

\begin{cdefinition}{\ref{def:infoTriples}}
  For a graph $G=(L,E)$, we define the set of triples on $L$ as
  \begin{equation*}
    \Tri(G) \coloneqq \{xy|z \; \colon
    x,y,z\in L \text{ are pairwise distinct, }
    xy\in E,\; xz,yz\notin E\} \,.
  \end{equation*}
  If $G$ is endowed with a coloring $\sigma\colon L\to M$ we also define a
  set of color triples
  \begin{align*}
    \Sri(G,\sigma) \coloneqq \{\sigma(x)\sigma(y)|\sigma(z)\; \colon 
    & x,y,z\in L,\, \sigma(x),\sigma(y),\sigma(z) \text{ are pairwise 
      distinct},\\
    &xz, yz\in E,\; xy\notin E\}.
  \end{align*}
\end{cdefinition}

\begin{clemma}{\ref{lem:Ru-SpeciesTriple}}
  If a graph $(G,\sigma)$ is an LDT graph, then $\Sri(G,\sigma)$ is
  compatible and $S$ displays $\Sri(G,\sigma)$ for every $\mu$-free
  scenario $\mfscen=(T,S,\sigma,\tT,\tS)$ that explains $(G,\sigma)$.
\end{clemma}  

The next lemma shows that induced $K_2+K_1$ subgraphs
in LDT graphs imply triples that must be displayed by the gene tree $T$.

\begin{clemma}{\ref{lem:Ru-GeneTriple}}
  If $(G,\sigma)$ is an LDT graph, then $\Tri(G)$ is compatible and $T$
  displays $\Tri(G)$ for every $\mu$-free scenario
  $\mfscen=(T,S,\sigma,\tT,\tS)$ that explains $(G,\sigma)$.
\end{clemma}

The next results shows that LDT graphs cannot contain induced $P_4$s.
\begin{clemma}{\ref{lem:propcolcograph}}
  Every LDT graph $(G,\sigma)$ is a properly colored cograph.
\end{clemma}

The converse of Lemma~\ref{lem:propcolcograph} is not true is in general.
To see this, consider the properly-colored cograph $(G,\sigma)$ with vertex
$V(G)=\{a,a',b,b',c,c'\}$, edges $ab,bc, a'b',a'c' $ and coloring
$\sigma(a)=\sigma(a')=A$, $\sigma(b)=\sigma(b')=B$, and
$\sigma(c)=\sigma(c')=C$ with $A,B,C$ being pairwise distinct. In this
case, $\Sri(G,\sigma)$ contains the triples $AC|B$ and $BC|A$.  By Lemma
\ref{lem:Ru-SpeciesTriple}, the tree $S$ in every $\mu$-free scenario
$\mfscen=(T,S,\sigma,\tT,\tS)$ or relaxed scenario
$\scen=(T,S,\sigma,\mu,\tT,\tS)$ explaining $(G,\sigma)$ displays $AC|B$
and $BC|A$. Since no such scenario can exist, $(G,\sigma)$ is not an LDT
graph.

\subsection{Recognition and Characterization of LDT Graphs}

In order to design an algorithm for the recognition of LDT graphs, we
will consider partitions of the vertex set of a given input graph
$(G=(L,E),\sigma)$. To construct suitable partitions, we start with
the connected components of $G$. The coloring $\sigma\colon L\to M$
imposes additional constraints. We capture these with the help of
binary relations that are defined in terms of partitions $\mathcal{C}$
of the color set $M$ and employ them to further refine the partition of $G$.

\begin{cdefinition}{\ref{def:rel}}
  Let $(G=(L,E),\sigma)$ be a graph with coloring $\sigma\colon L\to M$.
  Let $\mathcal{C}$ be a partition of $M$, and $\mathcal{C}'$ be the set of
  connected components of $G$.  We define the following binary relation
  $\rel(G, \sigma, \mathcal{C})$ by setting
  \begin{align*}
    (x,y)\in \rel(G, \sigma, \mathcal{C}) \iff  x,y\in L,\;
    \sigma(x), \sigma(y) & \in  C 
    \text{ for some } C\in\mathcal{C},
    \text{ and } \\
    x,y & \in C' \text{ for some } C'\in\mathcal{C}'.
  \end{align*}
\end{cdefinition}
By construction, two vertices $x,y\in L$ are in relation
$\rel(G, \sigma, \mathcal{C})$ whenever they are in the same connected
component of $G$ and their colors $\sigma(x), \sigma(y)$ are contained in
the same set of the partition of $M$. As shown in Lemma~\ref{lem:KinCC} in
the Technical Part, the relation
$\rel\coloneqq\rel(G, \sigma, \mathcal{C})$ is an equivalence relation and
every equivalence class of $\rel$ is contained in some connected component
of $G$. In particular, each connected component of $G$ is the disjoint
union of $\rel$-classes.

The following partition of the leaf sets of subtrees of a tree $S$ rooted
at some vertex $u\in V(S)$ will be useful:
\begin{align*}
  &\text{If } u 
  \textrm{ is not a leaf, then }
  &\partS(u)&  \coloneqq \{L(S(v))
  \mid v\in\child_S(u)\} \\
  & \textrm{and, otherwise, } &\partS(u)&\coloneqq \{\{u\}\}.
\end{align*}
One easily verifies that, in both cases, $\partS(u)$ yields a valid
partition of the leaf set $L(S(u))$.  Recall that
$\sigma_{|L',M'}\colon L'\to M'$ was defined as the ``submap'' of $\sigma$
with $L'\subseteq L$ and $\sigma(L') \subseteq M' \subseteq M$.

\begin{clemma}{\ref{lem:xy-iff-Ks-in-same-CC}}
  Let $(G=(L,E),\sigma)$ be a properly colored cograph. Suppose that the
  triple set $\Sri(G,\sigma)$ is compatible and let $S$ be a tree on $M$
  that displays $\Sri(G,\sigma)$. Moreover, let $L'\subseteq L$ and
  $u\in V(S)$ such that $\sigma(L') \subseteq L(S(u))$. \
  Finally, set $\rel\coloneqq \rel(G[L'],\sigma_{|L',L(S(u))},\partS(u))$.\\
  Then, for all distinct $\rel$-classes $K$ and $K'$, either $xy\in E$ for
  all $x\in K$ and $y\in K'$, or $xy\notin E$ for all $x\in K$ and
  $y\in K'$.  In particular, for $x\in K$ and $y\in K'$, it holds that
  \begin{equation*}
    xy\in E \iff K, K' \text{ are contained in the same connected 
      component of } G[L'].
  \end{equation*}
\end{clemma}

\begin{figure}[t]
  \begin{center}
    \includegraphics[width=0.85\textwidth]{./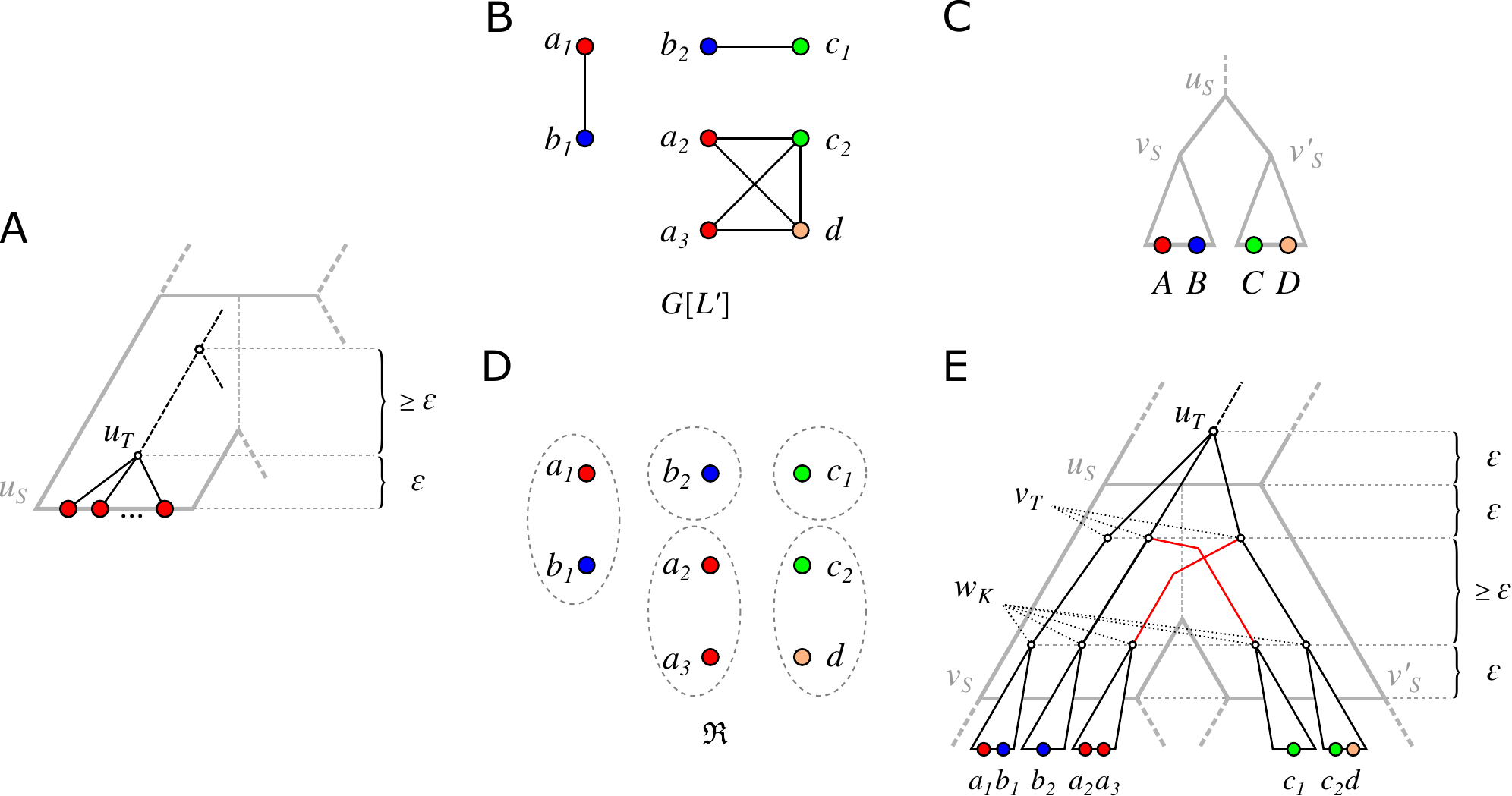}
  \end{center}
  \caption{Visualization of Algorithm~\ref{alg:Ru-recognition}. (A) The
    case $u_S$ is a leaf (cf.\ Line~\ref{line:species-leaf}). (B)-(E) The
    case $u_S$ is an inner vertex (cf.\ Line~\ref{line:else}). (B) The
    subgraph of $(G,\sigma)$ induced by $L'$. (C) The local topology of the
    species tree $S$ yields
    $\partS(u_S)=\{\{A,B,\dots\},\{C,D,\dots\}\}$. Note that $L(S(u_S))$
    may contain colors that are not present in $\sigma(L')$ (not
    shown). (D) The equivalence classes of
    $\rel\coloneqq \rel(G[L'], \sigma_{|L',L(S(u))}, \partS(u_S))$. (E)
    The vertex $u_T$ and the vertices $v_T$ are created in this recursion
    step. The vertices $w_K$ corresponding to the $\rel$-classes $K$ are
    created in the next-deeper steps. Note that some vertices have only a
    single child, and thus get suppressed in Line~\ref{line:Tphylo}.}
  \label{fig:algo-visu}
\end{figure}

\begin{algorithm}[t]
  \small
  
  \caption{Construction of a relaxed scenario $\scen$ for a properly
    colored cograph $(G,\sigma)$ with compatible triple set
    $\Sri(G,\sigma)$.}
  \label{alg:Ru-recognition}
  
  \DontPrintSemicolon
  \SetKwFunction{FRecurs}{void FnRecursive}%
  \SetKwFunction{FRecurs}{BuildGeneTree}
  \SetKwProg{Fn}{Function}{}{}
  
  \KwIn{A cograph $(G=(L,E),\sigma)$ with proper coloring
    $\sigma\colon L\to M$ and compatible triple set $\Sri(G,\sigma)$.
  } \label{line:if-false}
  \KwOut{A relaxed scenario $\scen=(T,S,\sigma,\mu,\tT,\tS)$ explaining 
    $(G,\sigma)$.}
  \BlankLine
  
  $S\leftarrow$ tree on $M$ displaying $\Sri(G,\sigma)$ with planted root 
  $0_S$ \label{line:S}\;
  $\tS\leftarrow$ time map for $S$ satisfying $\tS(x)=0$ for all $x\in L(S)$ 
  \label{line:tS}\;
  $\epsilon \leftarrow \frac{1}{3} \min\{\tS(y)-\tS(x) \mid (y,x)\in E(S) 
  \}$ \label{line:epsilon}\;
  initialize empty maps $\mu, \tT$\;
  
  \BlankLine
  \Fn{\FRecurs{$L',u_{S}$}}{
    create a vertex $u_T$ \label{line:create-uT}\;
    $\tT(u_T)\leftarrow \tS(u_{S}) + \epsilon$ and
    $\mu(u_T)\leftarrow (\parent_S(u_S), u_S)$ \label{line:mu-tT-inner1}\;
    \uIf{$u_S$ is a leaf\label{line:species-leaf}}{
      \ForEach{$x\in L'$}{
        connect $x$ as a child of $u_T$ \label{line:attach-leaf}\;
        $\tT(x)\leftarrow 0$ and $\mu(x)\leftarrow \sigma(x)$ 
        \label{line:mu-tT-leaves}\;
      }
    }
    \Else{\label{line:else}
      $\rel\leftarrow \rel(G[L'], \sigma_{|L',L(S(u_S))}, \partS(u_S))$
      \label{line:rel}\;
      \ForEach{connected component $C$ of $G[L']$}{
        create a vertex $v_T$ \label{line:create-vT}\;
        connect $v_T$ as a child of $u_T$\;
        choose $v^*_S\in \child_S(u_{S})$ such that $\sigma(C)\cap 
        L(S(v^*_S))\ne\emptyset$ \label{line:choose-v-S}\;
        $\tT(v_T)\leftarrow \tS(u_{S}) - \epsilon$ and
        $\mu(v_T)\leftarrow (u_S, v^*_S)$ \label{line:mu-tT-inner2}\;
        \ForEach{$\rel$-class $K$
          such that $K\subseteq C$}{
          identify $v_S\in \child_S(u_{S})$ such that $\sigma(K)\subseteq 
          L(S(v_S))$ \label{line:choose-v-S-for-class}\;
          $w_K\leftarrow$ \FRecurs{$K, v_S$} \label{line:recursive-call}\;
          connect $w_K$ as a child of $v_T$\;
        }
      }
    }
    \Return $u_T$\;
  }
  \BlankLine
  
  $T' \leftarrow$ tree with root \FRecurs{$L,\rho_S$}\;
  $T\leftarrow T'$ with (i) a planted root $0_T$ added, and (ii) all 
  inner degree-2 vertices (except $0_T$) suppressed \label{line:Tphylo}\;
  $\tT(0_T)\leftarrow \tS(0_S)$ and $\mu(0_T)\leftarrow 0_S$ 
  \label{line:mu-tT-planted-root}\;
  \textbf{return} $(T,S,\sigma,\mu_{|V(T)},\tau_{T|V(T)},\tS)$\;
\end{algorithm}

Lemma~\ref{lem:xy-iff-Ks-in-same-CC} suggests a recursive strategy to
construct a  relaxed  scenario $\scen=(T,S,\sigma,\mu,\tT,\tS)$ for a
given properly-colored cograph $(G,\sigma)$, which is illustrated in
Fig.~\ref{fig:algo-visu}. The starting point is a species tree $S$
displaying all the triples in $\Sri(G,\sigma)$ that are required by
Lemma~\ref{lem:Ru-SpeciesTriple}. We show below that there are no further
constraints on $S$ and thus we may choose $S=\Aho(\Sri(G,\sigma),L)$ and
endow it with an arbitrary time map $\tS$.  Given $(S,\tS)$, we construct
$(T,\tT)$ in top-down order.  In order to reduce the complexity of the
presentation and to make the algorithm more compact and readable, we will
not distinguish the cases in which $(G,\sigma)$ is connected or
disconnected, nor whether a connected component is a superset of one or
more $\rel$-classes. The tree $T$ therefore will not be phylogenetic in
general. We shall see, however, that this issue can be alleviated by simply
suppressing all inner vertices with a single child.

The root $u_T$ is placed above $\rho_S$ to ensure that no two vertices
from distinct connected components of $G$ will be connected by an edge in
$\Gu(\scen)$.  The vertices $v_T$ representing the connected components $C$
of $G$ are each placed within an edge of $S$ below $\rho_S$. W.l.o.g., the
edges $(\rho_S,v_S)$ are chosen such that the colors of the corresponding
connected component $C$ and the colors in $L(S(v_S))$ overlap. Next we
compute the relation $\rel\coloneqq\rel(G,\sigma,\partS(\rho_S))$ and
determine, for each connected component $C$, the $\rel$-classes $K$ that
are a subset of $C$. For each of them, a child $w_K$ is appended to the
tree vertex $v_T$. The subtree $T(w_K)$ will have leaf set $L(T(w_K))=K$.
Since $\rel$ is defined on $\partS(\rho_S)$ in this first step, $G(\scen)$
will have all edges between vertices that are in the same connected
component $C$ but in distinct $\rel$-classes (cf.\
Lemma~\ref{lem:xy-iff-Ks-in-same-CC}).  The definition of $\rel$ also
implies that we always find a vertex $v_S\in\child_S(\rho_S)$ such that
$\sigma(K)\subseteq L(S(v_S))$ (more detailed arguments for this are given
in the proof of Claim~\ref{clm:color-subset} in the proof of
Thm.~\ref{thm:algo-works} below).  Thus we can place $w_K$ into this edge
$(\rho_S,v_S)$, and proceed recursively on the $\rel$-classes
$L'\coloneqq K$, the induced subgraphs $G[L']$ and their corresponding
vertices $v_S\in V(S)$, which then serve as the root of the species
trees. More precisely, we identify $w_K$ with the root $u'_T$ created in
the ``next-deeper'' recursion step.  Since we alternate between vertices
$u_T$ for which no edges between vertices of distinct subtrees exist, and
vertices $v_T$ for which all such edges exist, we can label the vertices
$u_T$ with ``0'' and the vertices $v_T$ with ``1'' and obtain a  cotree for
the cograph $G$.

This recursive procedure is described more formally in
Algorithm~\ref{alg:Ru-recognition} which also describes the constructions
of an appropriate time map $\tT$ for $T$ and a reconciliation map $\mu$.
We note that we find it convenient to use as trivial case in the
recursion the situation in which the current root $u_S$ of the species
tree is a leaf rather than the condition $|L'|=1$. In this manner we avoid
the distinction between the cases $u_S\in L(S)$ and $u_S\notin L(S)$ in
the \textbf{else}-condition starting in Line~\ref{line:else}. This
results in a shorter presentation at the expense of more inner vertices
that need to be suppressed at the end in order to obtain the final tree
$T$.  We proceed by proving the correctness of
Algorithm~\ref{alg:Ru-recognition}.

\begin{ctheorem}{\ref{thm:algo-works}}
  Let $(G,\sigma)$ be a properly colored cograph, and assume that the
  triple set $\Sri(M,G)$ is compatible.  Then
  Algorithm~\ref{alg:Ru-recognition} returns a relaxed scenario
  $\scen=(T,S,\sigma,\mu,\tT,\tS)$  such that
  $\Gu(\scen)=G$ in polynomial time.
\end{ctheorem}

As a consequence of Lemma~\ref{lem:Ru-SpeciesTriple}
and~\ref{lem:propcolcograph}, and the fact that
Algorithm~\ref{alg:Ru-recognition} returns a relaxed scenario $\scen$ for a
given properly colored cograph with compatible triple set
$\Sri(G,\sigma)$, we obtain
\begin{ctheorem}{\ref{thm:characterization}}
  A graph $(G,\sigma)$ is an LDT graph if and only if it is a properly
  colored cograph and $\Sri(G,\sigma)$ is compatible.
\end{ctheorem}

Thm.~\ref{thm:characterization} has two consequences that are of
immediate interest:
\begin{ccorollary}{\ref{cor:LDTpoly}}
  LDT graphs can be recognized in polynomial time.
\end{ccorollary}
\begin{ccorollary}{\ref{cor:LDT-here}}
  The property of being an LDT graph is hereditary,
  that is, if $(G,\sigma)$ is an LDT graph then each of its vertex induced
  subgraphs is an LDT graph.
\end{ccorollary}

The relaxed scenarios $\scen$ explaining an LDT graph $(G,\sigma)$ are far
from being unique.  In fact, we can choose from a large set of trees
$(S,\tS)$ that is determined only by the triple set $\Sri(G,\sigma)$:
\begin{ccorollary}{\ref{cor:manyT}}
  If $(G=(L,E),\sigma)$ is an LDT graph with coloring $\sigma\colon L\to M$, 
  then for
  all planted trees $S$ on $M$ that display $\Sri(G,\sigma)$ there is a
  relaxed scenario $\scen=(T,S,\sigma,\mu,\tT,\tS)$ that contains $\sigma$
  and $S$ and that explains $(G,\sigma)$.
\end{ccorollary}

As shown in the Technical Part, for every LDT graph $(G,\sigma)$ there is a
relaxed scenario $\scen=(T,S,\sigma,\mu,\tT,\tS)$ explaining $(G,\sigma)$
such that $T$ displays the discriminating cotree $T_{G}$ of $G$ (cf.\ Cor.\
\ref{cor:displayed-cotree} in the Technical Part).  However, this property
is not satisfied by all relaxed scenarios that explain an $(G,\sigma)$.
Nevertheless, the latter results enable us to relate connectedness of LDT
graphs to properties of the relaxed scenarios by which it can be explained
(cf.\ Lemma~\ref{lem:Gu-connected} in Technical Part).

\subsection{Least Resolved Trees for LDT graphs}

As we have seen e.g.\ in Cor.~\ref{cor:manyT}, there are in general many
trees $S$ and $T$ forming  relaxed  scenarios $\scen$ that explain a
given LDT graph $(G,\sigma)$.  This begs the question to what extent these
trees are determined by ``representatives''. For $S$, we have seen that $S$
always displays $\Sri(G,\sigma)$, suggesting to consider the role of
$S=\Aho(\Sri(G,\sigma),M)$, where $M$ is the codomain of $\sigma$.  This
tree is least resolved in the sense that there is no  relaxed  scenario
explaining the LDT graph $(G,\sigma)$ with a tree $S'$ that is obtained
from $S$ by edge-contractions. The latter is due to the fact that any edge
contraction in $\Aho(\Sri(G,\sigma),M)$ yields a tree $S'$ that does not
display $\Sri(G,\sigma)$ any more \cite{Jansson:12}. By
Prop.~\ref{lem:Ru-SpeciesTriple}, none of the  relaxed  scenarios
containing $S'$ explain the LDT graph $(G,\sigma)$.

\begin{cdefinition}{\ref{def:LRT-LDT}}
  Let $\scen=(T,S,\sigma,\mu,\tT,\tS)$ be a relaxed scenario explaining the
  LDT graph $(G,\sigma)$. The planted tree $T$ is \emph{least resolved} for
  $(G,\sigma)$ if no relaxed scenario $(T',S',\sigma',\mu',\tT',\tS')$ with
  $T'<T$ explains $(G,\sigma)$.
\end{cdefinition}
In other words, $T$ is least resolved for $(G,\sigma)$ if no  relaxed 
scenario with a gene tree $T'$ obtained from $T$ by a series of edge
contractions explains $(G,\sigma)$.

\begin{figure}[t]
  \begin{center}
    \includegraphics[width=0.85\textwidth]{./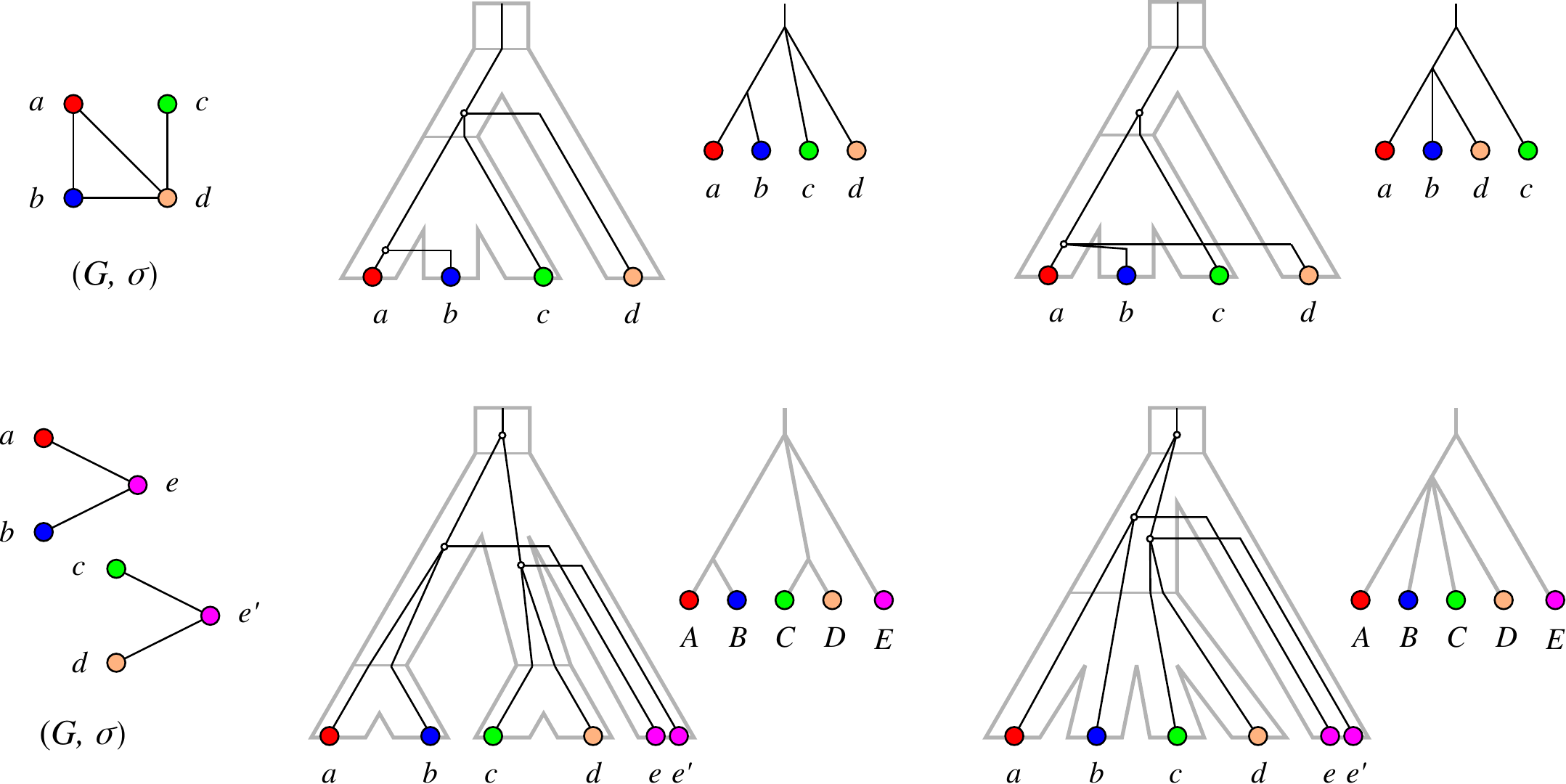}
  \end{center}
  \caption{Examples of LDT graphs $(G,\sigma)$ with multiple least resolved
    trees.  Top row: No unique least resolved gene tree. For both trees,
    contraction of the single inner edge leads to a loss of the gene triple
    $ab|c\in \Tri(G)$ (cf.\ Lemma~\ref{lem:Ru-GeneTriple}).  The species
    tree is also least resolved since contraction of its single inner edge
    leads to loss of the species triples
    $\sigma(a)\sigma(c)|\sigma(d), \sigma(b)\sigma(c)|\sigma(d)\in
    \Sri(G,\sigma)$ (cf.\ Lemma~\ref{lem:Ru-SpeciesTriple}).  Bottom row:
    No unique least resolved species tree. Both trees display the two
    necessary triples $AB|E,CD|E\in\Sri(G,\sigma)$, and are again least
    resolved w.r.t.\ these triples. The gene trees are also least resolved
    since contraction of either of its two inner edges leads e.g.\ to loss
    of one of the triples $ae|c, ce'|a\in \Tri(G)$. }
  \label{fig:LRT-not-unique}
\end{figure}

\begin{figure}[t]
  \begin{center}
    \includegraphics[width=0.85\textwidth]{./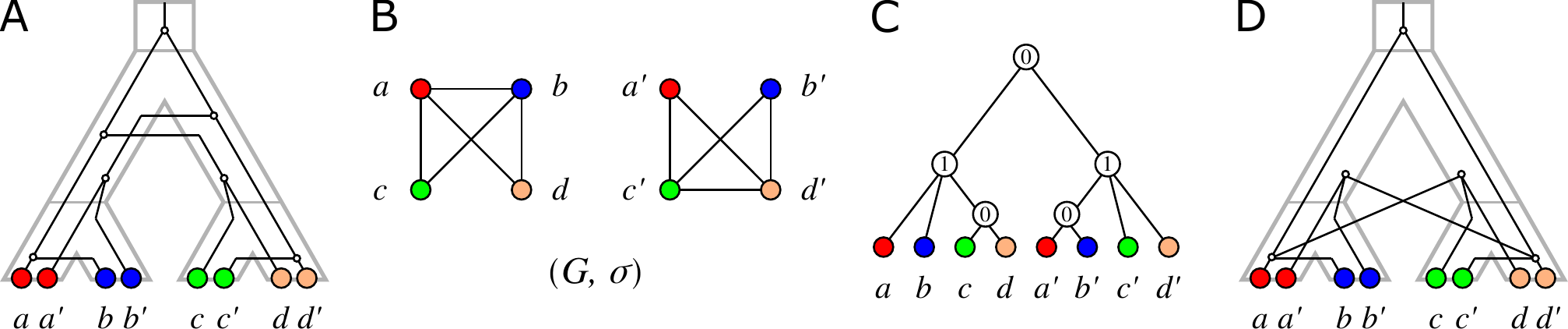}
  \end{center}
  \caption{Example of an LDT graph $(G,\sigma)$ in Panel B that is
    explained by the relaxed scenario shown in Panel~A. Here,
    $(G,\sigma)$ cannot be explained by a relaxed scenario
    $\scen=(T,S,\sigma,\mu, \tT,\tS)$ such that $T$ is the unique
    discriminating cotree (shown in panel C) for the cograph $G$, see
    Panel D and the text for further explanations.}
  \label{fig:cotree-not-resolved-enough}
\end{figure}

The examples in Fig.~\ref{fig:LRT-not-unique} show that LDT graphs are in
general not accompanied by unique least resolved trees. In the top row,
relaxed  scenarios with different least resolved gene trees $T$ and the 
same least resolved species tree $S$ explain the LDT graph $(G,\sigma)$. In the
example below, two distinct least resolved species trees exist for a given
least-resolved gene tree.

The example in Fig.~\ref{fig:cotree-not-resolved-enough} shows,
furthermore, that the unique discriminating cotree $T_G$ of an LDT graph
$(G,\sigma)$ is not always ``sufficiently resolved''.  To see this, assume
that the graph $(G,\sigma)$ in the example can be explained by a relaxed
scenario $\scen=(T,S,\sigma,\mu, \tT,\tS)$ such that $T=T_G$.  First
consider the connected component consisting of $a,b,c,d$. Since
$\lca_T(a,b)\succ_T \lca_T(c,d)$, $ab\in E(G)$ and $cd\notin E(G)$, we have
$\tS(\lca_S(\sigma(a),\sigma(b))) > \tT(\lca_T(a,b))> \tT(\lca_T(c,d))\ge
\tS(\lca_S(\sigma(c),\sigma(d)))$.  By similar arguments, the second
connected component implies
$\tS(\lca_S(\sigma(c),\sigma(d))) > \tS(\lca_S(\sigma(a),\sigma(b)))$; a
contradiction. These examples emphasize that LDT graphs constrain the
relaxed  scenarios, but are far from determining them.

\section{Horizontal Gene Transfer and Fitch Graphs}
\label{sect:HGT}

\subsection{HGT-Labeled Trees and rs-Fitch Graphs}

As alluded to in the introduction, the LDT graphs are intimately related
with horizontal gene transfer. To formalize this connection we first define
transfer edges. These will then be used to encode Walter Fitch's concept of
xenologous gene pairs \cite{Fitch:00,Darby:17} as a binary relation, and
thus, the edge set of a graph.
\begin{cdefinition}{\ref{def:HGT-label}}
  Let $\scen = (T,S,\sigma,\mu,\tT,\tS)$ be a relaxed scenario.  An edge
  $(u,v)$ in $T$ is a \emph{transfer edge} if $\mu(u)$ and $\mu(v)$ are
  incomparable in $S$. The \emph{HGT-labeling} of $T$ in $\scen$ is the
  edge labeling $\lambda_{\scen}: E(T)\to\{0,1\}$ with $\lambda(e)=1$ if and 
  only if $e$ is a transfer edge.
\end{cdefinition}
The vertex $u$ in $T$ thus corresponds to an HGT event, with $v$ denoting
the subsequent event, which now takes place in the ``recipient'' branch of
the species tree. Note that $\lambda_{\scen}$ is completely determined by
$\scen$.  In general, for a given a gene tree $T$, HGT events correspond to
a labeling or coloring of the edges of $T$. 

\begin{cdefinition}{\ref{def:FitchG}}[Fitch graph] 
  Let $(T,\lambda)$ be a tree $T$ together with a map
  $\lambda\colon E(T)\to \{0,1\}$.  The \emph{Fitch graph}
  $\gfitch(T,\lambda) = (V,E)$ has vertex set $V\coloneqq L(T)$ and edge set
  \begin{align*}
    E \coloneqq \{xy \mid  x,y\in L,
    &\text{ the unique path connecting  }
    x \text{ and } y \text{ in } T \\ 
    &\text{ contains an edge }
    e \text{ with } \lambda(e)=1. \}
  \end{align*}
\end{cdefinition}
By definition, Fitch graphs of 0/1-edge-labeled trees are loopless and
undirected. We call edges $e$ of $(T,\lambda)$ with label $\lambda(e)=1$
also 1-edges and, otherwise, 0-edges.
\begin{remark} Fitch graphs as defined here have been termed
  \emph{undirected} Fitch graphs \cite{Hellmuth:18a}, in contrast to the
  notion of the \emph{directed} Fitch graphs of 0/1-edge-labeled trees
  studied e.g.\ in \cite{Geiss:18a,Hellmuth:2019a}.
\end{remark}

\begin{cproposition}{\ref{prop:fitch}}{\cite{Hellmuth:18a,Zverovich:99}}
  The following statements are equivalent.
  \begin{enumerate}
    \item $G$ is the Fitch graph of a 0/1-edge-labeled tree.
    \item $G$ is a complete multipartite graph.
    \item $G$ does not contain $K_2+K_1$ as an induced subgraph.
  \end{enumerate}
\end{cproposition}

\begin{cdefinition}{\ref{def:rsFitchG}}[rs-Fitch graph]
  Let $\scen = (T,S,\sigma,\mu,\tT,\tS)$ be a relaxed scenario with
  HGT-labeling $\lambda_{\scen}$. We call the vertex colored graph
  $(\gfitch(\scen),\sigma) \coloneqq (\gfitch(T,\lambda_{\scen}),\sigma)$
  the \emph{Fitch graph of the scenario $\scen$.}\\
  A vertex colored graph $(G,\sigma)$ is a \emph{relaxed scenario Fitch
    graph} (\emph{rs-Fitch graph}) if there is a relaxed scenario
  $\scen = (T,S,\sigma,\mu,\tT,\tS)$  such that
  $G = \gfitch(\scen)$.
\end{cdefinition}

\begin{figure}[t]
  \begin{center}
    \includegraphics[width=0.85\textwidth]{./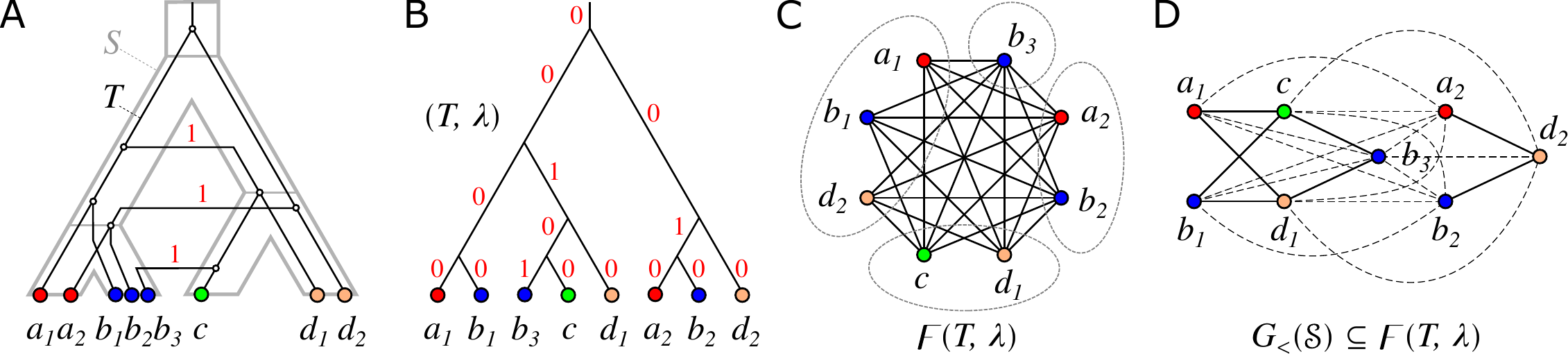}
  \end{center}
  \caption{(A) The relaxed scenario $\scen=(T,S,\sigma,\mu,\tT,\tS)$ as
    already shown in Fig.~\ref{fig:Gu-example}. (B) A 0/1-edge-labeled tree
    $(T,\lambda)$ satisfying $\lambda=\lambda_{\scen}$. (C) The
    corresponding Fitch graph $\gfitch(T,\lambda)$ drawn in a layout that
    emphasizes the property that $\gfitch(T,\lambda)$ is a complete
    multipartite graph. Independent sets are circled. (D) An alternative
    layout as in Fig.~\ref{fig:Gu-example} (top row) that emphasizes the
    relationship $\Gu(\scen)\subseteq \gfitch(\scen)=\gfitch(T,\lambda)$
    (cf.\ Thm.~\ref{thm:infer-fitch} below).  Edges that are not present in
    $\Gu(\scen)$ are drawn as dashed lines.}
  \label{fig:fitch-example}
\end{figure}

Fig.~\ref{fig:fitch-example} shows that rs-Fitch graphs are not necessarily
properly colored.  A subtle difficulty arises from the fact that Fitch
graphs of 0/1-edge-labeled trees are defined without a reference to the
vertex coloring $\sigma$, while the rs-Fitch graph is vertex colored.  This
together with Prop.~\ref{prop:fitch} implies
\begin{cfact}{\ref{obs:Fitch}}
  If $(G,\sigma)$ is an rs-Fitch graph then $G$ is a complete multipartite
  graph.
\end{cfact}
The ``converse'' of Obs.~\ref{obs:Fitch} is not true in general, as we
shall see in Thm.~\ref{thm:char-rsFitch} below. If, however, the coloring
$\sigma$ can be chosen arbitrarily, then every complete multipartite graph $G$
can be turned into an rs-Fitch graph $(G,\sigma)$ as shown in 
Prop.~\ref{prop:converse-obs-fitch}.

\begin{cproposition}{\ref{prop:converse-obs-fitch}}
  If $G$ is a complete multipartite graph, then there exists a relaxed
  scenario $\scen=(T,S,\sigma,\mu,\tT,\tS)$ such that $(G,\sigma)$ is an
  rs-Fitch graph.
\end{cproposition}

Although every complete multipartite graph can be colored in such a way
that it becomes an rs-Fitch graph (cf.\
Prop.~\ref{prop:converse-obs-fitch}), there are colored, complete
multipartite graphs $(G,\sigma)$ that are not rs-Fitch graphs, i.e., that
do not derive from a relaxed scenario (cf.\ Thm.~\ref{thm:char-rsFitch}).
We summarize this discussion in the following
\begin{cfact}{\ref{obs:01T-notScen}}
  There are (planted) 0/1-edge labeled trees $(T,\lambda)$ and colorings
  $\sigma\colon L(T)\to M$ such that there is no relaxed scenario
  $\scen = (T,S,\sigma,\mu,\tT,\tS)$ with $\lambda=\lambda_{\scen}$.
\end{cfact}
A subtle -- but important -- observation is that trees $(T,\lambda)$ with
coloring $\sigma$ for which Obs.~\ref{obs:01T-notScen} applies may still
encode an rs-Fitch graph $(\gfitch(T,\lambda),\sigma)$, see Example
\ref{ex:lst} and Fig.~\ref{fig:TreeClassesDistinct}.  The latter is due to
the fact that $\gfitch(T,\lambda) = \gfitch(T',\lambda')$ may be possible for
a different tree $(T',\lambda')$ for which there is a relaxed scenario
$\scen' = (T',S,\sigma,\mu,\tT,\tS)$ with $\lambda' = \lambda_{\scen}$.  In
this case, $(\gfitch(T,\lambda),\sigma) = (\gfitch(\scen'),\sigma)$ is an
rs-Fitch graph. We shall briefly return to these issues in the discussion
section~\ref{sect:concl}.

\begin{xmpl}
  \label{ex:lst}
  Consider the planted edge-labeled tree $(T,\lambda)$ shown in
  Fig.~\ref{fig:TreeClassesDistinct} with leaf set $L=\{a,b,b',c,d\}$,
  together with a coloring $\sigma$ where $\sigma(b)=\sigma(b')$ and
  $\sigma(a), \sigma(b), \sigma(c), \sigma(d)$
  are pairwise distinct.\\
  Assume, for contradiction, that there is a relaxed scenario
  $\scen = (T,S,\sigma,\mu,\tT,\tS)$ with
  $(T,\lambda) = (T,\lambda_{\scen})$. Hence, $\mu(v)$ and
  $\mu(b)=\sigma(b)$ as well as $\mu(u)$ and $\mu(b')=\sigma(b)$ must be
  comparable in $S$. Therefore, $\mu(u)$ and $\mu(v)$ must both be
  comparable to $\sigma(b)$ and thus, they are located on the path from
  $\rho_S$ to $\sigma(b)$. But this implies that $\mu(u)$ and $\mu(v)$ are
  comparable in $S$; a contradiction, since then
  $\lambda_{\scen}(u,v) = 0\neq \lambda(u,v) = 1$.
\end{xmpl}

\begin{figure}[t]
  \begin{center}
    \includegraphics[width=0.85\textwidth]{./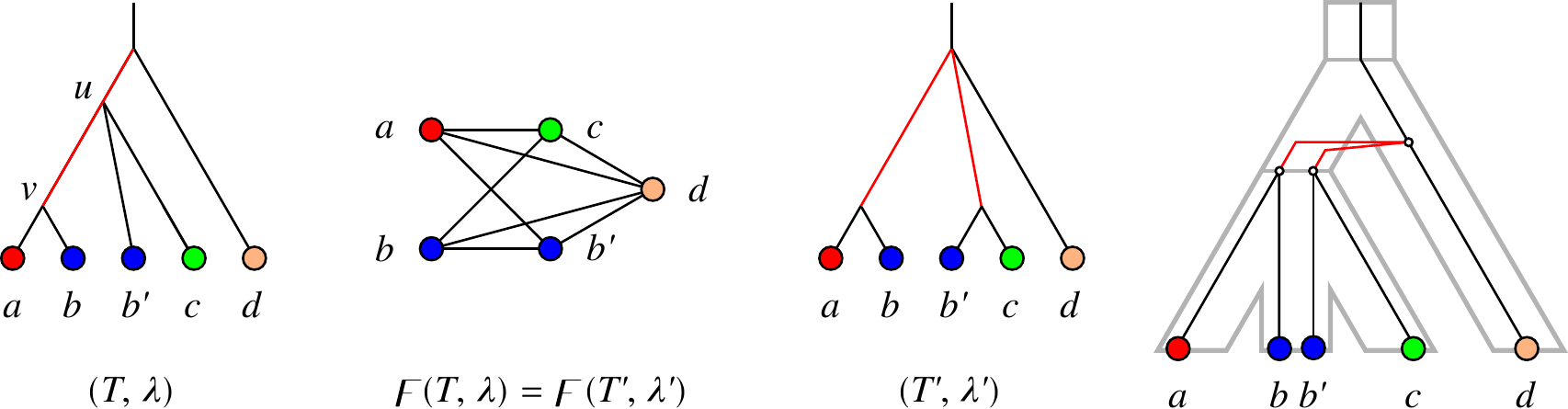}
  \end{center}
  \caption{0/1-edge-labeled tree $(T,\lambda)$ for which no relaxed
    scenario exists such that $(T,\lambda) = (T,\lambda_{\scen})$ (see
    Example~\ref{ex:lst}).  Red edges indicates 1-labeled edges.
    Nevertheless for $\gfitch\coloneqq\gfitch(T,\lambda)$ there is an
    alternative tree $(T',\lambda')$ for which a relaxed scenario
    $\scen = (T',S,\sigma,\mu,\tT,\tS)$ exists (right) such that
    $\gfitch = \gfitch(T',\lambda') = \gfitch(\scen)$. }
  \label{fig:TreeClassesDistinct}
\end{figure}

\subsection{LDT Graphs and rs-Fitch Graphs}

We proceed to investigate to what extent an LDT graph provides information
about an rs-Fitch graph.  As we shall see in
Thm.~\ref{thm:FitchRu-scenario} there is indeed a close connection between
rs-Fitch graphs and LDT graphs. We start with a useful relation between the
edges of rs-Fitch graphs and the reconciliation maps $\mu$ of their
scenarios.

\begin{clemma}{\ref{lem:independent-lca}}
  Let $\gfitch(\scen)$ be an rs-Fitch graph for some  relaxed 
  scenario $\scen$. Then,
  $ab\notin E(\gfitch(\scen))$ implies that
  $\lca_S(\sigma(a),\sigma(b)) \preceq_S \mu(\lca_T(a,b))$.
\end{clemma}

The next result shows that a subset of transfer edges can be inferred
immediately from LDT graphs:
\begin{ctheorem}{\ref{thm:infer-fitch}}
  If $(G,\sigma)$ is an LDT graph, then $G\subseteq \gfitch(\scen)$ for all
  relaxed scenarios $\scen$ that explain $(G,\sigma)$. 
\end{ctheorem}

Since we only have that $xy$ is an edge in $\gfitch(\scen)$ if the path
connecting $x$ and $y$ in the tree $T$ of $\scen$ contains a transfer edge,
Thm.~\ref{thm:infer-fitch} immediately implies

\begin{ccorollary}{\ref{cor:noHGT}}
  For every relaxed scenario $\scen=(T,S,\sigma,\mu,\tT,\tS)$ without
  transfer edges, it holds that $E(\Gu(\scen)) = \emptyset$.
\end{ccorollary}

\begin{figure}[t]
  \begin{center}
    \includegraphics[width=0.85\textwidth]{./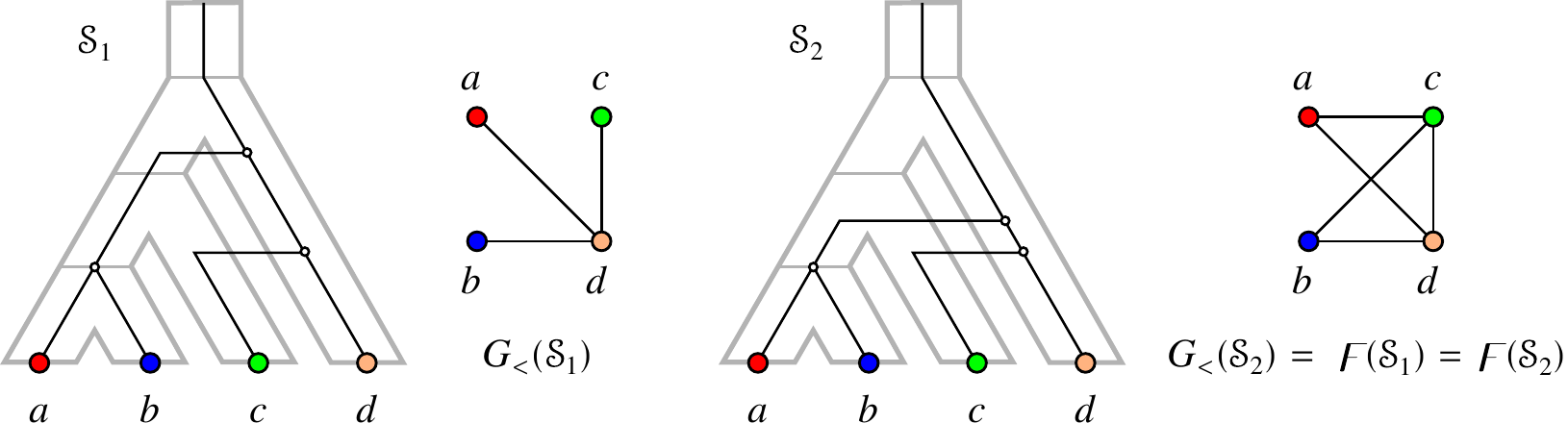}
  \end{center}
  \caption{Two relaxed scenarios $\scen_1$ and $\scen_2$ with the same
    rs-Fitch graph $\gfitch = \gfitch(\scen_1)=\gfitch(\scen_2)$ (right)
    and different LDT graphs $\Gu(\scen_1)\neq \gfitch$ and
    $\Gu(\scen_2)=\gfitch$.}
  \label{fig:Fitch-not-RU}
\end{figure}

Thm.~\ref{thm:infer-fitch} provides the formal justification for indirect
phylogenetic approaches to HGT inference that are based on the work of
\citet{Lawrence:92}, \citet{Clarke:02}, and \citet{Novichkov:04} by showing
that $(x,y)\in E(\Gu(\scen))$ can be explained only by HGT, irrespective of
how complex the true biological scenario might have been. However, it does
not cover all HGT events. Fig.\ \ref{fig:Fitch-not-RU} shows that there are
relaxed scenarios $\scen$ for which $\Gu(\scen) \neq \gfitch(\scen)$ even
though $\gfitch(\scen)$ is properly colored. Moreover, it is possible that
an rs-Fitch graph $(G,\sigma)$ contains edges $xy\in E(G)$ with
$\sigma(x)=\sigma(y)$. In particular, therefore, an rs-Fitch graph is not
always an LDT graph.

It is natural, therefore, to ask whether for every properly colored Fitch
graph there is a relaxed scenario $\scen$ such that
$\Gu(\scen) = \gfitch(\scen)$. An affirmative answer is provided by
\begin{ctheorem}{\ref{thm:FitchRu-scenario}}
  The following statements are equivalent.
  \begin{enumerate}
    \item $(G,\sigma)$ is a properly colored complete multipartite graph.
    \item There is a relaxed scenario $\scen=(T,S,\sigma,\mu,\tT,\tS)$ with
    coloring $\sigma$ such that $G=\Gu(\scen) = \gfitch(\scen)$.
    \item $(G,\sigma)$ is complete multipartite and an LDT graph. 
    \item $(G,\sigma)$ is properly colored and an rs-Fitch graph.
  \end{enumerate}
  In particular, for every properly colored complete multipartite graph
  $(G,\sigma)$ the triple set $\Sri(G,\sigma)$ is compatible.
\end{ctheorem}

relaxed  scenarios for which $(\gfitch(\scen),\sigma)$ is properly
colored do not admit two members of the same gene family that are separated
by a HGT event. While restrictive, such models are not altogether
unrealistic. Proper coloring of $(\gfitch(\scen),\sigma)$ is, in
particular, the case if every horizontal transfer is \emph{replacing},
i.e., if the original copy is effectively overwritten by homologous
recombination \cite{Thomas:05}, see also \cite{Choi:12} for a detailed case
study in \emph{Streptococcus}.  As a consequence of
Thm.~\ref{thm:FitchRu-scenario}, LDT graphs are sufficient to describe
replacing HGT. However, the incidence rate of replacing HGT decreases
exponentially with phylogenetic distance between source and target
\cite{Williams:12}, and additive HGT becomes the dominant mechanism between
phylogenetically distant organisms. Still, replacing HGTs may also be the
result of additive HGT followed by a loss of the (functionally redundant)
vertically inherited gene.

\subsection{rs-Fitch Graphs with General Colorings}

In scenarios with additive HGT, the rs-Fitch graph is no longer properly
colored and no-longer coincides with the LDT graph. Since not every
vertex-colored complete multipartite graph $(G,\sigma)$ is an rs-Fitch
graph (cf. Thm.~\ref{thm:char-rsFitch}), we ask whether an LDT
$(G,\sigma)$ that is not itself already an rs-Fitch graph imposes
constraints on the rs-Fitch graphs $(\gfitch(\scen),\sigma)$ that derive
from  relaxed  scenarios $\scen$ that explain $(G,\sigma)$. As a first step
towards this goal, we aim to characterize rs-Fitch graphs, i.e., to
understand the conditions imposed by the existence of an underlying
scenario $\scen$ on the compatibility of the collection of independent
sets $\mathcal{I}$ of $G$ and the coloring $\sigma$. As we shall see,
these conditions can be explained in terms of an auxiliary graph that we
introduce in a very general setting:
\begin{cdefinition}{\ref{def:auxfitch}}
  Let $L$ be a set, $\sigma\colon L\to M$ a map and
  $\mathcal{I}=\{I_1,\dots, I_k\}$ a set of subsets of $L$.  Then the graph
  $\auxfitch(\sigma,\mathcal{I})$ has vertex set $M$ and edges $xy$ if and
  only if $x\ne y$ and $x,y\in \sigma(I')$ for some $I'\in\mathcal{I}$. 
\end{cdefinition}
By construction $\auxfitch(\sigma,\mathcal{I'})$ is a subgraph of
$\auxfitch(\sigma,\mathcal{I})$ whenever
$\mathcal{I'}\subseteq\mathcal{I}$.  An extended version of Def.\
\ref{def:auxfitch} that contains also an edge-labeling of
$\auxfitch(\sigma,\mathcal{I})$ can be found in the Technical Part -- this
technical detail is not needed here.  As it turns out, rs-Fitch graphs
are characterized by the structure of their auxiliary graphs $\auxfitch$ as
shown in the next
\begin{ctheorem}{\ref{thm:char-rsFitch}}
  A graph $(G,\sigma)$ is an rs-Fitch graph if and only if (i) it is
  complete multipartite with independent sets
  $\mathcal{I}=\{I_1,\dots, I_k\}$, and (ii) if $k>1$, there is an
  independent set $I'\in \mathcal{I}$ such that
  $\auxfitch(\sigma,\mathcal{I}\setminus\{I'\})$ is disconnected.
\end{ctheorem}

As a consequence of Thm.~\ref{thm:char-rsFitch}, we obtain
\begin{ccorollary}{\ref{cor:auxfitch1}}
  rs-Fitch graphs can be recognized in polynomial time.
\end{ccorollary}

As for LDT graphs, the property of being an rs-Fitch graph is hereditary.
\begin{ccorollary}{\ref{cor:rsFitch-hereditary}}
  If $(G=(L,E),\sigma)$ is an rs-Fitch graph, then the colored vertex induced 
  subgaph
  $(G[W],\sigma_{|W})$ is an rs-Fitch graph for all non-empty subsets
  $W\subseteq L$.
\end{ccorollary}

\begin{figure}[ht]
  \begin{center}
    \includegraphics[width=0.85\textwidth]{./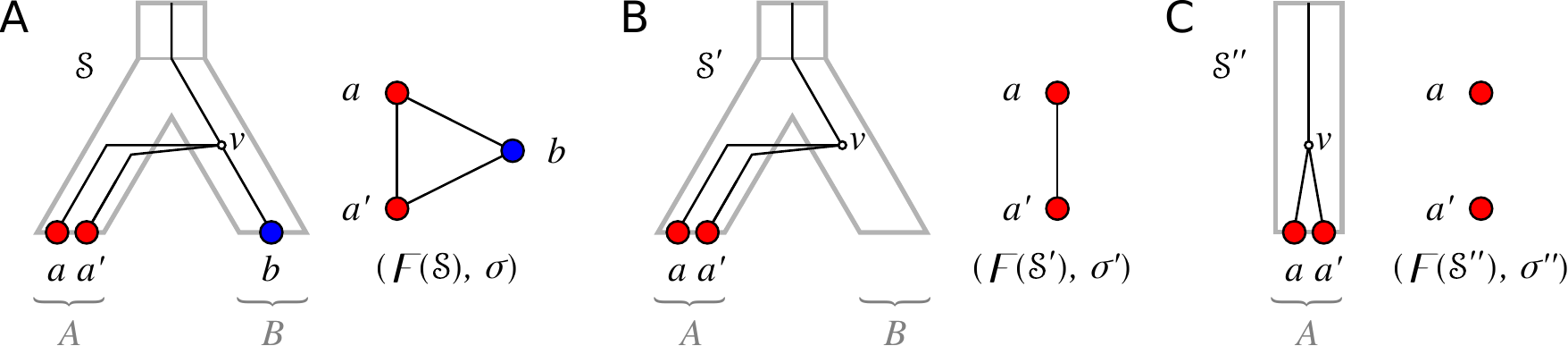}
  \end{center}
  \caption{Shown are three distinct  relaxed  scenarios $\scen$, $\scen'$
    and $\scen''$ with corresponding rs-Fitch graphs. Here
    $\sigma' = \sigma_{|\{a,a'\}}$ and
    $\sigma'' = \sigma_{|\{a,a'\},\{A\}}$ (cf.\
    Def.~\ref{def:sigma-restrictions}).  Putting
    $(G,\sigma) = (\gfitch(\scen),\sigma)$, one can observe that
    $(G[\{a,a'\}], \sigma') = (\gfitch(\scen'),\sigma')$ is an rs-Fitch
    graph.  In contrast, $\sigma''$ is restricted to the ``observable''
    part of species (consisting of $A$ alone), and
    $(G[\{a,a'\}], \sigma'')$ is not an rs-Fitch graph, see text for
    further details.}
  \label{fig:hereditary-surjective}
\end{figure}

Note, however, that Cor.~\ref{cor:rsFitch-hereditary} is not satisfied if
we restrict the codomain of $\sigma$ to the observable part of colors,
i.e., if we consider $\sigma_{|W,\sigma(W)}\colon W \to \sigma(W)$ instead
of $\sigma_{|W}\colon W\to M$, even if $\sigma$ is surjective. To see this
consider the vertex colored graph $(G,\sigma)$ with $V(G)=\{a,a',b\}$,
$E(G) = \{aa',ab,a'b\}$ and $\sigma \colon V(G)\to M = \{A,B\}$ where
$\sigma(a) = \sigma(a')=A \neq \sigma(b)=B$.  A possible relaxed scenario
$\scen$ for $(G,\sigma)$ is shown in
Fig.~\ref{fig:hereditary-surjective}(A).  The deletion of $b$ yields
$W=V(G)\setminus \{b\} = \{a,a'\}$ and the graph $(G[W],\sigma_{|W})$ for
which $\scen'$ with HGT-labeling $\lambda_{\scen'}$ as in
Fig.~\ref{fig:hereditary-surjective}(B) is a relaxed scenario that
satisfies $G[W] = \gfitch(T,\lambda_{\scen'})$.  However, if we restrict
the codomain of $\sigma$ to obtain
$\sigma_{|W,\{A\}}\colon \{a,a'\} \to \sigma(W) =\{A\}$, then there is no
relaxed scenario $\scen$ for which $G[W] = \gfitch(T,\lambda_{\scen})$,
since there is only a single species tree $S$ on $L(S)=\{A\}$
(Fig.~\ref{fig:hereditary-surjective}(C)) that consists of the single edge
$(0_T,A)$ and thus, $\mu(v)$ and $\mu(a)$ as well as $\mu(v)$ and $\mu(a')$
must be comparable in this scenario.

\subsection{Least Resolved Trees for Fitch graphs}

It is important to note that the characterization of rs-Fitch graphs in
Thm.~\ref{thm:char-rsFitch} does not provide us with a characterization of
rs-Fitch graphs that share a common  relaxed  scenario with a given LDT 
graph. As a
potential avenue to address this problem we investigate the structure of
least-resolved trees for Fitch graphs as possible source of additional
constraints.

\begin{cdefinition}{\ref{def:FLRT}}
  The edge-labeled tree $(T,\lambda)$ is \emph{Fitch-least-resolved}
  w.r.t.\ $\gfitch(T,\lambda)$, if for all trees $T'\neq T$ that are
  displayed by $T$ and every labeling $\lambda'$ of $T'$ it holds that
  $\gfitch(T,\lambda)\neq \gfitch(T',\lambda')$.
\end{cdefinition}

As shown in the Technical Part (Thm.~\ref{thm:LRT-rsFitch}),
Fitch-least-resolved trees can be characterized in terms of their
edge-labeling, a result that is very similar to the results for
``directed'' Fitch graphs of 0/1-edge-labeled trees in \cite{Geiss:18a}.
As a consequence of this characterization, Fitch-least-resolved trees can
be constructed in polynomial time. However, Fitch-least-resolved trees are
far from being unique. In particular, Fitch-least-resolved trees are only
of very limited use for the construction of relaxed scenarios
$\scen=(T,S,\sigma,\mu,\tT,\tS)$ from an underlying Fitch graph. In fact,
even though $(G,\sigma)$ is an rs-Fitch graph, Example~\ref{ex:FLRT-noScen}
in the Technical Part shows that it is possible that there is no relaxed
scenario $\scen=(T,S,\sigma,\mu,\tT,\tS)$ with HGT-labeling
$\lambda_{\scen}$ such that $(T,\lambda) = (T,\lambda_{\scen})$ for
\emph{any} of its Fitch-least-resolved trees $(T,\lambda)$.

\section{Editing Problems}
\label{sect:edit}

\subsection{Editing Colored Graphs to LDT Graphs and Fitch Graphs}

Empirical estimates of LDT graphs from sequence data are expected to
suffer from noise and hence to violate the conditions of
Thm.~\ref{thm:characterization}. It is of interest, therefore, to
consider the problem of correcting an empirical estimate $(G,\sigma)$ to
the closest LDT graph. We therefore briefly investigate the usual three
edge \emph{modification} problems for graphs: \emph{completion} only
considers the insertion of edges, for \emph{deletion} edges may only be
removed, while solutions to the \emph{editing} problem allow both
insertions and deletions, see e.g.\ \cite{Burzyn:06}.  

\begin{problem}[\PROBLEM{LDT-Graph-Modification (LDT-M)}]\ \\
  \begin{tabular}{ll}
    \emph{Input:}    & A colored graph $(G =(V,E),\sigma)$
    and an integer $k$.\\
    \emph{Question:} & Is there a subset $F\subseteq E$ such that $|F|\leq
    k$ and $(G'=(V,E\star F),\sigma)$  \\ &is an LDT graph  
    where $\star\in \{\setminus, \cup, \Delta\}$?
  \end{tabular}
\end{problem}

We write \PROBLEM{LDT-E}, \PROBLEM{LDT-C}, \PROBLEM{LDT-D} for the
editing, completion, and deletion version of \PROBLEM{LDT-M}. By virtue
of Thm.~\ref{thm:characterization}, the \PROBLEM{LDT-M} is closely
related to the problem of finding a compatible subset
$\mathscr{R}\subseteq \Sri(G_\mathscr{R},\sigma)$ with maximum
cardinality. The corresponding decision problem, \PROBLEM{MaxRTC}, is
known to be NP-complete \cite[Thm.~1]{Jansson:01}. In the technical part
we prove
\begin{ctheorem}{\ref{thm:LDT-M-NP}}
  \PROBLEM{LDT-M} is NP-complete. 
\end{ctheorem}

Even through at present it remains unclear whether rs-Fitch graphs can
be estimated directly, the corresponding graph modification problems are
at least of theoretical interest.

\begin{problem}[\PROBLEM{rs-Fitch Graph-Modification (rsF-M)}]\ \\
  \begin{tabular}{ll}
    \emph{Input:}    & A colored graph $(G =(V,E),\sigma)$
    and an integer $k$.\\
    \emph{Question:} & Is there a subset $F\subseteq E$ such that $|F|\leq
    k$ and $(G'=(V,E\star F),\sigma)$  \\ &is an rs-Fitch 
    graph  where $\star\in \{\setminus, \cup, \Delta\}$?
  \end{tabular}
\end{problem}

As above, we write \PROBLEM{rsF-E}, \PROBLEM{rsF-C}, \PROBLEM{rsF-D}
for the editing, completion, and deletion version of \PROBLEM{rsF-M}.
Since rs-Fitch graphs are complete multipartite, their complements are
disjoint unions of complete graphs. The problems \PROBLEM{rsF-M} are thus
closely related the cluster graph modification problems. Both
\PROBLEM{Cluster Deletion} and \PROBLEM{Cluster Editing} are NP-complete,
while \PROBLEM{Cluster Completion} is polynomial (by completing each
connected component to a clique, i.e., computing the transitive closure)
\cite{Shamir:04}. We obtain
\begin{ctheorem}{\ref{thm:rsF-M-NP}}
  \PROBLEM{rsF-C} and \PROBLEM{rsF-E} are NP-complete.
\end{ctheorem}
\PROBLEM{rsF-D} remains open since the complement of the transitive
closure of the complement of a colored graph $(G,\sigma)$ is not
necessarily an rs-Fitch graph. This is in particular the case if
$(G,\sigma)$ is complete multipartite but not an rs-Fitch graph.

\subsection{Editing LDT Graphs to Fitch Graphs}

Putative LDT graphs $(G,\sigma)$ can be estimated directly from sequence
(dis)similarity data.  The most direct approach was introduced by
\citet{Novichkov:04}, where, for (reciprocally) most similar genes $x$ and
$y$ from two distinct species $\sigma(x)=A$ and $\sigma(x)=B$,
dissimilarities $\delta(x,y)$ between genes and dissimilarities
$\Delta(A,B)$ of the underlying species are compared under the assumption
of a (gene family specific) clock-rate $r$, i.e., the expectation that
orthologous gene pairs satisfy $\delta(x,y)\approx r \Delta(A,B)$. In this
setting, $xy\in E(G)$ if $\delta(x,y)< r \Delta(A,B)$ at some level of
statistical significance.  The rate assumption can be relaxed to consider
rank-order statistics. For fixed $x$, differences in the orders of
$\delta(x,y)$ and $\Delta(\sigma(x),\sigma(y))$ assessed by rank-order
correlation measures have been used to identify $x$ as HGT candidate e.g.\
\cite{Lawrence:92,Clarke:02}. An interesting variation on the theme is
described by \citet{Sevillya:20}, who use relative synteny rather than
sequence similarity for the same purpose.  A more detailed account on
estimating $(G,\sigma)$ will be given elsewhere.

In contrast, it seems much more difficult to infer a Fitch graph
$(\gfitch,\sigma)$ directly from data. To our knowledge, no method for this
purpose has been proposed in the literature. However, $(\gfitch,\sigma)$ is
of much more direct practical interest because the independent sets of
$\gfitch$ determine the maximal HGT-free subsets of genes, which could be
analyzed separately by better-understood techniques. In this section, we
therefore focus on the aspects of $(\gfitch,\sigma)$ that are not captured
by LDT graphs $(G,\sigma)$. In the light of the previous section, these are
in particular non-replacing HGTs, i.e., HGTs that result in genes $x$ and
$y$ in the same species $\sigma(x)=\sigma(y)$. In this case,
$(\gfitch,\sigma)$ is no longer properly colored and thus $G\ne\gfitch$. To
get a better intuition on this case consider three genes $a$, $a'$, and $b$
with $\sigma(a)=\sigma(a')\ne\sigma(b)$ with $ab\notin E(G)$ and
$a'b\in E(G)$. By Lemma~\ref{lem:Ru-GeneTriple}, the gene tree $T$ of any
explaining relaxed scenario displays the triple $a'b|a$.
Fig.~\ref{fig:2plausibeScen} shows two relaxed scenarios with a single HGT
that explain this situation:
\begin{figure}[ht]
  \begin{center}
    \includegraphics[width=0.7\textwidth]{./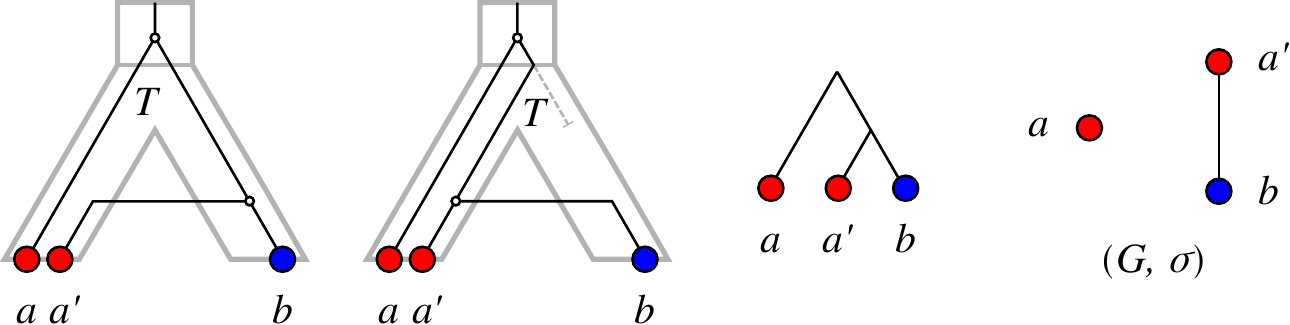}
  \end{center}
  \caption{Two relaxed scenarios with $T$ displaying the triple $a'b|a$ and
    explaining the same graph $(G,\sigma)$.}
  \label{fig:2plausibeScen}
\end{figure}
In the first, we have $aa'\in E(\gfitch)$, while the other implies
$aa'\notin E(\gfitch)$. Neither scenario is \emph{a priori} less
plausible than the other. Although the frequency of true homologous
replacement via crossover decreases exponentially with the phylogenetic
distance of donor and acceptor species \cite{Williams:12}, additive HGT
with subsequent loss of one copy is an entirely plausible scenario.

A pragmatic approach to approximate $(\gfitch,\sigma)$ is therefore to
consider the step from an LDT graph $(G,\sigma)$ to 
$(\gfitch,\sigma)$ as
a graph modification problem. First we note that
Algorithm~\ref{alg:Ru-recognition} explicitly produces a  relaxed  
scenario $\scen$
and thus implies a corresponding gene tree $T_{\scen}$ with HGT-labeling
$\lambda_{\scen}$, and thus an rs-Fitch graph $(\gfitch(\scen),\sigma)$.
However, Algorithm~\ref{alg:Ru-recognition} was designed primarily as proof
device. It produces neither a unique  relaxed  scenario nor necessarily the 
most plausible or a most parsimonious one. Furthermore, both the LDT graph
$(G,\sigma)$ and the desired rs-Fitch graph $(\gfitch,\sigma)$ are
consistent with a potentially very large number of scenarios. It thus
appears preferable to altogether avoid the explicit construction of
scenarios at this stage.

Since every LDT graph $(G,\sigma)$ is explained by some $\scen$, it is also
a spanning subgraph of the corresponding rs-Fitch graph
$(\gfitch(\scen),\sigma)$. The step from an LDT graph $(G,\sigma)$ to an
rs-Fitch graph $(\gfitch,\sigma)$ can therefore be viewed as an
edge-completion problem. The simplest variation of the problem is
\begin{problem}[Fitch graph completion]
  Given an LDT graph $(G,\sigma)$, find a minimum cardinality set $Q$ of
  possible edges such that $((V(G),E(G)\cup Q),\sigma)$ is a complete
  multipartite graph.
  \label{problem:Fcomp}
\end{problem}
A close inspection of Problem~\ref{problem:Fcomp} shows that the coloring
is irrelevant in this version, and the actual problem to be solved is the
problem \textsc{Complete Multipartite Graph Completion} with a cograph as
input. We next show that this task can be performed in linear time.
The key idea is to consider the complementary problem, i.e., the problem of
deleting a minimum set of edges from the complementary cograph
$\overline{G}$ such that the end result is a disjoint union of complete
graphs. This is known as \textsc{Cluster Deletion} problem
\cite{Shamir:04}, and is known to have a greedy solution for cographs
\cite{Gao:13}.

\begin{clemma}{\ref{lem:editing}}
  There is a linear-time algorithm to solve Problem \ref{problem:Fcomp}
  for every cograph $G$.
\end{clemma}
All maximum clique partitions of a cograph $G$ have the same sequence of
cluster sizes \cite[Thm.~1]{Gao:13}. However, they are not unique as
partitions of the vertex set $V(G)$. Thus the minimal editing set $Q$ that
needs to be inserted into a cograph to reach a complete multipartite graphs
will not be unique in general. In the Technical Part, we briefly sketch a
recursive algorithm operating on the cotree of $\overline{G}$.

However, an optimal solution to Problem~\ref{problem:Fcomp} with input
$(G,\sigma)$ does not necessarily yield an rs-Fitch graph or an rs-Fitch
graph $(\gfitch(\scen),\sigma)$ such that $G=\Gu(\scen)$, see
Fig.~\ref{fig:optimal-edit-no-rs-Fitch}.  In particular, there are LDT
graphs $(G,\sigma)$ for which more edges need to be added to obtain an
rs-Fitch graph than the minimum required to obtain a complete multipartite
graph, see Fig.~\ref{fig:optimal-rs-Fitch-no-min-compl}.

\begin{figure}[ht]
  \begin{center}
    \includegraphics[width=0.7\textwidth]{./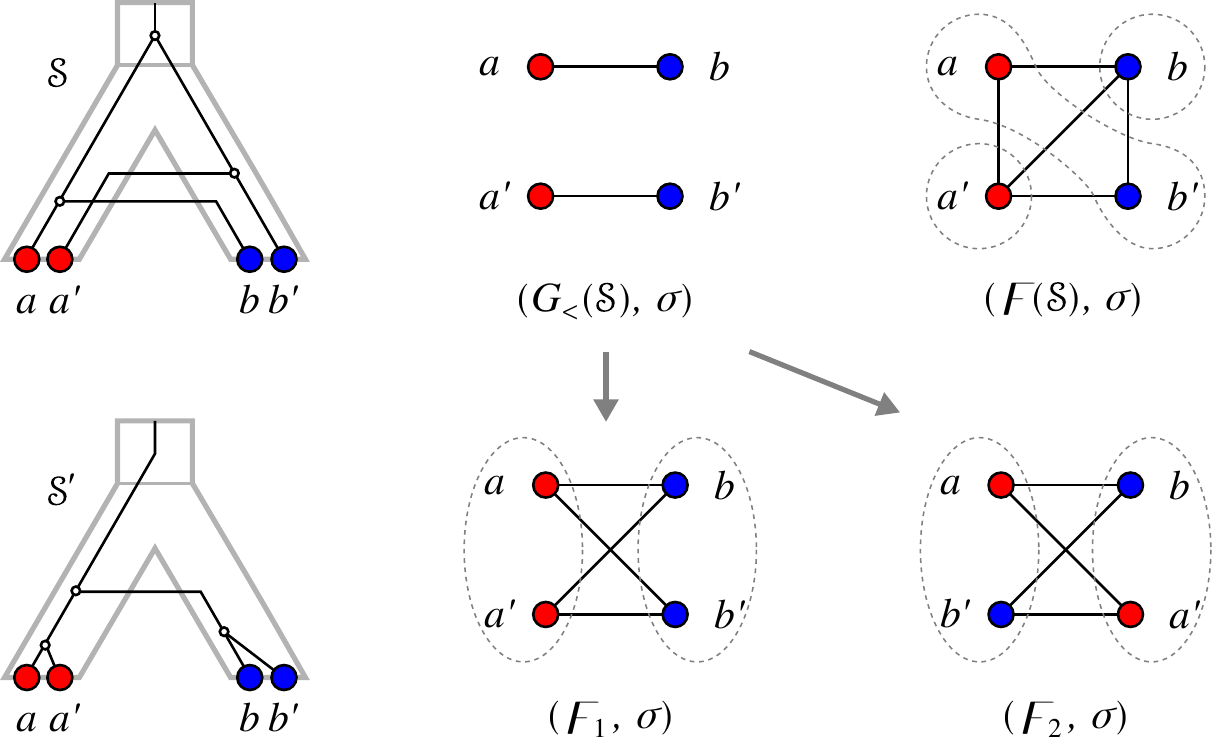}
  \end{center}
  \caption{Upper panel: a  relaxed  scenario $\scen$ with LDT graph
    $(\Gu(\scen),\sigma)$ and rs-Fitch graph $(\gfitch(\scen),\sigma)$.
    There are two minimum edge completion sets that yield the complete
    multipartite graphs $(\gfitch_1,\sigma)$ and $(\gfitch_2,\sigma)$
    (lower part). By Thm.~\ref{thm:char-rsFitch}, $(\gfitch_2,\sigma)$ is
    not an rs-Fitch graph. The graph $(\gfitch_1,\sigma)$ is an rs-Fitch
    graph for the  relaxed  scenario $\scen'$.  However, $\Gu(\scen)\ne 
    \Gu(\scen')$ for all scenarios $\scen'$ with
    $(\gfitch(\scen'),\sigma) = (\gfitch_1,\sigma)$.  To see this, note that
    the gene tree $T=((a,b),(a',b'))$ in $\scen$ is uniquely determined by
    application of Lemma~\ref{lem:2order}
    and~\ref{lem:Ru-GeneTriple}. Assume that there is any edge-labeling
    $\lambda$ such that $\gfitch(T,\lambda) = \gfitch_1$. The none-edges in
    $\gfitch_1$ imply that along the two paths from $a$ to $a'$ and $b$ to
    $b'$ there is no transfer edge, that is, there cannot be any transfer
    edge in $T$; a contradiction.}
  \label{fig:optimal-edit-no-rs-Fitch}
\end{figure}

\begin{figure}[ht]
  \begin{center}
    \includegraphics[width=0.85\textwidth]{./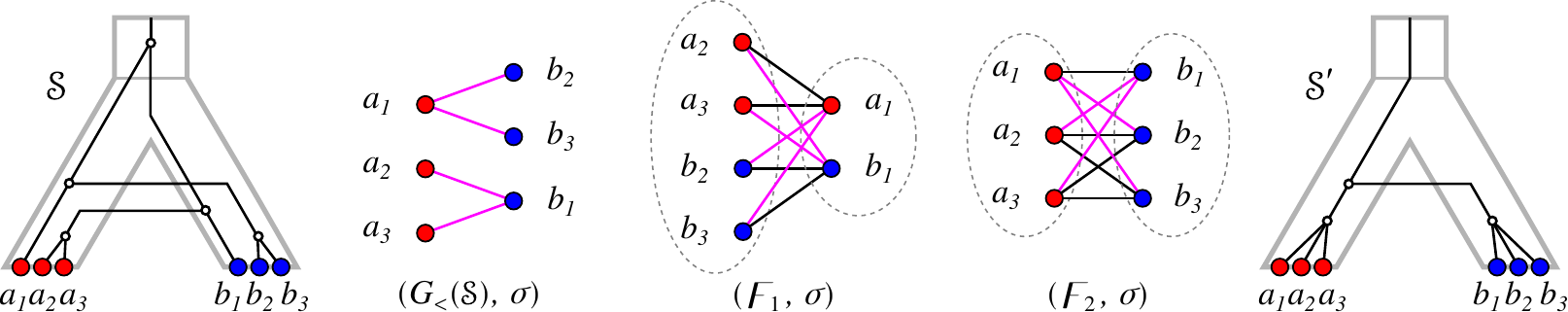}
  \end{center}
  \caption{The LDT graph $(\Gu(\scen),\sigma)$ for the  relaxed  scenario
    $\scen$ has a
    unique minimum edge completion set (as determined by full enumeration),
    resulting in the complete multipartite graph
    $(\gfitch_1,\sigma)$. However, Thm.~\ref{thm:char-rsFitch} implies that
    $(\gfitch_1,\sigma)$ is not rs-Fitch graph.  An edge completion set
    with more edges must be used to obtain an rs-Fitch graph, for instance
    $(\gfitch_2,\sigma)$, which is explained by the scenario $\scen'$.}
  \label{fig:optimal-rs-Fitch-no-min-compl}
\end{figure}

A more relevant problems for our purposes, therefore is
\begin{problem}[rs-Fitch graph completion]
  Given an LDT graph $(G,\sigma)$ find a minimum cardinality set $Q$ of
  possible edges such that $((V(G),E(G)\cup Q),\sigma)$ is an rs-Fitch graph.	
  \label{problem:rsFcomp}
\end{problem}
The following, stronger version is what we ideally would like to
solve:
\begin{problem}[strong rs-Fitch graph completion]
  Given an LDT graph $(G,\sigma)$ find a minimum cardinality set $Q$ of
  possible edges such that $\gfitch = ((V(G),E(G)\cup Q),\sigma)$ is an
  rs-Fitch graph and there is a common  relaxed 
  scenario $\scen$, that is, $\scen$
  satisfies $G = \Gu(\scen)$ and $\gfitch = \gfitch(\scen)$.
  \label{problem:strong-rsFcomp}
\end{problem}
The computational complexity of Problems \ref{problem:rsFcomp} and
\ref{problem:strong-rsFcomp} is unknown. We conjecture, however, that both
are NP-hard.  In contrast to the application of graph modification problems
to correct possible errors in the originally estimated data, the
minimization of inserted edges into an LDT graph lacks a direct biological
interpretation.  Instead, most-parsimonious solutions in terms of
evolutionary events are usually of interest in biology. In our framework,
this translates to
\begin{problem}[Min Transfer Completion]
  Let $(G,\sigma)$ be an LDT graph and $\mathbb{S}$ be the set of all
  relaxed  scenarios $\scen$ with $G=\Gu(\scen)$.  Find a  relaxed  
  scenario $\scen'\in\mathbb{S}$ that has a minimal number of transfer edges 
  among all elements in $\mathbb{S}$ and the corresponding rs-Fitch graph
  $\gfitch(\scen')$.
  \label{problem:strong-Tcomp}
\end{problem}

One way to address this problem might be as follows: Find edge-completion
sets for the given LDT graph $(G,\sigma)$ that minimize the number of
independent sets in the resulting rs-Fitch graph
$\gfitch = ((V(G),E(G)\cup Q),\sigma)$.  The intuition behind this idea is
that, in this case, the number of pairs within the individual independent
sets is maximized and thus, we get a maximized set of gene pairs without
transfer along their connecting path in the gene tree. It remains an open
question whether this idea always yields a solution for
Problem~\ref{problem:strong-Tcomp}.

\section{Simulation Results}
\label{sect:simul}

Evolutionary scenarios covering a wide range of HGT frequencies were
generated with the simulation library \texttt{AsymmeTree}
\cite{Stadler:20a}. The tool generates a planted species tree $S$ with time
map $\tS$. A constant-rate birth-death process then generates a gene tree
$(\widetilde T,\widetilde\tT)$ with additional branching events producing
copies at inner vertex $u$ of $S$ propagating to each descendant lineage of
$u$. To model HGT events, a recipient branch of $S$ is selected at
random. The simulation is event-based in the sense that each node of the
``true'' gene tree other than the planted root is one of speciation, gene
duplication, horizontal gene transfer, gene loss, or a surviving gene. 
Here, the lost as well as the surviving genes form the leaf set of 
$\widetilde T$.

We used the following parameter settings for \texttt{AsymmeTree}: Planted
species trees with a number of leaves between 10 and 50 (randomly drawn in
each scenario) were generated using the Innovation Model \cite{Keller:12}
and equipped with a time map as described in
\cite{Stadler:20a}. Multifurcations were introduced into the species tree
by contraction of inner edges with a common probability $p=0.2$ per edge to
simulate. Gene trees therefore are also not binary in general. We used
multifurcations to model the effects of limited phylogenetic resolution.
Duplication and HGT events, however, always result in bifurcations in the
gene tree $\widetilde T$.  We considered different combinations of
duplication, loss, and HGT event rates (indicated on the horizontal axis in
Figs.~\ref{fig:dataset-stats}--\ref{fig:fitch-approx-bp}). For
each combination of event rates, we simulated 1000 scenarios per event rate
combination.  Fig.~\ref{fig:dataset-stats} summarizes basic statistics
of the simulated data sets.

\begin{figure}[ht]
  \begin{center}
    \includegraphics[width=0.85\textwidth]{./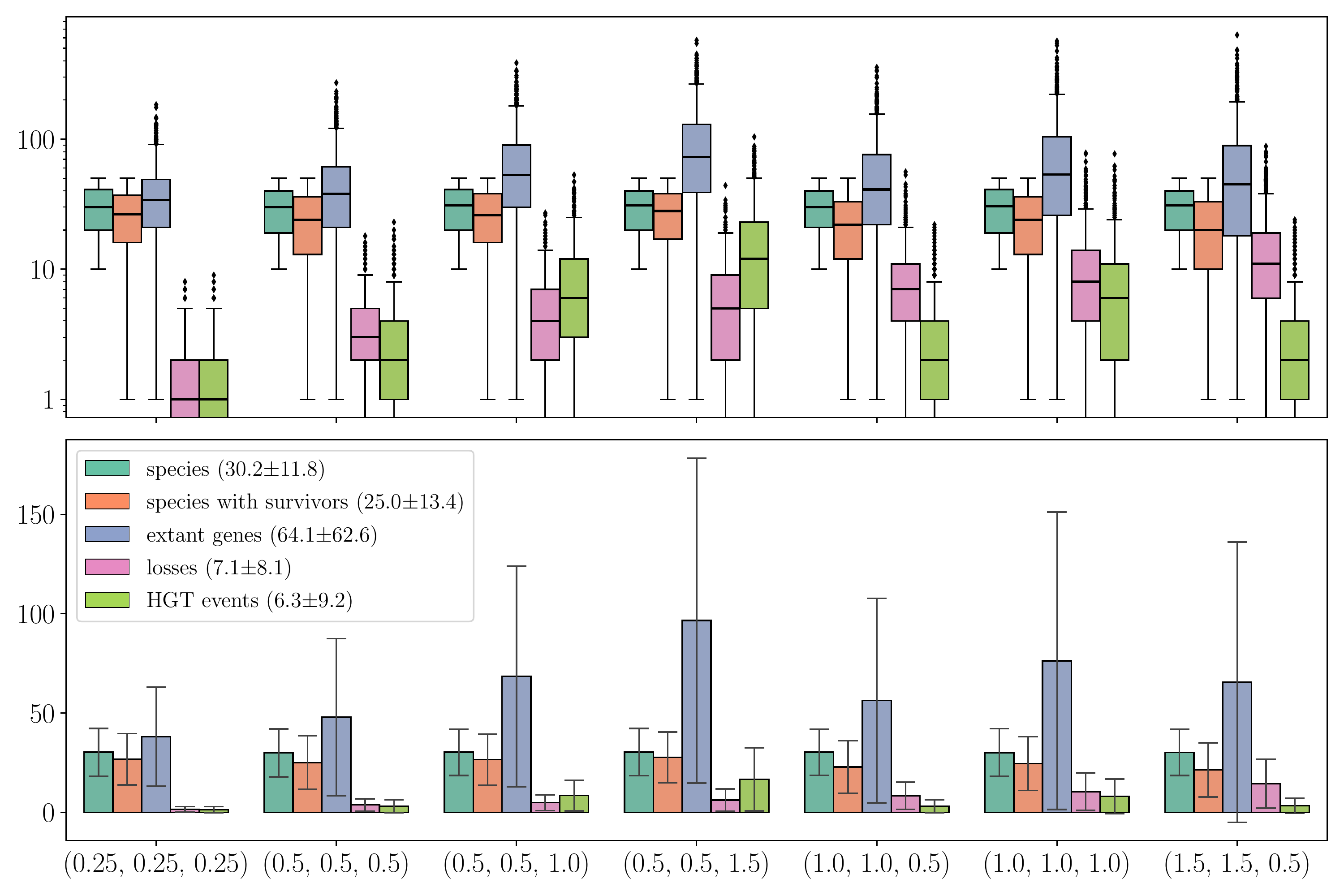}
  \end{center}
  \caption{Top panel: Distribution of the numbers of species (i.e.\ species
    tree leaves), species thereof that contain at least one surviving
    genes, surviving genes in total (non-loss leaves in the gene trees),
    loss events (loss leaves), and horizontal transfer events (inner
    vertices that are HGT events).  Bottom panel: Mean and standard
    deviation of these quantities.  The numbers in the legend indicate the
    mean and standard deviation taken over all event rate combinations.
    The tuples on the horizontal axis give the rates for duplication, loss,
    and horizontal transfer.}
  \label{fig:dataset-stats}
\end{figure}

The simulation also determines the set of surviving genes
$L\subseteq L(\widetilde{T})$, the reconciliation map
$\widetilde\mu\colon V(\widetilde{T})\to V(S)\cup E(S)$ and the coloring
$\sigma\colon L\to L(S)$ representing the species in which each surviving
gene resides.  From the true tree $\widetilde T$, the observable gene tree
$T=\widetilde{T}_{|L}$ is obtained by recursively removing leaves that
correspond to loss events, i.e.\ $L(\widetilde{T})\setminus L$, and
suppressing inner vertices with a single child and setting
$\tT(x)=\widetilde\tT(x)$ and $\mu(x)=\widetilde\mu(x)$ for all
$x\in V(T)$. This defines a relaxed scenario
$\scen=(T,S,\sigma,\mu,\tT,\tS)$.  From the scenario $\scen$, we can
immediately determine the associated HGT map $\lambda_{\scen}$, the Fitch
graph $\gfitch(\scen)$, and the LDT graph $\Gu(\scen)$.  We also consider
$\widetilde\scen=(\widetilde T, S,\sigma,\widetilde\mu,\widetilde\tT,\tS)$
which, from a formal point of view, is not a relaxed scenario, see
Fig.~\ref{fig:transfer-edges-plot}.  In this example, the gene-species
association $\sigma \colon L \to L(S)$ is not a map for the entire leaf set
$L(\widetilde T)$. Still, we can define the \emph{true LDT graph}
$\Gu(\widetilde \scen)$ and the \emph{true Fitch graph}
$\gfitch(\widetilde\scen)$ of $\widetilde\scen$ in the same way as LDT
graphs using Defs.~\ref{def:LDTgraph}, \ref{def:Gu-scen}, and
\ref{def:rsFitchG}, respectively.  Note that this does not guarantee that
every true Fitch graph is also an rs-Fitch graph. The example in
Fig.~\ref{fig:transfer-edges-plot} shows, furthermore, that
$\gfitch(\widetilde\scen)[L] \neq \gfitch(\scen)$ is possible.  For the LDT
graphs, on the other hand, we have $\Gu(\scen) = \Gu(\widetilde \scen)$
because $\widetilde \scen$ and $\scen$ are based on the same time maps.

\begin{figure}[ht]
  \begin{center}
    \includegraphics[width=0.5\textwidth]{./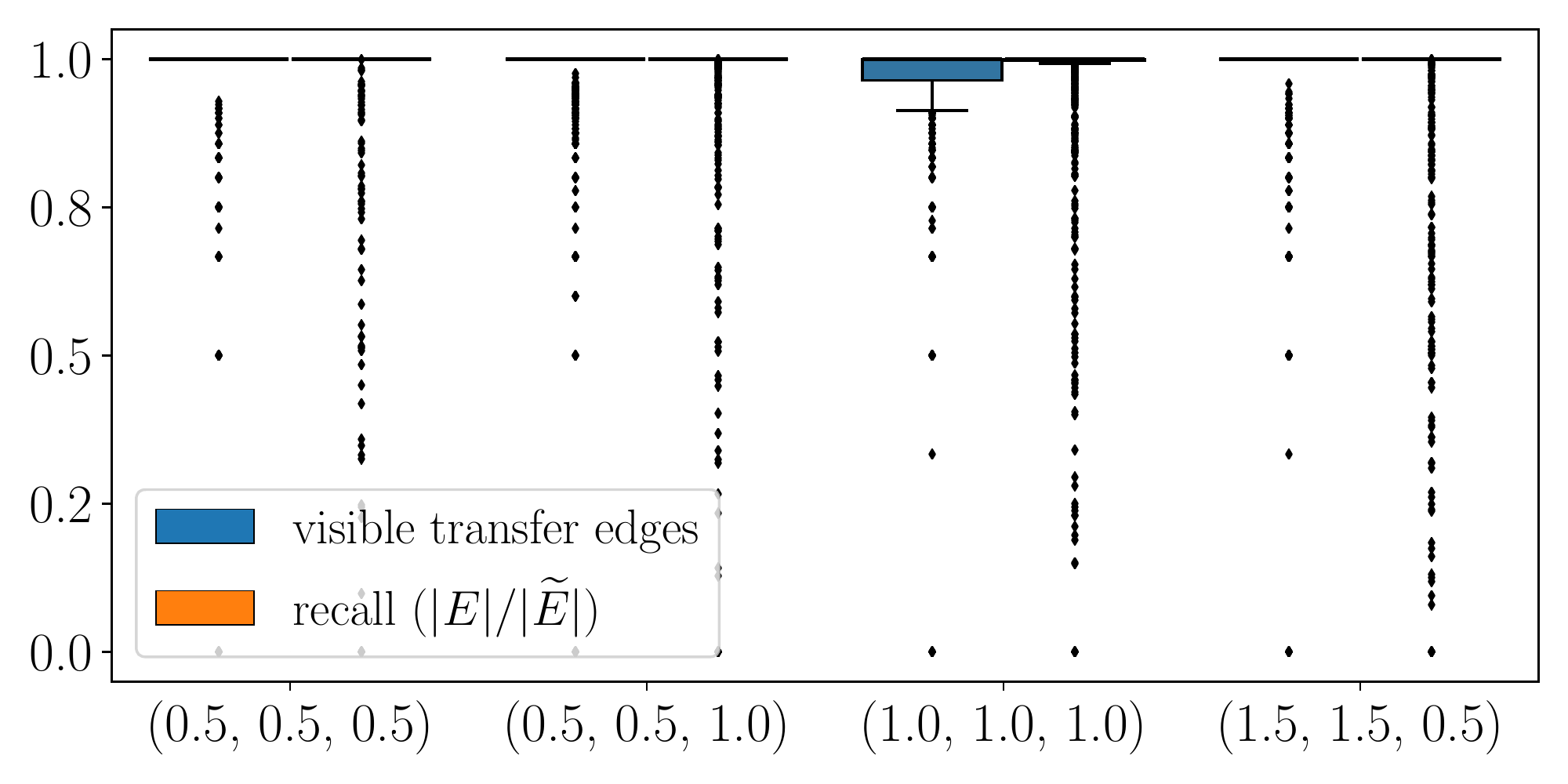}
    \includegraphics[width=0.35\textwidth]{./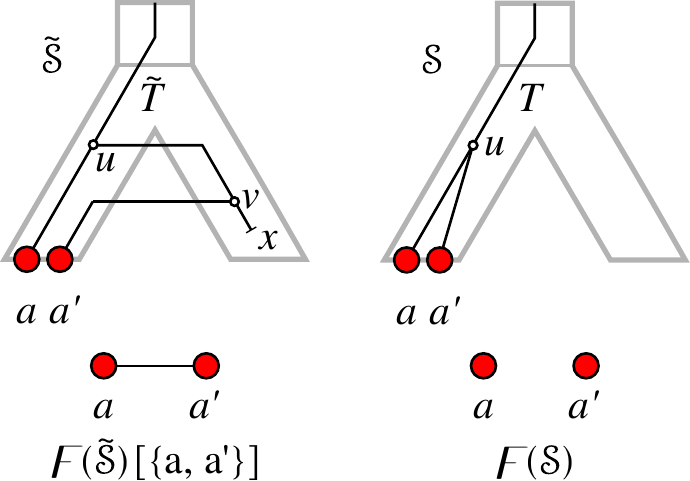}
  \end{center}
  \caption{Left: Fraction of ``visible'' transfer edges among the ``true''
    transfer edges in $T$ in the simulated scenarios, i.e., the edges that
    correspond to a path in $\widetilde T$ containing at least one transfer
    edge w.r.t.\ $\widetilde{\scen}$ (see also the explanation in the
    text). The tuples on the horizontal axis give the rates for
    duplication, loss, and horizontal transfer.  Since
    $E\coloneqq E(\gfitch(\scen)) \subseteq \widetilde{E} \coloneqq
    E(\gfitch(\widetilde\scen)[L(T)])$, we also show the ratio
    $|E|/|\widetilde E|$.  Right: A relaxed scenario
    $\scen=(T,S,\sigma,\mu,\tT,\tS)$ with an ``invisible'' transfer edge
    $(u,a')$ (as determined by the knowledge of
    $\widetilde\scen=(\widetilde
    T,S,\sigma,\widetilde\mu,\widetilde\tT,\tS)$). In this example we have
    $\gfitch(\widetilde\scen)[L(T)=\{a,a'\}] \neq \gfitch(\scen)$.}
  \label{fig:transfer-edges-plot}
\end{figure}

The distinction between the true graph $\gfitch(\widetilde\scen)[L]$ and
the rs-Fitch graph $\gfitch(\scen)$ is closely related to the definition of
transfer edges.  So far, we only took into account transfer edges $(u,v)$
in the (observable) gene trees $T$, for which $u$ and $v$ are mapped to
incomparable vertices or edges of the species trees $S$ (cf.\
Def.~\ref{def:HGT-label}).  Thus, given the knowledge of the  relaxed  
scenario $\scen=(T,S,\sigma,\mu,\tT,\tS)$, these transfer edges are in that 
sense ``visible''.  However, given
$\widetilde\scen=(\widetilde T, S,\sigma,\widetilde\mu,\widetilde\tT,\tS)$,
which still contains all loss branches, it is possible that a non-transfer
edge in $T$ corresponds to a path in $\widetilde T$ which contains a
transfer edge w.r.t.\ $\widetilde\scen$, i.e., some edge
$(u,v)\in E(\widetilde{T})$ such that $\widetilde{\mu}(u)$ and
$\widetilde{\mu}(v)$ are incomparable in $S$.  In particular, this is the
case whenever a gene is transferred into some recipient branch followed by
a back-transfer into the original branch and a loss in the recipient branch
(see Fig.~\ref{fig:transfer-edges-plot}, right).
Fig.~\ref{fig:transfer-edges-plot} shows that, in the majority of the
simulated scenarios, the HGT information is preserved in the observable
data. In fact, $\gfitch(\scen)=\gfitch(\widetilde\scen)$ in $86.7\%$ of
simulated scenarios. Occasionally, however, we also encounter scenarios in
which large fractions of the xenologous pairs are hidden from inference by
the LDT-based approach.

In the following, we will only be concerned with estimating a Fitch graph
$\gfitch(\scen)$, i.e., the graph resulting from the ``visible'' transfer
edges. These were edgeless in about $17.7\%$ of the observable scenarios
$\scen$ (all parameter combinations taken into account). In these cases the
LDT and thus also the inferred Fitch graphs are edgeless. These scenarios
were excluded from further analysis.

\begin{figure}[tb]
  \begin{center}
    \includegraphics[width=0.85\textwidth]{./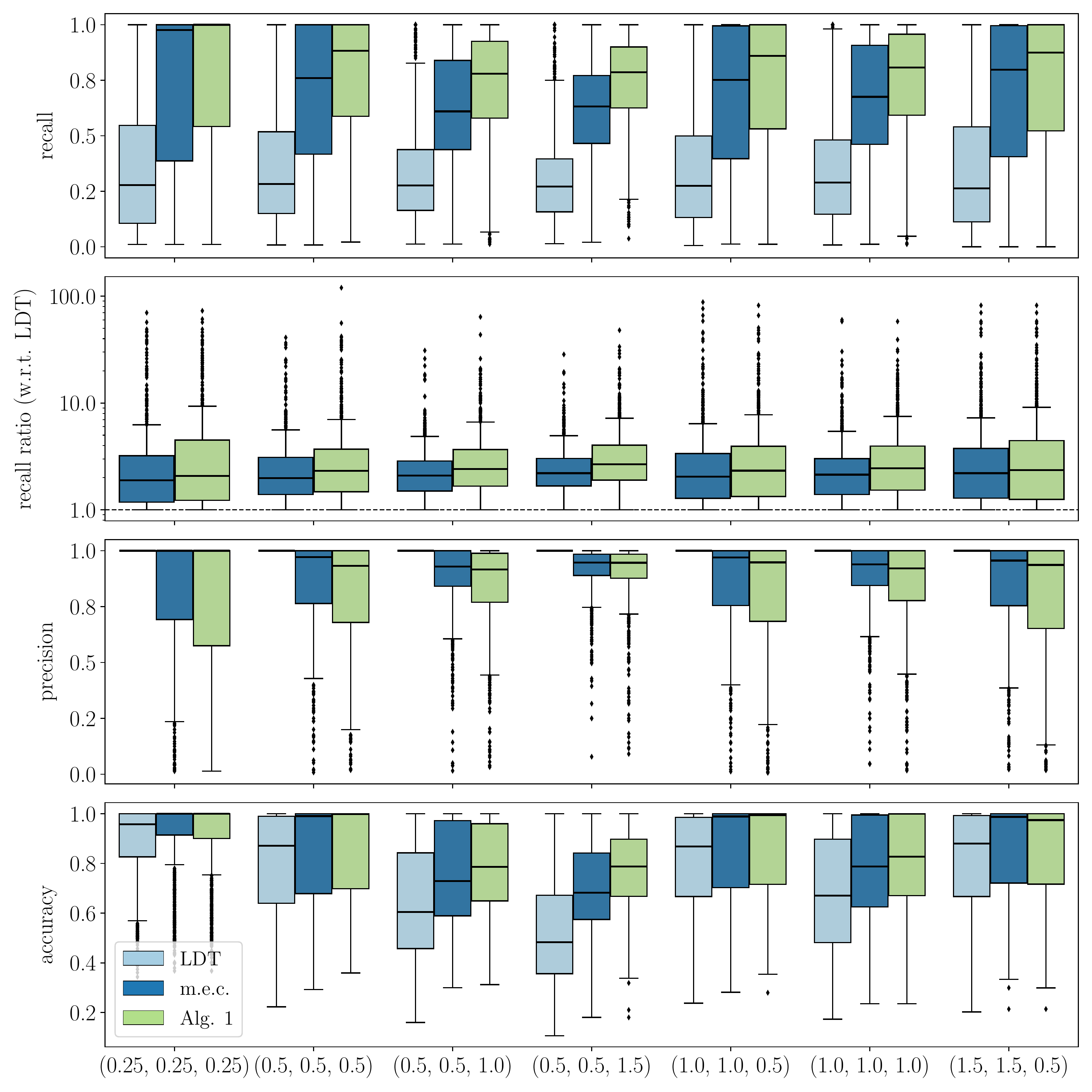}
  \end{center}
  \caption{Xenologs inferred from LDT graphs. Only observable scenarios
    $\scen$ whose LDT graph $(\Gu(\scen),\sigma)$ contains at least one edge are
    included (82.3\% of all scenarios). The tuples on the horizontal
    axis give the rates for duplication, loss, and horizontal transfer.
    Top panel: Recall.  Fraction of
    edges in $\gfitch(\scen)$ represented in $\Gu(\scen)$ (light blue). As
    an alternative, the fraction of edges in a ``minimum edge completion''
    (m.e.c.) to the ``closest'' complete multipartite graph is shown in
    dark blue. We observe a substantial increase in the fraction of
    inferred edges. The Fitch graph $\gfitch(\scen')$ obtained from the
    scenario $\scen'$ produced by Alg.~\ref{alg:Ru-recognition} with input
    $(\Gu(\scen),\sigma)$ yields an even better recall (light green).
    Second panel: Increase in the number of correctly inferred edges
    relative to the LDT graph $\Gu(\scen)$.  Third panel: Precision.  In
    contrast to LDT graphs, which by Thm.~\ref{thm:infer-fitch} cannot
    contain false positive edges, this is not the case for the estimated
    Fitch graphs obtained as m.e.c.\ and by Alg.~\ref{alg:Ru-recognition}.
    While false positive edges are typically rare, occasionally very poor
    estimates are observed. Bottom panel: Accuracy.}
  \label{fig:fitch-approx-bp}
\end{figure}

We first ask how well the LDT graph $\Gu(\scen)$ approximates the Fitch
graph $\gfitch(\scen)$. As shown in Fig.~\ref{fig:fitch-approx-bp}, the
recall is limited. Over a broad range of parameters, the LDT graph contains
about a third of the xenologous pairs. This begs the question whether the
solution of the editing Problem~\ref{problem:Fcomp}, obtained using the
exact recursive algorithm detailed in Sec.~\ref{app:edit} in the Technical
Part, leads to a substantial improvement. We find that recall indeed
increases substantially, at very moderate levels of false positives. The
editing approach achieves a median precision of well above 90\% in most
cases and a median recall of at least 60\%, it provides results that are at
the very least encouraging. We find that minimal edge completion
(Problem~\ref{problem:Fcomp}) already yields an rs-Fitch graph in the vast
majority of cases ($99.8$\%, scenarios of all parameter combinations taken
into account), even if we restrict the color set to $M'\coloneqq \sigma(L)$
(instead of $L(S)$) and thus force surjectivity of the coloring $\sigma$.
We note that the original LDT graph and the minimal edge completion may not
always be explained by a common scenario. This suggests that it will be
worthwhile to consider the more difficult editing problems for rs-Fitch
graphs with a  relaxed  scenario $\scen$ that at the same time explains the 
LDT graph.

Alg.~\ref{alg:Ru-recognition} provides a means to obtain an rs-Fitch graph
satisfying the latter constraint but without giving any guarantees for
optimality in terms of a minimal edge completion. An implementation is
available in the current release of the \texttt{AsymmeTree} package.
For the rs-Fitch graphs $\gfitch(\scen')$ of the scenarios $\scen'$
constructed by Alg.~\ref{alg:Ru-recognition} with $(\Gu(\scen),\sigma)$ as
input, we observe another moderate increase of recall when compared with
the minimal edge completion results. This comes, however, at the expense of
a loss in precision.  This is not surprising, since $\gfitch(\scen')$ by
construction contains at least as many edges as any minimal edge completion
of $\Gu(\scen)$.  Therefore, the number of both true positive and false
positive edges in $\gfitch(\scen')$ can be expected to be higher, resulting
in a higher recall and lower precision, respectively.

The recall is given by $TP / (TP + FN)$, and $|E(\gfitch(\scen))|= TP + FN$
in terms of true positives $TP$ and false negatives $FN$. Moreover,
$\Gu(\scen)$ is a subgraph of the Fitch graphs $\gfitch_{\textrm{m.e.c.}}$
and $\gfitch(\scen')$ inferred with editing or with
Alg.~\ref{alg:Ru-recognition}, respectively.  The ratio
$|E(\gfitch(\scen)) \cap E(\gfitch^*)| / |E(\gfitch(\scen) \cap
E(\Gu(\scen)))|$ with
$\gfitch^*\in \{\gfitch_{\textrm{m.e.c.}}, \gfitch(\scen') \}$ therefore
directly measures the increase in the number of correctly predicted
xenologous pairs relative to the LDT. It is equivalent to the ratio of the
respective recalls.  By construction, the ratio is always $\ge1$. This is
summarized as the second panel in Fig.~\ref{fig:fitch-approx-bp}.

\section{Discussion and Future Directions}
\label{sect:concl}

In this contribution, we have introduced  \emph{later-divergence-time 
  (LDT) graphs} as a model capturing the subset of horizontal transfer 
  detectable
through the pairs of genes that have diverged later than their respective
species. Within the setting of relaxed scenarios, 
LDT graphs $(G,\sigma)$ are
exactly the properly colored cographs with a consistent triple set 
$\Sri(G,\sigma)$. We further showed that LDT graphs
describe a sufficient set of HGT events if and only if they are complete
multipartite graphs. This corresponds to scenarios in which all HGT events
are replacing. Otherwise, additional HGT events exist that separate genes
from the same species. To better understand these, we investigated
scenario-derived rs-Fitch graphs and characterized them as those complete
multipartite graphs that satisfy an additional constraint on the coloring
(expressed in terms of an auxiliary graph).  Although the information
contained in LDT graphs is not sufficient to unambiguously determine the
missing HGT edges, we arrive at an efficiently solvable graph editing
problem from which a ``best guess'' can be obtained. To our knowledge, this
is the first detailed mathematical investigation into the power and
limitation of an implicit phylogenetic method for HGT inference.

From a data analysis point of view, LDT graphs appear to be an attractive
avenue to infer HGT in practice. While existing methods to estimate them
from (dis)similarity data certainly can be improved, it is possible to use
their cograph structure to correct the initial estimate in the same way as
orthology data \cite{Hellmuth:15a}. Although the LDT modification
problems are NP-complete (Thm.~\ref{thm:LDT-M-NP}), it does not appear
too difficult to modify efficient cograph editing heuristics
\cite{Crespelle:19x,Hellmuth:20b} to accommodate the additional coloring
constraints.

LDT graphs by themselves clearly do no contain sufficient information
to completely determine a relaxed scenario. Additional information, e.g.\
a best match graph \cite{Geiss:19a,Geiss:20b} will certainly be
required. The most direct practical use of LDT information is to infer
the Fitch graph, whose independent sets correspond to maximal HGT-free
subsets of genes. These subsets can be analyzed separately
\cite{Hellmuth:2017} using recent results to infer gene family histories,
including orthology relations from best match data
\cite{Geiss:20b,Schaller:20x}. The main remaining unresolved question is
whether the resulting HGT-free subtrees can be combined into a complete
scenario using only relational information such as best match data.  One
way to attack this is to employ the techniques used by \citet{LH:20} to
characterize the conditions under which a fully event-labeled gene tree
can be reconciled with unknown species trees. These not only resulted in
an polynomial-time algorithm but also establishes additional constraints
on the HGT-free subtrees.  An alternative, albeit mathematically less
appealing approach is to adapt classical phylogenetic methods to
accommodate the HGT-free subtrees as constraints. We suspect that best
match data can supply further, stringent constraints for this task. We
will pursue this avenue elsewhere.

Several alternative routes can be followed to obtain Fitch graphs from LDT
graphs. The most straightforward approach is to elaborate on the editing
problems briefly discussed in Sec.~\ref{sect:edit}.  A natural question
arising in this context is whether there are non-LDT edges that are shared
by all minimal completion sets $Q$, and whether these ``obligatory
Fitch-edges'' can be determined efficiently. A natural alternative is to
modify Algorithm~\ref{alg:Ru-recognition} to incorporate some form of cost
function to favor the construction of biologically plausible scenarios. In
a very different approach, one might also consider to use LDT graphs as
constraints in probabilistic models to reconstruct scenarios, see e.g.\
\cite{Sjostrand:14,Khan:16}.

Although we have obtained characterizations of both LDT graphs and rs-Fitch
graphs, many open questions and avenues for future research remain.
\paragraph{Reconciliation maps.} The notion of \emph{relaxed reconciliation
  maps} used here appears to be at least as general as alternatives that
have been explored in the literature. It avoids the concurrent definition
of event types and thus allows situations that may be excluded in a more
restrictive setting. For example, relaxed scenarios may have two or more
vertically inherited genes $x$ and $y$ in the same species with
$u\coloneqq \lca_T(x,y)$ mapping to a vertex of the species trees. In the
usual interpretation, $u$ correspond to a speciation event (by virtue of
$\mu(u)\in V^0(S)$); on the other hand, the descendants $x$ and $y$
constitute paralogs in most interpretations. Such scenarios are explicitly
excluded e.g.\ in \cite{Stadler:20a}. Lemma~\ref{lem:NoR=} suggests that
relaxed scenarios are sufficiently flexible to make it possible to replace
a scenario $\scen$ that is ``forbidden'' in response to such inconsistent
interpretations of events by an ``allowed'' scenario $\scen'$ with the same
$\sigma$ such that $\Gu(\scen)=\Gu(\scen')$. Whether this is indeed true,
or whether a more restrictive definition of reconciliation imposes
additional constraints of LDT graphs will of course need to be checked in
each case.

The restriction of a $\mu$-free scenario to a subset $L'$ of leaves of $T$
and to a subset $M'$ of leaves of $S$ is well defined as long as
$\sigma(L')\subseteq M'$.  One can also define a corresponding restriction
of the reconciliation map $\mu$. Most importantly, the deletion of some
leaves of $T$ may leave inner vertices in $T$ with only a single child,
which are then suppressed to recover a phylogenetic tree. This replaces paths
in $T$ by single edges and thus affects the definition of the HGT map
$\lambda_{\scen}$ since a path in $T$ that contains two adjacent vertices
$u_1$, $u_2$ with incomparable images $\mu(u_1)$ and $\mu(u_2)$ 
may be
replaced by an edge with comparable end points in the restricted scenario
$\scen'$. This means that HGT events may become invisible, and thus
$\gfitch(\scen')$ is not necessarily an \emph{induced} subgraph of
$\gfitch(\scen)$, but a subgraph that may lack additional edges. Note that
this is in contrast to the \emph{assumptions} made in the analysis of
(directed) Fitch graphs of 0/1-edge-labeled graphs
\cite{Geiss:18a,Hellmuth:2019a}, where the information on horizontal
transfers is inherited upon restriction of $(T,\lambda)$.

\paragraph{Observability.} The latter issue is a special case of the more
general problem with \emph{observability} of events. Conceptually, we
assume that evolution followed a \emph{true scenario} comprising discrete
events (speciations, duplications, horizontal transfer, gene losses, and
possibly other events such as hybridization which are not considered here).
In computer simulations, of course we know this true scenario, as well as
all event types. Gene loss not only renders some leaves 
invisible
but also erases the evidence of all subtrees without surviving
leaves. Removal of these vertices in general results in a non-phylogenetic
gene tree that contains inner vertices with a single child. In the absence
of horizontal transfer, this causes little problems and the
\emph{unobservable vertices} can be be removed as described in the previous
paragraph, see e.g.\ \cite{HernandezRosales:12a}. The situation is more
complicated with HGT. In \cite{Nojgaard:18a}, an HGT-vertex is deemed
observable if it has both a horizontally and a vertically inherited
descendant. In our present setting, the scenario retains an HGT-edge by
virtue of consecutive vertices in $T$ with incomparable $\mu$-images,
irrespective of whether an HGT-vertex is retained. This type of
``vertex-centered'' notion of xenology is explored further in
\cite{Hellmuth:17a}.  We suspect that these different points of view can be
unified only when gene losses are represented explicitly or when gene and
species tree trees are not required to be phylogenetic (with single-child
vertices implicating losses). Either extension of the theory, however,
requires a more systematic understanding of which losses need to be
represented and what evidence can be acquired to ``observe'' them.

\paragraph{Impact of Orthology.} Pragmatically, one would define two genes
$x$ and $y$ to be \emph{orthologs} if $\mu(\lca_T(x,y))\in V^0(S)$, i.e.,
if $x$ and $y$ are the product of a speciation event. Lemma~\ref{lem:NoR=}
implies that there is always a scenario without any orthologs that
explains a given LDT graph $(G,\sigma)$. In particular, therefore,
$(G,\sigma)$ makes no implications on orthology. Conversely, however,
orthology information is available and additional information on HGT might
become available. In a situation akin to Fig.~\ref{fig:2plausibeScen}
(with the ancestral duplication moved down to the speciation), knowing that
$a$ and $b$ are orthologs in the more restrictive sense that
$\mu(\lca_T(a,b))=\lca_S(\sigma(a),\sigma(b))$ excludes the r.h.s.\
scenario and implies that $a'$ is the horizontally inherited child, and
therefore also that $a$ and $a'$ are xenologs. This connection of
orthology and xenology will be explored elsewhere.

\paragraph{Other types of implicit phylogenetic information.} LDT graphs
are not the only conceivable type of accessible xenology information. A
large class of methods is designed to assess whether a single gene is
\emph{a xenolog}, i.e., whether there is evidence that it has been
horizontally inserted into the genome of the recipient species. The
main subclasses evaluate nucleotide composition patterns, the phyletic
distribution of best-matching genes, or combination thereof. A recent
overview can be found e.g.\ in \cite{SanchezSoto:20}. It remains an open
question how this information can be utilized in conjunction with other
types of HGT information, such as LDT graphs. It seems reasonable to
expect that it can provide not only additional constraints to infer
rs-Fitch graphs but also provides directional information that may help
to infer the directed Fitch graphs studied by
\cite{Geiss:18a,Hellmuth:2019a}. Complementarily, we may ask whether it
is possible to gain direct information on HGT edges between pairs of genes
in the same genome, and if so, what needs to be measured to extract this
information efficiently.

We also have to leave open several mathematical questions. Regarding
0/1-edge labeled trees $(T,\lambda)$, it would be of interest to know
whether there is always a relaxed scenario
$\scen = (T,S,\sigma,\mu,\tT,\tS)$ such that
$(T,\lambda) = (T,\lambda_{\scen})$ for a suitable choice of $\sigma$.
Elaborating on Thm.~\ref{thm:FitchRu-scenario}, it would be interesting
to characterize the leaf colorings $\sigma$ for $(T,\lambda)$ such that
there is a relaxed scenario $\scen$ with
$\gfitch(T,\lambda) = \gfitch(\scen)$.

\subsection*{Acknowledgments}
  We thank the three anonymous referees for their valuable comments that
  helped to significantly improve the paper.  This work was funded in part
  by the Deutsche Forschungsgemeinschaft (proj.\ CO1 within CRG 1423, no.\
  421152132 and proj. MI439/14-2), and by the Natural Sciences and
  Engineering Research Council of Canada (NSERC, grant
  RGPIN-2019-05817).

\begin{appendix}
  
  \section*{Technical Part}
  
  \section{Later-Divergence-Time Graphs}
  \label{TP:sect:LDT}
  
  \subsection{LDT Graphs and Evolutionary Scenarios}
  
  In the absence of horizontal gene transfer, the last common ancestor of two
  species $A$ and $B$ should mark the latest possible time point at which two
  genes $a$ and $b$ residing in $\sigma(a)=A$ and $\sigma(b)=B$,
  respectively, may have diverged. Situations in which this constraint is
  violated are therefore indicative of HGT.
  
  \begin{definition}[{$\mathbf{\mu}$}-free scenario]
    Let $T$ and $S$ be planted trees, $\sigma\colon L(T)\to L(S)$ be a map
    and $\tT$ and $\tS$ be time maps of $T$ and $S$, respectively, such that
    $\tT(x) = \tS(\sigma(x))$ for all $x\in L(T)$.  Then,
    $\mfscen=(T,S,\sigma,\tT,\tS)$ is called a \emph{$\mu$-free scenario}.
    \label{def:mu-free}
  \end{definition}
  
  The condition that $\tT(x) = \tS(\sigma(x))$ for all $x\in L(T)$ is mostly
  a technical convenience that makes $\mu$-free scenarios easier to
  interpret. Nevertheless, by Lemma~\ref{lem:arbitrary-tT}, given the time
  map $\tS$, one can easily construct a time map $\tT$ such that
  $\tT(x) = \tS(\sigma(x))$ for all $x\in L(T)$.  In particular, when
  constructing relaxed scenarios explicitly, we may simply choose $\tT(u)=0$
  and $\tS(x)=0$ as common time for all leaves $u\in L(T)$ and $x\in L(S)$.
  
  \begin{definition}[LDT  graph]
    For a $\mu$-free scenario $\mfscen=(T,S,\sigma,\tT,\tS)$, we define
    $\Gu(\mfscen) = \Gu(T,S,\sigma,\tT,\tS) = (V,E)$ as the graph with vertex
    set $V\coloneqq L(T)$ and edge set
    \begin{equation*}
      E \coloneqq \{ab\mid a,b\in L(T),
      \tT(\lca_T(a,b))<\tS(\lca_S(\sigma(a),\sigma(b))). \}
      \label{eq:Ru-def}
    \end{equation*}
    A vertex-colored graph $(G,\sigma)$ is a \emph{later-divergence-time graph}
    (\emph{LDT} graph), if there is a $\mu$-free scenario
    $\mfscen=(T,S,\sigma,\tT,\tS)$  such that
    $G=\Gu(\mfscen)$.  In this case, we say that $\mfscen$ \emph{explains}
    $(G,\sigma)$. \label{def:LDTgraph}
  \end{definition}
  
  It is easy to see that the edge set of $\Gu(\mfscen)$ defines an
  \emph{undirected} graph and that there are no edges of the form $aa$, since
  $\tT(\lca_T(a,a)) = \tT(a) = \tS(\sigma(a))
  =\tS(\lca_S(\sigma(a),\sigma(a)))$. Hence $\Gu(\mfscen)$ is a simple graph.
  
  By definition, every relaxed scenario $\scen =(T,S,\sigma,\mu,\tT,\tS)$
  satisfies $\tT(x)=\tS(\sigma(x))$ all $x \in L(T)$. Therefore, removing
  $\mu$ from $\scen$ yields a $\mu$-free scenario
  $\mfscen=(T,S,\sigma,\tT,\tS)$. Thus, we will use the following simplified
  notation.
  \begin{definition}
    We put $\Gu(\scen) \coloneqq \Gu(T,S,\sigma,\tT,\tS)$ for a given relaxed
    scenario $\scen =(T,S,\sigma,\mu,\tT,\tS)$ and the underlying $\mu$-free
    scenario $(T,S,\sigma,\tT,\tS)$ and say, by slight abuse of notation,
    that $\scen$ \emph{explains} $(\Gu(\scen),\sigma)$.
    \label{def:Gu-scen}
  \end{definition}
  
  \begin{lemma}
    For every $\mu$-free scenario $\mfscen=(T,S,\sigma,\tT,\tS)$, there is a
    relaxed scenario $\scen=(T,S,\sigma,\mu,\widetilde\tT,\widetilde\tS)$ for
    $T, S$ and $\sigma$ such that
    $(\Gu(\mfscen),\sigma) = (\Gu(\scen), \sigma)$.
    \label{lem:mfscen}
  \end{lemma}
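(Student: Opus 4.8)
The plan is to keep the trees $T$ and $S$, the coloring $\sigma$, and the species time map, setting $\widetilde\tS:=\tS$, and to manufacture a reconciliation map $\mu$ together with a gently rescaled gene time map $\widetilde\tT$. The crucial leverage is that a relaxed reconciliation map is constrained only by (G0), (G1), (G2) and is \emph{not} required to respect the ancestor order; hence building $\mu$ amounts to nothing more than placing every vertex of $T$ at a time-consistent position in $S$. First I would record that the edge set of $\Gu(\mfscen)$ depends only on the signs of the finitely many comparisons $\tT(\lca_T(a,b))<\tS(\lca_S(\sigma(a),\sigma(b)))$, whose right-hand sides range over the finite set $\Theta:=\{\tS(\lca_S(\sigma(a),\sigma(b)))\mid a,b\in L(T)\}\subseteq \tS(V(S))$; since an $\lca$ of leaves of $S$ is at most $\rho_S$, every value in $\Theta$ is bounded above by $\tS(\rho_S)<\tS(0_S)$. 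Consequently, any time map $\widetilde\tT$ that (a) agrees with $\tT$ on $L(T)$, (b) is order-preserving along $T$, and (c) keeps every inner vertex of $T$ on the same side of each value of $\tS(V(S))$ as $\tT$, produces exactly the same LDT graph.

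Second, I would actually construct such a $\widetilde\tT$ that additionally squeezes all of $V^0(T)$ strictly below $\tS(0_S)$ and strictly between consecutive node-times of $S$. The node-times $\tS(V(S))$ cut the time axis into open slots; I assign each inner vertex $v$ to the slot dictated by the position of $\tT(v)$, breaking a tie $\tT(v)=\tS(w)$ upward so that former non-edges stay non-edges, and spreading the finitely many vertices inside each slot so as to respect $\prec_T$. Because any two $\prec_T$-comparable vertices already satisfy $\tT(u)<\tT(v)$, their assigned slots are correctly nested and $\widetilde\tT$ is a genuine time map for $T$; the vertices lying above every species divergence (including $\rho_T$) are parked in the top slot $(\tS(\rho_S),\tS(0_S))$, and I set $\widetilde\tT(0_T):=\tS(0_S)$.

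Third, I would define $\mu$ by $\mu(0_T):=0_S$, $\mu(x):=\sigma(x)$ for $x\in L(T)$, and, for each $v\in V^0(T)$, $\mu(v):=$ any edge $(x,y)\in E(S)$ with $\widetilde\tS(y)<\widetilde\tT(v)<\widetilde\tS(x)$. Such an edge exists because $\widetilde\tT(v)\in(\min_{\ell\in L(S)}\tS(\ell),\tS(0_S))$ and the path in $S$ from $0_S$ to a deepest leaf crosses every time level in this range, so some edge is ``active'' at height $\widetilde\tT(v)$. It then remains to check (G0), (G1), (G2) mechanically: only $0_T$ is sent to the vertex $0_S$ and only leaves are sent to their colors (every inner vertex is sent to an \emph{edge}, and $0_S\notin L(S)$), while (C1) holds at $0_T$ and at the leaves via $\widetilde\tT(x)=\tT(x)=\tS(\sigma(x))$ and (C2) holds at each inner vertex by the choice of $\mu(v)$. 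Finally, step one yields $\Gu(\scen)=\Gu(\mfscen)$ because $\widetilde\tS=\tS$ and $\widetilde\tT$ preserves all the defining comparisons.

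The step I expect to be the genuine obstacle is the second one: engineering $\widetilde\tT$ so that order-preservation along $T$, preservation of every comparison against $\tS(V(S))$, and confinement to the open interval $(\min_{\ell\in L(S)}\tS(\ell),\tS(0_S))$ all hold simultaneously. What makes it go through is precisely that the gene-tree ancestor order is automatically compatible with the comparisons it induces on $\Theta$, together with the observation that only the planted edge of $S$ sits above $\tS(\rho_S)$, so there is always room to accommodate the $T$-vertices that predate every species divergence without disturbing any comparison.
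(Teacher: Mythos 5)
Your proposal is correct, and its skeleton coincides with the paper's: keep $T$, $S$, $\sigma$, and build $\mu$ by sliding each gene-tree vertex onto the spine of $S$ -- the path from $0_S$ down to a leaf of minimal time -- at the height given by its time stamp. Where you diverge is in how the two technical obstacles are handled. The paper leaves \emph{all} interior times untouched and instead (i) raises both planted roots to the common value $\max(\tT(0_T),\tS(0_S))$, which disposes of gene vertices sitting above the species root without relocating them (harmless for $\Gu$ since $\lca$'s of leaves are never planted roots), and (ii) simply allows $\mu(v)$ to be a \emph{vertex} of $S$ when $\tT(v)$ coincides exactly with a node time on the path, using condition (C1) rather than forcing edge-images. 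You instead keep $\tS$ entirely fixed and perturb all of $V^0(T)$ into ``general position'' in the open slots between consecutive values of $\tS(V(S))$, with upward tie-breaking so that equalities (which encode non-edges, as the defining inequality is strict) stay non-edges, and you park the vertices above $\tS(\rho_S)$ in the top slot $(\tS(\rho_S),\tS(0_S))$ -- legitimate because every threshold $\tS(\lca_S(\sigma(a),\sigma(b)))$ is at most $\tS(\rho_S)$. (Your stated invariant (c), ``same side of each value of $\tS(V(S))$,'' is technically violated at $\tS(0_S)$ by this parking, but your own justification via $\Theta$ covers it.) The trade-off: the paper's route avoids the slot machinery altogether and its comparison-preservation argument is one line, at the price of checking (C1) for coincidental vertex-images; your route costs the careful verification you flag in step two (realizing $\widetilde\tT$ as $\phi\circ\tT$ with $\phi$ strictly increasing on the finite set of inner times makes it routine), but buys, for free, a scenario in which every inner gene vertex maps to an \emph{edge} of $S$ -- a refinement the paper only obtains later, via the separate perturbation argument of Lemma~\ref{lem:NoR=} and its corollary that one may always assume $\widetilde\mu(v)\notin V^0(S)$.
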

  \begin{proof}
    Let $\mfscen=(T,S,\sigma,\tT,\tS)$ be a $\mu$-free scenario.  In order to
    construct a relaxed scenario
    $\scen=(T,S,\sigma,\mu,\widetilde\tT,\widetilde\tS)$ that satisfies
    $\Gu(\scen)=\Gu(\mfscen)$, we start with a time map $\widetilde\tT$ for
    $T$ satisfying $\widetilde\tT(0_T)=\max(\tT(0_T),\tS(0_S))$ and
    $\widetilde\tT(v)=\tT(v)$ for all $v\in
    V(T)\setminus\{0_T\}$. Correspondingly, we introduce a time map
    $\widetilde\tS$ for $S$ such that
    $\widetilde\tS(0_S)=\max(\tT(0_T),\tS(0_S))$ and
    $\widetilde\tS(v)=\tS(v)$ for all $v\in V(S)\setminus\{0_S\}$.  By
    construction, we have
    $t_{\max,T}\coloneqq \max\{\tT(v) \mid v\in V(T)\}=\tT(0_T)=\tS(0_S)$.
    Moreover, we have
    $t_{\min,S}\coloneqq\min\{\tS(v) \mid v\in V(S)\} \le \min\{\tT(v) \mid
    v\in V(T)\}\eqqcolon t_{\min,T}$.  To see this, we can choose $x\in V(T)$
    such that $\tT(v)=t_{\min,T}$. By the definition of time maps and
    minimality of $\tT(v)$, the vertex $x$ must be a leaf.  Hence, since
    $\mfscen$ is a $\mu$-free scenario, we have $\tT(x)=\tS(\sigma(x))$ with
    $X\coloneqq\sigma(x)\in L(S)\subset V(S)$.  Therefore, it must hold that
    $t_{\min,S}\le t_{\min,T}$.  We now define
    $P\coloneqq\{p\in V(S)\cup E(S) \mid X\preceq_{S} p\}$, i.e., the set of
    all vertices and edges on the unique path in $S$ from $0_S$ to the leaf
    $X$.  Since $\tS(X)= t_{\min,T} < t_{\max,T} = \tS(0_S)$, we find, for
    each $v\in V(T)$, \emph{either} a vertex $u\in P$ such that
    $\tT(v)=\tS(u)$ \emph{or} an edge $(u,w)\in P$ such that
    $\tS(w)<\tT(v)<\tS(u)$.  Hence, we can specify the reconciliation map
    $\mu$ by defining, for every $v\in V(T)$,
    \begin{equation*}
      \mu(v) \coloneqq
      \begin{cases}
        0_S &\text{if } v=0_T,\\
        \sigma(v) &\text{if } v\in L(T),\\
        u &\text{if there is some vertex } u\in P \textrm{ with } 
        \tT(v)=\tS(u),\\
        (u,w) &\text{if there is some edge } (u,w)\in P \textrm{ with }
        \tS(w)<\tT(v)<\tS(u).
      \end{cases}
    \end{equation*}
    For each $v\in V^0(T)$, exactly one of the two alternatives for $P$
    applies, hence $\mu$ is well-defined. It is now an easy task to verify
    that all conditions in Definitions~\ref{def:tc-map}
    and~\ref{def:relaxed-reconc} are satisfied for
    $\scen=(T,S,\sigma,\mu,\widetilde\tT,\widetilde\tS)$ by construction.
    Hence, by Def.~\ref{def:relaxed-scenario}, $\scen$ is a relaxed scenario.
    
    It remains to show that $\Gu(\mfscen)=\Gu(\scen)$.  Let $a,b\in L(T)$ be
    arbitrary. Clearly, neither $\lca_T(a,b)$ nor
    $\lca_S(\sigma(a),\sigma(b))$ equals the planted root $0_T$ or $0_S$,
    respectively. Since we have only changed the timing of the roots $0_T$ or
    $0_S$, we obtain $ab\in E(\Gu(\scen))$ if and only if
    $\widetilde\tT(\lca_T(a,b)) = \tT(\lca_T(a,b)) <
    \widetilde\tS(\lca_S(\sigma(a),\sigma(b))) =
    \tS(\lca_S(\sigma(a),\sigma(b)))$ if and only if $ab\in E(\Gu(\mfscen))$,
    which completes the proof.  
  \end{proof}
  
  \begin{theorem}
    $(G,\sigma)$ is an LDT graph if and only if there is a relaxed scenario
    $\scen = (T,S,\sigma,\mu,\tT,\tS)$  such that
    $(G,\sigma) = (\Gu(\scen),\sigma)$.
    \label{thm:LDT-scen}
  \end{theorem}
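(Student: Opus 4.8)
The plan is to prove the two directions separately, treating Lemma~\ref{lem:mfscen} as the workhorse for the substantive implication and reading off the reverse implication directly from the definitions. The statement is really a bookkeeping corollary that records the equivalence of the two scenario formalisms at the level of the graphs they encode.

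For the ``only if'' direction, suppose $(G,\sigma)$ is an LDT graph. By Def.~\ref{def:LDTgraph} there is a $\mu$-free scenario $\mfscen=(T,S,\sigma,\tT,\tS)$ with $G=\Gu(\mfscen)$. I would then invoke Lemma~\ref{lem:mfscen} to obtain a relaxed scenario $\scen=(T,S,\sigma,\mu,\widetilde\tT,\widetilde\tS)$ on the same $T$, $S$, and $\sigma$ (with only the root timings adjusted) satisfying $(\Gu(\mfscen),\sigma)=(\Gu(\scen),\sigma)$. Chaining the two equalities gives $(G,\sigma)=(\Gu(\scen),\sigma)$, so $\scen$ is the required relaxed scenario. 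For the ``if'' direction, suppose a relaxed scenario $\scen=(T,S,\sigma,\mu,\tT,\tS)$ satisfies $(G,\sigma)=(\Gu(\scen),\sigma)$. I would simply discard $\mu$: by Def.~\ref{def:relaxed-scenario} every relaxed scenario satisfies $\tT(x)=\tS(\sigma(x))$ for all $x\in L(T)$, so the tuple $\mfscen=(T,S,\sigma,\tT,\tS)$ is a legitimate $\mu$-free scenario, and by Def.~\ref{def:Gu-scen} we have $\Gu(\scen)=\Gu(\mfscen)$. Hence $G=\Gu(\mfscen)$, and Def.~\ref{def:LDTgraph} certifies that $(G,\sigma)$ is an LDT graph.

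As for where the genuine difficulty lies: both the ``if'' direction and the bookkeeping part of the ``only if'' direction are immediate from the definitions, and all of the real content is already absorbed into Lemma~\ref{lem:mfscen}, which I am entitled to assume here. The obstacle handled by that lemma is to manufacture a time-consistent reconciliation map $\mu$ from a $\mu$-free scenario without perturbing the induced LDT graph. The subtlety is that a $\mu$-free scenario only pins down the leaf timings via $\tT(x)=\tS(\sigma(x))$, whereas (G0)--(G2) force the planted roots to map to one another and require every inner gene vertex to be placed on a comparable vertex or edge of $S$ with a matching time stamp. Raising $0_T$ and $0_S$ to a common maximal time and mapping each $v\in V(T)$ along the root-to-$\sigma(x)$ path in $S$ achieves this while leaving every $\lca$-timing strictly below the roots untouched; since the edge relation is governed by the strict inequality $\tT(\lca_T(a,b))<\tS(\lca_S(\sigma(a),\sigma(b)))$ and neither $\lca_T(a,b)$ nor $\lca_S(\sigma(a),\sigma(b))$ is a planted root, the edge set is preserved, which is exactly the equality $\Gu(\mfscen)=\Gu(\scen)$ that the present theorem repackages.
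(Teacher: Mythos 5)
Your proof is correct and follows essentially the same route as the paper: the \emph{only-if} direction is exactly the paper's appeal to Lemma~\ref{lem:mfscen}, and the \emph{if} direction matches the paper's observation (made explicit in your write-up via $\tT(x)=\tS(\sigma(x))$ for all $x\in L(T)$, which the paper derives from Def.~\ref{def:tc-map}(C1) and Def.~\ref{def:relaxed-reconc}(G1,G2)) that discarding $\mu$ from a relaxed scenario yields a $\mu$-free scenario with the same LDT graph. Your closing sketch of why Lemma~\ref{lem:mfscen} holds also agrees with the paper's construction, so nothing is missing.
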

  \begin{proof}
    By definition, $(G,\sigma)$ is an LDT graph for every relaxed scenario
    $\scen$ with coloring $\sigma$ that satisfies
    $(G,\sigma) = (\Gu(\scen),\sigma)$.  Now suppose that $(G,\sigma)$ is an
    LDT graph. By definition, there is a $\mu$-free scenario
    $\mfscen=(T,S,\sigma,\tT,\tS)$ with coloring $\sigma$ such that
    $(G,\sigma)=(\Gu(\mfscen),\sigma)$.  By Lemma~\ref{lem:mfscen}, there is
    a relaxed scenario $\scen=(T,S,\sigma,\mu,\widetilde\tT,\widetilde\tS)$
    for $T, S$ and $\sigma$ such that $(G,\sigma) = (\Gu(\scen), \sigma)$.
  \end{proof}
  
  \begin{remark}
    From here on, we omit the explicit reference to Lemma~\ref{lem:mfscen}
    and Thm \ref{thm:LDT-scen} and assume that the reader is aware of the
    fact that every LDT graph is explained by some relaxed scenario $\scen$
    and that for every $\mu$-free scenario $\mfscen=(T,S,\sigma,\tT,\tS)$,
    there is a relaxed scenario $\scen$ for $T, S$ and $\sigma$ such that
    $(\Gu(\mfscen),\sigma) = (\Gu(\scen), \sigma)$.
  \end{remark}
  
  We now derive some simple properties of $\mu$-free and relaxed
  scenarios.  It may be surprising at first glance that ``the speciation
  nodes'', i.e., vertices $u\in V^0(T)$ with $\mu(u)\in V(S)$ do not play
  a special role in determining LDT graphs.
  \begin{lemma}
    For every relaxed scenario $\scen =(T,S,\sigma,\mu,\tT,\tS)$ there exists
    a relaxed scenario
    $\widetilde{\scen} = (T,S,\sigma,\widetilde\mu,\widetilde\tT,\tS)$ such
    that $\Gu(\widetilde{\scen})=\Gu(\scen)$ and for all distinct
    $x,y\in L(T)$ with $xy\notin E(\Gu(\scen))$ holds
    $\widetilde \tT(\lca_T(x,y))>\tS(\lca_S(\sigma(x),\sigma(y)))$.
    \label{lem:NoR=}
  \end{lemma}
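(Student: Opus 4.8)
The plan is to leave the trees $T$, $S$, the coloring $\sigma$, and the time map $\tS$ untouched, and to obtain $\widetilde{\scen}$ by raising the time of every inner vertex of $T$ by a single, sufficiently small amount $\epsilon>0$, keeping the leaves and the planted root $0_T$ fixed. Writing $\widetilde\tT(v)=\tT(v)+\epsilon$ for $v\in V^0(T)$ and $\widetilde\tT(v)=\tT(v)$ otherwise, the first observation I would record is that for distinct $x,y\in L(T)$ the vertex $\lca_T(x,y)$ always lies in $V^0(T)$: it is neither a leaf nor the planted root $0_T$, since $\lca_T(x,y)\preceq_T\rho_T\prec_T 0_T$. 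Hence its time is shifted by exactly $\epsilon$, while the comparison value $\tS(\lca_S(\sigma(x),\sigma(y)))$ on the species side is left unchanged; this is precisely what lets me steer the edge relation of $\Gu$ by a single parameter.

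To fix $\epsilon$, I would set
\[
\delta \coloneqq \min\{\, \tS(\lca_S(\sigma(x),\sigma(y))) - \tT(\lca_T(x,y)) \mid x,y\in L(T),\ xy\in E(\Gu(\scen)) \,\},
\]
which is strictly positive because every edge of $\Gu(\scen)$ witnesses a strict inequality and $L(T)$ is finite (set $\delta\coloneqq\infty$ if there are no edges). Recalling $\tT(0_T)=\tS(0_S)$ from (C1) together with $\tT(\rho_T)<\tT(0_T)$, I would choose any $\epsilon$ with $0<\epsilon<\min\{\delta,\ \tS(0_S)-\tT(\rho_T)\}$. With this choice $\widetilde\tT$ is again a time map for $T$: on an inner edge the shift is identical at both endpoints and preserves the strict inequality; on an edge from an inner vertex down to a leaf the gap only widens; and on the planted edge $(0_T,\rho_T)$ the bound $\epsilon<\tS(0_S)-\tT(\rho_T)$ guarantees $\widetilde\tT(\rho_T)<\widetilde\tT(0_T)=\tS(0_S)$. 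Since the leaves are untouched, $\widetilde\tT(x)=\tS(\sigma(x))$ still holds for all $x\in L(T)$, so $(T,S,\sigma,\widetilde\tT,\tS)$ is a $\mu$-free scenario.

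Next I would verify that $\Gu$ is preserved and that the required strict inequality is forced. For distinct $x,y\in L(T)$ put $v\coloneqq\lca_T(x,y)\in V^0(T)$ and $q\coloneqq\tS(\lca_S(\sigma(x),\sigma(y)))$. If $xy\in E(\Gu(\scen))$ then $q-\tT(v)\ge\delta>\epsilon$, so $\widetilde\tT(v)=\tT(v)+\epsilon<q$ and the edge survives. If $xy\notin E(\Gu(\scen))$ then $\tT(v)\ge q$, whence $\widetilde\tT(v)=\tT(v)+\epsilon\ge q+\epsilon>q$; this simultaneously keeps $xy$ a non-edge and yields exactly the claimed $\widetilde\tT(\lca_T(x,y))>\tS(\lca_S(\sigma(x),\sigma(y)))$. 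Thus $\Gu(T,S,\sigma,\widetilde\tT,\tS)=\Gu(\scen)$ and the extra conclusion holds at once.

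The only step that is not a direct computation is supplying the reconciliation map, and this is where I expect the care to be needed. The original $\mu$ need not remain time-consistent with $\widetilde\tT$: any vertex with $\mu(v)\in V(S)$ had $\tT(v)=\tS(\mu(v))$, an equality destroyed by the shift. Instead I would invoke Lemma~\ref{lem:mfscen} on the $\mu$-free scenario $(T,S,\sigma,\widetilde\tT,\tS)$ to obtain a reconciliation map. The subtle point is that its construction alters the time maps only at the planted roots, replacing $\widetilde\tT(0_T)$ and $\tS(0_S)$ by their maximum; because I have arranged $\widetilde\tT(0_T)=\tS(0_S)$, both values are left unchanged, so the resulting relaxed scenario is exactly $\widetilde{\scen}=(T,S,\sigma,\widetilde\mu,\widetilde\tT,\tS)$ with the intended time maps and, by the lemma, the same LDT graph. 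The main obstacle is therefore not the existence of $\widetilde\mu$ but the bookkeeping that this invocation does not perturb the times $\widetilde\tT$ and $\tS$ carrying the strict-inequality conclusion; the smallness of $\epsilon$ and the pinning $\widetilde\tT(0_T)=\tS(0_S)$ are exactly what make this safe.
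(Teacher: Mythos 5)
Your proof is correct, and its core device --- raising the times of all inner vertices of $T$ by one uniform small $\epsilon$ while fixing the leaves and $0_T$ --- is exactly the paper's; the genuine divergence is in the two ingredients surrounding that shift. First, your $\epsilon$ is targeted: you only protect the slacks $\tS(\lca_S(\sigma(x),\sigma(y)))-\tT(\lca_T(x,y))$ over the edges of $\Gu(\scen)$ together with the planted-edge gap $\tS(0_S)-\tT(\rho_T)$, whereas the paper takes half the minimum over \emph{all} pairwise time differences $D_T\cup D_S\cup D_{TS}$. The paper needs that larger safety margin because, second, it keeps the original $\mu$ essentially intact, redefining only $\widetilde\mu(v)=(\parent(x),x)$ when $\mu(v)=x\in V^0(S)$, and then verifies (C1)/(C2) against the shifted times by hand; you instead discard $\mu$ entirely and regenerate a reconciliation map by applying Lemma~\ref{lem:mfscen} to the $\mu$-free scenario $(T,S,\sigma,\widetilde\tT,\tS)$, which is shorter and avoids re-verifying time consistency. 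Your bookkeeping for that delegation is sound: $\tT(0_T)=\tS(0_S)$ holds in $\scen$ by (G0) and (C1), is preserved by your shift since $0_T\notin V^0(T)$, and hence the root-time maximum taken inside Lemma~\ref{lem:mfscen} changes neither $\widetilde\tT$ nor $\tS$. Two remarks, neither a gap for the lemma as stated: (i) this step leans on the \emph{construction} in the proof of Lemma~\ref{lem:mfscen} rather than its statement, which only promises possibly distorted time maps --- you flag this correctly, but a write-up should cite the proof explicitly; (ii) the paper's explicit $\widetilde\mu$ buys the corollary recorded immediately after Lemma~\ref{lem:NoR=}, namely that one may additionally demand $\widetilde\mu(v)\notin V^0(S)$ for all $v\in V(T)$, a property your delegated $\widetilde\mu$ does not guarantee, since your $\epsilon$ is not chosen to avoid coincidences $\widetilde\tT(v)=\tS(u)$ with $u\in V^0(S)$.
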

  \begin{proof}
    For the relaxed scenario $\scen =(T,S,\sigma,\mu,\tT,\tS)$ we write
    $V^0(S)\coloneqq V(S)\setminus (L(S)\cup \{0_S\})$ and define
    \begin{align*}
      D_S &\coloneqq \{|\tS(y)-\tS(x)| \colon x,y\in V(S),\tS(x)\neq\tS(y)\}
      \textrm{,}\\
      D_T &\coloneqq \{|\tT(y)-\tT(x)| \colon x,y\in V(T),\tT(x)\neq\tT(y)\}
      \textrm{, and} \\
      D_{TS} &\coloneqq \{|\tT(x)-\tS(y)| \colon x\in V(T),\, y\in V(S),
      \tT(x)\neq \tS(y)\}.
    \end{align*}
    We have $D_S\ne\emptyset$ and $D_T\ne\emptyset$ since we do not consider
    empty trees, and thus, at least the ``planted'' edges $0_S\rho_S$ and
    $0_T\rho_T$ always exist. By construction, all values in $D_T$, $D_S$, and
    $D_{TS}$ are strictly positive. Now define
    \begin{equation*}
      \epsilon \coloneqq \frac{1}{2}\min (D_{ST}\cup D_S\cup D_T).
    \end{equation*}
    Since $D_S$ and $D_T$ are not empty, $\epsilon$ is well-defined
    and, by construction, $\epsilon>0$.
    Next we set, for all $v\in V(T)$, 
    \begin{equation*}
      \begin{split}
        \widetilde\tT(v) &\coloneqq
        \begin{cases}
          \tT(v)+\epsilon, \text{ if } v\in V^0(T)\\
          \tT(v), \text{ otherwise,} 
        \end{cases}\\
        \widetilde\mu(v) &\coloneqq
        \begin{cases}
          (\parent(x),x), \text{ if } \mu(v) = x\in V^0(S)\\
          \mu(v), \text{ otherwise.} 
        \end{cases} \\
      \end{split}
    \end{equation*}
    \begin{Xclaim}
      $\widetilde{\scen} \coloneqq (T,S,\sigma,\widetilde\mu,\widetilde\tT,\tS)$
      is a relaxed scenario.
    \end{Xclaim}
    \begin{claim-proof}
      By construction, if $\mu(v)\in (L(S)\cup \{0_S\})$ and thus,
      $\mu(v)\notin V^0(S)$, $\mu(v)$ and $\widetilde\mu(v)$
      coincide. Therefore, (G0) and (G1) are trivially 
      satisfied for
      $\widetilde\mu$. In order to show (G2), we first note that
      $\widetilde \tT(v)= \tT(v) = \tS(\sigma(v))$ holds for all $v \in L(T)$
      by Def.\ \ref{def:tc-map}.
      
      We next argue that $\widetilde\tT$ is a time map. To this end, let
      $x,y\in V(T)$ with $x\prec_T y$.  Hence, $\tT(x)<\tT(y)$ and, in
      particular, $\tT(y)-\tT(x)\geq 2\epsilon$.  Assume for contradiction
      that $\widetilde \tT(x) \geq \widetilde \tT(y)$.  This implies
      $\widetilde \tT(x) = \tT(x)+\epsilon$ and $\widetilde \tT(y) =\tT(y)$,
      since $\tT(x)<\tT(y)$ and $\epsilon>0$ always implies
      $\tT(x)+\epsilon <\tT(y) +\epsilon$ and $\tT(x) <\tT(y) +\epsilon$.
      Therefore,
      $\widetilde \tT(y) - \widetilde \tT(x) = \tT(y)-(\tT(x) + \epsilon)
      \geq \epsilon>0$ and thus, $\widetilde \tT(y) > \widetilde \tT(x)$; a
      contradiction.
      
      We continue with showing that the two time maps $\widetilde\tT$ and
      $\tS$ are time-consistent w.r.t.\ $\widetilde{\scen}$.  To see that
      Condition (C1) is satisfied, observe that, by construction,
      $\widetilde\mu(v)\in V(S)$ does hold only in case
      $\mu(v)\notin E(S)\cup V^0(S)$ and thus, $\mu(v)\in L(S) \cup
      \{0_S\}$. In this case, $\widetilde\mu(v) = \mu(v)$ and since $\mu(v)$
      satisfies (G1) we have $v\in L(T)\cup \{0_T\}$.  Thus, $v\notin V^0(T)$
      and, therefore, $\widetilde \tT(v) =\tT(v) = \tS(\mu(v))$. Therefore,
      Condition (C1) is satisfied.
      
      Now consider Condition (C2).  As argued above,
      $\widetilde \mu(v)\in E(S)$ holds for all
      $v\in V^0(T) = V(T)\setminus (L(T)\cup \{0_T\})$.  By construction,
      $\widetilde \tT(v) = \tT(v)+\epsilon$.  There are two cases:
      $\mu(v)=x\in V^0(S)$, or $\mu(v)=(y,x)\in E(S)$ with $y = \parent(x)$.
      The following arguments hold for both cases: We have
      $\widetilde \mu(v) = (y,x)\in E(S)$. Moreover,
      $\tS(x) \leq \tT(v)< \widetilde \tT(v)$ since $\tT$ and $\tS$ satisfy
      (C1) and (C2).  Furthermore, $\tT(v)<\tS(y)$ and, by construction,
      $\tS(y)-\tT(v)\geq 2\epsilon$.  This immediately implies that
      $\tS(y) \geq \tT(v) + 2\epsilon = \widetilde \tT(v) + \epsilon >
      \widetilde \tT(v)$.  In summary, $\tS(x) < \widetilde{\tT}(v) < \tS(y)$
      whenever $\widetilde \mu(v) = (y,x)\in E(S)$.  Therefore, Condition
      (C2) is satisfied for $\widetilde{\scen}$.
    \end{claim-proof}
    
    \begin{Xclaim}\label{claim:subset-edges}
      $E(\Gu(\scen)) \subseteq E(\Gu(\widetilde{\scen}))$.
    \end{Xclaim}
    \begin{claim-proof}
      Let $xy$ be an edge in $\Gu(\scen)$ and thus $x\ne y$, and set
      $v_T\coloneqq \lca_T(x,y)$ and
      $v_S\coloneqq \lca_S(\sigma(x),\sigma(y))$.  By definition, we have
      $\tT(v_T)<\tS(v_S)$.  Therefore, we have $\tS(v_S)-\tT(v_T)\in D_{TS}$
      and, hence, $\tS(v_S)-\tT(v_T)\ge 2\epsilon$.  Since $x\ne y$,
      $v_T=\lca_T(x,y)$ is an inner vertex of $T$.  By construction,
      therefore, $\widetilde{\tT}(v_T)=\tT(v_T)+\epsilon$.  The latter
      arguments together with the fact that $\tS$ remains unchanged imply
      that $\tS(v_S)-\widetilde{\tT}(v_T)\ge \epsilon>0$, and thus,
      $\widetilde{\tT}(v_T)<\tS(v_S)$.  Therefore, we conclude that $xy$ is
      an edge in $\Gu(\widetilde{\scen})$.
    \end{claim-proof}
    It remains to show
    \begin{Xclaim}
      For all distinct $x,y\in L(T)$ with $xy\notin E(\Gu(\scen))$, we have
      $\widetilde \tT(\lca_T(x,y))>\tS(\lca_S(\sigma(x),\sigma(y)))$.
    \end{Xclaim}
    \begin{claim-proof}
      Suppose $xy\notin E(\Gu(\scen))$ for two distinct $x,y\in L(T)$, and
      set $v_T\coloneqq \lca_T(x,y)$ and
      $v_S\coloneqq \lca_S(\sigma(x),\sigma(y))$.  By definition, this
      implies $\tT(v_T)\ge \tS(v_S)$.  Since $x\ne y$, we clearly have that
      $v_T=\lca_T(x,y)$ is an inner vertex of $T$, and hence,
      $\widetilde{\tT}(v_T)=\tT(v_T)+\epsilon$.  The latter two argument
      together with $\epsilon>0$ and the fact that $\tS$ remains unchanged
      imply that $\widetilde{\tT}(v_T)>\tS(v_S)$.
    \end{claim-proof}
    In particular, therefore, $xy\notin E(\Gu(\scen))$ implies that
    $xy\notin E(\Gu(\widetilde\scen))$ and therefore,
    $E(\Gu(\widetilde{\scen}))\subseteq E(\Gu(\scen))$. Together with Claim
    \ref{claim:subset-edges} and the fact that both $\Gu(\scen)$ and
    $\Gu(\widetilde\scen)$ have vertex set $L(T)$, we conclude that
    $\Gu(\scen) = \Gu(\widetilde{\scen})$, which completes the proof.  
  \end{proof}
  
  Since the relaxed scenario
  $\widetilde{\scen} = (T,S,\sigma,\widetilde\mu,\widetilde\tT,\tS)$ as
  constructed in the proof of Lemma~\ref{lem:NoR=} satisfies
  $\widetilde\mu(v)\notin V^0(S)$ we obtain
  \begin{corollary}
    For every relaxed scenario $\scen =(T,S,\sigma,\mu,\tT,\tS)$ there exists
    a relaxed scenario
    $\widetilde{\scen} = (T,S,\sigma,\widetilde\mu,\widetilde\tT,\tS)$ such
    that $\Gu(\widetilde{\scen})=\Gu(\scen)$ and
    $\widetilde\mu(v)\notin V^0(S)$ for all $v\in V(T)$.
  \end{corollary}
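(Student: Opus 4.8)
The plan is to obtain $\widetilde{\scen}$ essentially for free from Lemma~\ref{lem:NoR=} and then to read off the additional structural property directly from the reconciliation map constructed in its proof. There is no genuinely new content to prove: the substantive work—that $\widetilde{\scen}$ is a bona fide relaxed scenario realizing the same LDT graph—is exactly what Lemma~\ref{lem:NoR=} supplies, and the present statement only records a side property of that construction.

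Concretely, I would first apply Lemma~\ref{lem:NoR=} to the given relaxed scenario $\scen = (T,S,\sigma,\mu,\tT,\tS)$. This immediately yields a relaxed scenario $\widetilde{\scen} = (T,S,\sigma,\widetilde\mu,\widetilde\tT,\tS)$ over the same trees $T$, $S$ and the same coloring $\sigma$, with $\Gu(\widetilde{\scen}) = \Gu(\scen)$. The equality of LDT graphs is thus settled without further effort, and it only remains to verify that $\widetilde\mu(v)\notin V^0(S)$ for all $v\in V(T)$.

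Next I would inspect the explicit definition of $\widetilde\mu$ from the proof of Lemma~\ref{lem:NoR=}, which sends $v$ to the edge $(\parent(x),x)$ whenever $\mu(v)=x\in V^0(S)$, and keeps $\widetilde\mu(v)=\mu(v)$ otherwise. In the first branch $\widetilde\mu(v)\in E(S)$ is an edge, hence not an inner vertex of $S$; in the second branch the hypothesis $\mu(v)\notin V^0(S)$ together with $\mu(v)\in V(S)\cup E(S)$ forces $\mu(v)$ to be an edge of $S$, a leaf, or the planted root $0_S$, and since $\widetilde\mu(v)=\mu(v)$ the same is true of $\widetilde\mu(v)$. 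The two branches exhaust $V(T)$, so $\widetilde\mu(v)\notin V^0(S)$ for every $v$, as required.

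The only point worth double-checking—and this is as close as it gets to an obstacle—is that the case distinction defining $\widetilde\mu$ is exhaustive and unambiguous. This holds because every $\mu(v)$ lies in $V(S)\cup E(S)$ and the first branch is triggered precisely when $\mu(v)\in V^0(S)$, so each $v$ falls into exactly one case. With this observation the corollary follows at once from Lemma~\ref{lem:NoR=}.
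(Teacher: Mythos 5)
Your proposal is correct and takes essentially the same route as the paper: the paper also derives this corollary purely by observing that the map $\widetilde\mu$ constructed in the proof of Lemma~\ref{lem:NoR=} redirects every $\mu(v)\in V^0(S)$ to the edge $(\parent(\mu(v)),\mu(v))$ and leaves all other images unchanged, so $\widetilde\mu(v)\notin V^0(S)$ throughout. Your extra check that the two branches are exhaustive and unambiguous is sound and matches what the paper leaves implicit.
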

  Lemma~\ref{lem:NoR=}, however, does not imply that one can always find a
  relaxed scenario with a reconciliation map $\widetilde\mu$ for given trees
  $T$ and $S$ satisfying
  $\widetilde\mu(\lca_T(x,y))\succ_S\lca_S(\sigma(x),\sigma(y))$ for all
  distinct $x,y \in L(T)$ with $xy\notin E(\Gu(\scen))$, as shown in
  Example~\ref{ex:widetildeMu}.
  
  \begin{xmpl}\label{ex:widetildeMu}
    Consider the LDT graph $(\Gu(\scen),\sigma)$ with corresponding relaxed
    scenario $\scen$ as shown in Fig.~\ref{fig:counterexample-comparable-mu}.
    Note first that $v=\lca_T(a,b)=\lca_{T}(c,d)$ and $ab,cd\notin
    E(\Gu)$. To satisfy both
    $\widetilde\mu(v)\succ_S \lca_S(\sigma(a),\sigma(b))$ and
    $\widetilde\mu(v)\succ_S \lca_S(\sigma(c),\sigma(d))$, we clearly need
    that $\widetilde{\mu}(v)\succeq_S \rho_S$, and thus
    $\widetilde\tT(v)\ge \widetilde\tS(\rho_S)$.  However, $ad'\in E(\Gu)$
    and $\lca_{T}(a,d')=u$ imply that
    $\widetilde\tT(u)<\tS(\sigma(a),\sigma(d))=\tS(\rho_S)$.  Hence, we
    obtain $\widetilde\tT(u)<\tS(\rho_S)\le\widetilde\tT(v)$; a contradiction
    to $(u,v)\in E(T)$ and $\widetilde{\tT}$ being a time map for $T$.
    Therefore, there is no relaxed scenario
    $\widetilde{\scen} = (T,S,\sigma,\widetilde\mu,\widetilde\tT,\tS)$ such
    that $\Gu(\widetilde{\scen})=\Gu(\scen)$ and such that
    $\widetilde \mu(\lca_T(x,y))\succ_S\lca_S(\sigma(x),\sigma(y))$ for all
    distinct $x,y\in L(T)$ with $xy\notin E(\Gu(\scen))$.
  \end{xmpl}
  
  \begin{figure}[t]
    \begin{center}
      \includegraphics[width=0.6\textwidth]{./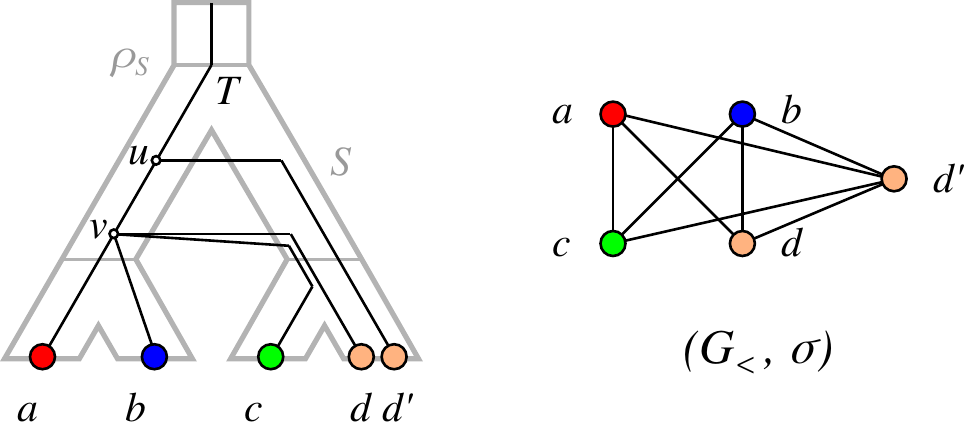}
    \end{center}
    \caption{Left a relaxed scenario $\scen =(T,S,\sigma,\mu,\tT,\tS)$ with
      corresponding graph $(\Gu(\scen),\sigma)$ (right). For
      $(\Gu(\scen),\sigma)$ there is no relaxed scenario
      $\widetilde{\scen} = (T,S,\sigma,\widetilde\mu,\widetilde\tT,\tS)$ such
      that $\Gu(\widetilde{\scen})=\Gu(\scen)$ and for all distinct
      $x,y\in L(T)$ with $xy\notin E(\Gu(\scen))$ it holds that
      $\widetilde \mu(\lca_T(x,y))\succ_S\lca_S(\sigma(x),\sigma(y))$, see
      Example \ref{ex:widetildeMu}.}
    \label{fig:counterexample-comparable-mu}
  \end{figure}
  
  For the special case that the graph under consideration has no edges we
  have
  \begin{lemma}
    For an edgeless graph $G$ and for any choice of ~$T$ and $S$ with
    $L(T)=V(G)$ and $\sigma(L(T))=L(S)$ there is a relaxed scenario
    $\scen =(T,S,\sigma,\mu,\tT,\tS)$ that satisfies $G = \Gu(\scen)$.
    \label{lem:Rempty}
  \end{lemma}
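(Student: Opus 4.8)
The plan is to reduce to a $\mu$-free scenario and then invoke Lemma~\ref{lem:mfscen} to supply the reconciliation map. More precisely, I would first build a $\mu$-free scenario $\mfscen=(T,S,\sigma,\tT,\tS)$ on the \emph{given} trees $T$ and $S$ (with the given coloring $\sigma$) whose LDT graph is edgeless, and then apply Lemma~\ref{lem:mfscen}, which keeps $T$, $S$, and $\sigma$ unchanged and only distorts the time maps, to obtain a relaxed scenario $\scen=(T,S,\sigma,\mu,\widetilde\tT,\widetilde\tS)$ with $\Gu(\scen)=\Gu(\mfscen)$. Since $\Gu(\mfscen)$ has vertex set $L(T)=V(G)$ and, by construction, no edges, it coincides with the edgeless graph $G$; the same then holds for $\Gu(\scen)$, which is exactly what is claimed.

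The whole difficulty is thus concentrated in choosing the two time maps so that no gene pair diverges later than its species pair. I would exploit the freedom granted by Lemma~\ref{lem:arbitrary-tT}: first fix a time map $\tS$ for $S$ with $\tS(X)=0$ for every leaf $X\in L(S)$, and set $t^\ast\coloneqq\tS(\rho_S)$. Since $\rho_S$ is the (unique) $\preceq_S$-maximal non-planted vertex, we have $\tS(v)\le t^\ast$ for all $v\in V(S)\setminus\{0_S\}$; in particular $\tS(\lca_S(\sigma(a),\sigma(b)))\le t^\ast$ for all $a,b\in L(T)$. I would then define $\tT$ by a postorder traversal of $T$: put $\tT(x)=0$ for every leaf $x$, and for every non-leaf vertex $v$ (including $\rho_T$ and $0_T$) set $\tT(v)$ to any value strictly greater than both $t^\ast$ and $\max_{u\in\child_T(v)}\tT(u)$, e.g.\ $\max(t^\ast,\max_{u\in\child_T(v)}\tT(u))+1$. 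This makes $\tT$ strictly increasing towards the root, hence a time map, and forces $\tT(v)>t^\ast$ for every non-leaf vertex $v$. Because the leaves of $S$ all carry time $0$, we get $\tT(x)=0=\tS(\sigma(x))$ for all $x\in L(T)$, so $\mfscen=(T,S,\sigma,\tT,\tS)$ is a genuine $\mu$-free scenario in the sense of Def.~\ref{def:mu-free}.

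It then remains to verify that $\Gu(\mfscen)$ is edgeless. For distinct $a,b\in L(T)$ the vertex $\lca_T(a,b)$ is a non-leaf of $T$, so $\tT(\lca_T(a,b))>t^\ast\ge\tS(\lca_S(\sigma(a),\sigma(b)))$; the defining inequality $\tT(\lca_T(a,b))<\tS(\lca_S(\sigma(a),\sigma(b)))$ of Def.~\ref{def:LDTgraph} therefore fails, and $ab\notin E(\Gu(\mfscen))$. Hence $\Gu(\mfscen)$ is edgeless with vertex set $V(G)$, i.e.\ $\Gu(\mfscen)=G$, and the reduction via Lemma~\ref{lem:mfscen} completes the argument. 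I do not expect any serious obstacle here: the construction is elementary, and the only point requiring a little care is that the inner times of $T$ must simultaneously dominate every relevant species time and keep $\tT$ monotone, which is guaranteed by bounding all candidate right-hand sides uniformly by $t^\ast=\tS(\rho_S)$ and then pushing every inner vertex of $T$ above $t^\ast$.
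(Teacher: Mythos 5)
Your proposal is correct and follows essentially the paper's own argument: both proofs hinge on placing every inner vertex of $T$ strictly above $\tS(\rho_S)$, so that $\tT(\lca_T(a,b))>\tS(\rho_S)\ge\tS(\lca_S(\sigma(a),\sigma(b)))$ for all distinct leaves $a,b$, forcing $\Gu$ to be edgeless. The only cosmetic difference is that the paper additionally caps the inner times below $\tS(0_S)$, which makes the reconciliation map immediate (all inner vertices of $T$ can be mapped into the planted edge of $S$), whereas you leave the times unbounded above and obtain $\mu$ by invoking Lemma~\ref{lem:mfscen} — which is legitimate, since that lemma preserves $T$, $S$, $\sigma$, and the LDT graph.
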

  \begin{proof}
    Given $T$ and $S$ we construct a relaxed scenario as follows. Let $\tS$
    be an arbitrary time map on $S$. Then we can choose $\tT$ such that
    $\tS(\rho_S)<\tT(u)<\tS(0_S)$ for all $u\in V^0(T)$. Each leaf
    $u\in L(T)$ then has a parent in $T$ located above the last common
    ancestor $\rho_S$ of all species in which case $\Gu(\scen)$ is edgeless.
  \end{proof}
  Lemma~\ref{lem:Rempty} is reminiscent of the fact that for DL-only
  scenarios any given gene tree $T$ can be reconciled with an arbitrary
  species tree as long as $\sigma(L(T))=L(S)$ \cite{Guigo:96,Geiss:20b}.
  
  \subsection{Properties of LDT Graphs} 
  
  \begin{proposition}
    Every LDT graph $(G,\sigma)$ is properly colored.
    \label{prop:properCol}
  \end{proposition}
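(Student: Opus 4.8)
The plan is to prove the contrapositive: if two distinct leaves $a,b\in L(T)$ receive the same color, then $ab$ cannot be an edge of $G$. First I would fix a $\mu$-free scenario $\mfscen=(T,S,\sigma,\tT,\tS)$ that explains $(G,\sigma)$, which exists by Def.~\ref{def:LDTgraph}, and recall that every such scenario satisfies the leaf-synchronization condition $\tT(x)=\tS(\sigma(x))$ for all $x\in L(T)$.

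Next, suppose $\sigma(a)=\sigma(b)=:A\in L(S)$ for two distinct leaves $a,b$. On the species side, $\lca_S(\sigma(a),\sigma(b))=\lca_S(A,A)=A$, so that $\tS(\lca_S(\sigma(a),\sigma(b)))=\tS(A)$. On the gene side, since $a\neq b$, the vertex $v\coloneqq\lca_T(a,b)$ is a \emph{strict} ancestor of $a$, i.e.\ $a\prec_T v$. Because $\tT$ is a time map for $T$, the monotonicity required in Def.~\ref{def:time-map} gives $\tT(a)<\tT(v)=\tT(\lca_T(a,b))$.

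Combining these two observations with the leaf-synchronization $\tT(a)=\tS(\sigma(a))=\tS(A)$ yields
\[
  \tT(\lca_T(a,b))>\tT(a)=\tS(A)=\tS(\lca_S(\sigma(a),\sigma(b))).
\]
In particular, the strict inequality $\tT(\lca_T(a,b))<\tS(\lca_S(\sigma(a),\sigma(b)))$ demanded by the edge condition of Def.~\ref{def:LDTgraph} fails, so $ab\notin E(G)$. Hence equal colors force non-adjacency, which is exactly the statement that $(G,\sigma)$ is properly colored.

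There is no genuine obstacle in this argument; it is a direct consequence of two features built into the definitions, namely the monotonicity of the time map along $\preceq_T$ and the synchronization $\tT(x)=\tS(\sigma(x))$ at the leaves. The only point that warrants a moment's care is the observation that $\lca_T(a,b)$ is a \emph{strict} ancestor of $a$, so that the inequality $\tT(a)<\tT(\lca_T(a,b))$ is strict; this holds precisely because $a$ and $b$ are assumed distinct, and it is what guarantees the edge condition is violated rather than merely met with equality.
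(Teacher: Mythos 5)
Your proof is correct and follows essentially the same route as the paper's: fix a $\mu$-free scenario, use the leaf synchronization $\tT(a)=\tS(\sigma(a))$ together with the strict monotonicity of the time map to get $\tT(\lca_T(a,b))>\tT(a)=\tS(A)=\tS(\lca_S(\sigma(a),\sigma(b)))$, and conclude $ab\notin E(G)$. The only cosmetic difference is that you phrase it as a contrapositive while the paper argues directly; the content is identical (and note that even equality of the times would already suffice, since the edge condition in Def.~\ref{def:LDTgraph} demands a strict inequality).
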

  \begin{proof}
    Let $\mfscen=(T,S,\sigma,\tT,\tS)$ be a $\mu$-free scenario such that
    $(G,\sigma) = (\Gu(\mfscen),\sigma)$ and recall that every $\mu$-free
    scenario satisfies $\tT(x) = \tS(\sigma(x))$ for all $x\in L(T)$ with
    $\sigma(x)\in L(S)$.  Let $a,b\in L(T)$ be distinct and suppose that
    $\sigma(a)=\sigma(b)=A$.  Since $a$ and $b$ are distinct we have
    $a,b\prec_T \lca_T(a,b)$ and hence, by Def.~\ref{def:time-map},
    $\tT(a) < \tT(\lca_T(a,b))$.  This implies that
    $\tT(a) = \tS(A) = \tS(\lca_S(A,A)) <\tT(\lca_T(a,b))$.  Therefore,
    $ab\notin E(G)$.  Consequently, $ab\in E(G)$ implies
    $\sigma(a)\neq \sigma(b)$, which completes the proof.  
  \end{proof}
  
  Extending earlier work of
  Dekker (\citeyear{Dekker:86ma}), Bryant and Steel
  (\citeyear{Bryant:95}) derived conditions under which two triples $r_1,r_2$
  imply a third triple $r_3$ that must be displayed by any tree that displays
  $r_1,r_2$. In particular, we make frequent use of the following
  \begin{lemma}
    If a tree $T$ displays $xy|z$ and $zw|y$ then $T$ displays $xy|w$ and
    $zw|x$. In particular $T_{|\{x,y,z,w\}} = ((x,y),(z,w))$ (in \emph{Newick}
    format).
    \label{lem:2order}
  \end{lemma}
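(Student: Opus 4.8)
The plan is to avoid \texttt{BUILD} or Prop.~\ref{prop:ahograph} entirely and argue locally, purely in terms of last common ancestors and the children of one well-chosen vertex. First I would rewrite the two hypotheses as lca-inequalities: by the definition of triple notation, $xy|z$ means $\lca_T(x,y)\prec_T \lca_T(x,z)=\lca_T(y,z)$, and $zw|y$ means $\lca_T(z,w)\prec_T \lca_T(z,y)=\lca_T(w,y)$. The crucial observation is that $\lca_T(y,z)$ occurs in both, so I would set $p\coloneqq\lca_T(y,z)=\lca_T(x,z)=\lca_T(w,y)$, which identifies a common ``meeting point'', together with $m\coloneqq\lca_T(x,y)$ and $n\coloneqq\lca_T(z,w)$; both hypotheses say precisely $m\prec_T p$ and $n\prec_T p$.

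Next I would locate the four leaves relative to the children of $p$. Since $m\prec_T p$, there is a child $c_1$ of $p$ with $m\preceq_T c_1$, and then $x,y\preceq_T c_1$; likewise there is a child $c_2$ of $p$ with $n\preceq_T c_2$, and then $z,w\preceq_T c_2$. The key step is to show $c_1\neq c_2$: if $c_1=c_2\eqqcolon c$, then all four leaves would lie below the single child $c\prec_T p$, forcing $\lca_T(x,z)\preceq_T c\prec_T p$ and contradicting $\lca_T(x,z)=p$. Consequently $x,y$ and $z,w$ sit in the two disjoint subtrees $T(c_1)$ and $T(c_2)$. (This separation also forces the four leaves to be pairwise distinct: the only a priori possibility $x=w$ would turn the hypotheses into $xy|z$ and $xz|y$, which contradict each other.)

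From $x,y\preceq_T c_1$, $z,w\preceq_T c_2$ and $c_1\neq c_2$, every cross-pair has its lca exactly at $p$; in particular $\lca_T(x,w)=\lca_T(y,w)=\lca_T(x,z)=\lca_T(y,z)=p$. The two desired triples then read off immediately: $\lca_T(x,y)=m\prec_T p=\lca_T(x,w)=\lca_T(y,w)$ is exactly $xy|w$, and $\lca_T(z,w)=n\prec_T p=\lca_T(z,x)=\lca_T(w,x)$ is exactly $zw|x$. For the Newick claim I would pass to the restriction $T_{|\{x,y,z,w\}}$: its root is $p$, the images of $m$ and $n$ become the two children of $p$ carrying the cherries $\{x,y\}$ and $\{z,w\}$, and after suppressing degree-two vertices this is precisely $((x,y),(z,w))$.

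I expect the only genuine obstacle to be the bookkeeping in the separation step, namely proving $c_1\neq c_2$ cleanly and ruling out the degenerate coincidence $x=w$; once the shared vertex $p$ has been pinned down, everything else is a direct reading-off from the inequalities, so no global compatibility machinery is required.
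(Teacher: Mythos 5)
Your proof is correct, but note that the paper itself contains no proof of Lemma~\ref{lem:2order}: the statement is imported as a known triple-inference (``dyadic'') rule, with a pointer to Dekker \cite{Dekker:86ma} and to Bryant and Steel \cite{Bryant:95}, where such implications between rooted triples are derived within a general framework. Your argument therefore supplies what the paper delegates to the literature, and does so by a purely local method: you identify the shared vertex $p=\lca_T(y,z)=\lca_T(x,z)=\lca_T(w,y)$ forced by the two hypotheses, place $\{x,y\}$ and $\{z,w\}$ below children $c_1,c_2$ of $p$, refute $c_1=c_2$ via $\lca_T(x,z)=p$, and read off $xy|w$, $zw|x$ and the quartet shape from the fact that every cross-pair has its lca exactly at $p$. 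Every step checks out, including the degenerate coincidence $x=w$ — though your parenthetical contradiction from $xy|z$ and $xz|y$ is actually redundant, since once $x,y\preceq_T c_1$ and $z,w\preceq_T c_2$ with $c_1\neq c_2$, the subtrees $T(c_1)$ and $T(c_2)$ are disjoint and $x=w$ is impossible outright. The one cosmetic liberty is that you treat the lca-characterization of ``$T$ displays $xy|z$'' as the definition; the paper defines the triple $ab|c$ via lcas but defines \emph{display} via restriction and inner-edge contraction, and the equivalence of the two, while completely standard and used tacitly throughout the paper, strictly deserves a one-line remark. What your route buys is a self-contained elementary proof that also makes the restricted tree $T_{|\{x,y,z,w\}}$ explicit (root $p$, children the images of $\lca_T(x,y)$ and $\lca_T(z,w)$), which is precisely the form in which the lemma is applied downstream, e.g.\ in the proof of Lemma~\ref{lem:propcolcograph}; what the citation buys is economy and the placement of this fact within a systematic calculus of triple inference rules.
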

  
  \begin{definition}
    For every graph $G=(L,E)$, we define the set of triples on $L$
    \begin{equation*}
      \Tri(G) \coloneqq \{xy|z \; \colon
      x,y,z\in L \text{ are pairwise distinct, }
      xy\in E,\; xz,yz\notin E\} \,.
    \end{equation*}
    If $G$ is endowed with a coloring $\sigma\colon L\to M$ we also define a
    set of color triples
    \begin{align*}
      \Sri(G,\sigma) \coloneqq \{\sigma(x)\sigma(y)|\sigma(z)\; \colon 
      & x,y,z\in L,\, \sigma(x),\sigma(y),\sigma(z) \text{ are pairwise 
        distinct},\\
      &xz, yz\in E,\; xy\notin E\}.
    \end{align*}
    \label{def:infoTriples}
  \end{definition}
  
  \begin{lemma}
    If a graph $(G,\sigma)$ is an LDT graph then $\Sri(G,\sigma)$ is
    compatible and $S$ displays $\Sri(G,\sigma)$ for every $\mu$-free
    scenario $\mfscen=(T,S,\sigma,\tT,\tS)$ that explains $(G,\sigma)$.
    \label{lem:Ru-SpeciesTriple}
  \end{lemma}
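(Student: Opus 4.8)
The plan is to prove the stronger, pointwise statement: for \emph{every} $\mu$-free scenario $\mfscen=(T,S,\sigma,\tT,\tS)$ explaining $(G,\sigma)$, the tree $S$ displays each triple of $\Sri(G,\sigma)$. Compatibility then comes for free: since every color occurring in $\Sri(G,\sigma)$ lies in $\sigma(L(T))\subseteq L(S)$, the single tree $S$ witnesses that $\Sri(G,\sigma)$ is compatible. So I would fix a triple $\sigma(x)\sigma(y)\mid\sigma(z)\in\Sri(G,\sigma)$, arising from genes $x,y,z$ with pairwise distinct colors $A\coloneqq\sigma(x)$, $B\coloneqq\sigma(y)$, $C\coloneqq\sigma(z)$ and with $xz,yz\in E$ and $xy\notin E$. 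Using $G=\Gu(\mfscen)$ and the edge definition, these three facts translate into $\tT(\lca_T(x,z))<\tS(\lca_S(A,C))$, $\tT(\lca_T(y,z))<\tS(\lca_S(B,C))$, and $\tT(\lca_T(x,y))\ge\tS(\lca_S(A,B))$, the last being the negation of the strict edge condition.

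The first key step is a reduction on the species side: it suffices to establish a \emph{single} strict inequality, $\tS(\lca_S(A,B))<\tS(\lca_S(A,C))$ or $\tS(\lca_S(A,B))<\tS(\lca_S(B,C))$, to conclude that $S$ displays $AB\mid C$. Indeed, for three leaves of a tree exactly one of four situations occurs: $S$ displays one of $AB\mid C$, $AC\mid B$, $BC\mid A$, or the three leaves form a fan with $\lca_S(A,B)=\lca_S(A,C)=\lca_S(B,C)$. Because $\tS$ is strictly increasing along $\prec_S$ (Def.~\ref{def:time-map}), displaying $AC\mid B$ forces $\tS(\lca_S(A,C))<\tS(\lca_S(A,B))$, whereas $BC\mid A$ and the fan both force $\tS(\lca_S(A,B))=\tS(\lca_S(A,C))$; each contradicts $\tS(\lca_S(A,B))<\tS(\lca_S(A,C))$, leaving only $AB\mid C$. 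The symmetric inequality involving $\lca_S(B,C)$ excludes the same three alternatives.

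The second key step produces this strict inequality from the gene tree. The combinatorial observation I rely on is that $\lca_T(x,y)\preceq_T\lca_T(x,y,z)$ while $\lca_T(x,y,z)$ coincides with at least one of $\lca_T(x,z)$ and $\lca_T(y,z)$; otherwise both of the latter would be strict descendants of $\lca_T(x,y,z)$, forcing $z$ into a common child subtree with each of $x$ and $y$ and contradicting maximality of $\lca_T(x,y,z)$. Hence $\tS(\lca_S(A,B))\le\tT(\lca_T(x,y))\le\tT(\lca_T(x,y,z))$, and $\tT(\lca_T(x,y,z))$ equals $\tT(\lca_T(x,z))$ or $\tT(\lca_T(y,z))$. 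In the first case the chain continues with $<\tS(\lca_S(A,C))$ and in the second with $<\tS(\lca_S(B,C))$, so one of the two strict inequalities of Step~2 always holds, which finishes the argument.

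I expect the main obstacle to be precisely the asymmetry between the strict and non-strict bounds: the non-edge $xy$ supplies only the non-strict $\tT(\lca_T(x,y))\ge\tS(\lca_S(A,B))$, so all strictness must originate from an edge, and the gene-tree topology on $\{x,y,z\}$ is a priori unknown. The observation in Step~3 is exactly what lets me bypass a case split over the (unknown) gene triple and still route the single available strict inequality to the correct one of $\lca_S(A,C)$ and $\lca_S(B,C)$, while Step~2 guarantees that either target inequality is enough, making the routing harmless. I would emphasize that this argument is self-contained and does not presuppose Lemma~\ref{lem:Ru-GeneTriple} on gene triples, which is established only afterwards.
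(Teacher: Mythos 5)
Your proof is correct and takes essentially the same route as the paper: both translate $xz,yz\in E$ and $xy\notin E$ into the time-map inequalities, derive the strict inequality $\tS(\lca_S(A,B))<\tS(\lca_S(A,C))$ or $\tS(\lca_S(A,B))<\tS(\lca_S(B,C))$, and conclude via monotonicity of $\tS$ that $S$ displays $AB|C$, with compatibility witnessed by $S$ itself. The only difference is organizational: where the paper enumerates the four coincidence patterns of the pairwise lcas of $x,y,z$ in $T$, you compress them into the single (correct) observation that $\lca_T(x,y,z)$ dominates $\lca_T(x,y)$ and equals at least one of $\lca_T(x,z)$, $\lca_T(y,z)$ --- exactly the content the paper checks case by case.
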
  
  \begin{proof}
    Suppose that $(G=(L,E),\sigma)$ is an LDT graph and let
    $\mfscen=(T,S,\sigma,\tT,\tS)$ be a $\mu$-free scenario that explains
    $(G,\sigma)$. In order to show that $\Sri(G,\sigma)$ is compatible it
    suffices to show that $S$ displays every triple in $\Sri(G,\sigma)$.
    
    Let $AB|C\in \Sri(G,\sigma)$.  By definition, $A,B,C$ are pairwise
    distinct and there must be vertices $a,b,c\in L$ with $\sigma(a)=A$,
    $\sigma(b)=B$, and $\sigma(c)=C$ such that $ab \notin E$ and
    $bc,ac \in E$.  First, $ab \notin E$ and $bc,ac \in E$ imply
    $\tT(\lca_T(a,b))\geq \tS(\lca_S(A,B))$,
    $\tT(\lca_T(b,c))<\tS(\lca_S(B,C))$, and
    $\tT(\lca_T(a,c))<\tS(\lca_S(A,C))$. Moreover, for any three vertices
    $a,b,c$ in $T$ it holds that
    $1 \leq |\{\lca_T(a,b),\lca_T(a,c),\lca_T(b,c)\}| \leq 2$.
    
    Therefore we have to consider the following four cases: (1)
    $u\coloneqq \lca_T(a,b)=\lca_T(b,c)=\lca_T(a,c)$, (2)
    $u\coloneqq \lca_T(a,b)=\lca_T(a,c)\neq\lca_T(b,c)$ and (3)
    $u\coloneqq\lca_T(a,b)=\lca_T(b,c)\neq\lca_T(a,c)$, (4)
    $\lca_T(a,b)\neq u\coloneqq \lca_T(b,c)=\lca_T(a,c)$.  Note, for any
    three vertices $x,y,z$ in $T$, $\lca_T(x,y)\neq \lca_T(x,z)=\lca_T(y,z)$
    implies that $\lca_T(x,y)\prec_T \lca_T(x,z)=\lca_T(y,z)$. In Cases (1)
    and (2), we find $\tS(\lca_S(A,C)) > \tT(u) \geq \tS(\lca_S(A,B))$.
    Together with the fact that $\lca_S(A,C)$ and $\lca_S(A,B)$ are
    comparable in $S$, this implies that $AB|C$ is displayed by $S$.  In Case
    (3), we obtain $\tS(\lca_S(B,C)) > \tT(u) \geq \tS(\lca_S(A,B))$ and, by
    analogous arguments, $AB|C$ is displayed by $S$.  Finally, in Case (4),
    the tree $T$ displays the triple $ab|c$.  Thus,
    $\tS(\lca_S(A,B))\leq \tT(\lca_T(a,b)) < \tT(u) < \tS(\lca_S(A,C))$.
    Again, $AB|C$ is displayed by $S$.  
  \end{proof}
  
  The next lemma shows that induced $K_2+K_1$ subgraphs in LDT graphs implies
  triples that must be displayed by $T$.
  
  \begin{lemma}
    If $(G,\sigma)$ is an LDT graph, then $\Tri(G)$ is compatible and $T$
    displays $\Tri(G)$ for every $\mu$-free scenario
    $\mfscen=(T,S,\sigma,\tT,\tS)$ that explains $(G,\sigma)$.
    \label{lem:Ru-GeneTriple}
  \end{lemma}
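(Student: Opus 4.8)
The plan is to prove the stronger claim that $T$ displays \emph{every} triple in $\Tri(G)$; compatibility of $\Tri(G)$ is then immediate, since a single tree $T$ witnessing displayability of all its triples certifies that the whole set is compatible (with $L(\Tri(G))\subseteq L(T)$). So I fix a $\mu$-free scenario $\mfscen=(T,S,\sigma,\tT,\tS)$ explaining $(G,\sigma)$ and a triple $xy|z\in\Tri(G)$, writing $A=\sigma(x)$, $B=\sigma(y)$, $C=\sigma(z)$. By definition of $\Tri(G)$ we have $xy\in E$ and $xz,yz\notin E$, which by the edge rule of Def.~\ref{def:LDTgraph} translate directly into the three time inequalities $\tT(\lca_T(x,y))<\tS(\lca_S(A,B))$, $\tT(\lca_T(x,z))\ge\tS(\lca_S(A,C))$, and $\tT(\lca_T(y,z))\ge\tS(\lca_S(B,C))$.

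The first key ingredient is a purely tree-theoretic fact about $S$: for any three leaves one always has $\lca_S(A,B)\preceq_S\lca_S(A,C)$ or $\lca_S(A,B)\preceq_S\lca_S(B,C)$ (the two largest among the three pairwise last common ancestors coincide). This remains valid when some of $A,B,C$ coincide; note $A\ne B$ here because $xy\in E$ and $(G,\sigma)$ is properly colored by Prop.~\ref{prop:properCol}. Consequently it is impossible to have simultaneously $\tS(\lca_S(A,C))<\tS(\lca_S(A,B))$ and $\tS(\lca_S(B,C))<\tS(\lca_S(A,B))$.

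The second ingredient is the matching characterization on the gene-tree side: $T$ fails to display $xy|z$ precisely when $\lca_T(x,z)\preceq_T\lca_T(x,y)$ \emph{and} $\lca_T(y,z)\preceq_T\lca_T(x,y)$, a single condition that covers uniformly the two wrong resolutions $xz|y$, $yz|x$ and the star on $\{x,y,z\}$. I would assume this toward a contradiction. Then $\tT(\lca_T(x,z))\le\tT(\lca_T(x,y))$ and $\tT(\lca_T(y,z))\le\tT(\lca_T(x,y))$, and combining with the first inequality above gives $\tT(\lca_T(x,z))<\tS(\lca_S(A,B))$ and $\tT(\lca_T(y,z))<\tS(\lca_S(A,B))$. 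Feeding in the two non-edge inequalities yields $\tS(\lca_S(A,C))<\tS(\lca_S(A,B))$ and $\tS(\lca_S(B,C))<\tS(\lca_S(A,B))$ --- exactly the configuration ruled out by the tree fact for $S$. This contradiction forces $T$ to display $xy|z$, and since this holds for every triple, $\Tri(G)$ is displayed by $T$ and hence compatible.

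I expect the routine part to be the translation of edge/non-edge membership into inequalities. The only place demanding care is the combinatorial bookkeeping of the last common ancestors. The main obstacle is isolating the clean equivalence ``$T$ does not display $xy|z$ iff $\lca_T(x,z),\lca_T(y,z)\preceq_T\lca_T(x,y)$'', which lets all three degenerate configurations be handled in one stroke rather than as separate cases (in contrast to the four-case analysis used for the species tree in Lemma~\ref{lem:Ru-SpeciesTriple}); I would also verify that the $S$-side disjunction genuinely survives the cases $A=C$ or $B=C$, where one pairwise last common ancestor collapses to a leaf.
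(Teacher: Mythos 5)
Your proof is correct, and it takes a genuinely different route from the paper's. The paper argues directly and by cases: it first treats $\sigma(z)\in\{\sigma(x),\sigma(y)\}$ separately, and for a third color it distinguishes the four topologies of $S_{|ABC}$ (star, $AB|C$, $AC|B$, $BC|A$); in each case it picks the one non-edge whose inequality chains with $xy\in E$ into $\tT(\lca_T(x,y))<\tS(\lca_S(\cdot,\cdot))\le\tS(\lca_S(\cdot,\cdot))\le\tT(\lca_T(\cdot,\cdot))$, and then concludes displayability from the comparability of two gene-tree last common ancestors sharing a leaf. You instead argue by contradiction and feed in \emph{both} non-edge inequalities at once, hiding the case analysis inside two structural facts: that $T$ fails to display $xy|z$ iff $\lca_T(x,z),\lca_T(y,z)\preceq_T\lca_T(x,y)$, and that $\lca_S(A,B)\preceq_S\lca_S(A,C)$ or $\lca_S(A,B)\preceq_S\lca_S(B,C)$ (the maximum among the three pairwise lcas is attained at least twice). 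I verified both facts, including the degenerate cases $A=C$ and $B=C$ that you flagged, where one species-side lca collapses to a leaf and the corresponding strict inequality $\tS(\lca_S(A,B))<\tS(\lca_S(A,B))$ yields an immediate contradiction; note also that your conjunction on the $T$-side is slightly redundant (each conjunct alone is equivalent to non-display), but using both is exactly what produces the two time bounds you need. What your organization buys is uniformity: no split on the color of $z$, no enumeration of species-tree topologies, and automatic handling of non-binary/degenerate configurations. What the paper's organization buys is explicitness: in each case one sees precisely which species constraint does the work, which mirrors the parallel four-case proof of Lemma~\ref{lem:Ru-SpeciesTriple} and keeps the two lemmas structurally aligned. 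Both proofs ultimately rest on the same two primitives — lcas of pairs sharing a leaf are $\preceq_T$-comparable, and time maps are strictly monotone along $\prec$ — so the arguments are equivalent in content, differing in decomposition.
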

  \begin{proof}
    Suppose that $(G=(L,E),\sigma)$ is an LDT graph and let
    $\mfscen=(T,S,\sigma,\tT,\tS)$ be a $\mu$-free scenario that explains
    $(G,\sigma)$. In order to show that $\Tri(G)$ is compatible it suffices
    to show that $T$ displays every triple in $\Tri(G,\sigma)$.
    
    Let $ab|c \in \Tri(G)$. By definition, $a,b,c\in L(T)$ are distinct, and
    $ab\in E$ and $ac,bc\not\in E$. Since $ab \in E$, we have
    $A\coloneqq\sigma(a)\neq \sigma(b)\eqqcolon B$ by 
    Prop.~\ref{prop:properCol}.
    
    There are two cases, either $\sigma(c)\in \{A,B\}$ or not.  Suppose first
    that w.l.o.g.\ $\sigma(c)=A$.  In this case, $ab \in E$ and $bc \notin E$
    together imply $\tT(\lca_T(a,b))<\tS(\lca_S(A,B))\leq \tT(\lca_T(b,c))$.
    This and the fact that $\lca_T(a,b)$ and $\lca_T(b,c)$ are comparable in
    $T$ implies that $T$ displays $ab|c$.
    
    Suppose now that $\sigma(c)=C\notin \{A,B\}$.  We now consider the four
    possible topologies of $S'=S_{|ABC}$: (1) $S'$ is a star, (2)
    $S'=AB|C$, (3) $S'=AC|B$, and (4) $S'=BC|A$.
    
    In Cases~(1), (2) and (4), we have
    $\tS(\lca_S(A,B)) \leq \tS(\lca_S(A,C))$, where equality holds only in
    Cases~(1) and (4). This together with $ab \in E$ and $ac \notin E$
    implies
    $\tT(\lca_T(a,b))<\tS(\lca_S(A,B)) \leq \tS(\lca_S(A,C)) \leq
    \tT(\lca_T(a,c))$.  This and the fact that $\lca_T(a,b)$ and
    $\lca_T(a,c)$ are comparable in $T$ implies that $T$ displays $ab|c$.  In
    Case (3), $ab \in E$ and $bc \notin E$ imply
    $\tT(\lca_T(a,b))<\tS(\lca_S(A,B)) = \tS(\lca_S(B,C)) \leq
    \tT(\lca_T(b,c))$.  By analogous arguments as before, $T$ displays
    $ab|c$.  
  \end{proof}
  
  We note, finally, that the Aho graph of the triple set $[\Tri(G),L]$ in a
  sense recapitulates $G$. More precisely, we have:
  \begin{proposition}
    Let $(G=(L,E),\sigma)$ be a vertex-colored graph. If for all edges $xy\in E$
    there is a vertex $z$ such that $xz,yz\notin E$ (and thus, in particular,
    in case that $G$ is disconnected), then $[\Tri(G),L]=G$.
    \label{prop:Aho-G}
  \end{proposition}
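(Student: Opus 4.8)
The plan is to show that the graphs $[\Tri(G),L]$ and $G$ have the same edge set; since both carry the vertex set $L$ by definition, this equality of edge sets is exactly what we need. I would prove the two inclusions separately, and note in advance that only one of them actually uses the hypothesis.

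For the inclusion $E([\Tri(G),L])\subseteq E$, I would argue directly from the definitions. If $xy\in E([\Tri(G),L])$, then by the definition of the Aho graph there is some $z\in L$ with $xy|z\in\Tri(G)$; and by the definition of $\Tri(G)$, a triple $xy|z$ can only belong to $\Tri(G)$ if $xy\in E$. Hence $xy\in E$. This direction holds for every vertex-colored graph and makes no use of the extra assumption.

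For the reverse inclusion $E\subseteq E([\Tri(G),L])$, I would take an arbitrary edge $xy\in E$ and apply the hypothesis to obtain a vertex $z$ with $xz,yz\notin E$. The one point worth checking is that $x,y,z$ are pairwise distinct: since $G$ is simple and $xy\in E$, the choice $z=x$ would give $yz=xy\in E$, contradicting $yz\notin E$, and symmetrically $z=y$ is impossible. Thus $x,y,z$ are pairwise distinct with $xy\in E$ and $xz,yz\notin E$, so $xy|z\in\Tri(G)$, which places $xy$ in $E([\Tri(G),L])$. Combining the two inclusions yields $[\Tri(G),L]=G$. To justify the parenthetical claim, I would observe that whenever $G$ is disconnected the hypothesis is automatically met: any edge $xy\in E$ lies within a single connected component, so any vertex $z$ from a different component satisfies $xz,yz\notin E$. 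There is essentially no obstacle here; the only mild subtlety is the automatic distinctness of $z$ from $x$ and $y$, which I have already reduced to the simplicity of $G$.
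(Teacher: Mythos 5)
Your proof is correct and follows essentially the same route as the paper's: both directions are pure definition-chasing between $E$, $\Tri(G)$, and the Aho graph, with the hypothesis supplying the witness $z$ for the inclusion $E\subseteq E([\Tri(G),L])$. You are in fact slightly more careful than the paper, which asserts $z\neq x,y$ without comment, whereas you derive the distinctness from simplicity of $G$ and also verify the parenthetical claim about disconnected graphs.
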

  \begin{proof}
    Clearly, the vertex sets of $[\Tri(G),L]$ and $G$ are the same, that
    is, $L$.  Let $xy\in E$ and thus, we have $x\neq y$.  There is a vertex
    $z\neq x,y$ in $G$ with $xz,yz\notin E$ if and only if
    $xy|z\in \Tri(G)$ and thus, if and only if $xy$ is an edge in
    $[\Tri(G),L]=G$.  
  \end{proof}
  
  \begin{definition}
    For a vertex-colored graph $(G,\sigma)$, we will use the shorter notation
    $x_1-x_2-\dots-x_n$ and $X_1-X_2-\dots-X_n$ for a path $P_n$ that is
    induced by the vertices $\{x_i\mid 1\leq i\leq n\}$ with colors
    $\sigma(x_i)=X_i$, $1\leq i\leq n$ and edges $x_ix_{i+1}$,
    $1\leq i\leq n-1$.
  \end{definition}
  
  \begin{lemma}
    Every LDT graph $(G,\sigma)$ is a properly colored cograph.
    \label{lem:propcolcograph}
  \end{lemma}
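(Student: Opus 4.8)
The plan is to combine Prop.~\ref{prop:properCol}, which already gives that $(G,\sigma)$ is properly colored, with the characterization of cographs as $P_4$-free graphs (Prop.~\ref{prop:cograph}). So it only remains to rule out induced paths on four vertices. I fix a $\mu$-free scenario $\mfscen=(T,S,\sigma,\tT,\tS)$ explaining $(G,\sigma)$, write $A=\sigma(a)$, $B=\sigma(b)$, $C=\sigma(c)$, $D=\sigma(d)$, and read off edges via the defining inequality of Def.~\ref{def:LDTgraph}. Assume for contradiction that $G$ contains an induced $P_4$, which up to the symmetry of the path I label $a-b-c-d$, so that $ab,bc,cd\in E$ and $ac,ad,bd\notin E$.

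First I extract gene-tree information. From $ab\in E$ together with $ad,bd\notin E$ we get $ab|d\in\Tri(G)$, and from $cd\in E$ together with $ac,ad\notin E$ we get $cd|a\in\Tri(G)$; by Lemma~\ref{lem:Ru-GeneTriple} the tree $T$ displays both. Feeding $cd|a$ and $ab|d$ into Lemma~\ref{lem:2order} (with $x=c,y=d,z=a,w=b$) shows that $T$ also displays $ab|c$, so that $T_{|\{a,b,c,d\}}=((a,b),(c,d))$. In particular the four ``cross'' ancestors coincide: putting $r\coloneqq\lca_T(a,c)=\lca_T(a,d)=\lca_T(b,c)=\lca_T(b,d)$, the displayed triples give $\tT(\lca_T(a,b)),\tT(\lca_T(c,d))<\tT(r)$.

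Next I translate the edge $bc$ and the three non-edges $ac,bd,ad$ through the LDT inequality, all evaluated at the common value $\tT(r)$. The edge $bc$ yields $\tS(\lca_S(B,C))>\tT(r)$, while the non-edges $ac$, $bd$, $ad$ yield $\tS(\lca_S(A,C)),\tS(\lca_S(B,D)),\tS(\lca_S(A,D))\le\tT(r)$. Hence $\tS(\lca_S(B,C))$ is a strict maximum: it strictly exceeds each of $\tS(\lca_S(A,C))$, $\tS(\lca_S(B,D))$, and $\tS(\lca_S(A,D))$.

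The crux -- and the one place where one must be careful, since non-adjacent vertices of the $P_4$ may share colors (only $A\neq B$, $B\neq C$, $C\neq D$ are guaranteed by proper coloring) -- is to show this is impossible inside the single tree $S$. Put $v\coloneqq\lca_S(B,C)$; as $B\neq C$, the leaves $B,C$ lie in distinct child subtrees $S_B\ni B$ and $S_C\ni C$ of $v$. From $\tS(\lca_S(A,C))<\tS(v)$ and comparability of $\lca_S(A,C)$ with $v$ along the path to $C$, I conclude $\lca_S(A,C)\prec_S v$ and hence $A\in S_C$; symmetrically $\tS(\lca_S(B,D))<\tS(v)$ forces $D\in S_B$. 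Since $S_B$ and $S_C$ are disjoint, $A$ and $D$ sit in distinct children of $v$, so $\lca_S(A,D)=v$ and therefore $\tS(\lca_S(A,D))=\tS(v)=\tS(\lca_S(B,C))>\tT(r)$, contradicting $\tS(\lca_S(A,D))\le\tT(r)$. This contradiction shows that $G$ is $P_4$-free, hence a cograph by Prop.~\ref{prop:cograph}, and together with Prop.~\ref{prop:properCol} this completes the proof. I expect the handling of coincident colors in this last step to be the only genuinely delicate point; the degenerate cases ($A=C$, $A=D$, $B=D$) are absorbed automatically, since for instance $A=D$ would place a single leaf in both of the disjoint subtrees $S_B,S_C$, which is impossible.
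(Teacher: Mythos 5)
Your proof is correct, and it takes a genuinely different route from the paper's. The paper proves $P_4$-freeness by a four-fold case analysis on the color pattern of the putative $P_4$ ($A{-}B{-}C{-}D$, $A{-}B{-}C{-}A$, $A{-}B{-}A{-}C$, $A{-}B{-}A{-}B$); in each case it first invokes Lemma~\ref{lem:Ru-SpeciesTriple} to extract species triples from $\Sri(G,\sigma)$ (e.g.\ $AC|B$ and $BD|C$ in the all-distinct case, which via Lemma~\ref{lem:2order} pin down $S_{|ABCD}=((A,C),(B,D))$, while case $A{-}B{-}C{-}A$ dies immediately from the incompatible pair $AC|B$, $AB|C$), and only then re-derives the needed gene triples inline from time inequalities. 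You invert this: you work gene-tree-first, reading $ab|d,cd|a\in\Tri(G)$ off the non-edge pattern of the $P_4$, invoking Lemma~\ref{lem:Ru-GeneTriple} (which the paper proves before and independently of the present lemma, so there is no circularity) and Lemma~\ref{lem:2order} to get $T_{|\{a,b,c,d\}}=((a,b),(c,d))$ and the common cross-ancestor $r$, and then you never touch $\Sri$ at all — the contradiction is obtained by a direct localization argument in $S$: the strict maximality of $\tS(\lca_S(B,C))$ forces $A$ into the child subtree of $\lca_S(B,C)$ containing $C$ and $D$ into the one containing $B$, whence $\lca_S(A,D)=\lca_S(B,C)$ (or an outright impossibility if $A=D$), contradicting the non-edge $ad$. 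Each step checks out, including the delicate points: the inference $\tS(\lca_S(A,C))<\tS(v)\Rightarrow\lca_S(A,C)\prec_S v$ is legitimate because all ancestors of the leaf $C$ form a chain and $\tS$ is a time map, and your observation that the degenerate colorings $A=C$, $B=D$, $A=D$ are absorbed is accurate (for $A=C$ one simply has $\lca_S(A,C)=C\prec_S v$, and $A=D$ contradicts disjointness of the two child subtrees since each color is a single leaf of $S$). What your approach buys is uniformity — no enumeration of color patterns, with repeated colors handled automatically; what the paper's buys is that each case makes the species-tree topology fully explicit and is easy to verify locally, at the cost of the fourfold enumeration and reliance on Lemma~\ref{lem:Ru-SpeciesTriple}.
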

  \begin{proof}
    Let $\mfscen=(T,S,\sigma,\tT,\tS)$ be a $\mu$-free scenario that explains
    $(G,\sigma)$.  By Prop.~\ref{prop:properCol}, $(G,\sigma)$ is properly
    colored.  To show that $G=(L,E)$ is a cograph it suffices to show that
    $G$ does not contain an induced path on four vertices (cf.\ 
    Prop.~\ref{prop:cograph}).  Hence, assume for contradiction that $G$ 
    contains
    an induced $P_4$.
    
    First we observe that for each edge $ab$ in this $P_4$ it holds that
    $\sigma(a)\neq \sigma(b)$ since, otherwise, by Prop.~\ref{prop:properCol}, 
    $ab\notin E$. Based on possible colorings of the
    $P_4$ w.r.t.\ $\sigma$ and up to symmetry, we have to consider four
    cases: (1) $A-B-C-D$, (2) $A-B-C-A$, (3) $A-B-A-C$ and (4) $A-B-A-B$.
    
    In Case (1) the $P_4$ is of the form $a-b-c-d$ with $\sigma(a)=A$,
    $\sigma(b)=B$, $\sigma(c)=C$, $\sigma(d)=D$.  By Lemma
    \ref{lem:Ru-SpeciesTriple}, the species tree $S$ must display both $AC|B$
    and $BD|C$. Hence, by Lemma~\ref{lem:2order}, $S_{|ABCD} = ((A,C),(B,D))$
    in \emph{Newick} format. Let
    $x \coloneqq \lca_S(A,B,C,D) = \rho_{S_{|ABCD}}$.  Note, $x$
    ``separates'' $A$ and $C$ from $B$ and $D$.  Now, $ab\in E$ and
    $ad\notin E$ implies that $\tT(\lca_T(a,b))<\tS(x)\leq \tT(\lca_T(a,d))$.
    This and the fact that $\lca_T(a,b)$ and $\lca_T(a,d)$ are comparable in
    $T$ implies that $T$ displays $ab|d$.  Similarly, $cd\in E$ and
    $ad\notin E$ implies that $T$ displays $cd|a$ is displayed by $T$. By
    Lemma~\ref{lem:2order}, $T_{|abcd} = ((a,b),(c,d))$.  Let
    $y \coloneqq \lca_T(a,b,c,d) = \rho_{T_{|abcd}}$.  Now, $bc\in E$,
    $\lca_T(b,c)=y$, and $\lca_S(B,C)=x$ implies $\tT(y)<\tS(x)$.  This and
    $\lca_T(a,d)=y$ and $\lca_S(A,D)=x$ imply that $ad\in E$, and thus
    $a,b,c,d$ do not induce a $P_4$ in $G$; a contradiction.
    
    Case (2) can be directly excluded, since Lemma~\ref{lem:Ru-SpeciesTriple}
    implies that, in this case, $S$ must display $AC|B$ and $AB|C$; a
    contradiction.
    
    Now consider Case (3), that is, the $P_4$ is of the form $a-b-a'-c$ with
    $\sigma(a)=\sigma(a')=A$, $\sigma(b)=B$ and $\sigma(c)=C$.  By Lemma
    \ref{lem:Ru-SpeciesTriple}, the species tree $S$ must display $BC|A$ and
    thus $x\coloneqq\lca_S(A,B)=\lca_S(A,C)$.  Since $ab\in E$ and
    $ac\notin E$ we observe $\tT(\lca_T(a,b))<\tS(x)\leq \lca_T(a,c)$ and, as
    in Case (1) we infer that $T$ displays $ab|c$. By similar arguments,
    $a'c\in E$ and $ac\notin E$ implies that $T$ displays $a'c|a$. By Lemma
    \ref{lem:2order}, $T_{|abcd} = ((a,b),(a',c))$ and thus,
    $y\coloneqq \lca_T(a',b) = \lca_T(a,c)$ and $a'b\in E$ implies that
    $\tT(y)<\tS(x)$.  Since $y= \lca_T(a,c)$ and
    $\tT(y)<\tS(x)=\tS(\lca_S(A,C))$, we can conclude that $ac\in E$. Hence,
    $a,b,c,d$ do not induce a $P_4$ in $G$; a contradiction.
    
    In Case (4) the $P_4$ is of the form $a-b-a'-b'$ with
    $\sigma(a)=\sigma(a')=A$ and $\sigma(b)=\sigma(b')=B$.  Now,
    $ab,a'b'\in E$ and $ab'\notin E$ imply that
    $\tT(\lca_T(a,b)), \tT(\lca_T(a',b')) < \tS(\lca_S(A,B))\leq
    \tT(\lca_T(a,b'))$.  Hence, by similar arguments as above, $T$ must
    display $ab|b'$ and $a'b'|a$.  By Lemma~\ref{lem:2order},
    $T_{abcd} = ((a,b),(a',b'))$ and thus,
    $y\coloneqq \lca_T(a'b) = \lca_T(a,b')$.  However, $a'b\notin E$ implies
    that $\tT(y)<\tS(\lca_S(A,B))$; a contradiction to
    $\tS(\lca_S(A,B))\leq \tT(\lca_T(a,b'))$.  
  \end{proof}
  
  The converse of Lemma~\ref{lem:propcolcograph} is not true in general.
  To see this, consider the properly-colored cograph $(G,\sigma)$ with vertex
  $V(G)=\{a,a',b,b',c,c'\}$, edges $ab,bc, a'b',a'c' $ and coloring
  $\sigma(a)=\sigma(a')=A$ $\sigma(b)=\sigma(b')=B$, $\sigma(c)=\sigma(c')=C$
  with $A,B,C$ being pairwise distinct. In this case, $\Sri(G,\sigma)$
  contains the triples $AC|B$ and $BC|A$.  By Lemma
  \ref{lem:Ru-SpeciesTriple}, the tree $S$ in every $\mu$-free scenario
  $\mfscen=(T,S,\sigma,\tT,\tS)$ or relaxed scenario
  $\scen=(T,S,\sigma,\mu,\tT,\tS)$ explaining $(G,\sigma)$ displays $AC|B$
  and $BC|A$. Since no such scenario can exist, $(G,\sigma)$ is not an LDT
  graph.
  
  \subsection{Recognition and Characterization of LDT Graphs}
  
  \begin{definition}
    Let $(G=(L,E),\sigma)$ be a graph with coloring $\sigma\colon L\to M$.
    Let $\mathcal{C}$ be a partition of $M$, and $\mathcal{C}'$ be the set of
    connected components of $G$.  We define the following binary relation
    $\rel(G, \sigma, \mathcal{C})$ by setting
    \begin{align*}
      (x,y)\in \rel(G, \sigma, \mathcal{C}) \iff  x,y\in L,\;
      \sigma(x), \sigma(y) & \in  C 
      \text{ for some } C\in\mathcal{C},
      \text{ and } \\
      x,y & \in C' \text{ for some } C'\in\mathcal{C}'.
    \end{align*}
    \label{def:rel}
  \end{definition}
  
  In words, two vertices $x,y\in L$ are in relation
  $\rel(G, \sigma, \mathcal{C})$ whenever they are in the same connected
  component of $G$ and their colors $\sigma(x), \sigma(y)$ are contained in
  the same set of the partition of $M$.
  
  \begin{lemma}
    Let $(G=(L,E),\sigma)$ be a graph with coloring $\sigma\colon L\to M$ and
    $\mathcal{C}$ be a partition of $M$.  Then,
    $\rel\coloneqq\rel(G, \sigma, \mathcal{C})$ is an equivalence relation
    and every equivalence class of $\rel$, or short $\rel$-class, is
    contained in some connected component of $G$. In particular, each
    connected component of $G$ is the disjoint union of $\rel$-classes.
    \label{lem:KinCC}
  \end{lemma}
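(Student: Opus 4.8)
The plan is to exhibit $\rel$ as the intersection of two simpler equivalence relations on $L$ and then read off all three assertions from standard facts about intersections and refinements of partitions.

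First I would introduce two auxiliary relations on $L$. Let ${\sim_{\mathcal{C}}}$ be defined by $x\sim_{\mathcal{C}} y$ if and only if $\sigma(x)$ and $\sigma(y)$ lie in a common block $C\in\mathcal{C}$, and let ${\sim_G}$ be defined by $x\sim_G y$ if and only if $x$ and $y$ lie in a common connected component of $G$. The relation ${\sim_G}$ is just the usual ``same connected component'' relation and is therefore an equivalence relation whose classes are precisely the elements of $\mathcal{C}'$. The relation ${\sim_{\mathcal{C}}}$ is the pullback along $\sigma$ of the ``same block'' relation on $M$; since $\mathcal{C}$ is a partition of $M$, that relation on $M$ is an equivalence relation, and reflexivity, symmetry and transitivity are inherited verbatim by ${\sim_{\mathcal{C}}}$. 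By Definition~\ref{def:rel}, $(x,y)\in\rel$ holds exactly when both $x\sim_{\mathcal{C}} y$ and $x\sim_G y$, so that $\rel = {\sim_{\mathcal{C}}}\cap{\sim_G}$ as subsets of $L\times L$. (Reflexivity of each factor, hence of $\rel$, uses that $\mathcal{C}$ covers $M$ and that the components cover $L$.)

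Next I would invoke the elementary fact that an intersection of equivalence relations is an equivalence relation: reflexivity and symmetry of $\rel$ are immediate, and if $x\rel y$ and $y\rel z$ then transitivity of ${\sim_{\mathcal{C}}}$ and of ${\sim_G}$ separately yields $x\sim_{\mathcal{C}} z$ and $x\sim_G z$, whence $x\rel z$. This settles that $\rel$ is an equivalence relation. For the containment claim I would take an arbitrary $\rel$-class $K$ and any $x,y\in K$; since $K$ is an equivalence class we have $x\rel y$, hence $x\sim_G y$, so $x$ and $y$ share a connected component. Thus all vertices of $K$ lie in one component, i.e.\ $K\subseteq C'$ for some $C'\in\mathcal{C}'$. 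Finally, because the $\rel$-classes partition $L$ and each is contained in a single component, the partition into $\rel$-classes refines $\mathcal{C}'$; consequently every $C'\in\mathcal{C}'$ is the union of exactly those $\rel$-classes it contains, and these are pairwise disjoint as distinct classes of one equivalence relation. Hence each connected component is the disjoint union of $\rel$-classes.

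There is no deep obstacle here: the argument is essentially bookkeeping once $\rel$ is recognized as ${\sim_{\mathcal{C}}}\cap{\sim_G}$. The only point that genuinely requires care is verifying that ${\sim_{\mathcal{C}}}$ is transitive, which is where the hypothesis that $\mathcal{C}$ is a \emph{partition} (rather than an arbitrary cover) of $M$ is used, together with checking that $\rel$ is precisely the conjunction of the color condition and the component condition and not something subtler.
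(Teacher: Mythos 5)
Your proof is correct, and it takes a slightly more structural route than the paper's. The paper verifies the equivalence axioms by hand---transitivity via the observation that $xy,yz\in\rel$ forces $\sigma(x),\sigma(y),\sigma(z)$ into a common block of $\mathcal{C}$ and $x,y,z$ into a common component---and then establishes the containment claim by a short contradiction argument: a class $K$ meeting two distinct components would contain a pair related under $\rel$ yet lying in different components, violating Definition~\ref{def:rel}. You instead factor $\rel$ as the intersection ${\sim_{\mathcal{C}}}\cap{\sim_G}$ of two equivalence relations and quote standard facts: an intersection of equivalence relations is an equivalence relation, and since $\rel\subseteq{\sim_G}$ the $\rel$-classes refine the components, so each component is the disjoint union of the classes it contains. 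The mathematical content is identical; in particular, both arguments use at exactly one point that $\mathcal{C}$ is a partition rather than a mere cover (in the paper, the direct transitivity step; in yours, transitivity of the block relation pulled back along $\sigma$, which needs blocks to be disjoint so that $\sigma(y)$ determines a unique block). What your packaging buys is modularity: the contradiction argument is replaced by the trivial inclusion $\rel\subseteq{\sim_G}$ plus the general refinement fact, and it becomes transparent that nothing graph- or coloring-specific is at stake. What the paper's direct verification buys is self-containment, avoiding appeal to unstated general lemmas about intersections and refinements of partitions.
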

  \begin{proof}
    It is easy to see that $\rel$ is reflexive and symmetric.  Moreover,
    $xy,yz\in \rel$ implies that $\sigma(x), \sigma(y), \sigma(z)$ must
    be contained in the same set of the partition $\mathcal{C}$, and
    $x,y,z$ must be contained in the same connected component of
    $G$. Therefore, $xy\in \rel$ and thus, $\rel$ is transitive.  In
    summary, $\rel$ is an equivalence relation.
    
    We continue with showing that every $\rel$-class $K$ is entirely
    contained in some connected component of $G$. Clearly, there is a
    connected component $C$ of $G$ such that $C\cap K\neq
    \emptyset$. Assume, for contradiction, that $K\not\subseteq C$. Hence,
    $G$ must be disconnected and, in particular, there is a second
    connected component $C'$ of $G$ such that $C'\cap K\neq
    \emptyset$. Hence, there is a pair $xy\in K$ such that $x\in C\cap K$
    and $y\in C'\cap K$. But then $x$ and $y$ are in different connected
    components of $G$ violating the definition of $\rel$; a contradiction.
    Hence, every $\rel$-class is entirely contained in some connected
    component of $G$. This and the fact the $\rel$-classes are disjoint
    implies that each connected component of $G$ is the disjoint union
    of $\rel$-classes.
  \end{proof}
  
  The following partition of the leaf sets of subtrees of a tree $S$ rooted
  at some vertex $u\in V(S)$ will be useful:
  \begin{align*}
    &\text{If } u 
    \textrm{ is not a leaf, then } &\partS(u)&\coloneqq
    \{L(S(v)) \mid v\in\child_S(u)\} \\
    & \textrm{and, otherwise, } &\partS(u)&\coloneqq \{\{u\}\}.
  \end{align*}
  One easily verifies that, in both cases, $\partS(u)$ yields a valid
  partition of the leaf set $L(S(u))$.  Recall that
  $\sigma_{|L',M'}\colon L'\to M'$ was defined as the ``submap'' of $\sigma$
  with $L'\subseteq L$ and $\sigma(L') \subseteq M' \subseteq M$.
  
  \begin{lemma}\label{lem:xy-iff-Ks-in-same-CC}
    Let $(G=(L,E),\sigma)$ be a properly colored cograph. Suppose that the
    triple set $\Sri(G,\sigma)$ is compatible and let $S$ be a tree on $M$
    that displays $\Sri(G,\sigma)$. Moreover, let $L'\subseteq L$ and
    $u\in V(S)$ such that $\sigma(L') \subseteq L(S(u))$. \
    Finally, set $\rel\coloneqq \rel(G[L'],\sigma_{|L',L(S(u))},\partS(u))$.\\
    Then, for all distinct $\rel$-classes $K$ and $K'$, either $xy\in E$ for
    all $x\in K$ and $y\in K'$, or $xy\notin E$ for all $x\in K$ and
    $y\in K'$.  In particular, for $x\in K$ and $y\in K'$, it holds that
    \begin{equation*}
      xy\in E \iff K, K' \text{ are contained in the same connected 
        component of } G[L'].
    \end{equation*}
  \end{lemma}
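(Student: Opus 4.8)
The plan is to use Lemma~\ref{lem:KinCC} to organise the argument by connected components and then reduce the essential case to a single contradiction extracted from the species-tree triples in $\Sri(G,\sigma)$. First I would note that $G[L']$ is an induced subgraph of the cograph $G$ and hence itself a cograph (Prop.~\ref{prop:cograph}), so each of its connected components is a connected induced subgraph and therefore has diameter at most $2$.

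By Lemma~\ref{lem:KinCC} every $\rel$-class lies inside a single connected component of $G[L']$, and each component is a disjoint union of $\rel$-classes. Hence two distinct classes $K$ and $K'$ either sit in different components of $G[L']$ -- in which case there are trivially no edges between them and both sides of the claimed equivalence are false -- or they sit in the same component $C$. For the same-component case I would first observe that all colours occurring on $K$ lie in one block $P_K = L(S(v_K))$ of $\partS(u)$ and all colours on $K'$ in one block $P_{K'} = L(S(v_{K'}))$, since the vertices of a $\rel$-class are pairwise $\rel$-related and the blocks of $\partS(u)$ are disjoint. If $P_K = P_{K'}$ held, then any $x\in K$ and $y\in K'$ would be $\rel$-related (same component, same colour block), forcing $K = K'$; so $P_K \neq P_{K'}$, i.e.\ $v_K \neq v_{K'}$ are distinct children of $u$. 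Consequently $\sigma(x)\neq\sigma(y)$ and, crucially, $\lca_S(\sigma(x),\sigma(y)) = u$ for every $x\in K$ and $y\in K'$.

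The main work, and the step I expect to be the real obstacle, is to show $xy\in E$ for all such cross-class pairs in the same component. Suppose not, i.e.\ $xy\notin E$ for some $x\in K$, $y\in K'$. Since $x$ and $y$ lie in the connected component $C$, whose diameter is at most $2$, there is a common neighbour $z\in L'$ with $xz,zy\in E$ and $z\neq x,y$. Proper colouring of $G$ forces $\sigma(z)\neq\sigma(x)$ and $\sigma(z)\neq\sigma(y)$, as otherwise one of the edges $xz$, $zy$ would join equally coloured vertices; hence $\sigma(x),\sigma(y),\sigma(z)$ are pairwise distinct. Then the non-edge $xy$ together with the common neighbour $z$ yields $\sigma(x)\sigma(y)|\sigma(z)\in\Sri(G,\sigma)$, which $S$ displays, so $\lca_S(\sigma(x),\sigma(y))\prec_S \lca_S(\sigma(x),\sigma(z))$. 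But $z\in L'$ gives $\sigma(z)\in L(S(u))$, which forces $\lca_S(\sigma(x),\sigma(z))\preceq_S u = \lca_S(\sigma(x),\sigma(y))$, contradicting the strict inequality. Therefore $xy\in E$.

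Putting the cases together yields both halves of the statement simultaneously: distinct classes in different components have no edges between them, while distinct classes in the same component are completely joined, so the edge relation between $K$ and $K'$ is uniform and present exactly when $K$ and $K'$ share a component of $G[L']$. I would finally dispatch the degenerate case $u\in L(S)$, where $\partS(u)=\{\{u\}\}$ has a single block, $\sigma(L')\subseteq\{u\}$ is constant, $G[L']$ is edgeless by proper colouring, and every $\rel$-class is a singleton component; here no two distinct classes share a component, so the statement holds through the different-component branch.
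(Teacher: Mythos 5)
Your proof is correct and takes essentially the same route as the paper's: the case split on whether $K$ and $K'$ share a connected component via Lemma~\ref{lem:KinCC}, the common neighbour $z$ obtained from the diameter-$\le 2$ property of connected induced subgraphs of cographs (Prop.~\ref{prop:cograph}), and the contradiction between $\lca_S(\sigma(x),\sigma(y))=u$ and the displayed triple $\sigma(x)\sigma(y)|\sigma(z)$ with $\sigma(z)\in L(S(u))$ all match the published argument. Your minor variations---fixing the blocks $P_K\neq P_{K'}$ before deriving the contradiction, and explicitly dispatching the degenerate case $u\in L(S)$, which the paper handles implicitly by noting inside the contradiction that $u$ must be an inner vertex---are harmless rearrangements of the same proof.
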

  \begin{proof}
    Let $\sigma \colon L\to M$ and put $\Sri = \Sri(G,\sigma)$.  Since $\Sri$
    is a compatible triple set on $M$, there is a tree $S$ on $M$ that
    displays $\Sri$.  Moreover, the condition
    $\sigma(L') \subseteq L(S(u))\subseteq M$ together with the fact that
    $\partS(u)$ is a partition of $L(S(u))$ ensures that $\rel$ is
    well-defined.
    
    Now suppose that $K$ and $K'$ are distinct \rel-classes.  As a
    consequence of Lemma~\ref{lem:KinCC}, we have exactly the two cases:
    either (i) $K$ and $K'$ are contained in the same connected component $C$
    of $G[L']$ or (ii) $K\subseteq C$ and $K'\subseteq C'$ for distinct
    components $C$ and $C'$ of $G[L']$.
    
    Case (i). Assume, for contradiction, that there are two vertices $x\in K$
    and $y\in K'$ with $xy\notin E$. Note that $C\subseteq L'$ and thus,
    $G[C]$ is an induced subgraph of $G[L']$. By Prop.~\ref{prop:cograph},
    both induced subgraphs $G[L']$ and $G[C]$ are cographs. Now we can again
    apply Prop.~\ref{prop:cograph} to conclude that
    $\mathrm{diam}(G[C])\leq 2$.  Hence, there is a vertex $z\in C$ such that
    $xz,zy\in E$.  Since $x$ and $y$ are in distinct classes of $\rel$ but in
    the same connected component $C$ of $G[L']$, $\sigma(x)$ and $\sigma(y)$
    must lie in distinct sets of $\partS(u)$.  In particular, it must hold that
    $\sigma(x)\neq \sigma(y)$.  The fact that $G[L']$ is properly colored
    together with $xz, yz \in E$ implies that
    $\sigma(z)\neq \sigma(x),\sigma(y)$.  By definition and since $G[L']$ is
    an induced subgraph of $G$, we obtain that
    $\sigma(x)\sigma(y)|\sigma(z)\in\Sri$.  In particular,
    $\sigma(x)\sigma(y)|\sigma(z)$ is displayed by $S$.  Since $\sigma(x)$
    and $\sigma(y)$ lie in distinct sets of $\partS(u)$, $u$ must be an inner
    vertex, and we have $\sigma(x)\in L(S(v))$ and $\sigma(y)\in L(S(v'))$
    for distinct $v, v'\in\child_S(u)$.  In particular, it must hold that
    $\lca_S(\sigma(x),\sigma(y))=u$.  Moreover, $z\in C\subseteq L'$ and
    $\sigma(L')\subseteq L(S(u))$ imply that $\sigma(z)\in L(S(u))$.  Taken
    together, the latter two arguments imply that $S$ cannot display the
    triple $\sigma(x)\sigma(y)|\sigma(z)$; a contradiction.
    
    Case~(ii). By assumption, the $\rel$-classes $K$ and $K'$ are in distinct
    connected components of $G[L']$, which immediately implies $xy\notin E$
    for all $x\in K$, $y\in K'$.
    
    In summary, either $xy\in E$ for all $x\in K$ and $y\in K'$, or
    $xy\notin E$ for all $x\in K$ and $y\in K'$.  Moreover, Case (i)
    establishes the \emph{if}-direction and Case (ii) establishes, by means
    of contraposition, the \emph{only-if}-direction of the final statement.
  \end{proof}
  
  Lemma~\ref{lem:xy-iff-Ks-in-same-CC} suggests a recursive strategy to
  construct a  relaxed  scenario $\scen=(T,S,\sigma,\mu,\tT,\tS)$ for a given
  properly-colored cograph $(G,\sigma)$, which is outlined in the main part
  of this paper and described more formally in
  Algorithm~\ref{alg:Ru-recognition}.  We proceed by proving the correctness
  of Algorithm~\ref{alg:Ru-recognition}.
  
  \begin{theorem}
    \label{thm:algo-works}
    Let $(G,\sigma)$ be a properly colored cograph, and assume that the
    triple set $\Sri(M,G)$ is compatible.  Then
    Algorithm~\ref{alg:Ru-recognition} returns a relaxed scenario
    $\scen=(T,S,\sigma,\mu,\tT,\tS)$  such that
    $\Gu(\scen)=G$ in polynomial time. 
  \end{theorem}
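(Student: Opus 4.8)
The plan is to verify, in order, that (i) every recursive call is well defined, (ii) the returned $6$-tuple is a relaxed scenario, (iii) $\Gu(\scen)=G$, and (iv) the procedure runs in polynomial time. For (i) the one nontrivial point is that the selections of children of $u_S$ always succeed. First I would record as a claim that whenever the recursion is invoked on a pair $(L',u_S)$ we have $\sigma(L')\subseteq L(S(u_S))$, so that $\rel$ in Line~\ref{line:rel} is well defined, and that every $\rel$-class $K$ satisfies $\sigma(K)\subseteq L(S(v_S))$ for a \emph{unique} child $v_S\in\child_S(u_S)$. This follows directly from the definition of $\rel(G[L'],\sigma_{|L',L(S(u_S))},\partS(u_S))$: all elements of one $\rel$-class have colors inside a common block of $\partS(u_S)=\{L(S(v))\mid v\in\child_S(u_S)\}$, and by transitivity of $\rel$ (Lemma~\ref{lem:KinCC}) all of $\sigma(K)$ lands in a single such block $L(S(v_S))$; uniqueness holds because the blocks $L(S(v))$ are pairwise disjoint. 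The same observation makes the choice of $v^*_S$ in Line~\ref{line:choose-v-S} possible, and since each recursive call passes to a child of $u_S$ in $S$, the recursion terminates and the invariant $\sigma(L')\subseteq L(S(u_S))$ propagates by induction.

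For (ii) I would first fix $\epsilon=\tfrac13\min\{\tS(y)-\tS(x)\mid(y,x)\in E(S)\}>0$, so that $\tS(\parent_S(u_S))-\tS(u_S)\ge 3\epsilon$ and $\tS(u_S)-\tS(v_S)\ge 3\epsilon$ for every vertex $u_S$ and child $v_S$. Reading off the assigned values, the time strictly decreases along every edge of the tree $T'$ produced by the recursion: $\tT(u_T)=\tS(u_S)+\epsilon>\tT(v_T)=\tS(u_S)-\epsilon$, the next-deeper root attached below $v_T$ receives $\tS(v_S)+\epsilon$ with $(\tS(u_S)-\epsilon)-(\tS(v_S)+\epsilon)\ge\epsilon>0$, and every leaf $x$ receives $0=\tS(\sigma(x))$ (as $\tS\equiv 0$ on $L(S)$), strictly below its parent at time $\epsilon$. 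Hence $\tT$ is a time map for $T'$, and it remains one after the degree-$2$ suppression and planting of Line~\ref{line:Tphylo} (for which $\tT(0_T)=\tS(0_S)>\tS(\rho_S)+\epsilon$), since suppression preserves the ancestor order. Conditions (G0) and (G1) hold by the explicit assignments, as $0_T$ is the only vertex with $\mu$-image $0_S$ and the leaves are exactly the vertices with $\mu$-image in $L(S)$. For (G2) I would verify (C1) for $0_T$ and the leaves (where $\tT$ agrees with $\tS\circ\mu$ by construction) and (C2) for the inner vertices: $\mu(u_T)=(\parent_S(u_S),u_S)$ with $\tS(u_S)<\tS(u_S)+\epsilon<\tS(\parent_S(u_S))$, and $\mu(v_T)=(u_S,v^*_S)$ with $\tS(v^*_S)<\tS(u_S)-\epsilon<\tS(u_S)$, both using the $3\epsilon$ gaps. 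These conditions survive restriction to $V(T)$, so $\scen$ is a relaxed scenario.

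The core is (iii), which I would prove by induction on the recursion, establishing that in the subtree returned on $(L',u_S)$ one has, for all distinct $a,b\in L'$, that $ab\in E(G[L'])$ iff $\tT(\lca_T(a,b))<\tS(\lca_S(\sigma(a),\sigma(b)))$. If $u_S$ is a leaf, $L'$ is monochromatic, so $G[L']$ is edgeless by proper coloring, while $\lca_T(a,b)=u_T$ has time $\tS(u_S)+\epsilon>\tS(u_S)=\tS(\lca_S(\sigma(a),\sigma(b)))$, so the test fails for every pair, as required. If $u_S$ is inner, the connected components of $G[L']$ and then (by Lemma~\ref{lem:KinCC}) their $\rel$-classes partition $L'$, giving three exhaustive cases. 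If $a,b$ lie in distinct components, then $\lca_T(a,b)=u_T$ at time $\tS(u_S)+\epsilon$, whereas $\lca_S(\sigma(a),\sigma(b))\preceq_S u_S$ has time $\le\tS(u_S)$, so the test fails, matching $ab\notin E$. If $a,b$ lie in one component but distinct $\rel$-classes, then by the definition of $\rel$ their colors sit in distinct blocks $L(S(v)),L(S(v'))$ of $\partS(u_S)$, so $\lca_S(\sigma(a),\sigma(b))=u_S$ at time $\tS(u_S)$, while $\lca_T(a,b)=v_T$ at time $\tS(u_S)-\epsilon<\tS(u_S)$, so the test holds; and by Lemma~\ref{lem:xy-iff-Ks-in-same-CC} (applied to this $\rel$) being in a common component indeed gives $ab\in E$. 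If $a,b$ lie in a common $\rel$-class $K$, their relationship is decided in the deeper call on $(K,v_S)$ with $\sigma(K)\subseteq L(S(v_S))$, where the induction hypothesis applies. Finally, any $\lca_T$ of two distinct leaves has at least two children, so it is never one of the degree-$2$ vertices removed in Line~\ref{line:Tphylo} (and the single-child planted root $0_T$ is never such an lca); hence lca-identities and their $\tT$-values are unchanged by suppression, and the top-level call on $(L,\rho_S)$ yields $\Gu(\scen)=G$.

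For (iv), $\Sri(G,\sigma)$ contains $O(|L|^3)$ color triples, the species tree $S=\Aho(\Sri(G,\sigma),M)$ is produced by \texttt{BUILD} in polynomial time, and each recursive call computes connected components and the equivalence $\rel$ in polynomial time; since the recursion depth is bounded by the height of $S$ and the sets $L'$ along any root-to-leaf path of $S$ are nested, the total work is polynomial in $|L|+|M|$. The main obstacle is the inductive identity in (iii): it hinges on the color-subset claim to force separated $\rel$-classes into distinct children of $u_S$ (hence $\lca_S=u_S$), on the correct instantiation of Lemma~\ref{lem:xy-iff-Ks-in-same-CC} to turn ``same connected component'' into the presence of edges, and on the remark that the lca-vertices carrying the time test all survive the final degree-$2$ suppression.
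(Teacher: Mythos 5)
Your proposal is correct and follows essentially the same route as the paper's proof: the same well-definedness invariant $\sigma(L')\subseteq L(S(u_S))$ (the paper's Claim~\ref{clm:color-subset}), the same $\epsilon$-based verification of the time maps and of (G0)--(G2), the same core case analysis (leaf $u_S$; distinct components with $\lca_T=u_T$; same component but distinct $\rel$-classes with $\lca_T=v_T$, invoking Lemma~\ref{lem:xy-iff-Ks-in-same-CC}), and the same observation that lca-vertices have at least two children and hence survive the degree-2 suppression. The only cosmetic difference is that you organize the correctness step as an induction on the recursion, whereas the paper phrases it as a case analysis over the inner vertices of the final tree (its Claim~\ref{clm:cotree}), which additionally yields the cotree labeling used later in Cor.~\ref{cor:displayed-cotree}.
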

  
  \begin{proof}
    Let $\sigma\colon L\to M$ and put $\Sri\coloneqq\Sri(G,\sigma)$. By a
    slight abuse of notation, we will simply write $\mu$ and $\tT$ also for
    restrictions to subsets of $V(T)$. Observe first that due to Line
    \ref{line:if-false}, the algorithm continues only if $(G,\sigma)$ is a
    properly colored cograph and $\Sri$ is compatible, and returns a tuple
    $\scen=(T,S,\sigma,\mu,\tT,\tS)$ in this case.  In particular, a tree $S$
    on $M$ that displays $\Sri$ exists, and can e.g.\ be constructed using
    \texttt{BUILD} (Line \ref{line:S}).  By Lemma~\ref{lem:arbitrary-tT}, we
    can always construct a time map $\tS$ for $S$ satisfying $\tS(x)=0$ for
    all $x\in L(S)$ (Line~\ref{line:tS}).  By definition, $\tS(y)>\tS(x)$
    must hold for every edge $(y,x)\in E(S)$, and thus, we obtain
    $\epsilon>0$ in Line~\ref{line:epsilon}.  Moreover, the recursive
    function \texttt{BuildGeneTree} maintains the following invariant:
    \begin{Xclaim}
      \label{clm:color-subset}
      In every recursion step of the function \texttt{BuildGeneTree}, we have 
      $\sigma(L')\subseteq L(S(u_S))$.
    \end{Xclaim}
    \begin{claim-proof}
      Since $S$ (with root $\rho_S$) is a tree on $M$ by construction and
      thus $L(S(\rho_S))=M$, the statement holds for the top-level recursion
      step on $L$ and $\rho_S$.  Now assume that the statement holds for an
      arbitrary step on $L'$ and $u_S$.  If $u_S$ is a leaf, there are no
      deeper recursion steps.  Thus assume that $u_S$ is an inner vertex.
      Recall that $\partS(u_S)$ is a partition of $L(S(u_S))$ (by
      construction), and that $\rel= \rel(G[L'], \sigma_{|L',L(S(u))}, 
      \partS(u_S))$ is an equivalence relation (by Lemma~\ref{lem:KinCC}). This 
      together with the definition of $\rel$ and
      $\sigma(L')\subseteq L(S(u_S))$, implies that there is a child
      $v_S\in \child_S(u_{S})$ such that $\sigma(K)\subseteq L(S(v_S))$ for
      all $\rel$-classes $K$. In particular, therefore, the statement is true
      for all recursive calls on $K$ and $v_S$ in
      Line~\ref{line:recursive-call}.  Repeating this argument top-down along
      the recursion hierarchy proves the claim.
    \end{claim-proof}
    
    Note, that we are in the \emph{else}-condition in Line \ref{line:rel}
    only if $u_S$ is not a leaf. Therefore and as a consequence of
    Claim~\ref{clm:color-subset} and by similar arguments as in its proof,
    there is a vertex $v^*_S\in\child_S(u_S)$ such that
    $\sigma(C)\cap L(S(v^*_S))\ne\emptyset$ for every connected component $C$
    of $G[L']$ in Line~\ref{line:choose-v-S}, and a vertex
    $v_S\in \child_S(u_{S})$ such that $\sigma(K)\subseteq L(S(v_S))$ for
    every $\rel$-class $K$ in Line~\ref{line:choose-v-S-for-class}.
    Moreover, $\parent_S(u_{S})$ is always defined since we have $u_S=\rho_S$
    and thus $\parent_S(u_S)=0_S$ in the top-level recursion step, and
    recursively call the function \texttt{BuildGeneTree} on vertices $v_S$
    such that $v_S\prec_S u_S$.
    
    In summary, all assignments are well-defined in every recursion step.  It
    is easy to verify that the algorithm terminates since, in each recursion
    step, we either have that $u_S$ is a leaf, or we recurse on vertices
    $v_{S}$ that lie strictly below $u_S$.  We argue that the resulting tree
    $T'$ is a \emph{not necessarily phylogenetic} tree on $L$ by observing
    that, in each step, each $x\in L'$ is either attached to the tree as a
    leaf if $u_S$ is a leaf, or, since $\rel$ forms a partition of $L'$ by
    Lemma~\ref{lem:KinCC}, passed down to a recursion step on $K$ for some
    $\rel$-class $K$. Nevertheless, $T'$ is turned into a phylogenetic tree
    $T$ by suppression of degree-two vertices in Line~\ref{line:Tphylo}.
    Finally, $\mu(x)$ and $\tT(x)$ are assigned for all vertices
    $x\in L(T')=L$ in Line~\ref{line:mu-tT-leaves}, and for all newly created
    inner vertices in Lines~\ref{line:mu-tT-inner1}
    and~\ref{line:mu-tT-inner2}.
    
    Recall that $\tS$ is a valid time map satisfying $\tS(x)=0$ for all
    $x\in L(S)$ by construction.  Before we continue to show that $\scen$ is
    a relaxed scenario, we first show that the conditions for time maps and
    time consistency are satisfied for $(T',\tT, S, \tS,\mu)$:
    
    \begin{Xclaim}
      \label{clm:tT-mu-in-T-prime}
      For all $x,y \in V(T')$ with $x\prec_{T'} y$, we have $\tT(x)<\tT(y)$.
      Moreover, for all $x\in V(T')$, the following statements are true:
      \vspace{-0.02in}
      \begin{description}
        \item[(i)] if $\mu(x)\in V(S)$, then $\tT(x)=\tS(\mu(x))$, and
        \item[(ii)] if $\mu(x)=(a,b)\in E(S)$, then $\tS(b)<\tT(x)<\tS(a)$.
      \end{description}
    \end{Xclaim}
    \begin{claim-proof}
      Recall that we always write an edge $(u,v)$ of a tree $T$ such that
      $v\prec_T u$.  For the first part of the statement, it suffices to show
      that $\tT(x)<\tT(y)$ holds for every edge $(y,x)\in E(T')$, and thus to
      consider all vertices $x\neq \rho_{T'}$ in $T'$ and their unique
      parent, which will be denoted by $y$ in the following.  Likewise, we
      have to consider all vertices $x\in V(T')$ including the root to show
      the second statement.  The root $\rho_{T'}$ of $T'$ corresponds to the
      vertex $u_T$ created in Line~\ref{line:create-uT} in the top-level
      recursion step on $L$ and $\rho_{S}$.  Hence, we have
      $\mu(\rho_{T'})=(\parent_S(\rho_S)=0_S,\rho_S)\in E(S)$ and
      $\tT(\rho_{T'})=\tS(\rho_S) +\epsilon$ (cf.\ 
      Line~\ref{line:mu-tT-inner1}).
      Therefore, we have to show~(ii).  Since $\epsilon>0$, it holds that
      $\tS(\rho_S)<\tT(\rho_{T'})$.  Moreover,
      $\tS(0_S)-\tS(\rho_{S})\ge 3\epsilon$ holds by construction, and thus
      $\tS(0_S)-(\tT(\rho_{T'})-\epsilon)\ge 3\epsilon$ and
      $\tS(0_S)-\tT(\rho_{T'})\ge 2\epsilon$, which together with
      $\epsilon>0$ implies $\tT(\rho_{T'})<\tS(0_S)$.
      
      We now consider the remaining vertices
      $x\in V(T')\setminus\{\rho_{T'}\}$. Every such vertex $x$ is introduced
      into $T'$ in some recursion step on $L'$ and $u_S$ in one of the
      Lines~\ref{line:create-uT}, \ref{line:attach-leaf},
      \ref{line:create-vT} or~\ref{line:recursive-call}.  There are exactly
      the following three cases: (a) $x\in L(T')$ is a leaf attached to some
      inner vertex $u_T$ in Line~\ref{line:attach-leaf}, (b) $x=v_T$ as
      created in Line~\ref{line:create-vT}, and (c) $x=w_T$ as assigned in
      Line~\ref{line:recursive-call}.  Note that if $x=u_T$ as created in
      Line~\ref{line:create-uT}, then $u_T$ is either the root of $T'$, or
      equals a vertex $w_T$ as assigned in Line~\ref{line:recursive-call} in
      the ``parental'' recursion step.
      
      In Case~(a), we have that $x\in L(T')$ is a leaf and attached to some
      inner vertex $y=u_T$. Since $u_S$ must be a leaf in this case, and thus
      $\tS(u_S)=0$, we have $\tT(y)=0+\epsilon=\epsilon$ and $\tT(x)=0$ (cf.\
      Lines~\ref{line:mu-tT-inner1} and~\ref{line:mu-tT-leaves}).  Since
      $\epsilon>0$, this implies $\tT(x)<\tT(y)$.  Moreover, we have
      $\mu(x)=\sigma(x)\in L(S)\subset V(S)$ (cf.\
      Line~\ref{line:mu-tT-leaves}), and thus have to show Subcase~(i).
      Since $u_S$ is a leaf and $\sigma(L')\subseteq L(S(u_S))$, we conclude
      $\sigma(x)=u_S$.  Thus we obtain $\tT(x)=0=\tS(u_S)=\tS(\mu(x))$.
      
      In Case~(b), we have $x=v_T$ as created in Line~\ref{line:create-vT},
      and $x$ is attached as a child to some vertex $y=u_T$ created in the
      same recursion step.  Thus, we have $\tT(y)=\tS(u_S)+\epsilon$ and
      $\tT(x)=\tS(u_S)-\epsilon$ (cf.\ Lines~\ref{line:mu-tT-inner1}
      and~\ref{line:mu-tT-inner2}).  Therefore and since $\epsilon>0$, it
      holds $\tT(x)<\tT(y)$.  Moreover, we have $\mu(x)=(u_S,v^*_S)\in E(S)$
      for some $v^*_S\in\child_S(u_S)$.  Hence, we have to show Subcase~(ii).
      By a similar calculation as before, $\epsilon>0$,
      $\tS(u_S)-\tS(v^*_S)\ge 3\epsilon$ and $\tT(x)=\tS(u_S)-\epsilon$ imply
      $\tS(v^*_S)<\tT(x)<\tS(u_S)$.
      
      In Case~(c), $x=w_T$ as assigned in Line~\ref{line:recursive-call} is
      equal to $u_T$ as created in Line~\ref{line:create-uT} in some
      next-deeper recursion step with $u'_S\in\child_S(u_S)$. Thus, we have
      $\tT(x)=\tS(u'_S)+\epsilon$ and $\mu(x)=(u_S,u'_S)\in E(S)$ (cf.\
      Line~\ref{line:mu-tT-inner1}). Moreover, $x$ is attached as a child of
      some vertex $y=v_T$ as created in Line~\ref{line:create-vT}. Thus, we
      have $\tT(y)=\tS(u_S)-\epsilon$.  By construction and since
      $(u_S,u'_S)\in E(S)$, we have $\tS(u_S)-\tS(u'_S)\ge 3\epsilon$.
      Therefore, $(\tT(y)+\epsilon) - (\tT(x)-\epsilon) \ge 3\epsilon$ and
      thus $\tT(y)- \tT(x) \ge \epsilon$. This together with $\epsilon>0$
      implies $\tT(x)<\tT(y)$.  Moreover, since $\mu(x)=(u_S,u'_S)\in E(S)$
      for some $u'_S\in\child_S(u_S)$, we have to show Subcase~(ii).  By a
      similar calculation as before, $\epsilon>0$,
      $\tS(u_S)-\tS(u'_S)\ge 3\epsilon$ and $\tT(x)=\tS(u'_S)+\epsilon$ imply
      $\tS(u'_S)<\tT(x)<\tS(u_S)$.
    \end{claim-proof}
    
    \begin{Xclaim}
      \label{clm:relaxed-scenario}
      $\scen=(T,S,\sigma,\mu,\tT,\tS)$ is a relaxed scenario.
    \end{Xclaim}
    \begin{claim-proof}
      The tree $T$ is obtained from $T'$ by first adding a planted root $0_T$
      (and connecting it to the original root) and then suppressing all inner
      vertices except $0_T$ that have only a single child in Line 
      \ref{line:Tphylo}.  In particular,
      $T$ is a planted phylogenetic tree by construction.  The root
      constraint (G0) $\mu(x)=0_S$ if and only if $x=0_T$ also holds by
      construction (cf.\ Line~\ref{line:mu-tT-planted-root}).  Since we
      clearly have not contracted any outer edges $(y,x)$, i.e.\ with
      $x\in L(T')$, we conclude that $L(T')=L(T)=L$.  As argued before, we
      have $\tT(x)=0$ and $\mu(x)=\sigma(x)$ whenever $x\in L(T')=L(T)$ (cf.\
      Line~\ref{line:mu-tT-leaves}).  Since all other vertices are either
      $0_T$ or mapped by $\mu$ to some edge of $S$ (cf.\
      Lines~\ref{line:mu-tT-planted-root}, \ref{line:mu-tT-inner1}
      and~\ref{line:mu-tT-inner2}), the leaf constraint (G1)
      $\mu(x)=\sigma(x)$ is satisfied if and only if $x\in L(T)$.
      
      By construction, we have $V(T)\setminus \{0_T\} \subseteq V(T')$.
      Moreover, suppression of vertices clearly preserves the
      $\preceq$-relation between all vertices $x,y\in V(T)\setminus \{0_T\}$.
      Together with Claim~\ref{clm:tT-mu-in-T-prime}, this implies
      $\tT(x)<\tT(y)$ for all vertices $x,y\in V(T)\setminus \{0_T\}$ with
      $x\prec_{T} y$.  For the single child $\rho_T$ of $0_T$ in $T$, we have
      $\tT(\rho_T)\le \tS(\rho_S)+\epsilon$ where equality holds if the root
      of $T'$ was not suppressed and thus is equal to $\rho_T$.  Moreover,
      $\tT(0_T)=\tS(0_S)$ and $\tS(0_S)-\tS(\rho_S)\ge 3\epsilon$ hold by
      construction.  Taken together the latter two arguments imply that
      $\tT(\rho_T)<\tT(0_T)$.  In particular, we obtain $\tT(x)<\tT(y)$ for
      all vertices $x,y\in V(T)$ with $x\prec_{T} y$.  Hence, $\tT$ is a time
      map for $T$, which, moreover, satisfies $\tT(x)=0$ for all $x\in L(T)$.
      
      To show that $\scen=(T,S,\sigma,\mu, \tT,\tS)$ is a relaxed scenario, it 
      remains to show that
      $\mu$ is time-consistent with the time maps $\tT$ and $\tS$.  In case
      $x\in L(T)\subset V(T)$, we have $\mu(x)=\sigma(x)\in L(S)\subset V(S)$
      and thus $\tT(x)=0=\tS(\sigma(x))=\tS(\mu(x))$.  For $0_T$, we have
      $\tT(0_T)=\tS(0_S)=\tS(\mu(0_T))$.  The latter two arguments imply that
      all vertices $x\in L(T)\cup \{0_T\}$ satisfy (C1) in the
      Def.~\ref{def:tc-map}.  The remaining vertices of $T$ are all vertices
      of $T'$ as well.  In particular, they are all inner vertices that are
      mapped to some edge of $S$ (cf.\ Lines~\ref{line:mu-tT-inner1}
      and~\ref{line:mu-tT-inner2}).  The latter two arguments together with
      Claim~\ref{clm:tT-mu-in-T-prime} imply that, for all vertices
      $x\in V(T)\setminus (L(T)\cup \{0_T\})$, we have $\mu(x)=(a,b)\in E(S)$
      and $\tS(b)<\tT(x)<\tS(a)$.  Therefore, every such vertex satisfies
      (C2) in Def.~\ref{def:tc-map}.  It follows that the time consistency
      constraint (G2) is also satisfied, and thus $\scen$ is a relaxed
      scenario.
    \end{claim-proof}
    
    \begin{Xclaim}
      \label{clm:cotree}
      Every vertex $v\in V^0(T)$ was either created in
      Line~\ref{line:create-uT} or in Line~\ref{line:create-vT}. In
      particular, it holds for all $x,y\in L(T)$ with $\lca_T(x,y)=v$:
      \vspace{-0.02in}
      \begin{description}
        \item[(1)] If $v$ was created in Line~\ref{line:create-uT}, then
        $xy\notin E(G)$ and $xy\notin E(\Gu(\scen))$.
        \item[(2)] If $v$ was created in Line~\ref{line:create-vT}, then
        $xy\in E(G)$ and $xy\in E(\Gu(\scen))$.
      \end{description}
      Furthermore, $G$ is a cograph with cotree $(T,t)$ where $t(v) = 0$ if
      $v$ was created in Line~\ref{line:create-uT} and $t(v) = 1$, otherwise.
    \end{Xclaim}
    \begin{claim-proof}
      Since $T$ is phylogenetic, every vertex $v\in V^0(T)$ is the last
      common ancestor of two leaves $x,y\in L\coloneqq L(T)$. Let
      $v\in V^0(T)$ be arbitrary and choose arbitrary leaves $x,y\in L$ such
      that $\lca_T(x,y)=v$.  Since $v\in V^0(T)$, the leaves $x$ and $y$ must
      be distinct.
      
      Note that $v\notin L(T)\cup\{0_T\}$, and thus, $v$
      is also an inner vertex in $T'$.  Therefore, we have exactly the two
      cases (1) $v=u_T$ is created in Line~\ref{line:create-uT}, and (2)
      $v=v_T$ is created in Line~\ref{line:create-vT}. Similar as before, the
      case that $v=w_K$ is assigned in Line~\ref{line:recursive-call} is
      covered by Case~(a), since, in this case, $w_K$ is created in a deeper
      recursion step.
      
      We consider the recursion step on $L'$ and $u_S$, in which $v$ was
      created.  Clearly, it must hold that $x,y\in L'$.  Before we continue,
      set $\rel\coloneqq\rel(G[L'], \sigma_{|L',L(S(u))}, \partS(u_S))$ as in
      Line~\ref{line:rel}. Note, since $\scen$ is a relaxed scenario, the
      graph $(\Gu(\scen),\sigma)$ is well-defined.
      
      For Statement~(1), suppose that $v=u_T$ was created in
      Line~\ref{line:create-uT}.  Hence, we have the two cases~(i) the vertex
      $u_S$ of $S$ in this recursion step is a leaf, and (ii) $u_S$ is an
      inner vertex.  In Case~(i), we have $L(S(u_S))=\{u_S\}$.  Together with
      Claim~\ref{clm:color-subset} and $\sigma(x),\sigma(y)\in\sigma(L')$,
      this implies $\sigma(x)=\sigma(x)=u_S$.  By assumption, $(G,\sigma)$ is
      properly colored. By Prop.~\ref{prop:properCol} $(\Gu(\scen),\sigma)$
      must be properly colored as well.  Hence, we conclude that
      $xy\notin E(G)$ and $xy\notin E(\Gu(\scen))$, respectively.  In
      Case~(ii), $u_S$ is not a leaf.  Therefore, $\lca_{T}(x,y)=v=u_T$ is
      only possible if $x$ and $y$ lie in distinct connected components of
      $G[L']$.  This immediately implies $xy\notin E(G)$. Moreover, we have
      $\sigma(x),\sigma(y)\in L(S(u_S))$ and thus
      $\lca_S(\sigma(x),\sigma(y))\preceq_{S} u_S$. Since $\tS$ is a time map
      for $S$, it follows that
      $\tS(\lca_S(\sigma(x),\sigma(y)))\le \tS(u_S)$.  Together with
      $\tT(u_T)=\tS(u_S)+\epsilon$ (cf.\ Line~\ref{line:mu-tT-inner1}) and
      $\epsilon>0$, this implies
      $\tS(\lca_S(\sigma(x),\sigma(y))) < \tT(v)=\tT(\lca_T(x,y))$.  Hence,
      $xy\notin E(\Gu(\scen))$.
      
      For Statement~(2), suppose that $v=v_T$ was created in
      Line~\ref{line:create-vT}.  Therefore, $\lca_{T}(x,y)=v=v_T$ is only
      possible if $x$ and $y$ lie in the same connected components of $G[L']$
      but in distinct $\rel$-classes.  Now, we can apply
      Lemma~\ref{lem:xy-iff-Ks-in-same-CC} to conclude that $xy\in E(G)$.
      Moreover, the fact that $x$ and $y$ lie in the same connected component
      of $G[L']$ but in distinct $\rel$-classes implies that $\sigma(x)$ and
      $\sigma(y)$ lie in distinct sets of $\partS(u_S)$. Hence, there are
      distinct $v_S,v'_S\in\child_S(u)$ such that $\sigma(x)\preceq_{S}v_S$
      and $\sigma(y)\preceq_{S} v'_S$. In particular,
      $\lca_S(\sigma(x),\sigma(y))=u_S$.  In Line~\ref{line:mu-tT-inner2}, we
      assign $\tT(\lca_T(x,y))=\tT(v_T)=\tS(u_S)-\epsilon$.  Together with
      $\epsilon>0$, the latter two arguments imply
      $\tT(\lca_T(x,y))<\tS(u_S)=\tS(\lca_S(\sigma(x),\sigma(y)))$.
      Therefore, we have $xy\in E(\Gu(\scen))$.
      
      By the latter arguments, the cotree $(T,t)$ as defined above is
      well-defined and, for all $v\in V^0(T)$, we have $t(v)=1$ if and only
      if $xy\in E(G)$ for all $x,y\in L$ with $\lca_T(x,y)=v$.  Hence,
      $(T,t)$ is a cotree for $G$.
    \end{claim-proof}
    
    \begin{Xclaim}
      \label{clm:Ru-scen-equals-Ru}
      The relaxed scenario $\scen$ satisfies $\Gu(\scen)=G$.
    \end{Xclaim}
    \begin{claim-proof}
      Since $L(T)=L$, the two undirected graphs $\Gu(\scen)$ and $G$ have the
      same vertex set.  By Claim~\ref{clm:cotree}, we have, for all distinct
      $x,y\in L$, either $xy\notin E(G)$ and $xy\notin E(\Gu(\scen))$, or
      $xy\in E(G)$ and $xy\in E(\Gu(\scen))$.
    \end{claim-proof}
    
    Together, Claims~\ref{clm:relaxed-scenario}
    and~\ref{clm:Ru-scen-equals-Ru} imply that
    Algorithm~\ref{alg:Ru-recognition} returns a relaxed scenario
    $\scen=(T,S,\sigma,\mu,\tT,\tS)$ with coloring $\sigma$ such that
    $\Gu(\scen)=G$.
    
    To see that Algorithm \ref{alg:Ru-recognition} runs in polynomial time,
    we first note that the function $\mathtt{BuildGeneTree()}$ operates in
    polynomial time. This is clear for the setup and the $\mathbf{if}$ part.
    The construction of $\rel$ in the $\mathbf{else}$ part involves the
    computation of connected components and the evaluation of
    Def.~\ref{def:rel}, both of which can be achieved in polynomial time. This
    is also true for the comparisons of color classes required to identify
    $v_S^*$ and $v_S$. Since the sets $K$ in recursive calls of
    $\mathtt{BuildGeneTree()}$ form a partition of $L'$, and the $v_S$ are
    children of $u_S$ in $S$ and the depth of the recursion is bounded by
    $O(|L(S)|)$, the total effort remains polynomial.  
  \end{proof}
  
  \begin{theorem}
    A graph $(G,\sigma)$ is an LDT graph if and only if it is a properly
    colored cograph and $\Sri(G,\sigma)$ is compatible.
    \label{thm:characterization}
  \end{theorem}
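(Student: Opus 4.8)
The plan is to prove the two directions of the equivalence by assembling results already established, deferring the genuinely constructive work to Theorem~\ref{thm:algo-works}.

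For the forward direction I would assume that $(G,\sigma)$ is an LDT graph and show that it is a properly colored cograph with compatible $\Sri(G,\sigma)$. This requires no new argument: Lemma~\ref{lem:propcolcograph} already guarantees that every LDT graph is a properly colored cograph, and Lemma~\ref{lem:Ru-SpeciesTriple} guarantees that $\Sri(G,\sigma)$ is compatible (indeed that $S$ displays it for every explaining $\mu$-free scenario). Thus this direction reduces to citing these two lemmas.

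For the converse I would assume that $(G,\sigma)$ is a properly colored cograph with $\Sri(G,\sigma)$ compatible and exhibit an explaining scenario. Here Theorem~\ref{thm:algo-works} does the heavy lifting: under exactly these hypotheses, Algorithm~\ref{alg:Ru-recognition} returns a relaxed scenario $\scen=(T,S,\sigma,\mu,\tT,\tS)$ with $\Gu(\scen)=G$. By Definition~\ref{def:Gu-scen}, the underlying $\mu$-free scenario $(T,S,\sigma,\tT,\tS)$ satisfies $\Gu=G$, so $(G,\sigma)$ is an LDT graph by definition; equivalently one may invoke Theorem~\ref{thm:LDT-scen}, which states precisely that a graph is an LDT graph if and only if some relaxed scenario explains it.

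The only substantial obstacle is entirely encapsulated in Theorem~\ref{thm:algo-works}, whose proof builds the gene tree $T$ top-down via Lemma~\ref{lem:xy-iff-Ks-in-same-CC} and must then verify both that the resulting tuple is a \emph{bona fide} relaxed scenario (time-consistency of $\tT$, $\tS$, and $\mu$) and that its LDT graph coincides with $G$. Once that correctness statement is in hand, the present theorem is a two-line synthesis. I would therefore keep the write-up short, explicitly noting that the \emph{converse} is the load-bearing direction and that it rests on the algorithm's correctness rather than on any further combinatorial insight.
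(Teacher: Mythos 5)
Your proposal is correct and mirrors the paper's own proof exactly: the forward direction is obtained by citing Lemma~\ref{lem:Ru-SpeciesTriple} and Lemma~\ref{lem:propcolcograph}, and the converse follows from Thm.~\ref{thm:algo-works}, since Algorithm~\ref{alg:Ru-recognition} outputs a relaxed scenario $\scen$ with $\Gu(\scen)=G$, which by definition makes $(G,\sigma)$ an LDT graph. Your observation that the converse is the load-bearing direction, with all substance encapsulated in the algorithm's correctness proof, matches the paper's organization precisely.
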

  \begin{proof}
    By Lemma~\ref{lem:Ru-SpeciesTriple} and~\ref{lem:propcolcograph}, if
    $(G,\sigma)$ is an LDT graph then it is a properly colored cograph and
    $\Sri(G,\sigma)$ is compatible.  Now suppose that $(G,\sigma)$ is a
    properly colored cograph and $\Sri(G,\sigma)$ is compatible. Then, by
    Thm.~\ref{thm:algo-works}, Algorithm~\ref{alg:Ru-recognition} outputs
    a relaxed scenario $\scen=(T,S,\sigma,\mu,\tT,\tS)$ such that
    $\Gu(\scen)=G$. By definition, this in particular implies that
    $(G,\sigma)$ is an LDT graph.  
  \end{proof}
  
  \begin{corollary}
    LDT graphs can be recognized in polynomial time.
    \label{cor:LDTpoly}
  \end{corollary}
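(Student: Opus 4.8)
The plan is to read off a recognition algorithm directly from the characterization in Thm.~\ref{thm:characterization}, which states that $(G,\sigma)$ is an LDT graph precisely when (i) it is properly colored, (ii) $G$ is a cograph, and (iii) the triple set $\Sri(G,\sigma)$ is compatible. Since membership in the class of LDT graphs is equivalent to the conjunction of these three conditions, it suffices to show that each can be decided in polynomial time and then to run the three tests in sequence, accepting $(G,\sigma)$ if and only if all of them succeed.

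First I would test proper coloring by scanning every edge $uv\in E(G)$ and checking $\sigma(u)\neq\sigma(v)$; this clearly takes time linear in $|E(G)|$. Second, I would decide whether $G$ is a cograph. By Prop.~\ref{prop:cograph} this is equivalent to $G$ being $P_4$-free, and cograph recognition is solvable in polynomial (indeed linear) time by the classical algorithm of \citet{Corneil:81}; for the purpose of this corollary even a naive $O(|V(G)|^4)$ search for an induced $P_4$ would suffice.

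The remaining step is to decide compatibility of $\Sri(G,\sigma)$. Here I would first assemble the triple set explicitly: by Def.~\ref{def:infoTriples} every triple in $\Sri(G,\sigma)$ is witnessed by an ordered triple of vertices $x,y,z$ with a prescribed adjacency pattern, so $\Sri(G,\sigma)$ can be constructed in $O(|V(G)|^3)$ time and, in particular, contains only polynomially many elements. Compatibility is then tested by feeding $\Sri(G,\sigma)$ to \texttt{BUILD}: by Prop.~\ref{prop:ahograph} this correctly reports whether the set is compatible, and it runs in polynomial time.

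Combining the three checks yields a polynomial-time recognition procedure, which establishes the claim. I do not anticipate a genuine obstacle, since essentially all the content is inherited from Thm.~\ref{thm:characterization}; the only point worth verifying is that $\Sri(G,\sigma)$ stays polynomially bounded so that the call to \texttt{BUILD} incurs merely polynomial overhead, and this holds because its members are indexed by triples of vertices of $G$.
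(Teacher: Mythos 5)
Your proposal is correct and follows essentially the same route as the paper: both reduce recognition to checking the three conditions of Thm.~\ref{thm:characterization} (proper coloring, cograph recognition via \citet{Corneil:81}, and compatibility of $\Sri(G,\sigma)$ after constructing it explicitly). The only differences are cosmetic constants -- the paper bounds $|\Sri(G,\sigma)|$ by $|V(G)|\cdot|E(G)|$ and cites a faster compatibility test than \texttt{BUILD} -- neither of which affects the polynomial-time conclusion.
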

  \begin{proof}
    Cographs can be recognized in linear time \cite{Corneil:81a}, the proper
    coloring can be verified in linear time, the triple set $\Sri(G,\sigma)$
    contains not more than $|V(G)|\cdot|E(G)|$ triples and can be constructed
    in $O(|V(G)|\cdot|E(G)|)$ time, and compatibility of $\Sri(G,\sigma)$
    can be checked in
    $O(\min(|\Sri|\log^2 |V(G)|, |\Sri| + |V(G)|^2\ln |V(G)|))$ time
    \cite{Jansson:05}.
  \end{proof}
  
  \begin{corollary}
    The property of being an LDT graph is hereditary,
    that is, if $(G,\sigma)$ is an LDT graph then each of its vertex induced
    subgraphs is an LDT graph.
    \label{cor:LDT-here}
  \end{corollary}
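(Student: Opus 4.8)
The plan is to reduce the claim entirely to the characterization in Theorem~\ref{thm:characterization}, which states that $(G,\sigma)$ is an LDT graph precisely when it is a properly colored cograph and $\Sri(G,\sigma)$ is compatible. Hence it suffices to verify that each of these three ingredients — proper coloring, being a cograph, and compatibility of the associated species triples — is inherited by every vertex-induced subgraph $(G[W],\sigma_{|W})$.

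First, proper coloring is immediate: any edge $uv\in E(G[W])$ is also an edge of $G$, so $\sigma(u)\neq\sigma(v)$, and thus $(G[W],\sigma_{|W})$ is again properly colored. Second, that $G[W]$ is a cograph follows directly from Proposition~\ref{prop:cograph}(4), which asserts that every induced subgraph of a cograph is a cograph.

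The remaining — and only mildly delicate — step is to show that $\Sri(G[W],\sigma_{|W})$ is compatible. The key observation is that, because $G[W]$ is an \emph{induced} subgraph, for any $x,y,z\in W$ we have $xy\in E(G[W])$ if and only if $xy\in E(G)$, and likewise for the other pairs. Since moreover $\sigma_{|W}(v)=\sigma(v)$ for all $v\in W$, every triple arising in $\Sri(G[W],\sigma_{|W})$ already satisfies the defining conditions of Definition~\ref{def:infoTriples} in $(G,\sigma)$, so that $\Sri(G[W],\sigma_{|W})\subseteq\Sri(G,\sigma)$. As $(G,\sigma)$ is an LDT graph, $\Sri(G,\sigma)$ is compatible by Lemma~\ref{lem:Ru-SpeciesTriple}, witnessed by some tree $S$ displaying all its triples; this same $S$ then displays every triple of the subset $\Sri(G[W],\sigma_{|W})$, whence the latter is compatible as well.

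Combining these three observations with Theorem~\ref{thm:characterization} yields that $(G[W],\sigma_{|W})$ is an LDT graph, completing the argument. I do not expect any genuine obstacle; the only point requiring care is ensuring that the triple-set inclusion genuinely exploits that the subgraph is \emph{induced} rather than merely a subgraph, since the definition of $\Sri$ depends on both the presence of the edges $xz,yz$ and the absence of the edge $xy$.
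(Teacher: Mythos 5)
Your proposal is correct and follows essentially the same route as the paper's own proof: both reduce the claim to the characterization in Thm.~\ref{thm:characterization}, observe that proper coloring and the cograph property are inherited by induced subgraphs (via Prop.~\ref{prop:cograph}), and establish $\Sri(G[W],\sigma_{|W})\subseteq\Sri(G,\sigma)$ from the preservation of induced $P_3$s, so that any tree displaying the larger triple set witnesses compatibility of the smaller one. The only cosmetic difference is that the paper argues via deletion of a single vertex, whereas you handle an arbitrary vertex subset $W$ in one step.
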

  \begin{proof}
    Let $(G=(V,E),\sigma)$ be an LDT graph. It suffices to show that
    $(G-x, \sigma_{|V\setminus \{x\}})$ is an LDT graph, where $G-x$ is
    obtained from $G$ by removing $x\in V$ and all its incident edges.  By
    Prop.~\ref{prop:cograph}, $G-x$ is a cograph that clearly remains
    properly colored. Moreover, every induced path on three vertices in $G-x$
    is also an induced path on three vertices in $G$. This implies that if
    $xy|z \in \Sri' = \Sri(G-x,\sigma_{|V\setminus \{x\}})$, then
    $xy|z \in \Sri(G,\sigma)$. Hence, $\Sri' \subseteq \Sri(G,\sigma)$.  By
    Thm.~\ref{thm:characterization}, $\Sri(G,\sigma)$ is compatible.
    Hence, any tree that displays all triples in $\Sri(G,\sigma)$, in
    particular, displays all triples in $\Sri'$. Therefore, $\Sri'$ is
    compatible.  In summary, $(G-x, \sigma_{|V\setminus \{x\}})$ is a
    properly colored cograph and $\Sri'$ is compatible. By 
    Thm.~\ref{thm:characterization} it is an LDT graph.  
  \end{proof}
  
  The relaxed scenarios $\scen$ explaining an LDT graph $(G,\sigma)$ are far
  from being unique.  In fact, we can choose from a large set of trees
  $(S,\tS)$ that is determined only by the triple set $\Sri(G,\sigma)$:
  \begin{corollary}
    If $(G=(L,E),\sigma)$ is an LDT graph with coloring
    $\sigma\colon L\to M$, then for all planted trees $S$ on $M$ that display
    $\Sri(G,\sigma)$ there is a relaxed scenario
    $\scen=(T,S,\sigma,\mu,\tT,\tS)$ that contains $\sigma$ and $S$ and that
    explains $(G,\sigma)$.
    \label{cor:manyT}
  \end{corollary}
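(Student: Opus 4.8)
The plan is to observe that this is essentially a strengthening of the reading of Theorem~\ref{thm:algo-works}: the correctness of Algorithm~\ref{alg:Ru-recognition} never depends on \emph{which} tree $S$ displaying $\Sri(G,\sigma)$ is chosen in Line~\ref{line:S}, only on the fact that $S$ displays $\Sri(G,\sigma)$ and has leaf set $M$. First I would note that, since $(G,\sigma)$ is an LDT graph, Theorem~\ref{thm:characterization} guarantees that $(G,\sigma)$ is a properly colored cograph with compatible triple set $\Sri(G,\sigma)$. Thus the input requirements of Algorithm~\ref{alg:Ru-recognition} are met.

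Next I would fix an arbitrary planted tree $S$ on $M$ that displays $\Sri(G,\sigma)$ and run Algorithm~\ref{alg:Ru-recognition}, except that instead of constructing $S$ freely in Line~\ref{line:S} (e.g.\ via \texttt{BUILD}), we substitute the given $S$. The algorithm then proceeds to build $\tS$, $T$, $\mu$, and $\tT$ exactly as before, yielding a candidate scenario $\scen=(T,S,\sigma,\mu,\tT,\tS)$ that by construction contains the prescribed $\sigma$ and $S$.

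The crux is then to verify that the entire correctness argument of Theorem~\ref{thm:algo-works} goes through verbatim for this externally supplied $S$. Here I would walk through the proof and confirm that it invokes $S$ only through the two properties that (a) $L(S(\rho_S))=M$, which holds because $S$ is a tree on $M$, and (b) $S$ displays $\Sri(G,\sigma)$. In particular, the invariant $\sigma(L')\subseteq L(S(u_S))$ of Claim~\ref{clm:color-subset} and every application of Lemma~\ref{lem:xy-iff-Ks-in-same-CC} rely solely on these two facts; nowhere is it used that $S$ is the Aho tree, least resolved, or otherwise canonical. Consequently $\scen$ is a relaxed scenario with $\Gu(\scen)=G$, i.e.\ $\scen$ explains $(G,\sigma)$, which establishes the claim.

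I expect the main (though modest) obstacle to be the bookkeeping of this last step: one must be careful to check that \emph{no} hidden assumption about the shape of $S$ crept into the proof of Theorem~\ref{thm:algo-works}, since the algorithm does fix a particular $S$ in Line~\ref{line:S}. Once it is confirmed that the displaying property is the only structural input used, the corollary is immediate.
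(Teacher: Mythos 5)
Your proof is correct and is essentially the paper's own argument: the published proof of Cor.~\ref{cor:manyT} consists precisely of the observation that Line~\ref{line:S} of Algorithm~\ref{alg:Ru-recognition} allows an \emph{arbitrary} tree $S$ on $M$ displaying $\Sri(G,\sigma)$ to be substituted, after which Thm.~\ref{thm:algo-works} (with Thm.~\ref{thm:characterization} supplying the input conditions) yields the desired relaxed scenario containing $\sigma$ and $S$. Your careful check that the correctness proof invokes $S$ only through $L(S(\rho_S))=M$ and the displaying property simply makes explicit what the paper's one-line proof takes for granted.
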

  \begin{proof}
    If $(G,\sigma)$ is an LDT graph, then the species tree $S$ assigned in
    Line~\ref{line:S} in Algorithm~\ref{alg:Ru-recognition} is an arbitrary
    tree on $M$ displaying $\Sri(G,\sigma)$.  
  \end{proof}
  
  \begin{corollary}\label{cor:displayed-cotree} 
    If $(G,\sigma)$ is an LDT graph, then there exists a relaxed scenario
    $\scen=(T,S,\sigma,\mu,\tT,\tS)$ explaining $(G,\sigma)$ such that $T$
    displays the discriminating cotree $T_{G}$ of $G$.
  \end{corollary}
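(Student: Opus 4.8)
The plan is to read off the conclusion directly from the scenario that Algorithm~\ref{alg:Ru-recognition} already constructs, together with the cotree structure exposed in its correctness proof. Since $(G,\sigma)$ is an LDT graph, Thm.~\ref{thm:characterization} tells us that $(G,\sigma)$ is a properly colored cograph whose triple set $\Sri(G,\sigma)$ is compatible. Hence Thm.~\ref{thm:algo-works} applies and produces a relaxed scenario $\scen=(T,S,\sigma,\mu,\tT,\tS)$ with $\Gu(\scen)=G$. My starting point is Claim~\ref{clm:cotree} from the proof of Thm.~\ref{thm:algo-works}: the gene tree $T$ carries a vertex labeling $t$, with $t(v)=0$ for the vertices created in Line~\ref{line:create-uT} and $t(v)=1$ for those created in Line~\ref{line:create-vT}, such that $(T,t)$ is a cotree for $G$. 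So $\scen$ is an explaining scenario whose gene tree is, structurally, a cotree of $G$, and it only remains to relate this particular cotree to the discriminating cotree $T_G$.

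Next I would invoke the construction of the discriminating cotree recalled in the preliminaries: for any cotree $(T,t)$ of a cograph $G$, contracting every tree edge $(u,v)\in E(T)$ with $t(u)=t(v)$ yields the discriminating cotree $(T_G,\hat t)$, which is uniquely determined by $G$ \cite{Corneil:81}. Applying this to the specific cotree $(T,t)$ furnished by Claim~\ref{clm:cotree}, I obtain $T_G$ from $T$ by contracting exactly the edges whose two endpoints carry the same label.

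The key step is to recognize that each contracted edge is an \emph{inner} edge of $T$. Indeed, if $(u,v)$ satisfies $t(u)=t(v)$, then both endpoints lie in the labeled vertex set, i.e.\ both $u$ and $v$ are interior join/union vertices of the cotree; in particular neither is a leaf nor the planted root $0_T$, so $(u,v)$ is an inner edge in the sense of this paper. Consequently $T_G$ arises from $T$ by a series of inner-edge contractions that leave the full leaf set untouched, so $L(T)=V(G)=L(T_G)$. By the definition of refinement, $T$ is therefore a refinement of $T_G$, and in particular $T$ displays $T_G$, which is precisely the assertion of the corollary.

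I expect the only genuine subtlety to be the bookkeeping around the planted root and the exact meaning of ``displays''. Since the planted edge $(0_T,\rho_T)$ has the \emph{unlabeled} vertex $0_T$ as an endpoint, it can never satisfy $t(u)=t(v)$ and so is never contracted; it therefore survives, the resulting tree remains planted, and no leaf is lost. Thus the leaf sets agree and the refinement (hence the display) relation holds verbatim. In summary, the scenario $\scen$ returned by Algorithm~\ref{alg:Ru-recognition} already satisfies the claim, and no further modification of $T$ is required.
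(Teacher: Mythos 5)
Your proof is correct and takes essentially the same route as the paper's: both apply Thm.~\ref{thm:characterization} and Thm.~\ref{thm:algo-works} to obtain the scenario produced by Algorithm~\ref{alg:Ru-recognition}, then use Claim~\ref{clm:cotree} together with the fact that the discriminating cotree $(T_G,\hat t)$ arises from any cotree of $G$ by edge contractions. Your additional bookkeeping (that every contracted edge has both endpoints among the labeled inner vertices and hence is an inner edge, and that the planted edge is never contracted, so the leaf set is preserved and the display relation holds) merely makes explicit what the paper's proof leaves implicit.
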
 
  \begin{proof}
    Suppose that $(G,\sigma)$ is an LDT graph.  By
    Thm.~\ref{thm:characterization}, $(G,\sigma)$ must be a properly colored
    cograph and $\Sri(G,\sigma)$ is comparable.  Hence,
    Thm.~\ref{thm:algo-works} implies that Algorithm~\ref{alg:Ru-recognition}
    constructs a relaxed scenario $\scen=(T,S,\sigma,\mu,\tT,\tS)$ explaining
    $(G,\sigma)$.  In particular, the tree $T$ together with labeling $t$ as
    specified in Claim~\ref{clm:cotree} is a cotree for $G$.  Since the
    unique discriminating cotree $(T_{G},\hat t)$ of $G$ is obtained from any
    other cotree by contraction of edges in $T$, the tree $T$ must display
    $T_{G}$.  
  \end{proof}
  
  Although, Cor.~\ref{cor:displayed-cotree} implies that there is always a
  relaxed  scenario $\scen$ where the tree $T$ displays the discriminating 
  cotree $T_{G}$ of $G=G(\scen)$, this is not true for all relaxed scenarios 
  $\scen$ with $G=G(\scen)$. Fig.~\ref{fig:Ru-cotree-not-displ} shows a relaxed
  scenario $\scen' = (T',S',\sigma,\mu',\tT',\tS')$ with $G = G(\scen')$ for
  which $T'$ does not display $T_G$.
  
  \begin{figure}[t]
    \centering
    \includegraphics[width=0.8\textwidth]{./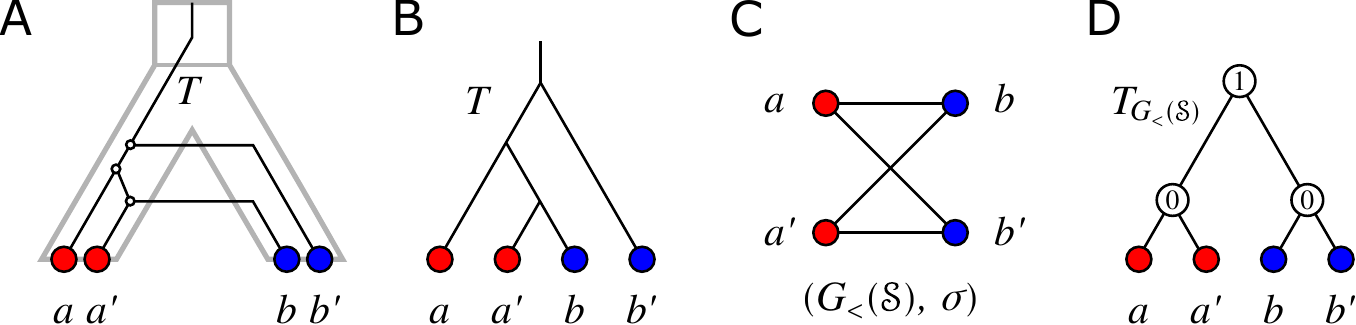}
    \caption{A relaxed scenario $\scen$ (A) with gene tree $T$ (B) and its
      associated graph $(\Gu(\scen),\sigma)$ (C). The discriminating cotree
      $T_{\Gu(\scen)}$ (D) is not displayed by $T$.}
    \label{fig:Ru-cotree-not-displ}
  \end{figure}
  
  Cor.~\ref{cor:displayed-cotree} enables us to relate connectedness of LDT
  graphs to properties of the relaxed scenarios by which it can be
  explained.
  \begin{lemma}
    \label{lem:Gu-connected}
    An LDT graph $(G=(L,E),\sigma)$ with $|L|>1$ is connected if and
    only if for every relaxed scenario
    $\scen=(T,S,\sigma,\mu,\tT,\tS)$ that explains $(G,\sigma)$, we have
    $\tT(\rho_T)<\tS(\lca_S(\sigma(L)))$.
  \end{lemma}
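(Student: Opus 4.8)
The plan is to characterise connectedness of $G$ through a single quantity, the time $\tS(w)$ of $w\coloneqq\lca_S(\sigma(L))$, the last common ancestor in $S$ of all colours that actually occur, and to compare it with $\tT(\rho_T)$. Two structural facts are used throughout. First, since $|L|>1$ and $T$ is phylogenetic, $\rho_T$ is an inner vertex with at least two children, whose leaf sets partition $L$ into at least two nonempty blocks. Second, since $\sigma(L)\subseteq L(S)$ we always have $w\preceq_S\rho_S$, and more importantly $\lca_S(\sigma(a),\sigma(b))\preceq_S w$ for all $a,b\in L$, hence $\tS(\lca_S(\sigma(a),\sigma(b)))\le\tS(w)$.

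For the forward direction (connectedness forces the inequality in \emph{every} explaining scenario) I would argue by contraposition inside a fixed but arbitrary scenario $\scen$. Assume $\tT(\rho_T)\ge\tS(w)$. For any two leaves $a,b$ lying in distinct child-subtrees of $\rho_T$ we have $\lca_T(a,b)=\rho_T$, and therefore $\tT(\lca_T(a,b))=\tT(\rho_T)\ge\tS(w)\ge\tS(\lca_S(\sigma(a),\sigma(b)))$, so $ab\notin E(G)$ by the definition of $\Gu(\scen)=G$. Thus no edge of $G$ joins two of the (at least two) blocks, and $G$ is disconnected. Contrapositively, if $G$ is connected then $\tT(\rho_T)<\tS(w)$ holds in every explaining scenario; since connectedness of $G$ does not depend on the scenario, this yields the whole forward implication.

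For the converse I would again use contraposition: assuming $G$ disconnected, I construct one explaining scenario that violates the inequality. As $(G,\sigma)$ is an LDT graph, Thm.~\ref{thm:characterization} gives that it is a properly coloured cograph with compatible $\Sri(G,\sigma)$, so by Thm.~\ref{thm:algo-works} Algorithm~\ref{alg:Ru-recognition} returns a relaxed scenario $\scen=(T,S,\sigma,\mu,\tT,\tS)$ with $\Gu(\scen)=G$. In the top-level call on $(L,\rho_S)$ the created vertex $u_T$ receives $\tT(u_T)=\tS(\rho_S)+\epsilon$ with $\epsilon>0$ (Line~\ref{line:mu-tT-inner1}). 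Because $G$ is disconnected the \textbf{else}-branch attaches one child $v_T$ per connected component, so $u_T$ gains at least two children and hence survives the degree-two suppression of Line~\ref{line:Tphylo}, becoming $\rho_T$. Therefore $\tT(\rho_T)=\tS(\rho_S)+\epsilon>\tS(\rho_S)\ge\tS(w)$, so the inequality fails for this scenario. Consequently, if the inequality holds in every explaining scenario, $G$ must be connected, completing the equivalence.

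The only delicate points are bookkeeping rather than conceptual: checking that $\rho_T$ really has at least two children in both parts (from $|L|>1$ and phylogeneticity in the forward direction, and from the one-$v_T$-per-component rule in the converse), and confirming that the top-level $u_T$ is \emph{not} suppressed in the disconnected case so that indeed $\tT(\rho_T)=\tS(\rho_S)+\epsilon$. I expect no real obstacle here, as every property of the construction that is needed is already established in the proof of Thm.~\ref{thm:algo-works}; the substantive content is entirely contained in the elementary time comparison $\tT(\rho_T)$ versus $\tS(\lca_S(\sigma(L)))$.
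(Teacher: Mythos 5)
Your proof is correct and takes essentially the same route as the paper's: contraposition in both directions, with the root-split time comparison $\tT(\lca_T(a,b))=\tT(\rho_T)\ge\tS(w)\ge\tS(\lca_S(\sigma(a),\sigma(b)))$ for the forward implication and the top-level behaviour of Algorithm~\ref{alg:Ru-recognition} for the converse; your uniform use of $\lca_S(\sigma(a),\sigma(b))\preceq_S w$ even dispenses with the paper's separate proper-colouring case for $\sigma(x)=\sigma(y)$, and your quantifier bookkeeping handles the equality case more cleanly than the paper's closing remark. One small caveat: when $|M|=1$, so that $\rho_S$ is a leaf (e.g.\ an edgeless monochromatic LDT graph), the top-level call takes the \textbf{if}-branch of Line~\ref{line:species-leaf} rather than the \textbf{else}-branch you invoke, but $u_T$ then receives all $|L|>1$ elements of $L$ as children, so it still survives the suppression of Line~\ref{line:Tphylo} and your conclusion $\tT(\rho_T)=\tS(\rho_S)+\epsilon>\tS(\lca_S(\sigma(L)))$ goes through unchanged — the paper covers this case explicitly.
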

  \begin{proof}
    By contraposition, suppose first that there is a relaxed scenario
    $\scen=(T,S,\sigma,\mu,\tT,\tS)$ that explains $(G,\sigma)$ such that
    $\tT(\rho_T) \geq \tS(\lca_S(\sigma(L)))$. Since $|L(T)|=|L|>1$, the root
    $\rho_{T}$ is not a leaf. To show that $G$ is disconnected we consider
    two distinct children $v,w\in\child(\rho_T)$ of the root and leaves
    $x\in L(T(v))$ and $y\in L(T(w))$ and verify that $x$ and $y$ cannot be
    adjacent in $G$. If $\sigma(x)=\sigma(y)$, then $xy\notin E$ since
    $(G,\sigma)$ is properly colored (cf.\ Lemma~\ref{lem:propcolcograph}).
    Hence, suppose that $\sigma(x)\neq \sigma(y)$.  By construction,
    $\lca_T(x,y)=\rho_T$ and thus, by assumption,
    $\tT(\lca_T(x,y)) = \tT(\rho_T) \geq \tS(\lca_S(\sigma(L)))$.  Now
    $\lca_S(\sigma(L))\succeq_S \lca_S(\sigma(x),\sigma(y))$ implies that
    $\tS(\lca_S(\sigma(L)))\geq \tS(\lca_S(\sigma(x),\sigma(y)))$ and thus,
    $\tT(\lca_T(x,y))\geq \tS(\lca_S(\sigma(x),\sigma(y)))$.  Hence,
    $xy\notin E$. Consequently, for all distinct children
    $v,w\in\child(\rho_T)$, none of the vertices in $L(T(v))$ are adjacent to
    any of the vertices in $L(T(w))$ and thus, $G$ is disconnected.
    
    Conversely, suppose that $G$ is disconnected.  We consider
    Alg.~\ref{alg:Ru-recognition} with input $(G,\sigma)$.  By
    Thms.~\ref{thm:algo-works} and~\ref{thm:characterization}, the algorithm
    constructs a  relaxed  scenario $\scen=(T,S,\sigma,\mu,\tT,\tS)$ that
    explains $(G,\sigma)$.  Consider the top-level recursion step on $L$ and
    $\rho_S$.  Since $G$ is disconnected, the vertex $u_T$ created in
    Line~\ref{line:create-uT} of this step equals the root $\rho_T$ of the
    final tree $T$.  To see this, assume first that $\rho_S$ is a leaf.
    Then, we attach the $|L|>1$ elements in $L$ as leaves to $u_T$ (cf.\
    Line~\ref{line:attach-leaf}).  Now assume that $\rho_S$ is not a leaf.
    Since $G[L]=G$ has at least two components, we attach at least two
    vertices $v_T$ created in Line~\ref{line:create-vT} to $u_T$.  Hence
    $u_T$ is not suppressed in Line~\ref{line:Tphylo} and thus $\rho_T=u_T$.
    By construction, therefore, we have
    $\tT(\rho_T)=\tT(u_T)=\tS(u_S)+\epsilon=\tS(\rho_S)+\epsilon$ for some
    $\epsilon>0$. From $\sigma(\rho_S)\succeq_S \lca_S(\sigma(L))$ and the
    definition of time maps, we obtain
    $\tS(\rho_S)\ge\tS(\lca_S(\sigma(L)))$. Therefore, we have
    $\tT(\rho_T)\ge \tS(\lca_S(\sigma(L)))+\epsilon>\tS(\lca_S(\sigma(L)))$,
    which completes the proof. Therefore, we have shown so-far that if all
    relaxed scenarios $\scen=(T,S,\sigma,\mu,\tT,\tS)$ that explain
    $(G,\sigma)$ satisfy $\tT(\rho_T)\leq\tS(\lca_S(\sigma(L)))$, then
    $(G,\sigma)$ must be connected.  However,
    $\tT(\rho_T) = \tS(\lca_S(\sigma(L)))$ cannot occur, since we can reuse
    the same arguments as in the beginning of this proof to show that, in
    this case, $G$ is disconnected.  
  \end{proof}

  \subsection{Least Resolved Trees for LDT graphs}
  
  As we have seen e.g.\ in Cor.~\ref{cor:manyT}, there are in general many
  trees $S$ and $T$ forming  relaxed  scenarios $\scen$ that explain a
  given LDT graph $(G,\sigma)$.  This begs the question to what extent these
  trees are determined by ``representatives''. For $S$, we have seen that $S$
  always displays $\Sri(G,\sigma)$, suggesting to consider the role of
  $S=\Aho(\Sri(G,\sigma), M)$. This tree is least resolved in the sense
  that there is no  relaxed  scenario explaining the LDT graph
  $(G,\sigma)$ with a tree $S'$ that is obtained from $S$ by
  edge-contractions. The latter is due to the fact that any edge contraction
  in $\Aho(\Sri(G,\sigma), M)$ yields a tree $S'$ that does not display
  $\Sri(G,\sigma)$ any more \cite{Jansson:12}. By
  Prop.~\ref{lem:Ru-SpeciesTriple}, none of the  relaxed  scenarios
  containing $S'$ explain the LDT $(G,\sigma)$.
  
  \begin{definition}
    Let $\scen=(T,S,\sigma,\mu,\tT,\tS)$ be a relaxed scenario explaining the
    LDT graph $(G,\sigma)$. The planted tree $T$ is \emph{least resolved} for
    $(G,\sigma)$ if no relaxed scenario $(T',S',\sigma',\mu',\tT',\tS')$ with
    $T'<T$ explain $(G,\sigma)$.
    \label{def:LRT-LDT}
  \end{definition}
  In other words, $T$ is least resolved for $(G,\sigma)$ if no scenario with
  a gene tree $T'$ obtained from $T$ by a series of edge contractions
  explains $(G,\sigma)$. The examples in Fig.~\ref{fig:LRT-not-unique} show
  that there is not always a unique least resolved tree.
  
  As outlined in the main part of this paper, the examples in Fig.\
  \ref{fig:LRT-not-unique} show that LDT graphs are in general not
  accompanied by unique least resolved trees and the example in
  Fig.~\ref{fig:cotree-not-resolved-enough} shows that the unique
  discriminating cotree $T_G$ of an LDT graph $(G,\sigma)$ is not always
  ``sufficiently resolved''.
  
  \section{Horizontal Gene Transfer and Fitch Graphs}
  \label{TP:sect:HGT}
  
  \subsection{HGT-Labeled Trees and rs-Fitch Graphs}
  
  As alluded to in the introduction, the LDT graphs are intimately related
  with horizontal gene transfer. To formalize this connection we first define
  transfer edges. These will then be used to encode Walter Fitch's concept of
  xenologous gene pairs \cite{Fitch:00,Darby:17} as a binary relation, and
  thus, the edge set of a graph.
  \begin{definition}
    Let $\scen = (T,S,\sigma,\mu,\tT,\tS)$ be a relaxed scenario.  An edge
    $(u,v)$ in $T$ is a \emph{transfer edge} if $\mu(u)$ and $\mu(v)$ are
    incomparable in $S$. The \emph{HGT-labeling} of $T$ in $\scen$ is the
    edge labeling $\lambda_{\scen}: E(T)\to\{0,1\}$ with $\lambda(e)=1$ if and 
    only if $e$ is a transfer edge.
    \label{def:HGT-label}
  \end{definition}
  The vertex $u$ in $T$ thus corresponds to an HGT event, with $v$ denoting
  the subsequent event, which now takes place in the ``recipient'' branch of
  the species tree. Note that $\lambda_{\scen}$ is completely determined by
  $\scen$.  In general, for a given a gene tree $T$, HGT events correspond to
  a labeling or coloring of the edges of $T$.
  
  \begin{definition}[Fitch graph] 
    Let $(T,\lambda)$ be a tree $T$ together with a map
    $\lambda\colon E(T)\to \{0,1\}$.  The \emph{Fitch graph}
    $\gfitch(T,\lambda) = (V,E)$ has vertex set $V\coloneqq L(T)$
    and edge set
    \begin{equation*}
      E \coloneqq \{xy \mid  x,y\in L,
      \text{ the unique path connecting  }
      x \text{ and } y \text{ in } T 
      \text{ contains an edge }
      e \text{ with } \lambda(e)=1. \}
    \end{equation*}
    \label{def:FitchG}
  \end{definition}
  By definition, Fitch graphs of 0/1-edge-labeled trees are loop-less and
  undirected. We call edges $e$ of $(T,\lambda)$ with label $\lambda(e)=1$
  also 1-edges and, otherwise, 0-edges.
  \begin{remark} Fitch graphs as defined here have been termed
    \emph{undirected} Fitch graphs \cite{Hellmuth:18a}, in contrast to the
    notion of the \emph{directed} Fitch graphs of 0/1-edge-labeled trees
    studied e.g.\ in \cite{Geiss:18a,Hellmuth:2019a}.
  \end{remark}
  
  \begin{proposition}{\cite{Hellmuth:18a,Zverovich:99}}
    The following statements are equivalent.
    \begin{enumerate}
      \item $G$ is the Fitch graph of a 0/1-edge-labeled tree.
      \item $G$ is a complete multipartite graph.
      \item $G$ does not contain $K_2+K_1$ as an induced subgraph.
    \end{enumerate}
    \label{prop:fitch}
  \end{proposition}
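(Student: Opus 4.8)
The plan is to prove the three statements equivalent by establishing the cycle $(1)\Rightarrow(3)\Rightarrow(2)\Rightarrow(1)$. For $(1)\Rightarrow(3)$, suppose $G=\gfitch(T,\lambda)$ and assume, for contradiction, that three vertices $x,y,z\in L(T)$ induce a $K_2+K_1$, say with $xy\in E$ and $xz,yz\notin E$. I would argue via the \emph{median} $m$ of $x,y,z$ in $T$, i.e.\ the unique vertex lying on all three of the pairwise connecting paths, so that the $x$--$y$ path splits as $P_{xy}=P_{xm}\cup P_{my}$ and analogously $P_{xz}=P_{xm}\cup P_{mz}$ and $P_{yz}=P_{ym}\cup P_{mz}$, with the pieces meeting only in $m$. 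Since $xz\notin E$, neither $P_{xm}$ nor $P_{mz}$ carries a $1$-edge; since $yz\notin E$, neither does $P_{ym}$ nor $P_{mz}$. But then $P_{xy}=P_{xm}\cup P_{my}$ contains no $1$-edge either, forcing $xy\notin E$, a contradiction. Hence $G$ has no induced $K_2+K_1$.

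For $(3)\Rightarrow(2)$, I would define a relation on $V(G)$ by $x\sim y$ if and only if $x=y$ or $xy\notin E(G)$. Reflexivity and symmetry are immediate, and transitivity is precisely what $(3)$ guarantees: if $x,y,z$ were distinct with $xy,yz\notin E$ but $xz\in E$, then $\{x,y,z\}$ would induce a $K_2+K_1$. Thus $\sim$ is an equivalence relation; its classes $I_1,\dots,I_k$ are independent sets, while any two vertices lying in distinct classes are adjacent by the definition of $\sim$. Therefore $G$ is complete multipartite with parts $I_1,\dots,I_k$.

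For $(2)\Rightarrow(1)$, given the parts $I_1,\dots,I_k$ I would exhibit an explicit labeled tree. Take a planted root $0_T$ with a single child $\rho$, and for each part $I_j$ attach an internal vertex $r_j$ to $\rho$ by an edge with $\lambda(\rho r_j)=1$, hanging all leaves of $I_j$ below $r_j$ via $0$-labeled leaf edges; the planted edge $0_T\rho$ receives label $0$ (it lies on no leaf-to-leaf path and is irrelevant). Two leaves in the same part are then joined by a path consisting solely of $0$-edges and are non-adjacent in $\gfitch(T,\lambda)$, whereas two leaves in different parts traverse the two $1$-edges incident to $\rho$ and are adjacent; hence $\gfitch(T,\lambda)=G$. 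The degenerate cases are handled routinely: a singleton part $I_j=\{\ell\}$ would produce a suppressible degree-two vertex $r_j$, which I avoid by attaching $\ell$ directly to $\rho$ with a $1$-edge, and the case $k=1$ (an edgeless $G$) is realized by hanging all leaves under $\rho$ with $0$-edges.

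None of these steps is deep, since this is a classical result. The least routine point is the median decomposition in $(1)\Rightarrow(3)$, where one must be careful that the three subpaths are edge-disjoint off $m$ so that the absence of $1$-edges propagates correctly; the only other fiddly bit is the bookkeeping in the construction of $(2)\Rightarrow(1)$, ensuring that the realizing tree respects the paper's convention that trees are planted and phylogenetic, which is exactly what the handling of singleton parts and of $k=1$ takes care of.
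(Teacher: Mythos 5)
Your proof is correct, but be aware that the paper itself contains no proof of Prop.~\ref{prop:fitch}: it is stated as a known result with citations to \citet{Hellmuth:18a} and \citet{Zverovich:99}, so there is no in-paper argument to compare against. Judged on its own merits, your cycle $(1)\Rightarrow(3)\Rightarrow(2)\Rightarrow(1)$ is sound. The median argument for $(1)\Rightarrow(3)$ works exactly as you say: $m$ is the unique common vertex of the three pairwise paths, the subpaths $P_{xm},P_{ym},P_{mz}$ are edge-disjoint off $m$, and $P_{xy}\subseteq P_{xz}\cup P_{yz}$, so the absence of $1$-edges on $P_{xz}$ and $P_{yz}$ forces $xy\notin E$. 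The equivalence-relation argument for $(3)\Rightarrow(2)$ is the standard observation that non-adjacency is transitive precisely when $G$ has no induced $K_2+K_1$. Your construction for $(2)\Rightarrow(1)$ is, in essence, the same ``star of stars'' the paper builds inside its proof of Prop.~\ref{prop:converse-obs-fitch}: there, the root edges $(\rho_T,u_i)$ are labeled $1$ for $i\in\{2,\dots,k\}$ (all but one), whereas you label all of them $1$ --- both variants yield every cross-part pair adjacent and every within-part pair non-adjacent --- and your treatment of singleton parts (attaching the single leaf directly to $\rho$) matches the paper's convention of identifying $u_j$ with the unique element of $I_j$. The only microscopic blemish is the degenerate case $|V(G)|=1$ with $k=1$, where ``all leaves under $\rho$'' leaves $\rho$ with a single child and hence a suppressible degree-two vertex; since the proposition only asserts existence of \emph{some} $0/1$-edge-labeled tree, one simply takes the tree consisting of that single leaf, so nothing breaks.
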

  
  A natural connection between LDT graphs and complete multipartite graphs is
  suggested by the definition of triple sets $\Tri(G)$, since each forbidden 
  induced
  subgraph $K_2+K_1$ of a complete multipartite graphs corresponds to a
  triple in an LDT graph. More precisely, we have:
  \begin{lemma} 
    $(G,\sigma)$ is a properly colored complete multipartite if and only if it
    is properly colored and $\Tri(G) = \emptyset$.
    \label{lem:mulip-triples}
  \end{lemma}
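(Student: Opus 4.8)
The plan is to reduce the statement to Proposition~\ref{prop:fitch}, which already characterizes complete multipartite graphs as exactly those graphs that contain no induced $K_2+K_1$. Since the qualifier ``properly colored'' appears on both sides of the asserted equivalence, it suffices to prove that, for any graph $G$ (irrespective of the coloring), $G$ is complete multipartite if and only if $\Tri(G)=\emptyset$; the coloring can then simply be carried along unchanged.

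The key observation is that the configuration defining a member of $\Tri(G)$ is precisely an induced $K_2+K_1$. Indeed, by Definition~\ref{def:infoTriples}, a triple $xy|z$ lies in $\Tri(G)$ exactly when $x,y,z$ are pairwise distinct, $xy\in E$, and $xz,yz\notin E$; these three vertices then induce a subgraph with exactly one edge, i.e.\ a $K_2+K_1$ (the $K_2$ on $\{x,y\}$ together with the isolated vertex $z$). Conversely, any induced $K_2+K_1$ on a vertex set $\{x,y,z\}$, with the single edge being $xy$, yields the triple $xy|z\in\Tri(G)$. First I would record this correspondence carefully, checking that the ``pairwise distinct'' requirement and the exact edge pattern match on both sides, and noting in particular that it is a statement about the edge set alone and is independent of $\sigma$.

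From this correspondence it follows that $\Tri(G)=\emptyset$ if and only if $G$ contains no induced $K_2+K_1$. By the equivalence of items (2) and (3) of Proposition~\ref{prop:fitch}, the latter holds if and only if $G$ is complete multipartite. Chaining these two equivalences gives ``$G$ is complete multipartite $\iff \Tri(G)=\emptyset$'', and conjoining ``$(G,\sigma)$ is properly colored'' to both sides yields the claim. I do not expect a genuine obstacle here: the entire content of the argument is the elementary translation between triples in $\Tri(G)$ and induced $K_2+K_1$ subgraphs, after which the result is immediate from Proposition~\ref{prop:fitch}.
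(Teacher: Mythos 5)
Your proposal is correct and matches the paper's own proof, which likewise observes that $G$ is complete multipartite if and only if it contains no induced $K_2+K_1$ (Prop.~\ref{prop:fitch}) and that, by Definition~\ref{def:infoTriples}, such an induced subgraph exists precisely when $\Tri(G)\neq\emptyset$. Your write-up merely spells out the triple--$(K_2+K_1)$ correspondence and the independence from $\sigma$ a bit more explicitly than the paper does.
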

  \begin{proof}
    The equivalence between the statements can be seen by observing that $G$
    is a complete multipartite graph if and only if $G$ does not contain an
    induced $K_2+K_1$ (cf.\ Prop.~\ref{prop:fitch}).  By definition of
    $\Tri(G)$, this is the case if and only if $\Tri(G)=\emptyset$.  
  \end{proof}
  
  \begin{definition}[rs-Fitch graph]
    Let $\scen = (T,S,\sigma,\mu,\tT,\tS)$ be a relaxed scenario with
    HGT-labeling $\lambda_{\scen}$. We call the vertex colored graph
    $(\gfitch(\scen),\sigma) \coloneqq (\gfitch(T,\lambda_{\scen}),\sigma)$
    the \emph{Fitch graph of the scenario $\scen$.}\\
    A vertex colored graph $(G,\sigma)$ is a \emph{relaxed scenario Fitch
      graph} (\emph{rs-Fitch graph}) if there is a relaxed scenario
    $\scen = (T,S,\sigma,\mu,\tT,\tS)$  such that
    $G = \gfitch(\scen)$.
    \label{def:rsFitchG}
  \end{definition}
  
  Fig.~\ref{fig:fitch-example} shows that rs-Fitch graphs are not necessarily
  properly colored.  A subtle difficulty arises from the fact that Fitch
  graphs of 0/1-edge-labeled trees are defined without a reference to the
  vertex coloring $\sigma$, while the rs-Fitch graph is vertex colored.
  \begin{fact}
    If $(G,\sigma)$ is an rs-Fitch graph then $G$ is a complete multipartite
    graph.
    \label{obs:Fitch}
  \end{fact}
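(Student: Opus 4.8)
The plan is to reduce the statement immediately to Prop.~\ref{prop:fitch}, which is the only real ingredient. By Def.~\ref{def:rsFitchG}, if $(G,\sigma)$ is an rs-Fitch graph then there is a relaxed scenario $\scen = (T,S,\sigma,\mu,\tT,\tS)$ with HGT-labeling $\lambda_{\scen}$ such that $G = \gfitch(\scen) = \gfitch(T,\lambda_{\scen})$. In particular, $G$ is the Fitch graph of the $0/1$-edge-labeled tree $(T,\lambda_{\scen})$, so statement~(1) of Prop.~\ref{prop:fitch} holds for $G$.

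Before applying the proposition I would record one conceptual point, since it is where an error could creep in: the coloring $\sigma$ plays no role in this conclusion. The graph $\gfitch(T,\lambda_{\scen})$ is defined in Def.~\ref{def:FitchG} purely in terms of the edge-labeled tree, with no reference to $\sigma$. Hence the assertion is genuinely a statement about the uncolored graph $G$ underlying the rs-Fitch graph $(G,\sigma)$, and the presence of $\sigma$ in the hypothesis is merely carried along. With this understood, the equivalence of~(1) and~(2) in Prop.~\ref{prop:fitch} applies directly and yields that $G$ is a complete multipartite graph, which is exactly the claim.

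There is essentially no obstacle here beyond correctly invoking the cited equivalence; the content of the observation is entirely contained in Prop.~\ref{prop:fitch}. If one preferred a self-contained argument avoiding the citation, the direct route would be via statement~(3): suppose $x,y,z$ induced a $K_2+K_1$ with $xy\in E(G)$ and $xz,yz\notin E(G)$. Then the path from $x$ to $y$ in $T$ contains a $1$-edge, while the paths from $x$ to $z$ and from $y$ to $z$ contain none. Letting $m$ be the median vertex of $x,y,z$ in $T$, the edge set of the $x$--$y$ path is the union of the $x$--$m$ and $m$--$y$ segments, and these are contained in the $x$--$z$ path and the $y$--$z$ path respectively; hence the offending $1$-edge would already lie on one of the latter two paths, forcing $xz\in E(G)$ or $yz\in E(G)$, a contradiction. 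This shows $G$ has no induced $K_2+K_1$ and is therefore complete multipartite. Either way the proof is short, so I would present the clean version through Prop.~\ref{prop:fitch}.
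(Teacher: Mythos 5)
Your proof is correct and matches the paper exactly: the paper states Obs.~\ref{obs:Fitch} without a separate proof, precisely because, as you argue, Def.~\ref{def:rsFitchG} gives $G=\gfitch(T,\lambda_{\scen})$ for some $0/1$-edge-labeled tree, the coloring $\sigma$ is irrelevant to Def.~\ref{def:FitchG}, and Prop.~\ref{prop:fitch} then yields that $G$ is complete multipartite. Your optional median-vertex argument ruling out an induced $K_2+K_1$ is also sound, but it merely re-proves part of the cited Prop.~\ref{prop:fitch} and is not needed.
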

  The ``converse'' of Obs.~\ref{obs:Fitch} is not true in general, as we
  shall see in Thm.~\ref{thm:char-rsFitch} below. If, however, the coloring
  $\sigma$ can be chosen arbitrarily, then every complete multipartite graph $G$
  can be turned into an rs-Fitch graph $(G,\sigma)$ as shown in 
  Prop.~\ref{prop:converse-obs-fitch}.
  
  \begin{proposition}
    If $G$ is a complete multipartite graph, then there  exists a relaxed
    scenario $\scen=(T,S,\sigma,\mu,\tT,\tS)$ such that $(G,\sigma)$ is an
    rs-Fitch graph.
    \label{prop:converse-obs-fitch}
  \end{proposition}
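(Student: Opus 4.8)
The plan is to exploit the freedom in choosing the coloring $\sigma$: I would give every gene its own species, so that $\sigma$ is injective, and then exhibit an explicit relaxed scenario whose transfer pattern reproduces the complete multipartite structure of $G$. Write the independent sets (parts) of $G$ as $I_1,\dots,I_k$. If $k=1$ then $G$ is edgeless, and any transfer-free scenario (map the whole gene tree vertically into a species tree on $\sigma(L(T))$, cf.\ Lemma~\ref{lem:Rempty}) has empty HGT-labelling and hence $\gfitch(\scen)=G$; so from now on assume $k\ge 2$.

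For the species tree I would take $L(S)=\{D\}\cup\sigma(I_1)\cup\dots\cup\sigma(I_k)$ with one extra \emph{dummy} leaf $D$, which is permissible since $\sigma$ need not be surjective. Let $S$ have root $\rho_S$ with exactly the two children $D$ and $r$, where the subtree $S(r)$ contains each colour set $\sigma(I_j)$ as a clade with root $s_j\coloneqq\lca_S(\sigma(I_j))$; the internal arrangement of these clades is irrelevant. The gene tree $T$ I would build as a caterpillar over the parts: a spine $r_1\succ_T\dots\succ_T r_{k-1}$ in which $r_j$ has as children the root $\rho_j$ of a subtree $t_j$ with leaf set $I_j$ and the next spine vertex (with $r_{k-1}$ carrying both $t_{k-1}$ and $t_k$), each $t_j$ chosen arbitrarily (e.g.\ a caterpillar, or a single leaf when $|I_j|=1$). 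Declaring all spine edges to be $1$-edges and all edges inside the $t_j$ to be $0$-edges makes $\gfitch(T,\lambda)=G$; this is essentially the $(2)\Rightarrow(1)$ direction of Prop.~\ref{prop:fitch}.

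The heart of the matter is to realize this labelling as a genuine HGT-labelling $\lambda_{\scen}$. I would send the entire spine into the pendant edge above the dummy, $\mu(r_j)\coloneqq(\rho_S,D)$, and map every internal vertex of each $t_j$ into the edge just above its clade, $(\parent_S(s_j),s_j)$, keeping $\mu(x)=\sigma(x)$ on leaves and $\mu(0_T)=0_S$. Because $D$ and $r$ lie in different subtrees of $\rho_S$, the edge $(\rho_S,D)$ is incomparable in $S$ to every vertex and every edge of $S(r)$; hence each spine edge $(r_j,\rho_j)$ and $(r_j,r_{j+1})$ is a transfer edge, whereas inside each $t_j$ all $\mu$-images sit on a single edge above $\lca_S(\sigma(I_j))$ and are therefore pairwise comparable, so no edge of $t_j$ is a transfer edge (vacuously so if $|I_j|=1$). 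Consequently the path in $T$ between two leaves contains a transfer edge precisely when they lie in different parts — the path from $x\in I_i$ must leave $t_i$ through the transfer edge $(r_i,\rho_i)$ — so $\gfitch(\scen)=G$.

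Finally I would fix times. With $\tS(x)=0$ on all leaves of $S$ and $\tS(r)>0$, choose $\tS(\rho_S)$ large, put the spine times strictly decreasing inside $(\tS(r),\tS(\rho_S))$, and the internal $t_j$-times strictly decreasing inside $(\tS(s_j),\tS(\parent_S(s_j)))$. Since each $t_j$ lies in $S(r)$, its times stay below $\tS(r)$ and hence below $\tT(r_j)$, so $\tT$ is a time map and Conditions (C1),(C2) hold, giving (G2); (G0) and (G1) hold by construction, so $\scen$ is a relaxed scenario and $(G,\sigma)$ is an rs-Fitch graph. I expect the only real obstacle to be the incomparability bookkeeping that forces transfers onto cross-part paths alone; routing the connecting spine through a separate dummy branch is exactly what makes all parts symmetric and keeps both the comparability analysis and the time-consistency verification routine.
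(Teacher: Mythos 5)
Your construction is correct and establishes the proposition, but by a genuinely different route than the paper. The paper exploits the freedom in choosing $\sigma$ in the opposite direction: it collapses each part $I_j$ to a \emph{single} color (putting $\sigma(x)=j$ for all $x\in I_j$), takes $S$ to be a planted star on $M=\{1,\dots,k\}$, and builds a depth-two gene tree whose root $\rho_T$ has one child $u_j$ per part, with $\mu(\rho_T)=(\rho_S,1)$ and $\mu(u_j)=(\rho_S,j)$, so that the edges $(\rho_T,u_j)$ with $j\ge 2$ are exactly the transfer edges; no auxiliary species is needed. You instead make $\sigma$ injective and buy the transfer edges by parking the entire spine on a pendant dummy branch $(\rho_S,D)$ -- legitimate, since the paper explicitly does not require $\sigma$ to be surjective, and in effect a constructive shadow of Cor.~\ref{cor:surj-rsF}: a color unused by $L$ disconnects the auxiliary graph $\auxfitch$, which by Thm.~\ref{thm:char-rsFitch} already guarantees the rs-Fitch property. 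The edgeless case is handled the same way in both proofs (via the transfer-free scenario behind Lemma~\ref{lem:Rempty}). The paper's coloring yields the side benefits it later uses ($(G,\sigma)$ properly colored, $\Sri(G,\sigma)=\emptyset$); yours shows the statement holds even for injective colorings.

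One slip, harmless here but worth flagging: the spine-to-spine edges $(r_j,r_{j+1})$ are \emph{not} transfer edges in your scenario. Both endpoints have the same $\mu$-image $(\rho_S,D)$, and under the paper's extension of $\preceq_S$ to edges, an edge is comparable to itself; hence $\lambda_{\scen}$ labels these edges $0$, and your declared labelling $\lambda$ is not realized as $\lambda_{\scen}$ (cf.\ Obs.~\ref{obs:01T-notScen} -- such mismatches can be genuinely unavoidable). Your proof survives because the proposition only needs $\gfitch(\scen)=G$, and your decisive step -- every cross-part path must exit its subtree $t_i$ through the attaching edge $(r_i,\rho_i)$, which truly is a transfer edge since $(\rho_S,D)$ is incomparable to every vertex and edge of $S(r)$, while all within-part paths have pairwise comparable images on the single edge $(\parent_S(s_j),s_j)$ -- never uses the spine-to-spine edges. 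With that one claim corrected, the argument is complete; the time-map bookkeeping is fine as stated.
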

  \begin{proof}
    Let $G$ be a complete multipartite graph and set $L\coloneqq V(G)$ and
    $R\coloneqq E(G)$. If $R=\emptyset$, then the relaxed scenario $\scen$
    constructed in the proof of Lemma~\ref{lem:Rempty} shows that
    $E(G)=E(\gfitch(\scen)) = \emptyset$.  Hence, we assume that
    $R\neq \emptyset$ and explicitly construct a relaxed scenario
    $\scen = (T,S,\sigma,\mu,\tT,\tS)$ such that $(G,\sigma)$ is an rs-Fitch
    graph.
    
    We start by specifying the coloring $\sigma\colon L\to M$.  Since $G$ is
    a complete multipartite graph it is determined by its independent sets
    $I_1,\dots,I_k$, which form a partition of $L$.  We set
    $M\coloneqq\{1,2,\ldots,k\}$ and color every $x\in I_j$ with color
    $\sigma(x)=j$, $1\leq j\leq k$. By construction, $(G,\sigma)$ is properly
    colored, and $\sigma(x)=\sigma(y)$ whenever $xy\notin R$, i.e., whenever
    $x$ and $y$ lie in the same independent set. Therefore, we have
    $\Sri(G,\sigma) = \emptyset$. Let $S$ be the planted star tree with leaf
    set $L(S)=\{1,\dots,k\} = M$ and $\child_S(\rho_S)=M$. Since
    $R\neq \emptyset$, we have $k\geq 2$, and thus, $\rho_S$ has at least two
    children and is, therefore, phylogenetic.  We choose the time map $\tS$
    by putting $\tS(0_S)=2$, $\tS(\rho_S)=1$ and $\tS(x)=0$ for all
    $x\in L(S)$.
    
    Finally, we construct the planted phylogenetic tree $T$ with planted root
    $0_T$ and root $\rho_T$ as follows: Vertex $\rho_T$ has $k$ children
    $u_1,\dots, u_k$. If $I_j=\{x_j\}$ consists of a single element, then we
    put $u_j\coloneqq x_j$ as a leaf or $T$, and otherwise, vertex $u_j$ has
    exactly $|I_j|$ children where $\child(u_j)=I_j$. Now label, for all
    $i\in \{2,\dots, k\}$, the edge $(\rho_T,u_i)$ with ``$1$'', and all
    other edges with ``$0$''. Since $k\ge 2$, the tree $T$ is also
    phylogenetic by construction.
    
    \begin{figure}[ht]
      \centering
      \includegraphics[width=0.85\textwidth]{./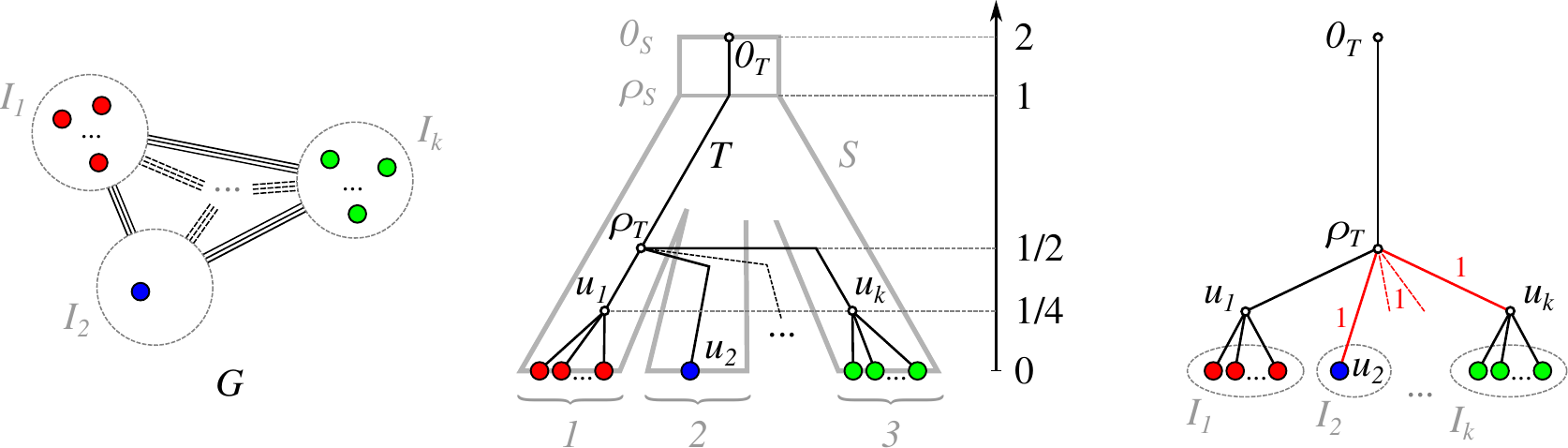}
      \caption{Construction in the proof of
        Prop.~\ref{prop:converse-obs-fitch}.}
      \label{fig:prop:converse-obs-fitch}
    \end{figure}
    
    We specify the time map $\tT$ and the reconciliation map $\mu$ by
    defining, for every $v\in V(T)$,
    \begin{equation*}
      \tT(v) \coloneqq
      \begin{cases}
        2=\tS(0_S) \\
        0 \\
        1/2 \\
        1/4 
      \end{cases} 
      \mu(v) \coloneqq
      \begin{cases}
        0_S &\text{if } v=0_T,\\
        \sigma(v) &\text{if } v\in L(T),\\
        (\rho_S,1) &\text{if } v = \rho_T, \textrm{ and}\\
        (\rho_S,i) &\text{if } v=u_i\not\in L(T), 1\leq i\leq k.
      \end{cases}
    \end{equation*}
    With the help of Fig.~\ref{fig:prop:converse-obs-fitch}, it is now easy
    to verify that (i) $\tT$ is a time map for $T$, (ii) the reconciliation
    map $\mu$ is time-consistent, and (iii) $\lambda_{\scen} = \lambda$.  In
    summary, $\scen = (T,S,\sigma,\mu,\tT,\tS)$ is a relaxed scenario, and
    $(G,\sigma) = (\gfitch(\scen),\sigma)$ is an rs-Fitch graph.  
  \end{proof}
  
  Although every complete multipartite graph can be colored in such a way
  that it becomes an rs-Fitch graph (cf.\
  Prop.~\ref{prop:converse-obs-fitch}), there are colored, complete
  multipartite graphs $(G,\sigma)$ that are not rs-Fitch graphs, i.e., that
  do not derive from a relaxed scenario (cf.\ Thm.~\ref{thm:char-rsFitch}).
  We summarize this discussion in the following
  \begin{fact}
    There are (planted) 0/1-edge labeled trees $(T,\lambda)$ and colorings
    $\sigma\colon L(T)\to M$ such that there is no relaxed scenario
    $\scen = (T,S,\sigma,\mu,\tT,\tS)$ with $\lambda=\lambda_{\scen}$.
    \label{obs:01T-notScen}
  \end{fact}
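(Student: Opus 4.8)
The plan is to establish the claim by exhibiting an explicit counterexample, i.e., a planted $0/1$-edge-labeled tree $(T,\lambda)$ together with a coloring $\sigma\colon L(T)\to M$ for which the assumption $\lambda=\lambda_{\scen}$ forces a contradiction. The obstruction I want to exploit is the following: if two leaves carrying the \emph{same} color are separated in $T$ by an edge $(u,v)$ labeled $1$ while the downward path from $v$ to a leaf $b$ and the downward path from $u$ to a second leaf $b'$ both consist entirely of $0$-edges, then no reconciliation map $\mu$ can simultaneously respect the comparabilities forced by the $0$-edges and the incomparability demanded by the $1$-edge. Concretely I would take such a tree with $\sigma(b)=\sigma(b')=:B$ and the remaining leaf-colors pairwise distinct; this is precisely the configuration realized by the tree in Example~\ref{ex:lst} (Fig.~\ref{fig:TreeClassesDistinct}), and a minimal instance already arises for a tree whose only same-colored pair is $b,b'$.

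Assuming for contradiction that a relaxed scenario $\scen=(T,S,\sigma,\mu,\tT,\tS)$ with $\lambda=\lambda_{\scen}$ exists, I would argue as follows. By Def.~\ref{def:HGT-label}, every $0$-edge $(x,y)$ forces $\mu(x)$ and $\mu(y)$ to be comparable in $S$. Since $\mu(b)=\sigma(b)=B$ by the leaf constraint, walking up the all-$0$ path from $b$ to $v$ shows that $\mu(v)$ is comparable to $B$, and the analogous walk from $b'$ to $u$ shows that $\mu(u)$ is comparable to $B=\mu(b')$. Because $B$ is a leaf of $S$, being comparable to $B$ means lying on the unique path from $0_S$ to $B$. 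Hence both $\mu(u)$ and $\mu(v)$ lie on this path and are therefore comparable in $S$. But $(u,v)$ being a $1$-edge means, again by Def.~\ref{def:HGT-label}, that $\mu(u)$ and $\mu(v)$ are \emph{incomparable} — a contradiction. Thus no relaxed scenario realizes $\lambda$, proving the statement.

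The step I expect to require the most care is the translation of ``$\mu(v)$ is comparable to the leaf $B$'' into ``$B\preceq_S\mu(v)$ on the root-to-$B$ path,'' together with the chaining of comparabilities along a path of $0$-edges. Here I would invoke time consistency (Def.~\ref{def:tc-map}): since descendants receive strictly smaller time stamps (Def.~\ref{def:time-map}), the alternative $\mu(v)=B$ is excluded by $\tS(B)=\tT(b)<\tT(v)$, so comparability can only mean $B\preceq_S\mu(v)$. The same time-monotonicity, applied edge by edge and using the extended ancestor relation on $V(S)\cup E(S)$ from the Notation section, lets me chain the individual comparabilities $\mu(x_i)\sim\mu(x_{i+1})$ along the $0$-path into a single descending chain terminating at $B$; this also covers the case in which the relevant $\mu$-images are edges of $S$ rather than vertices. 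Everything else — that the construction is a legitimate planted phylogenetic tree and that $\sigma$ is a valid coloring — is routine.
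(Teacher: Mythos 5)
Your proposal is correct and takes essentially the same approach as the paper, which justifies this observation via Example~\ref{ex:lst}: the identical configuration of two equally colored leaves $b,b'$ whose incident $0$-edges force $\mu(u)$ and $\mu(v)$ onto the root-to-$\sigma(b)$ path in $S$, contradicting the incomparability required by the $1$-edge $(u,v)$. Your additional care in chaining comparabilities along $0$-paths using time consistency is sound but not needed in the paper's minimal instance, where $b$ and $b'$ are directly adjacent to $v$ and $u$, respectively.
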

  A subtle -- but important -- observation is that trees $(T,\lambda)$ with
  coloring $\sigma$ for which Obs.~\ref{obs:01T-notScen} applies may still
  encode an rs-Fitch graph $(\gfitch(T,\lambda),\sigma)$, see Example
  \ref{ex:lst} and Fig.~\ref{fig:TreeClassesDistinct}.  The latter is due to
  the fact that $\gfitch(T,\lambda) = \gfitch(T',\lambda')$ may be possible for
  a different tree $(T',\lambda')$ for which there is a relaxed scenario
  $\scen' = (T',S,\sigma,\mu,\tT,\tS)$ with $\lambda' = \lambda_{\scen}$.  In
  this case, $(\gfitch(T,\lambda),\sigma) = (\gfitch(\scen'),\sigma)$ is an
  rs-Fitch graph. We shall briefly return to these issues in the discussion
  section~\ref{sect:concl}.
  
  \subsection{LDT Graphs and rs-Fitch Graphs}
  
  We proceed to investigate to what extent an LDT graph provides information
  about an rs-Fitch graph.  As we shall see in
  Thm.~\ref{thm:FitchRu-scenario} there is indeed a close connection between
  rs-Fitch graphs and LDT graphs. We start with a useful relation between the
  edges of rs-Fitch graphs and the reconciliation maps $\mu$ of their
  scenarios.
  
  \begin{lemma}
    Let $\gfitch(\scen)$ be an rs-Fitch graph for some  relaxed  
    scenario $\scen$. Then,
    $ab\notin E(\gfitch(\scen))$ implies that
    $\lca_S(\sigma(a),\sigma(b)) \preceq_S \mu(\lca_T(a,b)) $.
    \label{lem:independent-lca}
  \end{lemma}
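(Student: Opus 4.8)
The plan is to show that along any path in $T$ that is free of transfer edges the reconciliation map $\mu$ is monotone with respect to the extended ancestor relation $\preceq_S$ on $V(S)\cup E(S)$, and then to apply this to the two branches of the path from $a$ to $b$ through $\lca_T(a,b)$. First I would set $v\coloneqq\lca_T(a,b)$ and note that, by Def.~\ref{def:FitchG}, the hypothesis $ab\notin E(\gfitch(\scen))$ means precisely that \emph{no} edge on the unique $a$--$b$ path in $T$ is a transfer edge. The heart of the argument is the following claim: \emph{if $(u,w)\in E(T)$ with child $w$ (so $w\prec_T u$) is not a transfer edge, then $\mu(w)\preceq_S\mu(u)$.}

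To prove the claim, recall that a non-transfer edge has $\mu(u)$ and $\mu(w)$ comparable in $S$, while $w\prec_T u$ forces $\tT(w)<\tT(u)$ for the time map of $\scen$. Conditions (C1) and (C2) of Def.~\ref{def:tc-map} translate these $T$-times into $\tS$-positions of the images $\mu(w)$ and $\mu(u)$. I would then run a short case distinction according to whether each of $\mu(u),\mu(w)$ is a vertex or an edge of $S$: in each of the four cases, assuming the opposite comparison $\mu(u)\prec_S\mu(w)$ would place the $\tS$-position of $\mu(u)$ strictly below that of $\mu(w)$ (using the endpoint conventions of $\preceq_S$ on edges), contradicting $\tT(w)<\tT(u)$. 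Hence the comparable pair must satisfy $\mu(w)\preceq_S\mu(u)$.

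With the claim in hand, I would walk up the path $a=x_0\prec_T x_1\prec_T\cdots\prec_T x_k=v$. Each consecutive edge is non-transfer, so $\mu(x_i)\preceq_S\mu(x_{i+1})$ for all $i$; by transitivity of $\preceq_S$ and the leaf constraint (G1), which gives $\mu(a)=\sigma(a)$, this yields $\sigma(a)\preceq_S\mu(v)$. The symmetric argument on the $b$-branch gives $\sigma(b)\preceq_S\mu(v)$. Finally, since both $\sigma(a)$ and $\sigma(b)$ lie $\preceq_S\mu(v)$, so does their last common ancestor: if $\mu(v)\in V(S)$ this is immediate, and if $\mu(v)=(p,q)\in E(S)$ then $\sigma(a),\sigma(b)\preceq_S q$, whence $\lca_S(\sigma(a),\sigma(b))\preceq_S q\prec_S\mu(v)$. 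In either case $\lca_S(\sigma(a),\sigma(b))\preceq_S\mu(\lca_T(a,b))$, as required.

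The main obstacle is the case analysis in the claim: it requires handling the four vertex/edge combinations for $\mu(u),\mu(w)$ together with the precise extended definition of $\preceq_S$ on $V(S)\cup E(S)$ and of time-consistency, so that ``comparable and earlier in time'' is correctly shown to mean ``$S$-descendant''. Once the bookkeeping of endpoints and $\tS$-inequalities is in place, each case reduces to a one-line contradiction, and the remainder of the proof is routine transitivity of $\preceq_S$ and the handling of the edge-image case for $\mu(v)$.
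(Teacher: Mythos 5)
Your proof is correct and follows essentially the same route as the paper's: both decompose the $a$--$b$ path at $v=\lca_T(a,b)$, use that $ab\notin E(\gfitch(\scen))$ forces every edge on that path to have $S$-comparable $\mu$-images, and conclude that $\mu(v)$ is an ancestor of both $\sigma(a)$ and $\sigma(b)$, hence of $\lca_S(\sigma(a),\sigma(b))$. The one notable difference is that you prove an explicit per-edge monotonicity claim ($\mu(w)\preceq_S\mu(u)$ for any non-transfer edge $(u,w)$, via the time maps and a four-case vertex/edge analysis), which in fact rigorizes a step the paper states more loosely -- namely that the images of \emph{any} two vertices along each subpath are comparable, an assertion that does not follow from consecutive comparability alone (comparability is not transitive) and implicitly requires exactly your time-consistency argument; the paper instead closes by observing that $\sigma(a)$ and $\sigma(b)$ are leaves of $S$, so comparability with them already forces ancestorship.
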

  \begin{proof}
    Assume first that $ab\notin E(\gfitch(\scen))$ and denote by $P_{xy}$ the
    unique path in $T$ that connects the two vertices $x$ and $y$. Clearly,
    $u\coloneqq \lca_T(a,b)$ is contained in $P_{ab}$, and this path $P_{ab}$
    can be subdivided into the two paths $P_{u,a}$ and $P_{u,b}$ that have
    only vertex $u$ in common.  Since $ab\notin E(\gfitch(\scen))$, none of
    the edges $(v,w)$ along the path $P_{ab}$ in $T$ is a transfer edge, and
    thus, the images $\mu(v)$ and $\mu(w)$ are comparable in $S$. This
    implies that the images of any two vertices along the path $P_{u,a}$ as
    well as the images of any two vertices along $P_{u,b}$ are
    comparable.  In particular, therefore, $\mu(u)$ is comparable with both
    $\mu(a)=\sigma(a)\eqqcolon A$ and $\mu(b)=\sigma(b)\eqqcolon B$, where we
    may have $A=B$.  Together with the fact that $A$ and $B$ are leaves in $S$,
    this implies that $\mu(u)$ is an ancestor of $A$ and $B$. Since
    $\lca_S(A,B)$ is the ``last'' vertex that is an ancestor of both $A$ and
    $B$, we have $\lca_S(A,B) \preceq_S \mu(u)$.  
  \end{proof}
  
  The next result shows that a subset of transfer edges can be inferred
  immediately from LDT graphs:
  \begin{theorem}
    If $(G,\sigma)$ is an LDT graph, then $G\subseteq \gfitch(\scen)$ for all
    relaxed scenarios $\scen$ that explain $(G,\sigma)$. 
    \label{thm:infer-fitch}
  \end{theorem}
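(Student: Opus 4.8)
The plan is to prove the contrapositive at the level of individual edges. Fix any relaxed scenario $\scen=(T,S,\sigma,\mu,\tT,\tS)$ with $G=\Gu(\scen)$. Since both $\Gu(\scen)$ and $\gfitch(\scen)$ have the same vertex set $L(T)$, the inclusion $G\subseteq\gfitch(\scen)$ is equivalent to the implication $ab\in E(\Gu(\scen))\Rightarrow ab\in E(\gfitch(\scen))$ for all distinct $a,b\in L(T)$. I would instead establish the contrapositive: if $ab\notin E(\gfitch(\scen))$, then $ab\notin E(\Gu(\scen))$, i.e.\ $\tT(\lca_T(a,b))\ge\tS(\lca_S(\sigma(a),\sigma(b)))$.

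The crucial ingredient is Lemma~\ref{lem:independent-lca}. Since $\scen$ is a relaxed scenario, $(\gfitch(\scen),\sigma)$ is an rs-Fitch graph, so the lemma applies and $ab\notin E(\gfitch(\scen))$ yields $\lca_S(\sigma(a),\sigma(b))\preceq_S\mu(\lca_T(a,b))$. Writing $u\coloneqq\lca_T(a,b)$ and $v_S\coloneqq\lca_S(\sigma(a),\sigma(b))$, it remains to convert the \emph{structural} relation $v_S\preceq_S\mu(u)$ into the desired \emph{temporal} inequality $\tS(v_S)\le\tT(u)$, using time consistency (Def.~\ref{def:tc-map}).

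I would split according to where $\mu(u)$ lands. If $\mu(u)\in V(S)$, then (C1) gives $\tT(u)=\tS(\mu(u))$, and $v_S\preceq_S\mu(u)$ together with the monotonicity of the time map $\tS$ gives $\tS(v_S)\le\tS(\mu(u))=\tT(u)$. If $\mu(u)=(x,y)\in E(S)$, with $y\preceq_S x$ the child endpoint, then the extended ancestor relation means $v_S\preceq_S\mu(u)$ unfolds to $v_S\preceq_S y$, whence $\tS(v_S)\le\tS(y)$; combined with (C2), which gives $\tS(y)<\tT(u)$, this yields $\tS(v_S)<\tT(u)$. In either case $\tS(v_S)\le\tT(u)$, hence $ab\notin E(\Gu(\scen))$, completing the contrapositive and thus the theorem.

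The argument is short precisely because Lemma~\ref{lem:independent-lca} has already done the combinatorial work of locating $\mu(\lca_T(a,b))$ at or above $\lca_S(\sigma(a),\sigma(b))$. The only point requiring care --- the step I expect to be the main (if minor) obstacle --- is correctly unwinding the extended order $v_S\preceq_S\mu(u)$ when $\mu(u)$ is an \emph{edge} rather than a vertex: one must use the paper's convention that $v_S\prec_S(x,y)$ means $v_S\preceq_S y$ (the lower endpoint), so that the relevant comparison is with $\tS(y)$ rather than $\tS(x)$, and then invoke the correct half of condition (C2).
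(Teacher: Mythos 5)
Your proposal is correct and follows essentially the same route as the paper's proof: both invoke Lemma~\ref{lem:independent-lca} to place $\mu(\lca_T(a,b))$ at or above $\lca_S(\sigma(a),\sigma(b))$ and then use the time maps and time consistency (Defs.~\ref{def:time-map} and~\ref{def:tc-map}) to deduce $\tS(\lca_S(\sigma(a),\sigma(b)))\le\tT(\lca_T(a,b))$, so that $ab\notin E(\Gu(\scen))$; framing it as a contrapositive rather than the paper's contradiction is only a cosmetic difference. If anything, your explicit case split on $\mu(u)\in V(S)$ versus $\mu(u)\in E(S)$, including the correct unwinding of the extended order $v_S\prec_S(x,y)\iff v_S\preceq_S y$, spells out a step the paper compresses into a single citation.
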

  \begin{proof}
    Let $\scen = (T,S,\sigma,\mu,\tT,\tS)$ be a relaxed scenario that
    explains $(G,\sigma)$, i.e., $G = \Gu(\scen)$.  By definition,
    $V(G) = V(\gfitch(\scen)) = L(T)$. Hence it remains to show that
    $E(G) \subseteq E(\gfitch(\scen))$. To this end, consider $ab \in E(G)$
    and assume, for contradiction, that $ab\notin E(\gfitch(\scen))$.  Let
    $A \coloneqq \sigma(a)$ and $B\coloneqq \sigma(b)$.  By Lemma
    \ref{lem:independent-lca}, $\lca_S(A,B) \preceq_S \mu(\lca_T(a,b))$.  But
    then, by Def.\ \ref{def:time-map} and \ref{def:tc-map},
    $\tS(\lca_S(A,B)) \leq \tS(\lca_T(a,b))$, implying $ab\notin E(G)$, a
    contradiction.  
  \end{proof}
  
  Since we only have that $xy$ is an edge in $\gfitch(\scen)$ if the path
  connecting $x$ and $y$ in the tree $T$ of $\scen$ contains a transfer edge,
  Thm.~\ref{thm:infer-fitch} immediately implies
  \begin{corollary}
    For every relaxed scenario $\scen=(T,S,\sigma,\mu,\tT,\tS)$ without
    transfer edges, it holds that $E(\Gu(\scen)) = \emptyset$.
    \label{cor:noHGT}
  \end{corollary}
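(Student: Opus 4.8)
The plan is to obtain this statement as an immediate consequence of Theorem~\ref{thm:infer-fitch} together with the definition of the Fitch graph. First I would unpack what ``without transfer edges'' means: by Definition~\ref{def:HGT-label} it says precisely that $\lambda_{\scen}(e)=0$ for every edge $e\in E(T)$. Feeding this into Definition~\ref{def:FitchG}, I observe that for any two distinct leaves $x,y\in L(T)$ the unique $x$--$y$ path in $T$ contains only $0$-edges, so $xy\notin E(\gfitch(\scen))$. Since $x,y$ were arbitrary, this shows $E(\gfitch(\scen))=\emptyset$; that is, the Fitch graph of a transfer-free scenario is edgeless.

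Next I would invoke Theorem~\ref{thm:infer-fitch}. By Definition~\ref{def:Gu-scen}, the scenario $\scen$ explains the LDT graph $(\Gu(\scen),\sigma)$, so $\Gu(\scen)$ is an LDT graph to which the theorem applies with $\scen$ itself as an explaining relaxed scenario. The theorem then yields the spanning-subgraph relation $\Gu(\scen)\subseteq\gfitch(\scen)$, and in particular $E(\Gu(\scen))\subseteq E(\gfitch(\scen))$. Combined with the previous paragraph this forces $E(\Gu(\scen))\subseteq\emptyset$, hence $E(\Gu(\scen))=\emptyset$, which is the claim.

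I do not expect any genuine obstacle here; the result is a one-line corollary once Theorem~\ref{thm:infer-fitch} is available. The only point requiring a moment's care is the bookkeeping that $\Gu(\scen)$ is a legitimate LDT graph explained by $\scen$, so that the theorem is applicable with the very same scenario whose Fitch graph we just showed to be edgeless. Alternatively, one could give a self-contained argument avoiding the theorem: for any candidate edge $ab\in E(\Gu(\scen))$ one has $\tT(\lca_T(a,b))<\tS(\lca_S(\sigma(a),\sigma(b)))$, while Lemma~\ref{lem:independent-lca} applied to the non-edge $ab$ of the edgeless graph $\gfitch(\scen)$ gives $\lca_S(\sigma(a),\sigma(b))\preceq_S\mu(\lca_T(a,b))$ and thus, via Definitions~\ref{def:time-map} and~\ref{def:tc-map}, the reverse time inequality $\tS(\lca_S(\sigma(a),\sigma(b)))\le\tT(\lca_T(a,b))$, a contradiction. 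This reproduces the core of the theorem's proof and confirms there is no hidden difficulty.
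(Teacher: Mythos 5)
Your proof is correct and follows exactly the paper's route: the paper also derives this corollary immediately from Thm.~\ref{thm:infer-fitch}, noting that without transfer edges $\gfitch(\scen)$ is edgeless and hence so is its subgraph $\Gu(\scen)$. Your alternative self-contained argument via Lemma~\ref{lem:independent-lca} is also sound, but it merely re-runs the proof of Thm.~\ref{thm:infer-fitch} and is not needed.
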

  
  Thm.~\ref{thm:infer-fitch} provides the formal justification for indirect
  phylogenetic approaches to HGT inference that are based on the work of
  \citet{Lawrence:92}, \citet{Clarke:02}, and \citet{Novichkov:04} by showing
  that $xy\in E(\Gu(\scen))$ can be explained only by HGT, irrespective of
  how complex the true biological scenario might have been. However, it does
  not cover all HGT events. Fig.~\ref{fig:Fitch-not-RU} shows that there are
  relaxed scenarios $\scen$ for which $\Gu(\scen) \neq \gfitch(\scen)$ even
  though $\gfitch(\scen)$ is properly colored. Moreover, it is possible that
  an rs-Fitch graph $(G,\sigma)$ contains edges $xy\in E(G)$ with
  $\sigma(x)=\sigma(y)$. In particular, therefore, an rs-Fitch graph is not
  always an LDT graph.
  
  It is natural, therefore, to ask whether for every properly colored Fitch
  graph there is a relaxed scenario $\scen$ such that
  $\Gu(\scen) = \gfitch(\scen)$. An affirmative answer is provided by
  \begin{theorem}
    The following statements are equivalent.
    \begin{enumerate}
      \item $(G,\sigma)$ is a properly colored complete multipartite graph.
      \item There is a relaxed scenario $\scen=(T,S,\sigma,\mu,\tT,\tS)$ with
      coloring $\sigma$ such that $G=\Gu(\scen) = \gfitch(\scen)$.
      \item $(G,\sigma)$ is complete multipartite and an LDT graph. 
      \item $(G,\sigma)$ is properly colored and an rs-Fitch graph.
    \end{enumerate}
    In particular, for every properly colored complete multipartite graph
    $(G,\sigma)$ the triple set $\Sri(G,\sigma)$ is compatible.
    \label{thm:FitchRu-scenario}
  \end{theorem}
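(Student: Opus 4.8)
The plan is to prove the four statements equivalent by first dispatching the cheap implications $(2)\Rightarrow(3)$, $(2)\Rightarrow(4)$, $(3)\Rightarrow(1)$ and $(4)\Rightarrow(1)$ from results already in hand, and then closing the cycle with the one substantial implication $(1)\Rightarrow(2)$. For the cheap part, a scenario $\scen$ with $G=\Gu(\scen)=\gfitch(\scen)$ exhibits $G$ at once as an LDT graph (hence properly colored, by Prop.~\ref{prop:properCol}) and as an rs-Fitch graph (hence complete multipartite, by Obs.~\ref{obs:Fitch}), giving $(2)\Rightarrow(3)$ and $(2)\Rightarrow(4)$; symmetrically $(3)\Rightarrow(1)$ uses Prop.~\ref{prop:properCol} and $(4)\Rightarrow(1)$ uses Obs.~\ref{obs:Fitch}. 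Everything therefore reduces to producing, from a properly colored complete multipartite graph, a single relaxed scenario whose LDT graph and Fitch graph both equal $G$.

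First I would record the structural consequence of proper coloring: equal colors force a non-edge, so same-colored vertices share a part, whence each color occurs in exactly one part $I_i$ and the color sets $M_i\coloneqq\sigma(I_i)$ are pairwise disjoint. Every triple of $\Sri(G,\sigma)$ arises from vertices with $xy\notin E$ and $xz,yz\in E$, so $x,y$ lie in a common part and $z$ in another; it thus has the shape $AB|C$ with $A,B\in M_i$ and $C\in M_j$, $i\neq j$. The plan is to display all of these simultaneously by a \emph{clustering species tree} $S$ on $M$: below $\rho_S$ hang, for each part, one subtree collecting the leaves of $M_i$, attaching the colors of $M\setminus\sigma(L)$ arbitrarily below $\rho_S$. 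Then $\lca_S(A,B)$ sits strictly below $\rho_S$ while $\lca_S(A,C)=\rho_S$, so $S$ displays $AB|C$; this also establishes the final ``in particular'' claim that $\Sri(G,\sigma)$ is compatible.

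For $(1)\Rightarrow(2)$ I would build the gene tree and the timing explicitly, reusing the device of Prop.~\ref{prop:converse-obs-fitch}. Give $\rho_T$ one child $w_i$ per part, with the leaves of $I_i$ attached to $w_i$ as a star, label the edge $(\rho_T,w_i)$ as a transfer edge for $i\ge 2$, and label all remaining edges $0$. This already forces $\gfitch(\scen)=G$, since a path between leaves of different parts crosses some such transfer edge while a path staying within one $I_i$ crosses none (Prop.~\ref{prop:fitch}). I would realize these labels by a time-consistent $\mu$ sending both $\rho_T$ and every $w_i$ into the edge $(\rho_S,c_i)$, where $c_i$ is the root of the subtree of $S$ collecting $M_i$ (a leaf when $|M_i|=1$); taking $\mu(\rho_T)=(\rho_S,c_1)$, the mutual incomparability of $c_1,\dots,c_k$ makes $\mu(\rho_T)$ incomparable to $\mu(w_i)=(\rho_S,c_i)$ exactly for $i\ge 2$, as required. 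Finally I would pick $\tS$ with all leaves at $0$, the $c_i$ strictly below $\rho_S$, and place $\tT(w_i)<\tT(\rho_T)$ both strictly between $\max_i\tS(c_i)$ and $\tS(\rho_S)$: then between-part pairs satisfy $\tT(\lca_T(a,b))=\tT(\rho_T)<\tS(\rho_S)=\tS(\lca_S(\sigma(a),\sigma(b)))$ and are LDT-edges, whereas a within-part pair has $\lca_T(a,b)=w_i$ timed above $c_i\succeq_S\lca_S(\sigma(a),\sigma(b))$ and is a non-edge, giving $\Gu(\scen)=G$ as well. The degenerate case $k=1$ (an edgeless $G$) I would settle separately by a transfer-free scenario via Lemma~\ref{lem:Rempty} and Cor.~\ref{cor:noHGT}.

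I expect the main obstacle to be the joint calibration in $(1)\Rightarrow(2)$: the transfer labelling must render \emph{precisely} the between-part pairs Fitch-edges while the time map must render \emph{precisely} the same pairs LDT-edges, and the reconciliation $\mu$ must remain time-consistent (Def.~\ref{def:tc-map}). Choosing each part-subtree to be a star keeps the only inner vertices of $T$ equal to $\rho_T$ and the $w_i$, which removes any danger that inner vertices inside a part secretly create transfer edges or spurious LDT-edges. The remaining delicate point is that placing $\rho_T$ and all $w_i$ on edges incident to $\rho_S$ must respect the strict inequalities of (C2) while still producing exactly the intended comparabilities and incomparabilities among the $\mu$-images; once this bookkeeping is verified, the chains $(1)\Rightarrow(2)\Rightarrow(3)\Rightarrow(1)$ and $(2)\Rightarrow(4)\Rightarrow(1)$ close the equivalence, and the compatibility of $\Sri(G,\sigma)$ for every properly colored complete multipartite graph follows from the clustering tree of the second paragraph.
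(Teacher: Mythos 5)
Your proposal is correct and follows essentially the same route as the paper's proof: the cheap implications via Prop.~\ref{prop:properCol} and Obs.~\ref{obs:Fitch}, and for (1)$\Rightarrow$(2) the same explicit construction --- a species tree clustering the color classes $\sigma(I_i)$ below $\rho_S$, a gene tree with one vertex per independent set below $\rho_T$, all inner gene-tree vertices mapped into edges $(\rho_S,\cdot)$ so that exactly the between-part paths cross transfer edges, with the edgeless case dispatched via Lemma~\ref{lem:Rempty}. The only cosmetic differences are that you verify compatibility of $\Sri(G,\sigma)$ directly from the clustering species tree where the paper instead invokes Lemma~\ref{lem:Ru-SpeciesTriple} after establishing the LDT property, and that you leave the singleton-part adjustment (identifying $w_i$ with its unique leaf and setting $\mu(w_i)=\sigma(w_i)$, so $T$ stays phylogenetic) implicit in your reference to the device of Prop.~\ref{prop:converse-obs-fitch}.
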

  \begin{proof}
    \par\noindent\emph{(1) implies (2)}.  We assume that $(G,\sigma)$ is a
    properly colored multipartite graph and set $L\coloneqq V(G)$ and
    $E\coloneqq E(G)$.  If $E=\emptyset$, then the relaxed scenario $\scen$
    constructed in the proof of Lemma~\ref{lem:Rempty} satisfies
    $G=\Gu(\scen) = \gfitch(\scen)$, i.e., the graphs are edgeless.  Hence,
    we assume that $E\neq \emptyset$ and explicitly construct a relaxed
    scenario $\scen = (T,S,\sigma,\mu,\tT,\tS)$ with coloring $\sigma$ such
    that $G=\Gu(\scen) = \gfitch(\scen)$.
    
    The graph $(G,\sigma)$ is properly colored and complete multipartite by
    assumption.  Let $I_1,\dots, I_k$ denote the independent sets of $G$.
    Since $E\neq \emptyset$, we have $k>1$.  Since all $x\in I_i$ are
    adjacent to all $y\in I_j$, $i\neq j$ and $(G,\sigma)$ is properly
    colored, it must hold that $\sigma(I_i)\cap \sigma(I_j)=\emptyset$.  For
    a fixed $i$ let $v_i^1,\dots v_i^{|I_i|}$ denote the elements in $I_i$.
    
    We first start with the construction of the species tree $S$.  First we
    add a planted root $0_S$ with child $\rho_S$.  Vertex $\rho_S$ has
    children $w_1,\dots, w_k$ where each $w_j$ corresponds to one $I_j$.
    Note, $\sigma\colon L\to M$ may not be surjective, in which case we would
    add one additional child $x$ to $\rho_S$ for each color
    $x\in M\setminus \sigma(L)$.
    
    If $|\sigma(I_j)| = 1$, then we identify the single color $x\in 
    \sigma(I_j)$ with
    $w_j$.  Otherwise, i.e., if $|\sigma(I_j)| > 1$, vertex $w_j$ 
    has as
    children the set $\child_S(w_j)=\sigma(I_j)$ which are leaves in $S$.
    See Fig.~\ref{fig:Fitch-RU} for an illustrative example.  Now we can
    choose the time map $\tS$ for $S$ such $\tS(0_S)=3$, 
    $\tS(\rho_S)=2$,
    $\tS(x)=0$ for all $x\in L(S)$ and $\tS(x)=1$ for all
    $x\in V^0(S)\setminus\{\rho_S\}$.
    
    We now construct $T$ as follows. The tree $T$ has planted root $0_T$ with
    child $\rho_T$. Vertex $\rho_T$ has $k$ children $u_1,\dots, u_k$ where
    each $u_j$ corresponds to one $I_j$.  Vertex $u_j$ is a leaf if $|I_j|=1$,
    and, otherwise, has exactly $|I_j|$ children that are 
    uniquely identified with 
    the elements in $I_j$.
    
    \begin{figure}[ht]
      \begin{center}
        \includegraphics[width=0.85\textwidth]{./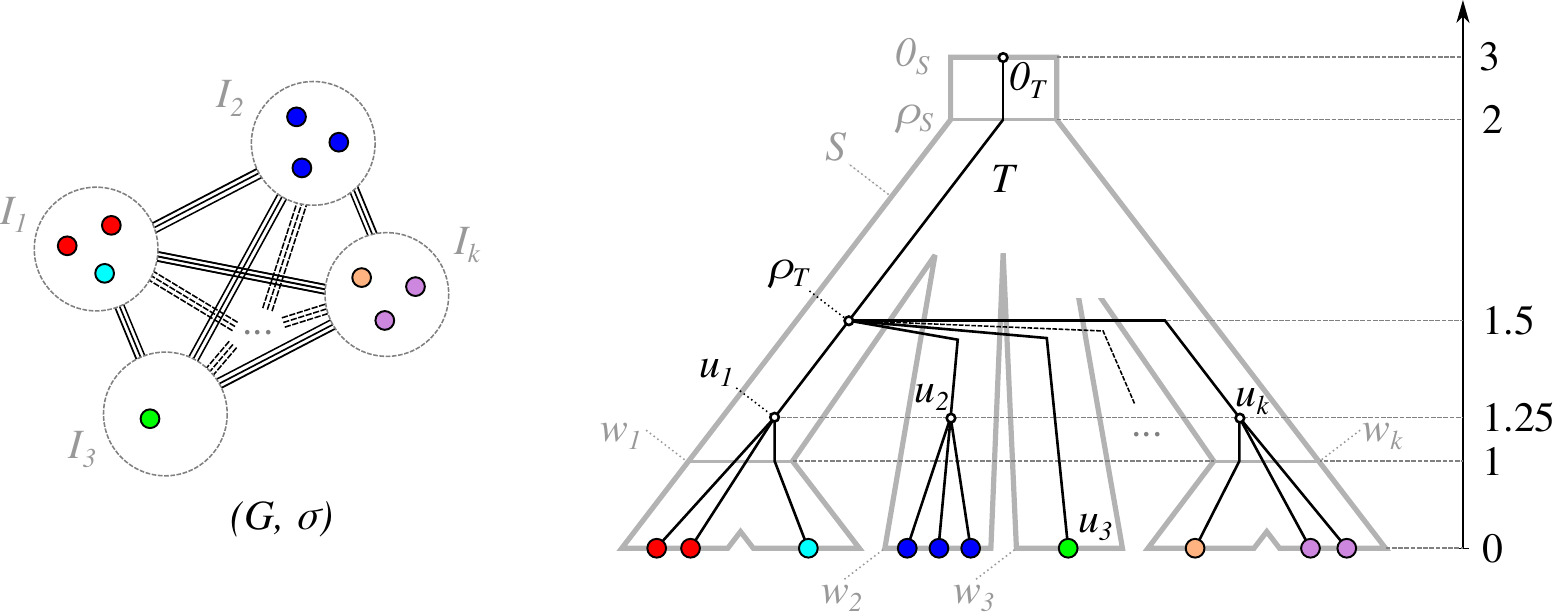}
      \end{center}
      \caption{Construction of the  relaxed  scenario $\scen$ in the proof of 
        Thm.~\ref{thm:FitchRu-scenario}. }
      \label{fig:Fitch-RU}
    \end{figure}
    
    We now define the time map $\tT$ and reconciliation map $\mu$ for
    $v\in V(T)$:
    \begin{equation*}
      \tT(v) \coloneqq
      \begin{cases}
        3=\tS(0_S) \\
        0 \\
        1.5 \\
        1.25
      \end{cases}
      \mu(v) \coloneqq
      \begin{cases}
        0_S &\text{if } v=0_T,\\
        \sigma(v) &\text{if } v\in L(T),\\
        (\rho_S,w_1) &\text{if } v = \rho_T, \textrm{ and}\\
        (\rho_S,w_i) &\text{if } v=u_i\not\in L(T), 1\leq i\leq k.
      \end{cases}
    \end{equation*}
    With the help of Fig.~\ref{fig:Fitch-RU} it is now easy to verify that
    (i) $\tT$ is a time map for $T$, and that (ii) the reconciliation map
    $\mu$ is time-consistent. In summary the constructed
    $\scen=(T,S,\sigma,\mu,\tT,\tS)$ is a relaxed scenario.
    
    We continue with showing that $E=E(\Gu(\scen))=E(\gfitch(\scen))$.  To
    this end, let $a,b\in L$ be two vertices. Note, $ab\in E$ if and only if
    $a\in I_i$ and $b\in I_j$ for distinct
    $i,j\in [k]\coloneqq \{1,2,\ldots,k\}$.
    
    First assume that $ab\in E$ and thus, $a\in I_i$ and $b\in I_j$ for
    distinct $i,j\in [k]$.  By construction,
    $a\preceq_{T}u_i\ne u_j\succeq_{T} b$ with
    $\lca_{T}(u_i,u_j)=\rho_{T}$. In particular, we have
    $\parent_T(u_i)=\parent_T(u_j)=\rho_{T}$ and the path from $a$ to $b$
    contains the two edges $(\rho_{T},u_i)$ and $(\rho_{T},u_j)$.  By
    construction, we have $\mu(\rho_T)=(\rho_{S},w_1)$, and for all
    $1\leq l\leq k$, $\mu(u_l)=\sigma(u_l)=w_l$ if $u_l$ is a leaf, and
    $\mu(u_l)=(\rho_S,w_l)$ otherwise.  These two arguments imply that
    $\mu(\rho_T)$ and $\mu(u_l)$ are comparable if and only if $u_l=u_1$.
    Now, since $u_i\ne u_j$, they cannot both be equal to $u_1$ and thus, at
    least one of the edges $(\rho_{T},u_i)$ and $(\rho_{T},u_j)$ is a
    transfer edge. Hence, $ab\in E(\gfitch(\scen))$.  By construction,
    $ab\in E$ implies $\lca_T(a,b)=\rho_T$.  Hence, we have
    $\mu(\lca_T(a,b)) = \mu(\rho_T)=(\rho_S,w_1)\prec_S\rho_S =
    \lca_S(\sigma(a),\sigma(b))$, and thus $ab\in E(\Gu(\scen))$.
    
    Now assume that $ab\notin E$, and thus, $a,b\in I_i$ for some $i\in[k]$.
    It clearly suffices to consider the case $a\ne b$, and thus,
    $a,b\in\child_T(u_i)$ and $u_i\notin L(T)$ holds by construction.  In
    particular, the path between $a$ and $b$ only consists of the edges
    $(u_i,a)$ and $(u_i,b)$.  Moreover, we have
    $\sigma(a),\sigma(b)\preceq_{S} w_i$ and $\mu(u_i)=(\rho_S,w_i)$.  Hence,
    none of the edges $(u_i,a)$ and $(u_i,b)$ is a transfer edge, and
    $ab\notin E(\gfitch(\scen))$.  We have
    $\mu(\lca_{T}(a,b))=(\rho_S,w_i)\succ_T w_i \succeq_{T}
    \lca_{S}(\sigma(a),\sigma(b))$, and thus
    $\tT(\lca_{T}(a,b))> \tS(\lca_{S}(\sigma(a),\sigma(b)))$.  Hence,
    $ab\notin E(\Gu(\scen))$.
    
    In summary, $ab\in E$ if and only if $ab\in E(\gfitch(\scen))$ if and
    only if $ab\in E(\Gu(\scen))$, and consequently,
    $G=\Gu(\scen) = \gfitch(\scen)$.
    
    \smallskip
    \par\noindent\emph{(2) implies (1).} Thus, suppose that there is a relaxed
    scenario $\scen=(T,S,\sigma,\mu,\tT,\tS)$ with coloring $\sigma$ such
    that $G=\Gu(\scen) = \gfitch(\scen)$. Prop.~\ref{prop:properCol} implies
    that $(G,\sigma)=(\Gu(\scen),\sigma)$ is properly colored. Moreover,
    $(G,\sigma)=(\gfitch(\scen),\sigma)$ is an rs-Fitch graph and thus, by
    Obs.~\ref{obs:Fitch}, $G$ is complete multipartite.
    
    Statements (1) and (2) together with Prop.~\ref{prop:fitch} imply
    (3). Conversely, if (3) is satisfied then Prop.~\ref{prop:properCol}
    implies that $(G,\sigma)$ is properly colored.  This and the fact that
    $G$ is complete multipartite implies (1).  Therefore, Statements (1), (2)
    and (3) are equivalent.
    
    Furthermore, (4) implies (1) by Obs.~\ref{obs:Fitch}.  Conversely,
    $(G,\sigma)$ in Statement (2) is an rs-Fitch graph and an LDT
    graph. Hence it is properly colored by Prop.~\ref{prop:properCol}.  Thus
    (2) implies (4).
    
    Statement (3), in particular, implies that every properly colored
    complete multipartite $(G,\sigma)$ is an LDT graph and, thus, there is a
    relaxed scenario $\scen$ such that $G=\Gu(\scen)$. Now, we can apply
    Lemma~\ref{lem:Ru-SpeciesTriple} to conclude that $\Sri(G,\sigma)$ is
    compatible, which completes the proof.  
  \end{proof}
  
  \begin{corollary}
    A colored graph $(G,\sigma)$ is an LDT graph and an rs-Fitch graph if and
    only if $(G,\sigma)$ is a properly colored complete multipartite graph
    (and thus, a properly colored Fitch graph for some 0/1-edge-labeled
    tree).
    \label{cor:scen-sat-fitch}
  \end{corollary}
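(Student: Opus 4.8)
The plan is to derive this directly from Thm.~\ref{thm:FitchRu-scenario}, which already packages the needed equivalences, supplemented by the two elementary structural facts established earlier. Since the statement is a biconditional, I would argue the two directions separately, and I expect each to be a short deduction rather than a fresh construction.

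For the forward direction, I would assume that $(G,\sigma)$ is simultaneously an LDT graph and an rs-Fitch graph and extract the two defining structural properties. Being an LDT graph forces proper coloring by Prop.~\ref{prop:properCol}, while being an rs-Fitch graph forces $G$ to be complete multipartite by Obs.~\ref{obs:Fitch}. Combining these two observations immediately yields that $(G,\sigma)$ is a properly colored complete multipartite graph, which is exactly condition~(1) of Thm.~\ref{thm:FitchRu-scenario}. No further work is needed here; the content is entirely carried by the two cited results.

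For the converse, I would start from the hypothesis that $(G,\sigma)$ is a properly colored complete multipartite graph, i.e.\ statement~(1) of Thm.~\ref{thm:FitchRu-scenario}, and simply invoke the equivalences proved there. Specifically, statement~(1) is equivalent to statement~(3), which asserts in particular that $(G,\sigma)$ is an LDT graph, and it is equivalent to statement~(4), which asserts in particular that $(G,\sigma)$ is an rs-Fitch graph. Taking both consequences at once gives that $(G,\sigma)$ is both an LDT graph and an rs-Fitch graph, closing the biconditional.

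Finally, for the parenthetical remark, I would note that the equivalence of ``complete multipartite'' and ``Fitch graph of some $0/1$-edge-labeled tree'' is precisely the content of Prop.~\ref{prop:fitch}, so a properly colored complete multipartite graph is the same thing as a properly colored Fitch graph of some $0/1$-edge-labeled tree. I do not anticipate any genuine obstacle in this corollary: the only mild subtlety is bookkeeping, namely being careful that the two halves of condition~(1) are each supplied by a different earlier result in the forward direction (proper coloring from the LDT side, complete multipartiteness from the Fitch side), and that in the converse one must read off \emph{both} relevant consequences of statement~(1) rather than just one.
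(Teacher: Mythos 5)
Your proposal is correct and follows essentially the same route as the paper: the forward direction uses Obs.~\ref{obs:Fitch} and Prop.~\ref{prop:properCol} exactly as the paper does, and the converse reads off the LDT and rs-Fitch properties from the equivalences in Thm.~\ref{thm:FitchRu-scenario} (you cite statements~(3) and~(4) where the paper invokes statement~(2) and the equivalence of~(1) and~(3), a purely cosmetic difference). The parenthetical remark via Prop.~\ref{prop:fitch} is also handled correctly.
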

  \begin{proof}
    If $(G,\sigma)$ is an rs-Fitch graph then, by Obs.~\ref{obs:Fitch}, $G$
    is a complete multipartite graph. Moreover, since $(G,\sigma)$ is an LDT
    graph, $(G,\sigma)$ is properly colored (cf.\
    Prop.~\ref{prop:properCol}).  Conversely, if $(G,\sigma)$ is a properly
    colored complete multipartite graph it is, by 
    Thm.~\ref{thm:FitchRu-scenario}(2), an rs-Fitch graph and an LDT graph.  Now
    the equivalence between Statements~(1) and (3) in
    Thm.~\ref{thm:FitchRu-scenario} shows that $(G,\sigma)$ is an LDT graph.
  \end{proof}
  
  \begin{corollary}
    Let $(G,\sigma)$ be a vertex-colored graph. If $\Tri(G) = \emptyset$ and
    $\Sri(G,\sigma)$ is incompatible, then $G$ is a complete multipartite
    graph (and thus, a Fitch graph for some 0/1-edge-labeled tree), but
    $\sigma$ is not a proper vertex coloring of $G$.
    \label{cor:Fitch-compatible}
  \end{corollary}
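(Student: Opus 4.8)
The plan is to treat the two conclusions separately, since they draw on two different hypotheses. First I would establish that $G$ is complete multipartite using \emph{only} the assumption $\Tri(G)=\emptyset$. Unwinding Def.~\ref{def:infoTriples}, a triple $xy|z$ lies in $\Tri(G)$ precisely when $x,y,z$ are pairwise distinct, $xy\in E$, and $xz,yz\notin E$; this is exactly the presence of an induced $K_2+K_1$ (the $K_2$ on $\{x,y\}$ together with the isolated vertex $z$). Hence $\Tri(G)=\emptyset$ is equivalent to $G$ containing no induced $K_2+K_1$, and by the equivalence of Statements (2) and (3) in Prop.~\ref{prop:fitch} this means $G$ is complete multipartite and therefore a Fitch graph of some $0/1$-edge-labeled tree. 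This settles the first conclusion and uses nothing about $\sigma$.

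For the second conclusion I would argue by contradiction. Suppose $\sigma$ were a proper vertex coloring of $G$. Combined with the first step, $(G,\sigma)$ would then be a \emph{properly colored complete multipartite graph}. The crucial tool is the final assertion of Thm.~\ref{thm:FitchRu-scenario}, which states that for every properly colored complete multipartite graph the triple set $\Sri(G,\sigma)$ is compatible. This directly contradicts the standing hypothesis that $\Sri(G,\sigma)$ is incompatible, so $\sigma$ cannot be a proper coloring of $G$.

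Since both conclusions reduce to already-established results, the corollary is essentially an immediate composition of Prop.~\ref{prop:fitch} with the ``in particular'' clause of Thm.~\ref{thm:FitchRu-scenario}. I expect no genuine technical obstacle here; the only point requiring care is to keep the logical dependencies straight, namely that the first conclusion needs only $\Tri(G)=\emptyset$ (so it holds irrespective of the coloring), while it is the incompatibility of $\Sri(G,\sigma)$ that forces the coloring to fail properness. One could alternatively route the first step through Lemma~\ref{lem:mulip-triples}, but invoking Prop.~\ref{prop:fitch} directly is cleaner, since Lemma~\ref{lem:mulip-triples} carries the proper-coloring assumption whereas here the first conclusion must be stated without it.
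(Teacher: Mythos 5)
Your proof is correct and follows essentially the same route as the paper: the paper likewise observes that $\Tri(G)=\emptyset$ rules out an induced $K_2+K_1$ so that Prop.~\ref{prop:fitch} yields that $G$ is a (complete multipartite) Fitch graph, and then applies the contrapositive of the final statement of Thm.~\ref{thm:FitchRu-scenario} to conclude that $\sigma$ cannot be proper. Your contradiction argument is just that contraposition spelled out, so there is nothing substantive to add.
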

  \begin{proof}
    By definition, if $\Tri(G)=\emptyset$, then $G$ cannot contain an induced
    $K_2+K_1$. By Prop.~\ref{prop:fitch}, $G$ is a Fitch graph.
    Contraposition of the last statement in Thm.~\ref{thm:FitchRu-scenario}
    and $G$ being a Fitch graph for some $(T,\lambda)$ implies that $\sigma$
    is not a proper vertex coloring of $G$.  
  \end{proof}
  
  As outlined in the main part of this paper, LDT graphs are sufficient to
  describe replacing HGT. They fail, however, to describe additive HGT
  in full detail.
  
  \subsection{rs-Fitch Graphs with General Colorings}
  
  In scenarios with additive HGT, the rs-Fitch graph is no longer properly
  colored and no-longer coincides with the LDT graph. Since not every
  vertex-colored complete multipartite graphs $(G,\sigma)$ is an rs-Fitch
  graph (cf. Thm.~\ref{thm:char-rsFitch}), we ask whether an LDT graph
  $(G,\sigma)$ that is not itself already an rs-Fitch graph imposes
  constraints on the rs-Fitch graphs $(\gfitch(\scen),\sigma)$ that derive
  from  relaxed  scenarios $\scen$ that explain $(G,\sigma)$. As a first step 
  towards this goal, we aim to characterize rs-Fitch graphs, i.e., to 
  understand 
  the conditions imposed by the existence of an underlying scenario $\scen$ on
  the compatibility of the collection of independent sets $\mathcal{I}$ of
  $G$ and the coloring $\sigma$. As we shall see, these conditions can be
  explained in terms of an auxiliary graph that we introduce in a very
  general setting:
  \begin{definition}
    Let $L$ be a set, $\sigma\colon L\to M$ a map and
    $\mathcal{I}=\{I_1,\dots, I_k\}$ a set of subsets of $L$.  Then the graph
    $\auxfitch(\sigma,\mathcal{I})$ has vertex set $M$ and edges $xy$ if and
    only if $x\ne y$ and $x,y\in \sigma(I')$ for some $I'\in\mathcal{I}$. We
    define an edge labeling $\ell: E(\auxfitch) \to 2^{\mathcal{I}}$ such
    that
    $\ell(e) \coloneqq \{I\in\mathcal{I}\mid \exists x,y\in I \text{ s.t.\ }
    \sigma(x)\sigma(y)=e\}$.
    \label{def:auxfitch}
  \end{definition}
  By construction $\auxfitch(\sigma,\mathcal{I'})$ is a subgraph of
  $\auxfitch(\sigma,\mathcal{I})$ whenever
  $\mathcal{I'}\subseteq\mathcal{I}$. The labeling of an edge $e$ records the
  sets $I\in\mathcal{I}$ that imply the presence of the edge.
  
  \begin{theorem}\label{thm:char-rsFitch}
    A graph $(G,\sigma)$ is an rs-Fitch graph if and only if (i) it is
    complete multipartite with independent sets
    $\mathcal{I}=\{I_1,\dots, I_k\}$, and (ii) if $k>1$, there is an
    independent set $I'\in \mathcal{I}$ such that
    $\auxfitch(\sigma,\mathcal{I}\setminus\{I'\})$ is disconnected.
  \end{theorem}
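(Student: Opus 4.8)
Suppose $(G,\sigma)=(\gfitch(\scen),\sigma)$ for a relaxed scenario $\scen=(T,S,\sigma,\mu,\tT,\tS)$. Condition~(i) is exactly Obs.~\ref{obs:Fitch}. Since $G$ is complete multipartite, its independent sets $\mathcal{I}=\{I_1,\dots,I_k\}$ are the classes of the non-adjacency relation, and in a Fitch graph two leaves are non-adjacent iff the path connecting them in $T$ contains no transfer edge. I would first record that this is an equivalence relation: the edge set of the path $xz$ is contained in the union of the paths $xy$ and $yz$, so ``no transfer on $xy$ and on $yz$'' forces ``no transfer on $xz$''. Hence each $I_j$ is the leaf set of a maximal subtree of $T$ all of whose edges are $0$-edges --- its \emph{$0$-component}. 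Writing $\hat r_j$ for the $\preceq_T$-maximal vertex of this component, the $\mu$-images along any path inside the component are pairwise comparable, and together with Lemma~\ref{lem:independent-lca} this yields $\lca_S(\sigma(I_j))\preceq_S\mu(\hat r_j)$; in particular all colours of $I_j$ lie weakly below $\mu(\hat r_j)$ in $S$.

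\textbf{Deducing (ii).} Assume $k>1$; then $T$ contains a transfer edge, so $S$ has at least two leaves and $\rho_S$ has children $c_1,\dots,c_m$ with $m\ge2$, giving $M=\bigcupdot_{i} L(S(c_i))$. Call $I_j$ \emph{straddling} if $\lca_S(\sigma(I_j))=\rho_S$, i.e.\ its colours meet at least two of the $L(S(c_i))$. The crux is the claim that \emph{at most one component straddles}. Indeed, a straddling $I_j$ satisfies $\mu(\hat r_j)\succeq_S\rho_S$, so $\mu(\hat r_j)$ lies on the planted path of $S$; if $\hat r_j\neq\rho_T$, the edge above $\hat r_j$ is a transfer edge, forcing $\mu(\parent(\hat r_j))$ to be incomparable to $\mu(\hat r_j)\succeq_S\rho_S$ --- impossible, since everything is comparable to a vertex on the planted path. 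Thus a straddling component must contain $\rho_T$, and only one component can. I would then take $I'$ to be this straddling component (or, if none straddles, any component): every remaining $I_j$ has all its colours inside a single $L(S(c_i))$, so every edge of $\auxfitch(\sigma,\mathcal{I}\setminus\{I'\})$ joins two colours of the same $L(S(c_i))$. As $m\ge2$, the vertex set $M$ splits into at least two parts with no crossing edges, i.e.\ $\auxfitch(\sigma,\mathcal{I}\setminus\{I'\})$ is disconnected.

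\textbf{Sufficiency, by induction on $k$.} For $k=1$ the graph is edgeless and Lemma~\ref{lem:Rempty} provides a scenario. For $k>1$ I would pick $I'$ as in~(ii). Deleting $I'$ gives a complete multipartite graph on $\mathcal{I}\setminus\{I'\}$ that \emph{still} satisfies~(ii): for any $I''$, the graph $\auxfitch(\sigma,(\mathcal{I}\setminus\{I'\})\setminus\{I''\})$ is a spanning subgraph (same vertex set $M$, fewer edges) of the disconnected $\auxfitch(\sigma,\mathcal{I}\setminus\{I'\})$, hence disconnected. By the induction hypothesis $G-I'$ is an rs-Fitch graph with scenario $\scen_0$. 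The remaining task is to \emph{re-insert} $I'$: realise it as a fresh $0$-component (its leaves then pairwise non-adjacent) and join it to the rest by a single transfer edge, so that every path from an $I'$-leaf to a non-$I'$-leaf crosses exactly that edge and all such pairs become adjacent, matching complete multipartiteness while leaving the Fitch graph of $G-I'$ untouched. The bipartition coming from~(ii) is what makes the connecting edge realisable: its upper endpoint is mapped to an \emph{edge} of $S$ on the side of $\rho_S$ opposite the recipient's colours --- comparable to those colours (keeping $I'$ internally transfer-free) yet incomparable to the recipient (creating the transfer) --- exactly as in the small $P_3$-type realisation. To keep the induction running I would maintain the invariant that the root of the constructed gene tree is mapped strictly inside one child-subtree of $\rho_S$, guaranteeing a vertex incomparable to it for the next attachment.

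\textbf{Main obstacle.} The hard part is this re-insertion: producing $\mu,\tT,\tS$ that simultaneously (a) preserve $\gfitch(\scen_0)$, (b) make the new component internally $0$ while its single connecting edge is a transfer, and (c) satisfy time-consistency (C1)/(C2). The tension is that the natural high attachment point maps to the planted path, where everything is comparable and no transfer is possible, so the transfer must be placed at a vertex mapped \emph{deep} into the opposite side, which tightly constrains the time maps. Managing this placement (together with the edge-valued images of $\mu$ needed to absorb the straddling colours of $I'$, and the root-placement invariant) is where the bulk of the careful --- but ultimately routine --- verification will lie.
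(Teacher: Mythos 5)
Your necessity argument is correct and takes a genuinely different route from the paper's. The paper proves condition~(ii) by contradiction: assuming $\auxfitch(\sigma,\mathcal{I}\setminus\{I'\})$ is connected for \emph{every} $I'$, it extracts two distinct independent sets $I',I''$ each containing a non-adjacent pair whose colors straddle distinct children of $\rho_S$, applies Lemma~\ref{lem:independent-lca} to push both $\mu(\lca_T(a,b))$ and $\mu(\lca_T(c,d))$ onto the planted path, and concludes that the path between them is transfer-free, forcing the non-edge $ac$ between distinct independent sets. Your direct argument -- independent sets as $0$-components, the observation that a straddling component's top vertex $\hat r_j$ maps onto the planted path of $S$ where every point is comparable to everything, so the edge above $\hat r_j$ cannot be a transfer edge and hence $\hat r_j$ must be the top of the tree (strictly speaking $0_T$, not $\rho_T$, since in a planted tree $\rho_T$ still has the parent $0_T$; the conclusion is unaffected) -- is sound, and it has the merit of \emph{exhibiting} the cut: with at most one straddling component, removing it leaves every $\sigma(I_j)$ inside a single $L(S(c_i))$, and the child subtrees of $\rho_S$ give the disconnection of $\auxfitch(\sigma,\mathcal{I}\setminus\{I'\})$ directly.

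The sufficiency direction, however, has a genuine gap, and it sits exactly where you locate the ``main obstacle'' -- but it is not routine verification; the plan as stated fails. Re-inserting $I'$ as a fresh $0$-component joined to $\scen_0$ ``by a single transfer edge'' is impossible precisely when $\sigma(I')$ meets both sides $M_1$ and $M_2$: by your own necessity analysis, the root $r'$ of an internally transfer-free subtree with leaf colors on both sides satisfies $\rho_S=\lca_S(\sigma(I'))\preceq_S\mu(r')$, so $\mu(r')$ lies on the planted path and is comparable to every point of $S$, whence the connecting edge can never be a transfer edge. What your construction would produce is a labeled tree $(T,\lambda)$ with $\gfitch(T,\lambda)=G$ that admits \emph{no} relaxed scenario with $\lambda=\lambda_{\scen}$ -- this is exactly the phenomenon of Example~\ref{ex:lst} and Fig.~\ref{fig:TreeClassesDistinct}. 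The repair requires splitting $I'$ into $I'_1,I'_2$ by sides and attaching the remaining components at \emph{two} anchor vertices $u',u''$ mapped into $(\rho_S,w_1)$ and $(\rho_S,w_2)$ with the root on the planted edge (the paper's Case~(c)), which is no longer a single-edge graft into an arbitrary $\scen_0$. This also exposes a second defect of the induction: the hypothesis only yields \emph{some} scenario for $G-I'$, with no control over the root bipartition of its species tree, and the recursive call may choose an $I''$ whose cut $(M_1',M_2')$ does not refine $(M_1,M_2)$, so your invariant about the gene-tree root's image neither constrains $S_0$ appropriately nor propagates. The paper avoids all of this by dispensing with induction: once $I'$ is fixed, every other independent set sits wholly in $M_1$ or $M_2$, so a flat construction -- a depth-two species tree realizing the bipartition and each component hung as a cherry near the gene-tree root -- handles all components uniformly in one step.
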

  \begin{proof}
    Let $G=(L,E)$ be a graph with coloring $\sigma\colon L\to M$.  Suppose
    first that $G$ satisfies~(i) and~(ii).  To show that $(G,\sigma)$ is an
    rs-Fitch graph, we will construct a relaxed scenario
    $\scen=(T,S,\sigma,\mu,\tT,\tS)$ such that $G = \gfitch(\scen)$.  If
    $k=1$, or equivalently $E=\emptyset$, then the relaxed scenario $\scen$
    constructed in the proof of Lemma~\ref{lem:Rempty} satisfies
    $G=\gfitch(\scen)$, i.e., both graphs are edgeless.  Now assume that
    $k>1$ and thus, $E\neq \emptyset$.  Hence, we can choose an independent
    set $I'\in \mathcal{I}$ such that
    $\auxfitch'\coloneqq \auxfitch(\sigma,\mathcal{I}\setminus\{I'\})$ is
    disconnected.  Note that $\mathcal{I}\setminus\{I'\}$ is non-empty since
    $k>1$.  Moreover, since $\auxfitch'$ is a disconnected graph on the color
    set $M$, there is a connected component $C$ of $\auxfitch'$ such that
    $(M\setminus C) \cap \sigma(I')\ne\emptyset$. Hence
    $M_1\coloneqq M\setminus C$ and $M_2\coloneqq C$ form a bipartition of
    $M$ such that neither $M_1$ nor $M_2$ are empty sets.
    
    We continue by showing that every $I\in \mathcal{I}\setminus \{I'\}$
    satisfies either $\sigma(I)\subseteq M_1$ or $\sigma(I)\subseteq M_2$.
    To see this, assume, for contradiction, that there are colors
    $A\in \sigma(I)\cap M_1$ and $B\in \sigma(I)\cap M_2$ for some
    $I\in \mathcal{I}\setminus \{I'\}$. Thus, $B\in C$ and, by
    definition,  $AB\in E(\auxfitch')$. Therefore, $A$ and $B$ must
    lie in the connected component $C$; a contradiction.  Therefore, we can
    partition $\mathcal{I}\setminus \{I'\}$ into
    $\mathcal{I}_1\coloneqq \{I\in\mathcal{I}\setminus \{I'\} \mid
    \sigma(I)\subseteq M_1\}$ and
    $\mathcal{I}_2\coloneqq \{I\in\mathcal{I}\setminus \{I'\} \mid
    \sigma(I)\subseteq M_2\}$.  Note that one of the sets $\mathcal{I}_1$ and
    $\mathcal{I}_2$, but not both of them, may be empty. This may be the
    case, for instance, if $\sigma$ is not surjective.
    
    Now, we construct a relaxed scenario $\scen = (T,S,\sigma,\mu,\tT,\tS)$
    with coloring $\sigma$ such that $G=\gfitch(\scen)$.  We first define the
    species tree $S$ as the planted tree where $\rho_{S}$ (i.e.\ the single
    child of $0_S$) hast two children $w_1$ and $w_2$.  If $|M_1|=1$, we
    identify $w_1$ with the single element in $M_1$, and otherwise, we set
    $\child_S(w_1)=L(S(w_1))\coloneqq M_1$.  We proceed analogously for $w_2$
    and $M_2$.  Thus, $S$ is phylogenetic by construction.  We choose the
    time map $\tS$ by putting $\tS(0_S)=2$, $\tS(\rho_S)=1$,
    $\tS(w_1)=\tS(w_2)=0.5$ and $\tS(x)=0$ for all $x\in L(S)$. This
    completes the construction of $S$ and $\tS$.
    
    We proceed with the construction of the gene tree $T$, its time map $\tT$
    and the reconciliation map $\mu$.  This tree $T$ has leaf set $L$,
    planted root $0_T$, and root $\rho_T$.  We set $\mu(0_T)=0_S$ and
    $\tT(0_T)=\tS(0_S)=2$, and moreover $\mu(x)=\sigma(x)$ and $\tT(x)=0$ for
    all $x\in L$.
    
    For each $I_j\in \mathcal{I}\setminus\{I'\}$, we add a vertex $u_j$. We
    will later specify how these vertices are connected (via paths) to
    $\rho_T$. If $|I_j|=1$, $u_j$ becomes a leaf of $T$ that is
    identified with the unique element in $I_j$.  Otherwise, we add
    exactly $|I_j|$ children to $u_j$, each of which is identified with
    one of the elements in $I_j$. If $u_j$ is a leaf, we already defined
    $\mu(u_j)=\sigma(u_j)$ and $\tT(u_j)=0$.
    
    Otherwise, we set $\tT(u_j)=0.6$ and $\mu(u_j)=(\rho_S,w_1)$ if
    $I_j\in\mathcal{I}_1$ and $\mu(u_j)=(\rho_S,w_2)$ if
    $I_j\in\mathcal{I}_2$.  Recall that $M_1\cap
    \sigma(I')\ne\emptyset$. However, both
    $M_2\cap \sigma(I')\ne\emptyset$ and $M_2\cap \sigma(I')=\emptyset$
    are possible. The latter case appears e.g.\ whenever
    $\auxfitch(\sigma,\mathcal{I})$ was already disconnected.  To connect
    the vertices $u_j$ to $\rho_T$, we distinguish the three mutually
    exclusive cases:
    
    \begin{figure}[h]
      \begin{center}
        \includegraphics[width=0.85\textwidth]{./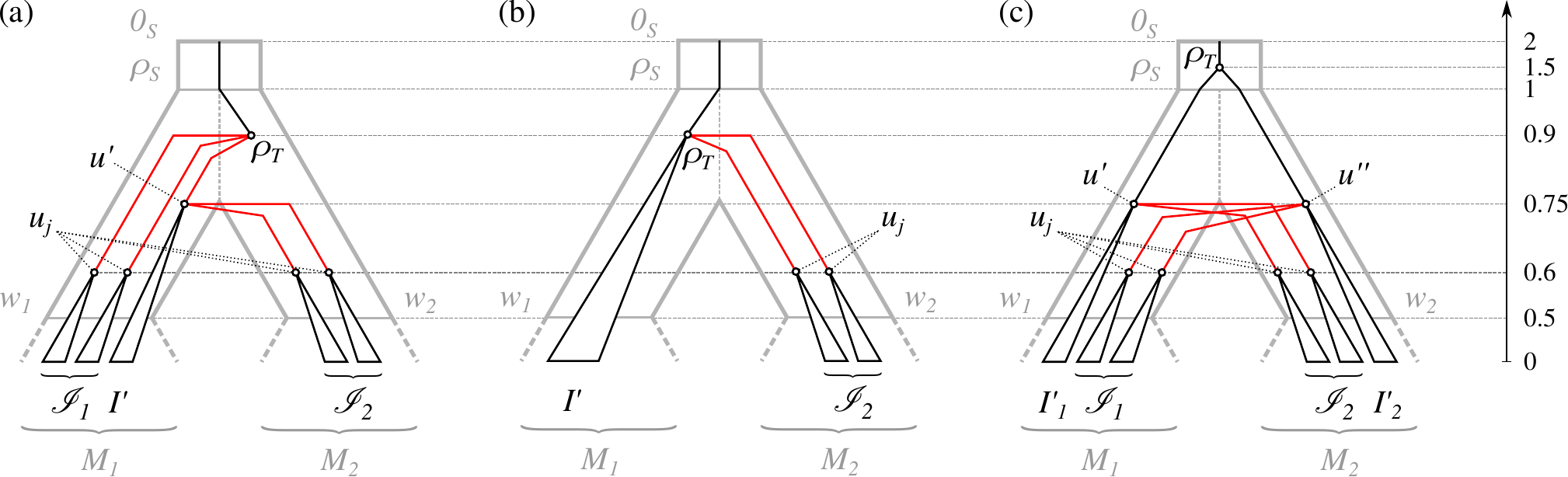} 
      \end{center}
      \caption{Illustration of the  relaxed  scenario constructed in the
        \emph{if}-direction of the proof of Thm.~\ref{thm:char-rsFitch}. For
        Cases~(a) and~(c), only the situation in which a vertex $u'$ and
        $u''$, resp., is necessary is shown. Otherwise, the single element in
        $I'$, $I'_1$ or $I'_2$ would be a child of the root $\rho_T$.
        Moreover, the vertices $u_j$ are drawn under the assumption that
        $|I_j|>1$. Otherwise, there are identified with the single leaf in
        $I_j$.}
      \label{fig:rs-fitch-charac}
    \end{figure}
    
    \par\noindent\emph{Case~(a): $M_2\cap \sigma(I')=\emptyset$ and
      $\mathcal{I}_1\ne\emptyset$.}\\ We set $\mu(\rho_T)=(\rho_S,w_2)$ and
    $\tT(\rho_T)=0.9$.  We attach all $u_j$ that correspond to elements
    $I_j\in \mathcal{I}_1$ as children of $\rho_T$.  If $|I'|> 1$ or
    $\mathcal{I}_2\ne\emptyset$, we create a vertex $u'$ to which all
    elements in $I'$ and all $u_j$ such that $I_j\in \mathcal{I}_2$ are
    attached as children, attach $u'$ as a child of $\rho_T$, and set
    $\mu(u')=(\rho_S,w_1)$ and $\tT(u')=0.75$.  Otherwise, we simply attach
    the single element $x'$ in $I'$ as a child of $\rho_T$.  Clearly, the so
    constructed tree $T$ is phylogenetic.  Note that the edges
    $(\rho_T, u_j)$ with $I_j\in \mathcal{I}_1$ as well as the edges
    $(u',u_j)$ with $I_j\in \mathcal{I}_2$ are transfer edges.  Together with
    $(\rho_T,u')$ or $(\rho_T,x)$, respectively, these are the only transfer
    edges.
    
    \par\noindent\emph{Case~(b): $M_2\cap \sigma(I')=\emptyset$ and
      $\mathcal{I}_1=\emptyset$.}\\ By the arguments above, the latter
    implies $\mathcal{I}_2\ne\emptyset$.  Hence, we can set
    $\mu(\rho_{T})=(\rho_S,w_1)$ and $\tT(\rho_T)=0.9$ and attach all
    elements of $I'$ as well as the vertices $u_j$ corresponding to the
    independent sets $I_j\in\mathcal{I}_2=\mathcal{I}\setminus \{I'\}$ as
    children of $\rho_T$.  Since $|I'|\ge 1$ and $\mathcal{I}_2\ge 1$, the
    tree $T$ obtained in this manner is again phylogenetic.  Moreover, note
    that the transfer edges are exactly the edges $(\rho_T,u_j)$.
    
    \par\noindent\emph{Case~(c): $M_2\cap \sigma(I')\ne\emptyset$.}\\
    In this case, the sets $I'_1\coloneqq\{x\in I'\mid \sigma(x)\in M_1\}$
    and $I'_2\coloneqq\{x\in I'\mid \sigma(x)\in M_2\}$ must be non-empty.
    We set $\mu(\rho_T)=(0_T,\rho_T)$ and $\tT(\rho_T)=1.5$.  If $|I'_1|> 1$
    or $\mathcal{I}_2\ne\emptyset$, we create a vertex $u'$ to which all
    elements in $I'_1$ and all $u_j$ such that $I_j\in \mathcal{I}_2$ are
    attached as children, and set $\mu(u')=(\rho_S,w_1)$ and $\tT(u')=0.75$.
    Otherwise, we simply attach the single element in $I'_1$ as a child of
    $\rho_T$.  For the ``other side'', we proceed analogously: If $|I'_2|> 1$
    or $\mathcal{I}_1\ne\emptyset$, we create a vertex $u''$ to which all
    elements in $I'_2$ and all $u_j$ such that $I_j\in \mathcal{I}_1$ are
    attached as children, and set $\mu(u')=(\rho_S,w_2)$ and $\tT(u'')=0.75$.
    Otherwise, we simply attach the single element in $I'_2$ as a child of
    $\rho_T$.  By construction, the so constructed tree is again
    phylogenetic.  Moreover, the transfer edges are exactly the edges
    $(u',u_j)$ and $(u'',u_j)$.
    
    Using Fig.~\ref{fig:rs-fitch-charac}, one can easily verify that, in all
    three Cases~(a)-(c), the reconciliation map $\mu$ is time-consistent
    with $\tT$ and $\tS$.  Thus, $\scen$ is a relaxed scenario.  Moreover,
    Fig.~\ref{fig:rs-fitch-charac} together with the fact that
    $\sigma(I)\subseteq M_1$ holds for all $I\in \mathcal{I}_1$, and
    $\sigma(I)\subseteq M_2$ holds for all $I\in \mathcal{I}_2$, shows that
    $G=\gfitch(\scen)$ in all three cases.  Hence, $(G,\sigma)$ is an
    rs-Fitch graph.
    
    For the \emph{only-if}-direction, assume that $(G=(V,E),\sigma)$ is an
    rs-Fitch graph.  Hence, there exists a  relaxed  scenario
    $\scen=(T,S,\sigma,\mu,\tT,\tS)$ such that $G = \gfitch(\scen)$.  By
    Obs.~\ref{obs:Fitch} and Prop.~\ref{prop:fitch}, $(G,\sigma)$ is a
    complete multipartite graph that is determined by its set of independent
    sets $\mathcal{I}=\{I_1,\dots,I_k\}$.  Hence, Condition (i) is satisfied.
    
    Now assume, for contradiction, that Condition (ii) is violated.  Thus
    $k\ge 2$ and there is no independent set $I'\in \mathcal{C}$ such that
    $\auxfitch(\sigma,\mathcal{I}\setminus\{I'\})$ is disconnected.  If
    $|M|=1$, then the species tree $S$ only consists of the planted root
    $0_S$ and the root $\rho_S$, which in this case is identified with the
    single element in $M$.  Clearly, all vertices and edges are comparable in
    such a tree $S$, and hence, there is no transfer edges in $\scen$,
    implying $E = \emptyset$ and thus $|\mathcal{I}| = 1$; a contradiction to
    $k\ge 2$.
    
    Thus we have $|M|\ge 2$ and the root $\rho_S$ of the species tree $S$ has
    at least two children.  Since
    $\auxfitch(\sigma,\mathcal{I}\setminus\{I'\})$ is connected for every
    $I'\in \mathcal{C}$, the graph $\auxfitch(\sigma,\mathcal{I})$ is also
    connected. Since each color appears at most once as a leaf of $S$,
    $\sigma(L(S(v_1))) \cap \sigma(L(S(v_2)))=\emptyset$ holds for any two
    distinct children $v_1,v_2\in\child_S (\rho_S)$. These three assertions,
    together with the definition of the auxiliary graph
    $\auxfitch(\sigma,\mathcal{I})$, imply that there are two distinct colors
    $A, B\in M$ such that $AB$ is an edge in $\auxfitch(\sigma,\mathcal{I})$,
    $A\preceq_S v_1$ and $B\prec_{S} v_2$ for distinct children
    $v_1,v_2\in\child_S (\rho_S)$. By definition of
    $\auxfitch(\sigma,\mathcal{I})$ there is an independent set
    $I'\in\mathcal{I}$ containing a vertex $a\in I'$ with $\sigma(a)=A$ and a
    vertex $b\in I'$ with $\sigma(b)=B$.  Since $a$ and $b$ lie in the same
    independent set, we have $ab\notin E$.  By
    Lemma~\ref{lem:independent-lca},
    $\mu(\lca_T(a,b)) \succeq_S \lca_S(A,B)=\rho_S$.  Since, by assumption,
    $\auxfitch(\sigma,\mathcal{I}\setminus\{I'\})$ is also connected, we find
    two distinct colors $C$ and $D$ (not necessarily distinct from $A$ and
    $B$) such that $CD$ is an edge in $\auxfitch(\sigma,\mathcal{I})$,
    $C\preceq_S v_3$ and $D\prec_{S} v_4$ for distinct children
    $v_3,v_4\in\child_S (\rho_S)$ (but not necessarily distinct from $v_1$
    and $v_2$), and in particular, an independent set
    $I''\in\mathcal{I}\setminus \{I'\}$ containing a vertex $c\in I''$ with
    $\sigma(c)=C$ and a vertex $d\in I''$ with $\sigma(d)=D$.
    By construction, $I'\ne I''$, and thus, all edges between $I'$ and $I''$
    exist in $G$, in particular the edges $ac,ad,bc,bd$.  Since $c,d\in I''$,
    we have $cd\notin E$ and thus, by Lemma~\ref{lem:independent-lca},
    $\mu(\lca_T(c,d)) \succeq_S \lca_S(C,D)=\rho_S$.
    
    We now consider the unique path $P$ in $T$ that connects $\lca_T(a,b)$
    and $\lca_T(c,d)$.  Since $\mu$ is time-consistent and
    $\mu(\lca_T(a,b)), \mu(\lca_T(c,d)) \succeq_S \rho_S$, we conclude that,
    for every edge $uv$ along this path $P$, we have
    $\mu(u), \mu(v)\succeq_S \rho_S$ and thus
    $\mu(u), \mu(v)\in \{\rho_S, (0_S,\rho_S)\}$.  But then, $\mu(u)$ and
    $\mu(v)$ are comparable in $S$.  Therefore, $P$ does not contain any
    transfer edge.  Since $ab\notin E$, the path connecting $a$ and
    $\lca_{T}(a,b)$ does not contain any transfer edges. Likewise,
    $cd\notin E$ implies that the path connecting $c$ and $\lca_{T}(c,d)$
    does not contain any transfer edges.  Thus, the path connecting $a$ and
    $c$ also does not contain any transfer edge, which implies that
    $ac\notin E(\gfitch(\scen))=E$; a contradiction since $a$ and $c$ belong
    to two distinct independent sets.
    
    Hence, we conclude that for $k>1$ there exists an independent set
    $I'\in \mathcal{C}$ such that
    $\auxfitch(\sigma,\mathcal{I}\setminus\{I'\})$ is disconnected.  
  \end{proof}
  
  \begin{corollary}
    rs-Fitch graphs can be recognized in polynomial time.
    \label{cor:auxfitch1}
  \end{corollary}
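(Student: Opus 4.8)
The plan is to turn the characterization of Theorem~\ref{thm:char-rsFitch} into a decision procedure and to verify that each of its steps runs in polynomial time. Given a vertex-colored graph $(G,\sigma)$ with $\sigma\colon L\to M$, I would first test Condition~(i), namely whether $G$ is complete multipartite. By Prop.~\ref{prop:fitch} this is equivalent to $G$ containing no induced $K_2+K_1$; operationally, it is most convenient to compute the complement $\overline{G}$ and check that it is a disjoint union of cliques, which can be done in polynomial (indeed, linear) time. The connected components of $\overline{G}$ are exactly the independent sets $\mathcal{I}=\{I_1,\dots,I_k\}$ of $G$, so this step simultaneously rejects non-complete-multipartite inputs and, in the positive case, produces the partition $\mathcal{I}$.

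If $k=1$ (equivalently $E(G)=\emptyset$), Condition~(ii) is vacuous and $(G,\sigma)$ is accepted. If $k>1$, I would iterate over the $k$ candidate independent sets $I'\in\mathcal{I}$. For each $I'$, construct the auxiliary graph $\auxfitch(\sigma,\mathcal{I}\setminus\{I'\})$ following Def.~\ref{def:auxfitch}: its vertex set is $M$, and for every $I\in\mathcal{I}\setminus\{I'\}$ one inserts an edge between each pair of distinct colors in $\sigma(I)$. Then test whether this graph is disconnected by a single traversal. The input is accepted as soon as some $I'$ yields a disconnected $\auxfitch(\sigma,\mathcal{I}\setminus\{I'\})$, and rejected if no such $I'$ exists. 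Correctness is then immediate from Theorem~\ref{thm:char-rsFitch}.

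For the complexity bound, note that computing $\overline{G}$ and its components is polynomial in $|L|+|E(G)|$, and $|\mathcal{I}|=k\le|L|$. Each auxiliary graph has vertex set $M$ and $O(\sum_{I}|\sigma(I)|^2)=O(|M|^2)$ edges, which can be built by scanning the colors occurring in each independent set; a connectivity test is linear in the size of that graph. Since this is repeated for $k\le|L|$ choices of $I'$, the overall running time is polynomial in $|L|+|M|+|E(G)|$.

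The one point that requires a little care---rather than being a genuine obstacle---is the treatment of the codomain $M$: colors in $M$ that do not occur in any $\sigma(I)$ with $I\in\mathcal{I}\setminus\{I'\}$ appear as isolated vertices of $\auxfitch(\sigma,\mathcal{I}\setminus\{I'\})$ and hence already witness disconnectedness whenever $|M|\ge 2$. The algorithm handles this automatically, as long as $M$ is supplied as part of the input and the auxiliary graph is built on all of $M$ (consistent with Def.~\ref{def:auxfitch}); the discussion surrounding Cor.~\ref{cor:rsFitch-hereditary} shows that fixing the codomain in this way is indeed the intended reading.
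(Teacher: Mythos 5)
Your proposal is correct and follows essentially the same route as the paper's proof: test completeness-multipartiteness, extract the independent sets $\mathcal{I}$, and then search over the at most $|V(G)|$ candidates $I'\in\mathcal{I}$ for one making $\auxfitch(\sigma,\mathcal{I}\setminus\{I'\})$ disconnected, with correctness supplied by Thm.~\ref{thm:char-rsFitch}. Your additional remarks on the edgeless case and on non-surjective colorings (isolated vertices of the auxiliary graph) are consistent with the paper's treatment and merely make explicit details the paper leaves implicit.
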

  \begin{proof}
    Every rs-Fitch graph $(G,\sigma)$ must be complete multipartite, which
    can be verified in polynomial time. In this case, the set of independent
    sets $\mathcal{I}=\{I_1,\dots, I_k\}$ of $G$ can also be determined and
    the graph $\auxfitch(\sigma,\mathcal{I})$ can be constructed in polynomial
    time.  Finally, we need to find an independent set $I'\in \mathcal{I}$,
    such that $\auxfitch(\sigma,\mathcal{I}\setminus\{I'\})$ is disconnected.
    Clearly, checking whether $\auxfitch(\sigma,\mathcal{I}\setminus\{I'\})$
    is disconnected can be done in polynomial time and since there are at
    most $|V(G)|$ independent sets in $\mathcal{I}$, finding an independent
    set $I'$ such that $\auxfitch(\sigma,\mathcal{I}\setminus\{I'\})$ is
    disconnected (if one exists) can be done in polynomial time as well.
  \end{proof}
  
  \begin{corollary}
    Let $(G,\sigma)$ be a complete multipartite graph with coloring
    $\sigma\colon V(G) \to M$ and set of independent sets $\mathcal{I}$.
    Then, $(G,\sigma)$ is an rs-Fitch graph if and only if
    $\auxfitch(\sigma,\mathcal{I})$ is disconnected or there is a cut
    $Q\subseteq E(\auxfitch(\sigma,\mathcal{I}))$ such that all edges
    $e\in Q$ have the same label $\ell(e)=\{I\}$ for some $I\in\mathcal{I}$.
    \label{cor:auxfitch2}
  \end{corollary}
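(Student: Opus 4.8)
The plan is to derive the statement directly from the characterization in Theorem~\ref{thm:char-rsFitch}, reformulating its condition~(ii) in the language of edge cuts of $\auxfitch(\sigma,\mathcal{I})$. Write $H\coloneqq\auxfitch(\sigma,\mathcal{I})$ and, for $I'\in\mathcal{I}$, $H_{I'}\coloneqq\auxfitch(\sigma,\mathcal{I}\setminus\{I'\})$. By Def.~\ref{def:auxfitch}, both graphs have vertex set $M$, and an edge $AB$ of $H$ lies in $H_{I'}$ if and only if $A,B\in\sigma(I)$ for some $I\in\mathcal{I}\setminus\{I'\}$, i.e.\ if and only if $\ell(AB)$ contains an independent set other than $I'$. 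Since $\ell(AB)\neq\emptyset$ for every edge $AB$ of $H$, this is equivalent to $\ell(AB)\neq\{I'\}$. Hence the first key observation is
\[
  H_{I'}=H-\{e\in E(H)\mid \ell(e)=\{I'\}\},
\]
so that passing from $H$ to $H_{I'}$ deletes exactly the edges whose label is the singleton $\{I'\}$.

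Next I would invoke the standard fact that, for $F\subseteq E(H)$, the graph $H-F$ is disconnected if and only if there is a bipartition $M=V_1\,\cupdot\,V_2$ with $V_1,V_2\neq\emptyset$ whose cut-set (the set of all edges of $H$ with one endpoint in each $V_i$) is contained in $F$. Applying this with $F$ the set of $\{I'\}$-labeled edges yields: $H_{I'}$ is disconnected if and only if there is such a bipartition all of whose crossing edges carry the label $\{I'\}$. Two situations arise. If $H$ itself is disconnected, one may take a bipartition with empty cut-set, so condition~(ii) holds for every $I'$; this matches the first disjunct of the corollary. If $H$ is connected, every bipartition into nonempty parts has a nonempty cut-set $Q$, and condition~(ii) holds for some $I'$ precisely when there is such a $Q$ all of whose edges share the common singleton label $\{I'\}$; this matches the second disjunct.

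Putting the pieces together, for $k>1$ the condition of Theorem~\ref{thm:char-rsFitch}(ii) — existence of $I'$ with $H_{I'}$ disconnected — is equivalent to the disjunction ``$H$ is disconnected, or $H$ admits a cut $Q$ with $\ell(e)=\{I\}$ for a single $I\in\mathcal{I}$ and all $e\in Q$'', which is exactly the claimed right-hand side. For $k=1$ the graph $G$ is edgeless and hence an rs-Fitch graph by Theorem~\ref{thm:char-rsFitch}; here every edge of $H$ (if any) carries the label $\{I_1\}$, so the right-hand side holds as well, since either $H$ is disconnected, or $H$ is connected on at least two vertices and any bipartition into nonempty parts produces a nonempty cut-set with the required uniform label. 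Combining both cases with Theorem~\ref{thm:char-rsFitch} then gives the asserted equivalence.

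I expect the only delicate point to be pinning down the precise reading of ``cut'': one must take $Q$ to be the full cut-set of a bipartition (rather than merely some disconnecting edge set), so that ``all edges of $Q$ are labeled $\{I\}$'' translates faithfully into ``deleting the $\{I\}$-labeled edges removes every crossing edge''. With that convention the proof is a routine translation and presents no substantial obstacle; the only genuinely degenerate boundary case, $|M|=1$ (forcing $k=1$ and $G$ edgeless), is subsumed by the direct appeal to the theorem.
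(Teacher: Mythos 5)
Your core argument is correct and is essentially the paper's own route, just packaged more tidily: the paper likewise derives the corollary from Thm.~\ref{thm:char-rsFitch} by observing that every edge $e$ of $\auxfitch(\sigma,\mathcal{I})$ with $|\ell(e)|>1$ survives in $\auxfitch(\sigma,\mathcal{I}\setminus\{I\})$, so that passing from $\mathcal{I}$ to $\mathcal{I}\setminus\{I\}$ deletes exactly the edges labeled $\{I\}$ -- your identity $H_{I'}=H-\{e\in E(H)\mid \ell(e)=\{I'\}\}$ stated as an inclusion in one direction and used exactly in the other. The paper then argues the \emph{if}-direction via $E(\auxfitch(\sigma,\mathcal{I}\setminus\{I\}))\subseteq E(\auxfitch(\sigma,\mathcal{I}))\setminus Q$ and the \emph{only-if}-direction by extracting a uniformly labeled cut when $\auxfitch(\sigma,\mathcal{I})$ is connected but $\auxfitch(\sigma,\mathcal{I}\setminus\{I\})$ is not; your explicit bipartition lemma makes this last extraction precise (take a connected component $C$ of $H_{I}$; the cut-set between $C$ and $M\setminus C$ is nonempty and uniformly labeled), which the paper leaves implicit. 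Your reading of ``cut'' as the full cut-set of a bipartition is the right one and matches the paper's usage.

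There is, however, a genuine flaw in your boundary-case discussion. The parenthetical claim that $|M|=1$ forces $k=1$ is false: a monochromatic complete multipartite graph with edges, e.g.\ $K_2$ with both vertices colored $A$, has $k=2$ (the equivalence happens to survive there, since both sides are false: each $\auxfitch(\sigma,\mathcal{I}\setminus\{I'\})$ is a connected one-vertex graph, and $H$ has no edges, hence no cut). More seriously, the case $k=1$ with $|M|=1$ is \emph{not} ``subsumed by the direct appeal to the theorem'': there $G$ is edgeless and hence an rs-Fitch graph by Thm.~\ref{thm:char-rsFitch} (left-hand side true), while $H$ is a single isolated vertex -- connected, with $E(H)=\emptyset$ and therefore admitting no cut -- so the right-hand side is false. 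The stated equivalence genuinely fails in this degenerate case, and no appeal to the theorem can close it; one must either exclude $|M|=1$ or adopt an ad hoc convention. To be fair, this blind spot is inherited from the paper: its own converse step invokes condition~(ii) of Thm.~\ref{thm:char-rsFitch} without the guard $k>1$, so the published proof tacitly assumes $k>1$ as well. Your handling of $k=1$ with $|M|\ge 2$ is correct (if $\sigma(V(G))\subsetneq M$ then $H$ has an isolated vertex and is disconnected; if $\sigma(V(G))=M$ then every bipartition yields a nonempty cut-set uniformly labeled $\{I_1\}$), so the defect is confined to the one case you tried to wave away.
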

  \begin{proof}
    If $\auxfitch(\sigma,\mathcal{I})$ is disconnected, then
    $\auxfitch(\sigma,\mathcal{I}\setminus \{I\})$ remains disconnected for
    all $I\in \mathcal{I}$ and, by Thm.~\ref{thm:char-rsFitch},
    $(G,\sigma)$ is an rs-Fitch graph.
    
    If there is a cut $Q\subseteq E(\auxfitch(\sigma,\mathcal{I}))$ such that
    all edges $e\in Q$ have the same label $\ell(e)=\{I\}$ for some
    $I\in\mathcal{I}$, then, by definition,
    $E(\auxfitch(\sigma,\mathcal{I}\setminus \{I\}))\subseteq E'\coloneqq
    E(\auxfitch(\sigma,\mathcal{I}))\setminus Q$. Since $Q$ is a cut
    in $\auxfitch(\sigma,\mathcal{I})$, the resulting graph
    $\auxfitch'= (M,E')$ is disconnected. By the latter arguments,
    $\auxfitch(\sigma,\mathcal{I}\setminus \{I\})$ is a subgraph of
    $\auxfitch'$, and thus, disconnected as well. By
    Thm.~\ref{thm:char-rsFitch}, $(G,\sigma)$ is an rs-Fitch graph.
    
    Conversely, if $(G,\sigma)$ is an rs-Fitch graph, then 
    Thm.~\ref{thm:char-rsFitch} implies that
    $\auxfitch(\sigma,\mathcal{I}\setminus \{I\})$ is disconnected for some
    $I\in \mathcal{I}$.  If $\auxfitch(\sigma,\mathcal{I})$ was already
    disconnected, then there is nothing to show. Hence assume that
    $\auxfitch(\sigma,\mathcal{I}) = (M,E)$ is connected and let
    $\auxfitch(\sigma,\mathcal{I}\setminus \{I\}) = (M,E')$.  Moreover, let
    $F\subseteq E$ be the subset of edges $e\in E$ with $I\in \ell(e)$. Note,
    $F$ contains all edges of $E$ that have potentially been removed from $E$
    to obtain $E'$. However, all edges $e=xy$ in $F$ with $|\ell(e)|>1$ must
    remain in $\auxfitch(\sigma,\mathcal{I}\setminus \{I\})$, since there is
    another independent set $I'\in \ell(e)\setminus \{I\}$ such that
    $x,y\in \sigma(I')$. Hence, only those edges $e$ in $F$ for which
    $|\ell(e)|=1$ are removed from $E$. Hence, there is a cut
    $Q\subseteq F\subseteq E$ such that all edges $e\in Q$ have the same
    label $\ell(e)=\{I\}$ for some $I\in\mathcal{I}$.  
  \end{proof}
  
  \begin{corollary}
    If $(G,\sigma)$ with coloring $\sigma\colon V(G) \to M$ is an rs-Fitch
    graph, then there are no two disjoint independent sets $I$ and $I'$ of $G$
    with $\sigma(I)=\sigma(I')= M$.
    \label{cor:auxfitch3}
  \end{corollary}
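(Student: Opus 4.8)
The plan is to argue by contradiction, feeding directly off the characterization in Theorem~\ref{thm:char-rsFitch}. Assume $(G,\sigma)$ is an rs-Fitch graph that nonetheless admits two disjoint independent sets $I,I'$ with $\sigma(I)=\sigma(I')=M$. First I would invoke Theorem~\ref{thm:char-rsFitch}(i) to record that $G$ is complete multipartite with maximal independent sets $\mathcal{I}=\{I_1,\dots,I_k\}$. Since in a complete multipartite graph every independent set lies inside a single part, I may pass from $I,I'$ to the parts of $\mathcal{I}$ containing them; this does not shrink their color sets, so I obtain two \emph{distinct} parts in $\mathcal{I}$ (they are distinct because $I$ and $I'$ are disjoint and nonempty, the latter since $\sigma(I)=M\neq\emptyset$) that still each meet every color class. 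In particular $k\ge 2$, so Theorem~\ref{thm:char-rsFitch}(ii) supplies some $I^*\in\mathcal{I}$ for which $\auxfitch(\sigma,\mathcal{I}\setminus\{I^*\})$ is disconnected.

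The key observation is that a part covering all colors forces its single-set auxiliary graph to be complete on $M$. Indeed, if $J\in\mathcal{I}$ satisfies $\sigma(J)=M$, then for any two distinct colors $x,y\in M$ both $x$ and $y$ lie in $\sigma(J)$, so by Definition~\ref{def:auxfitch} the edge $xy$ is present in $\auxfitch(\sigma,\{J\})$; hence $\auxfitch(\sigma,\{J\})$ is the complete graph on $M$, which is connected. Now, because the two distinguished parts are distinct, at least one of them — call it $J$ — differs from $I^*$, so $\{J\}\subseteq\mathcal{I}\setminus\{I^*\}$. Using the monotonicity of the auxiliary graph under enlarging the index family (noted right after Definition~\ref{def:auxfitch}), $\auxfitch(\sigma,\{J\})$ is a spanning subgraph of $\auxfitch(\sigma,\mathcal{I}\setminus\{I^*\})$. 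Since the former is already complete on the vertex set $M$, the latter is complete as well, and in particular connected — contradicting the disconnectedness guaranteed in the previous step. This contradiction establishes the claim.

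I expect the one genuinely delicate point to be the reduction to members of $\mathcal{I}$: one must argue that the two parts containing $I$ and $I'$ are \emph{distinct}, since it is exactly the separateness of the two color-covering independent sets that makes clause~(ii) of the characterization available and prevents the offending part from being harmlessly removed as $I^*$. Once that is secured, the rest is the short ``completeness-plus-monotonicity'' computation above, and no assumption on surjectivity of $\sigma$ is needed, as the argument only uses that a part with $\sigma(J)=M$ induces the complete auxiliary graph on $M$.
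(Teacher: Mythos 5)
Your core computation is sound and, in fact, essentially identical to the paper's: a part $J$ with $\sigma(J)=M$ makes $\auxfitch(\sigma,\{J\})$ the complete graph on $M$, and by the monotonicity noted after Def.~\ref{def:auxfitch}, any family $\mathcal{I}\setminus\{I^*\}$ still containing such a $J$ has a connected auxiliary graph, contradicting condition~(ii) of Thm.~\ref{thm:char-rsFitch}. The problem lies in the reduction step that you yourself flagged as delicate: from two \emph{disjoint} independent sets $I,I'$ with $\sigma(I)=\sigma(I')=M$ you conclude that the parts of $\mathcal{I}$ containing them are \emph{distinct}, ``because $I$ and $I'$ are disjoint and nonempty.'' That inference is false: two disjoint independent sets can perfectly well be subsets of one and the same part, in which case your construction yields only a single color-covering part $J$, and nothing then prevents the characterization from being satisfied by choosing $I^*=J$.

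Worse, the strengthened statement you are actually proving (for arbitrary disjoint independent sets) is not true. Take $G$ edgeless on $\{a,b\}$ with $\sigma(a)=\sigma(b)=X$ and $M=\{X\}$: this is an rs-Fitch graph (it has $k=1$, so condition~(ii) of Thm.~\ref{thm:char-rsFitch} is vacuous; equivalently, any relaxed scenario without transfer edges explains it), yet $I=\{a\}$ and $I'=\{b\}$ are disjoint independent sets with $\sigma(I)=\sigma(I')=M$. Nor is the failure confined to the degenerate case $k=1$: with $M=\{A,B\}$, parts $J=\{a_1,b_1,a_2,b_2\}$ (colored $A,B,A,B$) and $I_2=\{a_3\}$ (colored $A$), the graph $\auxfitch(\sigma,\mathcal{I}\setminus\{J\})$ is disconnected, so $(G,\sigma)$ is an rs-Fitch graph by Thm.~\ref{thm:char-rsFitch}, while $\{a_1,b_1\}$ and $\{a_2,b_2\}$ are disjoint independent sets inside $J$, each covering $M$. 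The corollary must therefore be read --- as the paper's own proof does, which opens with ``two distinct independent sets $I,I'\in\mathcal{I}$'' --- with $I$ and $I'$ taken from $\mathcal{I}$, i.e., as (maximal) parts of the complete multipartite graph, for which disjointness and distinctness coincide. Once you make that reading explicit and drop the faulty ``pass to the containing parts'' step, the remainder of your argument (completeness of $\auxfitch(\sigma,\{J\})$ plus monotonicity, applied to whichever of the two parts survives the removal of $I^*$) is exactly the paper's proof.
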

  \begin{proof}
    Let $\mathcal{I}$ be the set of independent sets of $G$.  If
    $|\mathcal{I}|=1$, there is nothing to show and thus, we assume that
    $|\mathcal{I}|>1$.  Assume, for contradiction, that there are two
    distinct independent sets $I, I' \in \mathcal{I}$ such that
    $\sigma(I)=\sigma(I')= M$.  For every $I''\in\mathcal{I}$, the set
    $\mathcal{I}\setminus \{I''\}$ clearly contains at least one of the two
    sets $I$ and $I'$, both of which contain all colors in $M$. Therefore,
    $\auxfitch(\sigma, \mathcal{I}\setminus \{I''\})$ is the complete graph
    by construction and, thus, connected for every
    $I''\in\mathcal{I}$. This together with Thm.~\ref{thm:char-rsFitch}
    implies that $(G,\sigma)$ is not an rs-Fitch graph; a contradiction.
  \end{proof}
  
  \begin{corollary}\label{cor:surj-rsF}
    Every complete multipartite graph $(G,\sigma)$ with a vertex coloring
    $\sigma\colon V(G) \to M$ that is not surjective is an rs-Fitch graph.
  \end{corollary}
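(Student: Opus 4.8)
The plan is to verify the two conditions of the characterization in Thm.~\ref{thm:char-rsFitch}. Since $(G,\sigma)$ is assumed to be complete multipartite, condition~(i) holds immediately; write $\mathcal{I}=\{I_1,\dots,I_k\}$ for its collection of independent sets. If $k=1$ there is nothing further to check, since condition~(ii) is then vacuous and $(G,\sigma)$ is an rs-Fitch graph. I would therefore concentrate on the case $k>1$ and establish condition~(ii).

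The key observation is that non-surjectivity of $\sigma$ directly produces a disconnection of the auxiliary graph. Concretely, fix a color $m^{*}\in M\setminus\sigma(V(G))$, which exists by hypothesis. By Def.~\ref{def:auxfitch}, every edge of $\auxfitch(\sigma,\mathcal{I})$ joins two colors that both occur in $\sigma(I)$ for some $I\in\mathcal{I}$; since $m^{*}\notin\sigma(I)$ for every $I\in\mathcal{I}$, the vertex $m^{*}$ is isolated in $\auxfitch(\sigma,\mathcal{I})$. Moreover, the parts $I_j$ of a complete multipartite graph are non-empty, so $V(G)\neq\emptyset$ and hence $\sigma(V(G))\neq\emptyset$; together with $m^{*}\notin\sigma(V(G))$ this forces $|M|\ge 2$. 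An isolated vertex in a graph on at least two vertices witnesses disconnectedness, so $\auxfitch(\sigma,\mathcal{I})$ is disconnected.

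From here I would invoke Cor.~\ref{cor:auxfitch2}, which states that an already disconnected $\auxfitch(\sigma,\mathcal{I})$ suffices to conclude that $(G,\sigma)$ is an rs-Fitch graph. Alternatively, one can argue directly from Thm.~\ref{thm:char-rsFitch}: because $\auxfitch(\sigma,\mathcal{I}\setminus\{I'\})$ is a spanning subgraph of $\auxfitch(\sigma,\mathcal{I})$ on the same vertex set $M$ (deleting an independent set only removes edges), the vertex $m^{*}$ remains isolated there for \emph{every} choice of $I'\in\mathcal{I}$. Thus $\auxfitch(\sigma,\mathcal{I}\setminus\{I'\})$ is disconnected, condition~(ii) holds (witnessed by any $I'$), and $(G,\sigma)$ is an rs-Fitch graph.

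There is no genuine obstacle here, as the statement is essentially a corollary of the characterization; the only points requiring a little care are the bookkeeping that forces $|M|\ge 2$ (so that the isolated vertex genuinely disconnects $\auxfitch(\sigma,\mathcal{I})$ rather than being the whole graph) and the separate, trivial treatment of the degenerate case $k=1$.
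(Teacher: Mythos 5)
Your proposal is correct and follows essentially the same route as the paper: non-surjectivity yields an isolated color vertex, hence $\auxfitch(\sigma,\mathcal{I})$ is disconnected, and since deleting an independent set only removes edges, $\auxfitch(\sigma,\mathcal{I}\setminus\{I'\})$ stays disconnected for every $I'$, so Thm.~\ref{thm:char-rsFitch} applies. Your extra bookkeeping (the explicit $k=1$ case and the check that $|M|\ge 2$) only makes explicit what the paper's terser proof leaves implicit.
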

  \begin{proof}
    If $\sigma\colon V(G) \to M$ is not surjective, then
    $\auxfitch(\sigma,\mathcal{I})$ is disconnected, where $\mathcal{I}$
    denotes the set of independent sets of $G$.  Hence, if $k>1$, then
    $\auxfitch(\sigma,\mathcal{I}\setminus \{I\})$ remains disconnected for
    all $I\in \mathcal{I}$.  By Thm.~\ref{thm:char-rsFitch}, $(G,\sigma)$ is
    an rs-Fitch graph.  
  \end{proof}
  
  Cor.~\ref{cor:surj-rsF} may seem surprising since it implies that the
  property of being an rs-Fitch graph can depend on species (colors $M$) for
  which we have no genes $L$ in the data. The reason is that an additional
  lineage in the species tree provides a place to ``park'' interior vertices
  in the gene tree from which HGT-edges can emanate that could not always be
  accommodated within lineages that have survivors -- where they may force
  additional HGT edges.
  
  \begin{corollary}
    Every Fitch graph $(G,\sigma)$ that contains an independent set $I$ and a
    vertex $x\in I$ with $\sigma(x)\notin\sigma(I')$ for all other
    independent sets $I'\neq I$, is an rs-Fitch graph.
    \label{cor:re-fitsh-isolatedcolor}
  \end{corollary}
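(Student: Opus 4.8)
The plan is to reduce the statement to the characterization of rs-Fitch graphs in Theorem~\ref{thm:char-rsFitch}. Since $(G,\sigma)$ is a Fitch graph, Proposition~\ref{prop:fitch} tells us that $G$ is complete multipartite, so condition~(i) of that theorem is automatic; let $\mathcal{I}=\{I_1,\dots,I_k\}$ be its independent sets. If $k=1$, then condition~(ii) is vacuous and we are done immediately. The whole argument therefore concentrates on the case $k>1$, where I must produce some $I'\in\mathcal{I}$ for which $\auxfitch(\sigma,\mathcal{I}\setminus\{I'\})$ is disconnected.

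The key idea is that the distinguished set $I$ is itself the right choice, i.e.\ take $I'=I$. Writing $A\coloneqq\sigma(x)$, the hypothesis says $A\notin\sigma(I')$ for every $I'\neq I$. By Definition~\ref{def:auxfitch}, an edge of $\auxfitch(\sigma,\mathcal{I}\setminus\{I\})$ incident to a color can only arise from two colors lying jointly in $\sigma(I'')$ for some $I''\in\mathcal{I}\setminus\{I\}$; since $A$ lies in no such $\sigma(I'')$, the vertex $A$ is isolated in $\auxfitch(\sigma,\mathcal{I}\setminus\{I\})$. It then remains to certify that this isolation actually disconnects the graph, which requires at least one further color. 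This is where the assumption $k>1$ enters: some $I'\neq I$ exists, is non-empty, and satisfies $A\notin\sigma(I')$, so it contributes a color $B\in\sigma(I')\subseteq M$ with $B\neq A$. Hence $\auxfitch(\sigma,\mathcal{I}\setminus\{I\})$ contains the two distinct vertices $A$ and $B$ with $A$ isolated, and is therefore disconnected; Theorem~\ref{thm:char-rsFitch} then yields that $(G,\sigma)$ is an rs-Fitch graph.

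I do not anticipate a genuine obstacle here: the argument is a short structural consequence of Theorem~\ref{thm:char-rsFitch}, closely mirroring the proof of Corollary~\ref{cor:surj-rsF} (a color absent from the image of every \emph{retained} independent set plays the same role as a color outside $\sigma(L)$ in the non-surjective case). The only point that needs a moment's care is the bookkeeping that guarantees a second color $B$ is present, so that ``$A$ isolated'' genuinely means ``disconnected'' rather than ``single vertex''; this is handled by the non-emptiness of a second independent set together with $k>1$.
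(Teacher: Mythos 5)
Your proof is correct and follows essentially the same route as the paper: choose $I'=I$, observe that the hypothesis makes the color $\sigma(x)$ an isolated vertex of $\auxfitch(\sigma,\mathcal{I}\setminus\{I\})$, and invoke Thm.~\ref{thm:char-rsFitch}. Your extra bookkeeping (the $k=1$ case and the existence of a second color $B$ so that isolation genuinely means disconnection) only makes explicit details the paper's proof leaves implicit.
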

  \begin{proof}
    Let $\mathcal{I}$ denote the set of independent sets of $G$.  If there is
    an independent set $I\in \mathcal{I}$ that contains a vertex $x\in I$
    with $\sigma(x)\notin \sigma(I')$ for all other independent sets
    $I'\neq I$, then the vertex $\sigma(x)$ in
    $\auxfitch(\sigma,\mathcal{I}\setminus \{I\})$ is an isolated vertex and
    thus, $\auxfitch(\sigma,\mathcal{I}\setminus \{I\})$ is disconnected.  By
    Thm.~\ref{thm:char-rsFitch}, $(G,\sigma)$ is an rs-Fitch graph.  
  \end{proof}
  
  \par\noindent
  As for LDT graphs, the property of being an rs-Fitch graph is hereditary.
  \begin{corollary}
    If $(G=(L,E),\sigma)$ is an rs-Fitch graph, then the colored vertex induced 
    subgraph
    $(G[W],\sigma_{|W})$ is an rs-Fitch graph for all non-empty subsets
    $W\subseteq L$.
    \label{cor:rsFitch-hereditary}
  \end{corollary}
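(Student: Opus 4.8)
The plan is to work entirely with the combinatorial characterization of Theorem~\ref{thm:char-rsFitch} rather than by restricting the explaining scenario. A direct scenario restriction is the natural first idea but is treacherous here: deleting leaves from $T$ may suppress interior vertices and collapse a path carrying a transfer edge into a single edge with comparable $\mu$-images, so that transfer edges become ``invisible'' and $\gfitch(\scen)[W]$ is in general only a non-induced subgraph of the Fitch graph of the restricted scenario. I would therefore avoid this route. By induction on $|L\setminus W|$ it suffices to treat the deletion of a single vertex, i.e.\ to show that $(G-x,\sigma_{|L\setminus\{x\}})$ is an rs-Fitch graph for every $x\in L$, where --- crucially --- the codomain of $\sigma_{|L\setminus\{x\}}$ is still the full color set $M$.

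Next I would check the two conditions of Theorem~\ref{thm:char-rsFitch}. Condition~(i) is immediate, since deleting a vertex from a complete multipartite graph yields a complete multipartite graph whose parts $\mathcal{I}^*$ arise from $\mathcal{I}=\{I_1,\dots,I_k\}$ by replacing the part $I_x$ containing $x$ with $I_x\setminus\{x\}$ (dropping it if $I_x=\{x\}$) and leaving all other parts untouched. The heart of the matter is condition~(ii), and the key observation is that the auxiliary graph retains the unchanged vertex set $M$ under this passage --- exactly why the corollary holds for $\sigma_{|W}\colon W\to M$ but fails for the codomain-restricted map (cf.\ the remark following the statement). Because $\sigma_{|L\setminus\{x\}}$ agrees with $\sigma$ on every set avoiding $x$ and $\sigma_{|L\setminus\{x\}}(I_x\setminus\{x\})\subseteq\sigma(I_x)$, moving from $(\sigma,\mathcal{I})$ to $(\sigma_{|L\setminus\{x\}},\mathcal{I}^*)$ can only \emph{delete} edges of any auxiliary graph on $M$; thus disconnectedness is inherited by spanning subgraphs.

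Assume $|\mathcal{I}^*|>1$, as otherwise (ii) holds vacuously. Then $k\ge|\mathcal{I}^*|>1$, so by hypothesis there is a witness $I'\in\mathcal{I}$ with $\auxfitch(\sigma,\mathcal{I}\setminus\{I'\})$ disconnected. I would pick the witness $J'$ for $G-x$ by cases. If $x\notin I'$, take $J'=I'$: then $\mathcal{I}^*\setminus\{J'\}$ differs from $\mathcal{I}\setminus\{I'\}$ only by $I_x\mapsto I_x\setminus\{x\}$, so $\auxfitch(\sigma_{|L\setminus\{x\}},\mathcal{I}^*\setminus\{J'\})$ is a spanning subgraph of the disconnected graph $\auxfitch(\sigma,\mathcal{I}\setminus\{I'\})$, hence disconnected. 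If $x\in I'$ with $|I'|>1$, take $J'=I'\setminus\{x\}$: then $\mathcal{I}^*\setminus\{J'\}=\mathcal{I}\setminus\{I'\}$ with identical colorings, giving exactly the disconnected auxiliary graph. If $I'=\{x\}$, then $\mathcal{I}^*=\mathcal{I}\setminus\{I'\}$, so $\auxfitch(\sigma_{|L\setminus\{x\}},\mathcal{I}^*)$ equals that disconnected graph, and removing any part leaves a spanning subgraph, so any $J'$ works. In every case condition~(ii) is satisfied, and Theorem~\ref{thm:char-rsFitch} yields that $(G-x,\sigma_{|L\setminus\{x\}})$ is an rs-Fitch graph, closing the induction.

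The main obstacle is conceptual rather than computational: one must resist restricting the scenario and instead recognize that heredity is driven by edge-monotonicity of the auxiliary graph over the \emph{fixed} color set $M$; the bookkeeping in the three witness cases of (ii) is then routine.
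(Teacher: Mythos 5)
Your proof is correct and follows essentially the same route as the paper: reduce to deleting a single vertex $x$, verify condition~(i) of Thm.~\ref{thm:char-rsFitch} trivially, and establish condition~(ii) by observing that the auxiliary graph over the \emph{fixed} color set $M$ can only lose edges under the passage from $(\sigma,\mathcal{I})$ to $(\sigma_{|W},\mathcal{I}^*)$, so disconnectedness of the witness graph is inherited. Your explicit three-case split on the location of $x$ relative to the witness $I'$ is just a slightly more careful rendering of the paper's two cases ($I'=\{x\}$ versus the rest, where the paper's notation $\auxfitch(\sigma,\mathcal{I}'\setminus\{I'\})$ tacitly identifies $I'$ with $I'\setminus\{x\}$), and your remark about why scenario restriction fails matches the paper's own discussion of invisible transfer edges.
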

  \begin{proof}
    It suffices to show the statement for $W = L\setminus\{x\}$ for an
    arbitrary vertex $x\in L$.  If $G=(L,E)$ is edgeless, then
    $G[W]$ is edgeless and thus, by Thm.~\ref{thm:char-rsFitch},
    an rs-Fitch graph.
    
    Thus, assume that $E\neq \emptyset$ and thus, for the set $\mathcal{I}$
    of independent sets of $G$ it holds that $|\mathcal{I}|>1$.  Since $G$
    does not contain an induced $K_2+K_1$, it is easy to see that $G[W]$
    cannot contain an induced $K_2+K_1$ and thus, $G[W]$ is a complete
    multipartite graph.  Hence, Thm.~\ref{thm:char-rsFitch}(i) is satisfied.
    Moreover, if for the set $\mathcal{I}'$ of independent sets of $G[W]$
    it holds that $|\mathcal{I}'|=1$ then, Thm.~\ref{thm:char-rsFitch}
    already shows that $(G[W],\sigma_{|W})$ is an rs-Fitch graph.
    
    Thus, assume that $|\mathcal{I}'|>1$.  Now compare the labeling $\ell$ of
    the edges in $\auxfitch = \auxfitch(\sigma, \mathcal{I})$ and the
    labeling $\ell'$ of the edges in
    $\auxfitch' = \auxfitch(\sigma_{|W}, \mathcal{I}')$.  Note, $\auxfitch$
    and $\auxfitch'$ have still the same vertex set $M$.  Let
    $I\in\mathcal{I}$ with $x\in I$. For all vertices $y\in I$ with
    $\sigma(x)\neq \sigma(y)$, we have an edge $e =\sigma(x)\sigma(y)$ in
    $\auxfitch$ and $I\in \ell(e)$. Consequently, for all edges $e$ of
    $\auxfitch$ that are present in $\auxfitch'$ we have
    $\ell'(e)\subseteq \ell(e)$. In particular, $\auxfitch'$ cannot have
    edges that are not present in $\auxfitch$, since we reduced for one
    independent set the size by one.  Therefore,
    $\auxfitch'$ is a subgraph of $\auxfitch$.
    
    By Thm.~\ref{thm:char-rsFitch}, there is an independent set
    $I'\in \mathcal{I}$, not necessarily distinct from $I$, such that
    $\auxfitch(\sigma,\mathcal{I}\setminus\{I'\})$ is disconnected.  If
    $I' = \{x\}$, then $\mathcal{I}' = \mathcal{I}\setminus \{I'\}$ and
    $\auxfitch' =\auxfitch$ must be disconnected as well.  Otherwise,
    $\auxfitch'\subseteq \auxfitch$ and similar arguments as above show that
    $\auxfitch(\sigma,\mathcal{I'}\setminus\{I'\}) \subseteq
    \auxfitch(\sigma,\mathcal{I}\setminus\{I'\})$. Therefore, in both of the
    latter cases, $\auxfitch(\sigma,\mathcal{I'}\setminus\{I'\})$ is
    disconnected and Thm.~\ref{thm:char-rsFitch} implies that
    $(G[W],\sigma_{|W})$ is an rs-Fitch graph.  
  \end{proof}
  
  As outlined in the main part of this paper,
  Cor.~\ref{cor:rsFitch-hereditary} is usually not satisfied if we restrict
  the codomain of $\sigma$ to the observable part of colors, even if $\sigma$
  is surjective.
  
  \subsection{Least Resolved Trees for Fitch graphs}
  \label{ssec:LRTFitch}
  
  It is important to note that the characterization of rs-Fitch graphs in
  Thm.~\ref{thm:char-rsFitch} does not provide us with a characterization of
  rs-Fitch graphs that share a common  relaxed  scenario with a given LDT
  graph. As a potential avenue to address this problem we investigate the
  structure of least-resolved trees for Fitch graphs as possible source of
  additional constraints.
  
  \emph{All trees considered in this subsection \ref{ssec:LRTFitch} are
    rooted and phylogenetic but not planted unless stated differently.} This
  is no loss of generality, since we are interested in Fitch-least-resolved
  trees, which are never planted because the edge incident with the
  planted root can be contracted without affecting the paths between the
  leaves.
  
  \begin{definition}\label{def:FLRT}
    The edge-labeled tree $(T,\lambda)$ is \emph{Fitch-least-resolved}
    w.r.t.\ $\gfitch(T,\lambda)$, if for all trees $T'\neq T$ that are
    displayed by $T$ and every labeling $\lambda'$ of $T'$ it holds that
    $\gfitch(T,\lambda)\neq \gfitch(T',\lambda')$.
  \end{definition}
  
  \begin{definition} \label{def:contract} Let $(T,\lambda)$ be an
    edge-labeled  tree and let $e=(x,y)\in E(T)$ be an inner
    edge.  The tree $(T_{/e}, \lambda_{/e})$ with $L(T_{/e})=L(T)$, is
    obtained by contraction of the edge $e$ in $T$ and by keeping the edge
    labels of all non-contracted edges.
  \end{definition}
  Note, if $e$ is an inner edge of a phylogenetic tree $T$, then the tree
  $T_{/e}$ is again phylogenetic.
  
  \begin{definition}\label{def:rel-label}
    An edge $e$ in $(T,\lambda)$ is \emph{relevantly-labeled in
      $(T,\lambda)$} if, for the tree $(T,\lambda')$ with
    $\lambda'(f)=\lambda(f)$ for all $f\in E(T)\setminus\{e\}$ and
    $\lambda'(e)\neq \lambda(e)$, it holds that
    $\gfitch(T,\lambda)\neq \gfitch(T,\lambda')$.
  \end{definition}
  
  \begin{lemma}
    An outer 0-edge $e=(v,x)$ in $(T,\lambda)$ is \emph{relevantly-labeled in
      $(T,\lambda)$} if and only if $zx\notin E(\gfitch(T,\lambda))$ for some
    $z\in L(T)\setminus \{x\}$.
    \label{lem:rel-label}
  \end{lemma}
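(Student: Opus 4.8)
The plan is to exploit the fact that the outer $0$-edge $e=(v,x)$ is the \emph{unique} edge of $T$ incident to the leaf $x$. Indeed, by the edge-orientation convention $x\preceq_T v$, so $v=\parent(x)$ is not a leaf, and since $e$ is an outer edge its lower endpoint $x$ must be a leaf. As $e$ separates $x$ from the remainder of $T$, every path in $T$ that passes through $e$ has $x$ as one of its two endpoints. My first observation would therefore be that changing the label of $e$ can only affect edges of the Fitch graph that are incident to $x$; all other edges of $\gfitch(T,\lambda)$ are unchanged.

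Next I would invoke monotonicity of the Fitch construction in the labeling. Since $\lambda(e)=0$ and labels take values in $\{0,1\}$, the modified labeling $\lambda'$ of Definition~\ref{def:rel-label} satisfies $\lambda'(e)=1$ and agrees with $\lambda$ on all other edges, so $\lambda'$ dominates $\lambda$ edgewise. Turning a $0$-edge into a $1$-edge can only create paths containing a $1$-edge, never destroy them; hence $E(\gfitch(T,\lambda))\subseteq E(\gfitch(T,\lambda'))$. Consequently the two graphs differ precisely when the inclusion is strict, i.e.\ when there is some leaf $z$ with $xz\in E(\gfitch(T,\lambda'))$ but $xz\notin E(\gfitch(T,\lambda))$.

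To finish, I would compute the edges incident to $x$ in $\gfitch(T,\lambda')$ explicitly. For any $z\in L(T)\setminus\{x\}$ the unique path from $x$ to $z$ begins with $e$, which now carries label $1$, so $xz\in E(\gfitch(T,\lambda'))$ for \emph{every} such $z$; in other words $x$ becomes universally adjacent. Combining this with the previous paragraph, $\gfitch(T,\lambda)\neq\gfitch(T,\lambda')$ holds if and only if there exists $z\in L(T)\setminus\{x\}$ with $xz\notin E(\gfitch(T,\lambda))$, which is exactly the claimed equivalence via Definition~\ref{def:rel-label}.

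I do not anticipate a genuine obstacle: the heart of the argument is simply that flipping an outer edge of a leaf to a $1$-edge makes that leaf adjacent to all others, so relevance is equivalent to the leaf not already being universally adjacent. The only point demanding a little care is the bookkeeping that no \emph{other} edges of the Fitch graph change, which is handled by the first observation, namely that $x$ is a leaf and $e$ is the lone edge incident to it.
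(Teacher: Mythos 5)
Your proposal is correct and follows essentially the same route as the paper's proof: both rest on the two key facts that flipping the label of the unique edge incident to the leaf $x$ leaves all Fitch-graph edges $yy'$ with $y,y'\in L(T)\setminus\{x\}$ untouched, and that under $\lambda'$ the leaf $x$ becomes adjacent to every other leaf, so the graphs differ precisely when some $z$ satisfies $xz\notin E(\gfitch(T,\lambda))$. Your monotonicity framing merely merges the paper's two separately argued directions into a single strict-inclusion statement, which is a harmless stylistic compression of the same argument.
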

  \begin{proof}
    Assume that $e=(v,x)$ is a relevantly-labeled outer 0-edge.  Hence, for
    $(T,\lambda')$ with $\lambda'(f)=\lambda(f)$ for all
    $f\in E(T)\setminus\{e\}$ and $\lambda'(e)=1$, it holds that
    $\gfitch(T,\lambda)\neq \gfitch(T,\lambda')$.  Since we only changed the
    label of the outer edge $(v,x)$, it still holds that
    $yy'\in E(\gfitch(T,\lambda'))$ if and only if
    $yy'\in E(\gfitch(T,\lambda))$ for all distinct
    $y,y'\in L(T)\setminus \{x\}$.  Moreover, since $\lambda'(e)=1$ and
    $e=(v,x)$ is an outer edge, we have $xz\in E(\gfitch(T,\lambda'))$ for
    all $z\in L(T)\setminus \{x\}$.  Thus,
    $\gfitch(T,\lambda)\neq \gfitch(T,\lambda')$ implies that
    $xz\notin E(\gfitch(T,\lambda))$ for at least one
    $z\in L(T)\setminus \{x\}$.
    
    Now, suppose that $zx\notin E(\gfitch(T,\lambda))$ for some
    $z\in L(T)\setminus \{x\}$.  Clearly, this implies that the outer edges
    $e=(v,x)$ and $f=(w,z)$ must be 0-edges and changing one of them to a
    1-edge would imply that $xz$ becomes an edge in the Fitch graph.  Hence,
    $e$ is relevantly-labeled in $(T,\lambda)$.  
  \end{proof}
  
  \begin{lemma}
    For every tree $(T,\lambda)$ and every inner 0-edge $e$ of $T$, it holds 
    $\gfitch(T,\lambda)=\gfitch(T_{/e},\lambda_{/e})$.
    \label{lem:contract-0-edge}
  \end{lemma}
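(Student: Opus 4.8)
The plan is to exploit the fact that the Fitch-graph edge relation depends only on whether the unique leaf-to-leaf path carries a $1$-edge, so contracting a $0$-edge should be invisible to it. First I would record the trivial part: by Definition~\ref{def:contract} we have $L(T_{/e})=L(T)$, hence $V(\gfitch(T,\lambda))=V(\gfitch(T_{/e},\lambda_{/e}))$. It therefore remains to compare edge sets, and by Definition~\ref{def:FitchG} this amounts to showing, for every pair of distinct leaves $a,b\in L(T)$, that the path connecting them in $T$ contains a $1$-edge if and only if the path connecting them in $T_{/e}$ does.

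The key step is to relate the two paths via the contraction map. Writing $e=(x,y)$ with $y\prec_T x$, contraction of $e$ identifies $x$ and $y$ into a single vertex and leaves all other vertices and all other edges (together with their labels, by Definition~\ref{def:contract}) intact. I would argue that for any two leaves $a,b$ the unique path $P$ between them in $T$ maps, under this identification, to a walk in $T_{/e}$ whose edge set is exactly $E(P)\setminus\{e\}$; since $T_{/e}$ is again a tree, this edge set is precisely the unique $a$--$b$ path $P'$ in $T_{/e}$. Thus $E(P')=E(P)\setminus\{e\}$ when $e\in E(P)$, and $E(P')=E(P)$ otherwise. In both cases the only edge that can possibly be dropped is $e$ itself, and every retained edge $f\neq e$ satisfies $\lambda_{/e}(f)=\lambda(f)$.

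Because $e$ is a $0$-edge, $\lambda(e)=0$, so $e$ never witnesses the ``contains a $1$-edge'' condition. Hence $P$ contains an edge $f$ with $\lambda(f)=1$ if and only if $P$ contains such an edge $f\neq e$, if and only if $P'$ contains an edge with $\lambda_{/e}(f)=1$. This yields $ab\in E(\gfitch(T,\lambda)) \iff ab\in E(\gfitch(T_{/e},\lambda_{/e}))$, completing the proof.

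I do not expect a genuine obstacle here; the only point requiring a careful (though routine) justification is the claim that contraction sends the $a$--$b$ path in $T$ to the $a$--$b$ path in $T_{/e}$ with only $e$ possibly removed. I would phrase this via the standard correspondence between edges of $T_{/e}$ and edges of $T$ other than $e$, and uniqueness of paths in trees, rather than through an explicit case analysis on whether $e$ lies below $\lca_T(a,b)$.
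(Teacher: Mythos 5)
Your proposal is correct and follows essentially the same route as the paper's proof, whose core observation is likewise that contracting the $0$-edge $e$ leaves the number (and hence presence) of $1$-edges on every leaf-to-leaf path unchanged. The only cosmetic difference is that you justify this via the standard edge correspondence $E(P')=E(P)\setminus\{e\}$ under contraction, whereas the paper phrases it in terms of the possible change of $\lca_T(x,y)$ when it equals an endpoint of $e$; both come to the same thing.
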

  \begin{proof}
    Suppose that $(T,\lambda)$ contains an inner 0-edge $e=(u,v)$.  The
    contraction of this edge does not change the number of 1-edges along the
    paths connecting any two leaves.  It affects the least common ancestor of
    $x$ and $y$, if $\lca_T(x, y) = u$ or $\lca_T (x, y) = v$.  In either
    case, however, the number of 1-edges between $\lca_T (x, y)$ and the
    leaves $x$ and $y$ remains unchanged.  Hence, we have
    $\gfitch(T,\lambda) = \gfitch(T_{/e},\lambda_{/e})$.  
  \end{proof}
  
  \begin{lemma}
    If $(T,\lambda)$ is a Fitch-least-resolved tree w.r.t.\
    $\gfitch(T,\lambda)$, then it does neither contain inner 0-edges nor
    inner 1-edges that are not relevantly-labeled.
    \label{lem:innerEdgesinLRT}
  \end{lemma}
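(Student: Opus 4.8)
The plan is to prove both assertions by contraposition against the defining property of Fitch-least-resolved trees (Def.~\ref{def:FLRT}): in each case I will exhibit a tree $T'\neq T$ that is displayed by $T$ together with a labeling $\lambda'$ such that $\gfitch(T',\lambda')=\gfitch(T,\lambda)$, contradicting the assumption that $(T,\lambda)$ is Fitch-least-resolved. The common witness will always be a single inner-edge contraction: for an inner edge $e$, the tree $T_{/e}$ has $L(T_{/e})=L(T)$, is obtained from the (trivial) restriction $T_{|L(T)}=T$ by one inner edge contraction, and hence is displayed by $T$; moreover $T_{/e}\neq T$ since contraction strictly decreases the number of inner edges. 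Thus $T_{/e}$ is always a legitimate witness in the sense of Def.~\ref{def:FLRT}.

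For the first part, suppose $(T,\lambda)$ contains an inner 0-edge $e$. Then Lemma~\ref{lem:contract-0-edge} gives directly $\gfitch(T_{/e},\lambda_{/e})=\gfitch(T,\lambda)$, while $T_{/e}$ is a displayed tree distinct from $T$; this contradicts Fitch-least-resolvedness. Hence $(T,\lambda)$ contains no inner 0-edge.

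For the second part, suppose $(T,\lambda)$ contains an inner 1-edge $e$ that is \emph{not} relevantly-labeled. By Def.~\ref{def:rel-label}, flipping the label of $e$ to $0$ --- that is, passing to the labeling $\lambda'$ with $\lambda'(e)=0$ and $\lambda'(f)=\lambda(f)$ for all $f\in E(T)\setminus\{e\}$ --- leaves the Fitch graph unchanged, so $\gfitch(T,\lambda')=\gfitch(T,\lambda)$. Now $e$ is an inner 0-edge of $(T,\lambda')$, and by Def.~\ref{def:contract} the contracted labeling $\lambda'_{/e}$ (the restriction of $\lambda'$ to the non-contracted edges) is a valid labeling of $T_{/e}$. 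Applying Lemma~\ref{lem:contract-0-edge} to $(T,\lambda')$ yields $\gfitch(T_{/e},\lambda'_{/e})=\gfitch(T,\lambda')=\gfitch(T,\lambda)$. Since $T_{/e}\neq T$ is displayed by $T$, this again contradicts the hypothesis.

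The argument is essentially a composition of Lemma~\ref{lem:contract-0-edge} with the definition of relevant labeling, so I do not anticipate a genuine obstacle. The only points requiring care are purely bookkeeping: confirming that $T_{/e}$ is indeed displayed by $T$ and distinct from it, and that the restricted labelings $\lambda_{/e}$ and $\lambda'_{/e}$ are well-defined labelings of the contracted tree. Both follow immediately from Def.~\ref{def:contract} together with the fact that leaf sets are preserved under inner-edge contraction.
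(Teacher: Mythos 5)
Your proof is correct and follows essentially the same route as the paper's: contraposition in both parts, Lemma~\ref{lem:contract-0-edge} to dispose of inner 0-edges, and for a non-relevantly-labeled inner 1-edge the label flip $\lambda'(e)=0$ (which preserves the Fitch graph by Def.~\ref{def:rel-label}) followed by contraction. The only cosmetic difference is that you apply the contraction lemma directly to $(T,\lambda')$ and exhibit $T_{/e}$ explicitly as the witness, whereas the paper reaches the same conclusion by invoking its first part on $(T,\lambda')$; the underlying argument is identical.
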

  \begin{proof}
    Suppose first, by contraposition, that $(T,\lambda)$ contains an inner
    0-edge $e=(u,v)$.  By Lemma~\ref{lem:contract-0-edge},
    $\gfitch(T,\lambda) = \gfitch(T_{/e},\lambda_{/e})$, and thus,
    $(T,\lambda)$ is not Fitch-least-resolved.
    
    Assume now, by contraposition, that $(T,\lambda)$ contains an inner
    1-edge $e$ that is not relevantly-labeled.  Hence, we can put
    $\lambda'(e)=0$ and $\lambda(f)=\lambda(f')$ for all
    $f\in E(T)\setminus \{e\}$ and obtain
    $\gfitch(T,\lambda) = \gfitch(T,\lambda')$.  Since $(T,\lambda')$
    contains an inner 0-edge, it cannot be Fitch-least-resolved. Therefore
    and by definition, $(T,\lambda)$ cannot be Fitch-least-resolved as well.
  \end{proof}
  
  The converse of Lemma~\ref{lem:innerEdgesinLRT} is, however, not always
  satisfied.  To see this, consider the Fitch graph $G \simeq K_3$ with
  vertices $x,y$ and $z$.  Now, consider the tree $(T,\lambda)$ where $T$ is
  the triple $xy|z$, the two outer edges incident to $y$ and $z$ are 0-edges
  while the remaining two edges in $T$ are 1-edges. It is easy to verify that
  $G=\gfitch(T,\lambda)$.  In particular, the inner edge $e$ is
  relevantly-labeled, since if $\lambda'(e) = 0$ we would have
  $yz\notin E(\gfitch(T,\lambda'))$.  However, $(T,\lambda)$ is not
  Fitch-least-resolved w.r.t.\ $G$, since the star tree $T'$ on the three
  leaves $x,y,z$ is displayed by $T$, and the labeling $\lambda'$ with
  $\lambda'(e)=1$ for all $e\in E(T')$ provides a tree $(T',\lambda')$ with
  $G=\gfitch(T',\lambda')$.
  
  \begin{lemma}
    A tree $(T,\lambda)$ is a Fitch-least-resolved tree w.r.t.\
    $\gfitch(T,\lambda)$ if and only if
    $\gfitch(T,\lambda) \neq \gfitch(T_{/e},\lambda')$ holds for all
    labelings $\lambda'$ of $T_{/e}$ and all inner edges $e$ in $T$.
    \label{lem:no-cont}
  \end{lemma}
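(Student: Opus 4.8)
The plan is to prove the two implications separately, the direction ``Fitch-least-resolved $\Rightarrow$ single-edge condition'' being immediate and the converse requiring a short reduction. For the easy direction, suppose $(T,\lambda)$ is Fitch-least-resolved. For any inner edge $e$ of $T$, the tree $T_{/e}$ is obtained from $T$ by a single inner edge contraction, hence is displayed by $T$, and since $e$ is an inner edge we have $T_{/e}\neq T$. By Def.~\ref{def:FLRT} this forces $\gfitch(T,\lambda)\neq \gfitch(T_{/e},\lambda')$ for every labeling $\lambda'$ of $T_{/e}$, which is exactly the asserted condition.

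For the converse I would argue by contraposition: assuming $(T,\lambda)$ is \emph{not} Fitch-least-resolved, I would produce a single inner edge whose contraction, under a suitable relabeling, reproduces $\gfitch(T,\lambda)$. By hypothesis there is a tree $T'\neq T$ displayed by $T$ and a labeling $\lambda'$ with $\gfitch(T,\lambda)=\gfitch(T',\lambda')$. Since equal Fitch graphs must have the same vertex set, $L(T')=L(T)$, so $T'$ arises from $T$ purely by contracting a nonempty set $\{e_1,\dots,e_m\}$ of inner edges (no restriction being involved). Setting $e\coloneqq e_1$, the tree $T'$ is then obtained from $T_{/e}$ by contracting the images of $e_2,\dots,e_m$; in particular $T'$ is displayed by $T_{/e}$.

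The crucial auxiliary step is a \emph{lifting} fact: whenever a tree $T''$ is obtained from a tree $R$ with $L(R)=L(T'')$ by a series of inner edge contractions, every labeling of $T''$ lifts to a labeling of $R$ with the same Fitch graph. I would prove this by induction on the number of contracted edges, the single inductive step being a direct application of Lemma~\ref{lem:contract-0-edge} read in reverse: given a labeling $\mu$ of $R_{/f}$, define $\nu$ on $R$ by $\nu(f)=0$ and $\nu(g)=\mu(g)$ for all remaining edges $g$, so that $f$ is an inner $0$-edge, $\nu_{/f}=\mu$, and hence $\gfitch(R,\nu)=\gfitch(R_{/f},\mu)$ by Lemma~\ref{lem:contract-0-edge}. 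Applying this with $R=T_{/e}$ and $T''=T'$ yields a labeling $\lambda''$ of $T_{/e}$ with $\gfitch(T_{/e},\lambda'')=\gfitch(T',\lambda')=\gfitch(T,\lambda)$, contradicting the single-edge condition and completing the contrapositive.

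The only genuinely non-routine ingredient is this lifting lemma, and even it reduces cleanly to Lemma~\ref{lem:contract-0-edge}; the remaining work is bookkeeping about the ``displays'' relation. Accordingly I expect the main points requiring care to be that inner edge contractions may be performed in any order, so that contracting $e_1$ first and $e_2,\dots,e_m$ afterwards really does reconstruct $T'$, and that every tree arising in the argument keeps the leaf set $L(T)$, so that all the Fitch graphs being compared share a common vertex set.
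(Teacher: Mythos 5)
Your proposal is correct and follows essentially the same route as the paper's proof: both directions match, and your ``lifting'' lemma (assigning label $0$ to each yet-uncontracted inner edge and invoking Lemma~\ref{lem:contract-0-edge}) is exactly the paper's construction of $\lambda''$ on $T_{/e_1}$, merely phrased as an induction running from $T'$ up to $T_{/e}$ rather than as repeated $0$-edge contractions running down from $T_{/e_1}$ to $T'$. The care points you flag (fixing the order of contractions and preservation of the common leaf set) are handled the same way in the paper.
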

  \begin{proof}
    Let $(T,\lambda)$ be an edge-labeled tree. Suppose first that
    $(T,\lambda)$ is Fitch-least-resolved w.r.t.\ $\gfitch(T,\lambda)$.  For
    every inner edge $e$ in $T$, the tree $T_{/e}\ne T$ is displayed by
    $T$. By definition of Fitch-least-resolved trees, we have
    $\gfitch(T,\lambda)\neq \gfitch(T_{/e},\lambda')$ for every labeling
    $\lambda'$ of $T_{/e}$.
    
    For the converse, assume, for contraposition, that $(T,\lambda)$ is not
    Fitch-least-resolved w.r.t.\ $\gfitch(T,\lambda)$.  Hence, there is a
    tree $(T',\lambda')$ such that $T'\ne T$ is displayed by $T$ and
    $\gfitch(T,\lambda) = \gfitch(T',\lambda')$.  Clearly, $T$ and $T'$ must
    have the same leaf set.  Therefore and since $T'<T$, the tree $T'$ can be
    obtained from $T$ by a sequence of contractions of inner edges
    $e_1,\dots,e_{\ell}$ (in this order) where $\ell\ge 1$.  If $\ell=1$,
    then we have $T'=T_{/e_1}$ and, by assumption,
    $\gfitch(T,\lambda) = \gfitch(T_{/e_1},\lambda')$.  Thus, we are done.
    Now assume $\ell\ge 2$.  We consider the tree $(T_{/e_1},\lambda'')$
    where $\lambda''(f)=\lambda'(f)$ if $f \in E(T')$ and $\lambda''(f)=0$
    otherwise.  Hence, $(T',\lambda')$ can be obtained from
    $(T_{/e_1},\lambda'')$ by stepwise contraction of the 0-edges
    $e_2,\dots,e_{\ell}$, and by keeping the labeling of $\lambda''$ for the
    remaining edges in each step.  Hence, we can repeatedly apply
    Lemma~\ref{lem:contract-0-edge} to conclude that
    $\gfitch(T_{/e_1},\lambda'')=\gfitch(T',\lambda')$.  Together with
    $\gfitch(T,\lambda) = \gfitch(T',\lambda')$, we obtain
    $\gfitch(T,\lambda) = \gfitch(T_{/e_1},\lambda'')$, which completes the
    proof.  
  \end{proof}
  
  As a consequence of Lemma~\ref{lem:no-cont}, it suffices to show that
  $\gfitch(T,\lambda) = \gfitch(T_{/e},\lambda')$ for some inner edge
  $e\in E(T)$ and some labeling $\lambda'$ for $T_{/e}$ to show that
  $(T,\lambda)$ is not Fitch-least-resolved tree w.r.t.\
  $\gfitch(T,\lambda)$. The next result characterizes Fitch-least-resolved
  trees and is very similar to the results for ``directed'' Fitch graphs of
  0/1-edge-labeled trees (cf.\ Lemma~11(1,3) in \cite{Geiss:18a}).  However,
  we note that we defined Fitch-least-resolved in terms of all possible
  labelings $\lambda'$ for trees $T'$ displayed by $T$, whereas
  \citet{Geiss:18a} call $(T,\lambda)$ least-resolved whenever
  $(T_{/e},\lambda_{/e})$ results in a (directed) Fitch graph that differs
  from the one provided by $(T,\lambda)$ for every $e\in E(T)$.
  
  \begin{theorem}
    Let $G$ be a Fitch graph, and $(T,\lambda)$ be a tree such that
    $G=\gfitch(T,\lambda)$.  If all independent sets of $G$ are of size one
    (except possibly for one independent set), then $(T,\lambda)$ is
    Fitch-least-resolved for $G$ if and only if it is a star tree.\\
    If $G$ has at least two independent sets of size at least two, then
    $(T,\lambda)$ is Fitch-least-resolved for $G$ if and only if
    \begin{itemize}
      \item[(a)] every inner edge of $(T,\lambda)$ is a 1-edge,
      \item[(b)] for every inner vertex $v\in V^0(T)$ there are (at least) two
      relevantly-labeled outer 0-edges $(v,x), (v,y)$ in $(T,\lambda)$
    \end{itemize}
    In particular, if distinct $x, y\in L(T)$ are in the same independent set
    of $G$, then they have the same parent in $T$ and $(\parent(x), x)$,
    $(\parent(x), y)$ are relevantly-labeled outer 0-edges.
    \label{thm:LRT-rsFitch}
  \end{theorem}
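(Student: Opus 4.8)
The plan is to reduce everything to a single contraction-and-relabeling lemma together with one forbidden-configuration argument, and then to read off all the stated equivalences. Throughout I may assume $(T,\lambda)$ is Fitch-least-resolved, so by Lemma~\ref{lem:innerEdgesinLRT} every inner edge is a (relevantly-labeled) $1$-edge; this is exactly condition~(a) and it is used repeatedly below. I first record the equivalence ``$v$ has at least two relevantly-labeled outer $0$-edges'' $\iff$ ``$v$ has at least two outer $0$-edge leaf children'': if $(v,x_1),(v,x_2)$ are distinct outer $0$-edges with $x_1,x_2$ leaves, the path $x_1-v-x_2$ carries no $1$-edge, so $x_1x_2\notin E(\gfitch(T,\lambda))$ and both edges are relevantly-labeled by Lemma~\ref{lem:rel-label}; the reverse implication is immediate.

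The crux is the following claim. Let $v\in V^0(T)$ be a non-root inner vertex with $u\coloneqq\parent(v)$, so that $e=(u,v)$ is an inner edge, and suppose $v$ has at most one outer $0$-edge leaf child. Then $(T,\lambda)$ is not Fitch-least-resolved. To see this I contract $e$ (Def.~\ref{def:contract}), write $u'$ for the merged vertex and $C\coloneqq L(T(v))$, and define $\lambda'$ on $T_{/e}$ by setting $\lambda'(u',d)=1$ for every former child $d$ of $v$ and keeping all other labels. I then check $\gfitch(T_{/e},\lambda')=\gfitch(T,\lambda)$ by inspecting three families of leaf pairs. Pairs inside $L(T)\setminus C$ keep their path and labels, hence their adjacency. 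For a cross pair $a\in C$, $b\notin C$ the edge $e$ (a $1$-edge) lay on the old path, so $ab\in E$; the new path meets the $1$-edge $(u',d_a)$ at the top of $a$'s branch, so it is an edge as well, whether $b$ lies above or below $u'$. Finally, for $a,a'\in C$ below distinct former children $d\ne d'$ of $v$, both the old and the new adjacency must be edges: using~(a), a $0$-labeled edge $(v,d)$ forces $d$ to be a leaf child, so if both $(v,d),(v,d')$ were $0$-edges we would obtain two outer $0$-edge leaf children, contradicting our hypothesis; hence at least one of $(v,d),(v,d')$ is a $1$-edge and $aa'\in E$, matching $\lambda'$. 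Pairs below a common former child are untouched. Thus $T_{/e}$ again realizes $G$, so by Lemma~\ref{lem:no-cont} $(T,\lambda)$ is not least resolved. The root is handled symmetrically: contract an inner edge $(\rho_T,w)$ with $w$ an inner child and set $\lambda'(\rho'_T,c)=1$ for the remaining children $c$ of $\rho_T$; since $\rho_T$ has no parent every cross path runs downward through one of these $1$-edges, and the distinct-children bookkeeping now takes place at $\rho_T$.

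With the crux in hand the theorem assembles quickly. Condition~(a) is Lemma~\ref{lem:innerEdgesinLRT}. For~(b), every inner vertex is incident with an inner edge on which it is the child, namely its parent edge, except the root, which in the non-star case has an inner child; applying the crux (and its root variant) shows each inner vertex must have at least two outer $0$-edge leaf children, i.e.\ two relevantly-labeled outer $0$-edges. The ``in particular'' clause follows from~(a): if $xy\notin E$ then the $x$--$y$ path carries no $1$-edge, hence no inner edge, so $x,y$ share a parent via two outer $0$-edges, which are relevantly-labeled by Lemma~\ref{lem:rel-label}. For the converse in the multi-part case I invoke the forbidden configuration: given any inner edge $e=(u,v)$, condition~(b) yields children $x_1,x_2$ of $v$ and $y_1,y_2$ of $u$ attached by outer $0$-edges; then $x_1x_2,y_1y_2\notin E$ while all four cross pairs are edges (their paths contain the $1$-edge $e$), so $G$ induces $K_{2,2}$ on $\{x_1,x_2,y_1,y_2\}$. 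In any $T_{/e}$ these four leaves are children of $u'$, and forcing $x_1x_2,y_1y_2\notin E'$ makes all four of their outer edges $0$-labeled, whence $x_1y_1\notin E'$, which is impossible. Thus no contraction reproduces $G$ and, by Lemma~\ref{lem:no-cont}, $(T,\lambda)$ is least resolved. The single-big-part case is similar in spirit: a star has no inner edges and is least resolved by Lemma~\ref{lem:no-cont}, while conversely an inner edge would, by the crux applied to both endpoints, force two disjoint $0$-edge sibling pairs lying in two distinct independent sets of size $\ge 2$, contradicting the hypothesis; hence $T$ is a star.

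I expect the main obstacle to be the relabeling verification in the crux, specifically the ``distinct former children'' case: one must rule out that contraction creates a spurious edge between two leaves of $C$ that were non-adjacent, and this is exactly where condition~(a) (no inner $0$-edges) is indispensable, together with the standing bound of at most one outer $0$-edge leaf child at $v$. The remaining bookkeeping --- the root boundary case, and the observation that a lone outer $0$-edge leaf child is necessarily a universal vertex, so relabeling its edge to $1$ is harmless --- is routine but must be carried out carefully to keep the three pair-families mutually consistent.
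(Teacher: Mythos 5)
Your proposal is correct and takes essentially the same route as the paper's proof: condition (a) via Lemma~\ref{lem:innerEdgesinLRT}, a relabel-and-contract argument (resting on Lemma~\ref{lem:rel-label} and Lemma~\ref{lem:no-cont}, with the root handled separately by contracting an inner child edge) for the necessity of (b), and the identical four-leaf obstruction $x_1x_2,y_1y_2\notin E$ with all cross pairs in $E$ for sufficiency. The only differences are organizational: you fold the paper's two-step necessity argument (first obtaining one relevantly-labeled outer 0-edge by relabeling-then-contracting, then bootstrapping to a second via Lemma~\ref{lem:rel-label}) into a single ``at most one outer 0-edge leaf child'' crux using the universality of a lone 0-child, and you settle the star case through that crux rather than the paper's shortcut that the star tree realizes $G$ and is displayed by every tree on $V(G)$.
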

  \begin{proof}
    Suppose that every independent set of $G$ is of size one (except possibly
    for one). Let $(T,\lambda)$ be the star tree where
    $\lambda((\rho_T,v)) =1$ if and only if $v$ is the single element in an
    independent set of size one. It is now a simple exercise to verify that
    $G=\gfitch(T,\lambda)$. Since $(T,\lambda)$ is a star tree, it is clearly
    Fitch-least-resolved.  The converse follows immediately from this
    construction together with fact that the star tree is displayed by all
    trees with leaf set $V(G)$.  In the following we assume that $G$ contains
    at least two independent sets of size at least two.
    
    First suppose that $(T,\lambda)$ is Fitch-least resolved w.r.t.\
    $\gfitch(T,\lambda)$.  By Lemma~\ref{lem:innerEdgesinLRT}, Condition~(a)
    is satisfied.  We continue with showing that Condition~(b) is
    satisfied. In particular, we show first that every inner vertex
    $v\in V^0(T)$ is incident to at least one relevantly-labeled outer
    0-edge.  To this end, assume, for contradiction, that $(T,\lambda)$
    contains an inner vertex $v\in V^0(T)$ for which this property is not
    satisfied.
    
    That is, $v$ is either (i) incident to 1-edges only (incl.\
    $\lambda((\parent_T(v),v))=1$ in case $v\neq \rho_T$ by Condition~(a)) or
    (ii) there is an outer 0-edge $(v,x)$ that is not relevantly-labeled.  In
    Case (i), we put $\lambda'=\lambda$.  In Case (ii), we obtain a new
    labeling $\lambda'$ by changing the label of every outer 0-edge $(v,x)$
    with $x\in \child_T(v) \cap L(T)$ to ``1'' while keeping the labels of
    all other edges.  This does not affect the Fitch graph, since every such
    0-edge is not relevantly-labeled, and thus, $zx\in E(\gfitch(T,\lambda))$
    for all $z\in L(T)\setminus \{x\}$ by Lemma~\ref{lem:rel-label}.  Hence,
    for both Cases (i) and (ii), for the labeling $\lambda'$ \emph{all} outer
    edges $(v,x)$ with $x\in \child(v)\cap L(T)$ are labeled as 1-edges, $v$
    is incident to 1-edges only (by Condition~(a)) and
    $\gfitch(T,\lambda) = \gfitch(T,\lambda')$.  We thus have
    $xy\in E(\gfitch(T,\lambda')) =E(\gfitch(T,\lambda))$ for all
    $x\in L(T(v))$ and $y\in L(T)\setminus L(T(v))$.  Now, if $v\neq \rho_T$
    let $e=(u\coloneqq\parent_T(v),v)$.  Otherwise, if $v=\rho_T$ then let
    $e=(v,u)$ for some inner vertex $u\in \child_T(v)$.  Note, such an inner
    edge $(\rho_T,u)$ exists since $G$ contains at least two independent sets
    of size at least two and $T$ is not a star tree as shown above.  Now
    consider the tree $(T_{/e},\lambda'_{/e})$, and denote by $w$ the vertex
    obtained by contraction of the inner edge $e$. By construction, every path
    in $T_{/e}$ connecting any $x\in L(T(v))$ and
    $y\in L(T)\setminus L(T(v))$ must contain some 1-edge $(w,w')$ with
    $w'\in\child_{T_{/e}}(w)=\child_{T}(v)$ implying
    $xy\in E(\gfitch(T_{/e},\lambda'_{/e}))$.  Moreover, the edge contraction
    does not affect whether or not the path between any vertices within
    $L(T(v))$ or within $L(T)\setminus L(T(v))$ contains a 1-edge.  Hence,
    $\gfitch(T,\lambda) = \gfitch(T,\lambda') =
    \gfitch(T_{/e},\lambda'_{/e})$, and $(T,\lambda)$ is not
    Fitch-least-resolved; a contradiction.  In summary, every inner vertex
    $v$ must be incident to at least one relevantly-labeled outer 0-edge
    $(v,x)$. By Lemma~\ref{lem:rel-label}, $(v,x)$ is a relevantly-labeled
    outer 0-edge if and only if there is a vertex $z\in L(T)\setminus \{x\}$
    such that $zx\notin E(\gfitch(T,\lambda))$. By Condition (a), all inner
    edges in $(T,\lambda)$ are 1-edges, and thus, there is only one place
    where the leaf $z$ can be located in $T$, namely as a leaf adjacent to
    $v$. In particular, the outer edge $(v,z)$ is a relevantly-labeled
    0-edge, since $zx\notin E(\gfitch(T,\lambda))$. Therefore, Condition (b)
    is satisfied for every inner vertex $v$ of $T$.
    
    The latter arguments also show that all distinct vertices $x,y\in L(T)$
    that are contained in the same independent set must have the same parent.
    Clearly, $(\parent(x), x)$, $(\parent(x), y)$ must be outer 0-edges,
    since otherwise $xy\in E(\gfitch(T,\lambda))$. Hence, the final statement
    of the theorem is satisfied.
    
    Now let $(T,\lambda)$ be such that Conditions~(a) and~(b) are satisfied.
    First observe that none of the outer edges can be contracted without
    changing $L(T)$. Now let $e = (u,v)$ be an inner edge.  By Condition (a),
    $e$ is a 1-edge. Moreover, by Condition (b), vertex $u$ and $v$ are both
    incident to at least two relevantly-labeled outer 0-edges. Hence, there
    are outer 0-edges $(u,x),(u,x'),(v,y),(v,y')$ with pairwise distinct
    leaves $x,x',y,y'$ in $T$.  Since $(u,v)$ is a 1-edge, we have
    $xy,xy',x'y,x'y' \in E(\gfitch(T,\lambda))$.  Moreover, we have
    $xx',yy'\notin E(\gfitch(T,\lambda))$.  Now consider the tree
    $(T_{/e}, \lambda')$ with an arbitrary labeling $\lambda'$ and denote by
    $w$ the vertex obtained by contraction of the inner edge $(u,v)$.  In
    this tree, $x,x',y,y'$ all have the same parent $w$.  If
    $\lambda'((w,x))=1$ \emph{or} $\lambda'((w,y))=1$, we have
    $xx'\in\gfitch(T_{/e}, \lambda')$ or
    $yy'\in E(\gfitch(T_{/e}, \lambda'))$, respectively. If
    $\lambda'((w,x))=0$ \emph{and} $\lambda'((w,y))=0$, we have
    $xy\notin E(\gfitch(T_{/e}, \lambda'))$.  Hence, it holds
    $\gfitch(T_{/e}, \lambda')\ne \gfitch(T, \lambda)$ in both cases.  Since
    the inner edge $e$ and $\lambda'$ were chosen arbitrarily, we can apply
    Lemma~\ref{lem:no-cont} to conclude that $(T,\lambda)$ is
    Fitch-least-resolved.  
  \end{proof}
  
  As a consequence of Thm.~\ref{thm:LRT-rsFitch}, Fitch-least-resolved trees
  can be constructed in polynomial time. To be more precise, if a Fitch graph
  $G$ contains only independent sets of size one (except possibly for one),
  we can construct a star tree $T$ with edge labeling $\lambda$ as specified
  in the proof of Thm.~\ref{thm:LRT-rsFitch} to obtain the 0/1-edge labeled
  tree $(T,\lambda)$ that is Fitch-least-resolved w.r.t.\ $G$.  This
  construction can be done in $O(|V(G)|)$ time.
  
  Now, assume that $G$ has at least two independent sets of size at least
  two.  Let $\mathcal{I}$ be the set of independent sets of $G$ and
  $I_1,\dots,I_k\in \mathcal{I}$, $k\geq 2$ be all independent sets of size
  at least two.  We now construct a tree $(T,\lambda)$ with root $\rho_T$ as
  follows: First we add $k$ vertices $v_1 = \rho_T$ and $v_2,\dots,v_{k}$,
  and add inner edges $e_i=(v_i,v_{i+1})$ with label $\lambda(e_i)=1$,
  $1\leq i\leq k-1$.  Each vertex $v_i$ gets as children the leaves in $I_i$,
  $1\leq i\leq k$ and all these additional outer edges obtain label ``0''.
  Finally, all elements in the remaining independent sets
  $\mathcal{I}\setminus \{I_1,\dots,I_k\}$ are of size one and are connected
  as leaves via outer 1-edges to the root $v_1=\rho_T$.  It is an easy
  exercise to verify that $T$ is a phylogenetic tree and that
  $\gfitch(T,\lambda)=G$.  In particular, Thm.~\ref{thm:LRT-rsFitch} implies
  that $(T,\lambda)$ is Fitch-least-resolved w.r.t.\ $G$.  This construction
  can be done in $O(|V(G)|)$ time. We summarize this discussion as
  \begin{proposition}\label{prop:FLRT-polytime}
    For a given Fitch graph $G$, a Fitch-least-resolved tree can be
    constructed in $O(|V(G)|)$ time.
  \end{proposition}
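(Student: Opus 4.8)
The plan is to prove the proposition constructively, relying entirely on the characterization of Fitch-least-resolved trees in Theorem~\ref{thm:LRT-rsFitch}, and to split into the same two cases that appear there. Since $G$ is a Fitch graph it is complete multipartite (Prop.~\ref{prop:fitch}), so its independent sets $\mathcal{I}=\{I_1,\dots,I_k\}$ are exactly its maximal sets of pairwise non-adjacent vertices and form a partition of $V(G)$. First I would record that this partition, together with the information which classes have size one and which have size at least two, can be read off in time linear in $|V(G)|$ (counting the graph as given by its multipartite structure).

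First, I would treat the case in which at most one independent set has size $\ge 2$. Here Theorem~\ref{thm:LRT-rsFitch} tells us that the Fitch-least-resolved trees are precisely the star trees, so it suffices to output the star $T$ with root $\rho_T$ whose leaves are $V(G)$, labeling the outer edge $(\rho_T,v)$ by $1$ exactly when $v$ is the single element of a size-one class. A direct check (already carried out in the proof of Theorem~\ref{thm:LRT-rsFitch}) shows $\gfitch(T,\lambda)=G$, and building a star on $|V(G)|$ leaves together with its labels is clearly an $O(|V(G)|)$ operation.

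Second, for the case of at least two size-$\ge 2$ classes $I_1,\dots,I_k$ with $k\ge 2$, the plan is to build the ``caterpillar'' tree described just above the proposition: introduce inner vertices $v_1=\rho_T,\dots,v_k$ joined by inner edges $e_i=(v_i,v_{i+1})$ labeled $1$, hang the leaves of $I_i$ below $v_i$ via outer $0$-edges, and attach every vertex lying in a size-one class as a leaf of $\rho_T$ via an outer $1$-edge. I would then verify that this $(T,\lambda)$ satisfies conditions (a) and (b) of Theorem~\ref{thm:LRT-rsFitch}: every inner edge is a $1$-edge by construction, giving (a); and each $v_i$ carries two leaves of $I_i$ whose outer $0$-edges are relevantly-labeled because those two leaves are non-adjacent in $G$ (Lemma~\ref{lem:rel-label}), giving (b). Together with $\gfitch(T,\lambda)=G$ this yields Fitch-least-resolvedness. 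The tree has at most $|V(G)|$ leaves and fewer than $|V(G)|$ inner vertices, so its construction and labeling again cost $O(|V(G)|)$.

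The only genuinely delicate point I expect is the time bound rather than the correctness: one must argue that both the case distinction and the construction touch each vertex only a constant number of times, i.e.\ that we never need to examine the (potentially quadratically many) edges of the complete multipartite graph $G$ once its independent-set partition is available. Correctness in each case is immediate from Theorem~\ref{thm:LRT-rsFitch} and Lemma~\ref{lem:rel-label}, so the proof reduces to assembling these two explicit trees and reading off the linear running time.
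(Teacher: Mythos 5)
Your proposal is correct and follows essentially the same route as the paper: the paper's argument preceding Prop.~\ref{prop:FLRT-polytime} performs exactly your case split, building the labeled star tree when at most one independent set has size at least two and otherwise the caterpillar with $1$-labeled inner edges $e_i=(v_i,v_{i+1})$, the classes $I_i$ attached by outer $0$-edges, and singletons attached to the root by outer $1$-edges, with correctness certified via Thm.~\ref{thm:LRT-rsFitch} and the running time read off from the $O(|V(G)|)$-size construction. Your added remark that the time bound requires working from the independent-set partition rather than the edge set is a fair observation the paper leaves implicit, but it does not change the argument.
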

  
  Fitch-least-resolved trees, however, are only of very limited use for the
  construction of relaxed scenarios $\scen=(T,S,\sigma,\mu,\tT,\tS)$ from an
  underlying Fitch graph.  First note that we would need to consider
  \emph{planted versions} of Fitch-least-resolved trees, i.e.,
  Fitch-least-resolved trees to which a planted root is added, since
  otherwise, such trees cannot be part of an explaining scenario, which is
  defined in terms of planted trees.  Even though $(G,\sigma)$ is an
  rs-Fitch graph, Example~\ref{ex:FLRT-noScen} shows that it is possible that
  there is no relaxed scenario $\scen=(T,S,\sigma,\mu,\tT,\tS)$ with
  HGT-labeling $\lambda_{\scen}$ such that
  $(T,\lambda) = (T,\lambda_{\scen})$ for the planted version
  $(T,\lambda)$ of \emph{any} of its Fitch-least-resolved trees.
  
  \begin{figure}[t]
    \begin{center}
      \includegraphics[width=0.85\textwidth]{./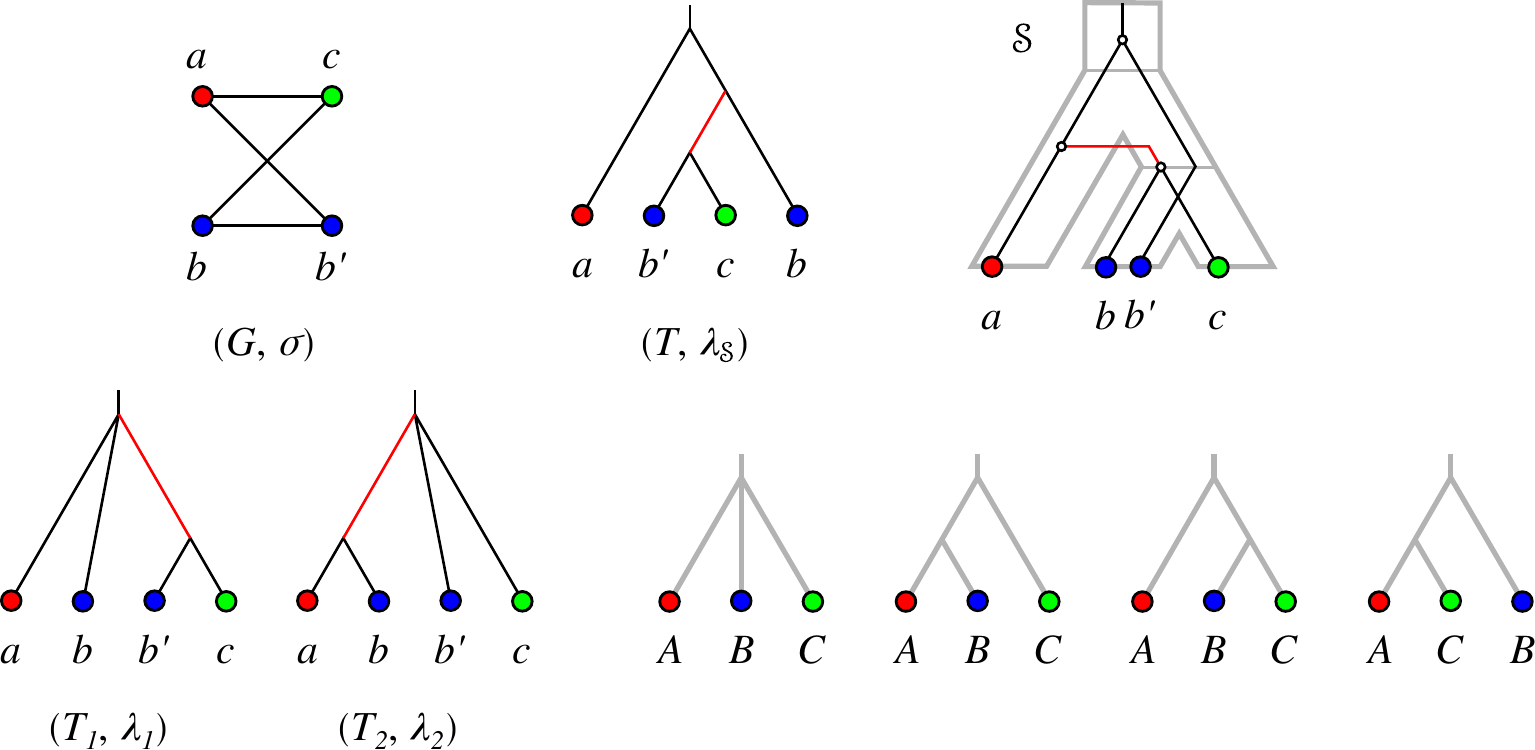}
    \end{center}
    \caption{An rs-Fitch graph $(G,\sigma)$ and a possible  relaxed 
      scenario $\scen=(T,S,\sigma,\mu,\tT,\tS)$ with $G =
      \gfitch(T,\lambda_{\scen})$. For the planted versions 
      $(T_1,\lambda_1)$ and $(T_2,\lambda_2)$ of the Fitch-least-resolved trees
      of $(G,\sigma)$ there is no
      relaxed  scenario $\scen$ such that $(T_i,\lambda_i) = 
      (T_i,\lambda_{\scen})$,
      $i\in \{1,2\}$. Red edges indicate 1-labeled (i.e., transfer)
      edges. See Example~\ref{ex:FLRT-noScen} for further details.}
    \label{fig:FLRT-noScen}
  \end{figure}
  
  \begin{xmpl}\label{ex:FLRT-noScen}
    Consider the rs-Fitch graph $(G,\sigma)$ with $V(G)=\{a,b,b',c\}$,
    $E(G)=\{ab',ac,bb',bc\}$ and surjective coloring $\sigma$ such that
    $\sigma(a)=A$, $\sigma(b)=\sigma(b')=B$, $\sigma(c)=C$ and $A,B,C$ are
    pairwise distinct. The rs-Fitch graph $(G,\sigma)$, a Fitch tree
    $(T,\lambda)$ and  relaxed  scenario $\scen$ with
    $(T,\lambda) = (T,\lambda_{\scen})$ as well as the planted versions 
    $(T_1,\lambda_1)$ and $(T_2,\lambda_2)$ of its two
    Fitch-least-resolved trees are
    shown in Fig.~\ref{fig:FLRT-noScen}.
    
    Fitch-least-resolved trees for $(G,\sigma)$ must contain an inner 1-edge,
    since $G$ has two independent sets of size two and by
    Thm.~\ref{thm:LRT-rsFitch}.  Thus, it is easy to verify that
    there are no other
    Fitch-least-resolved trees for $(G,\sigma)$.
    
    By Lemma~\ref{lem:independent-lca}, we obtain
    $\lca_S(A,B) \preceq_S \mu(\lca_{T_i}(a,b))$ and
    $\lca_S(B,C) \preceq_S \mu(\lca_{T_i}(b',c))$, $i\in\{1,2\}$, for both
    (planted versions of the)
    Fitch-least-resolved trees.  However, for all of the possible species
    trees on three leaves $A,B,C$, this implies that the images
    $\mu(\lca_{T_i}(a,b))$ and $\mu(\lca_{T_i}(b',c))$ are the single inner
    edge or the edge $(0_T,\rho_T)$ in $S$. Therefore, $\mu(\lca_{T_i}(a,b))$
    and $\mu(\lca_{T_i}(b',c))$ are always comparable in $S$. Hence, for all
    possible relaxed scenarios $\scen$, we have $\lambda_{\scen}(e)=0$ for
    the single inner edge $e$, whereas $\lambda_i(e)=1$ in $T_i$,
    $i\in \{1,2\}$.  This implies that there is no relaxed scenario $\scen$
    with $(T_i,\lambda_i) = (T_i,\lambda_{\scen})$, $i\in \{1,2\}$.
  \end{xmpl}		
  
  \newcommand{\DX}{\displaystyle}
  
  \section{Editing Problems}
  \label{app:edit}

  \subsection{Editing Colored Graphs to LDT Graphs and Fitch Graphs}
  
  We consider the following two edge modification problems for completion, 
  deletion, and editing.
  
  \begin{problem}[\PROBLEM{LDT-Graph-Modification (LDT-M)}]\ \\
    \begin{tabular}{ll}
      \emph{Input:}    & A colored graph $(G =(V,E),\sigma)$
      and an integer $k$.\\
      \emph{Question:} & Is there a subset $F\subseteq E$ such that $|F|\leq
      k$ and $(G'=(V,E\star F),\sigma)$  \\ &is an LDT graph  
      where $\star\in \{\setminus, \cup, \Delta\}$?
    \end{tabular}
  \end{problem}
  
  \begin{problem}[\PROBLEM{rs-Fitch Graph-Completion/Editing (rsF-D/E)}]\ \\
    \begin{tabular}{ll}
      \emph{Input:}    & A colored graph $(G =(V,E),\sigma)$
      and an integer $k$.\\
      \emph{Question:} & Is there a subset $F\subseteq E$ such that $|F|\leq
      k$ and $(G'=(V,E\star F),\sigma)$  \\ &is an rs-Fitch 
      graph  where $\star\in \{\setminus, \cup, \Delta\}$?
    \end{tabular}
  \end{problem}
  NP-completeness of \PROBLEM{LDT-M} be shown by reduction from 
  \begin{problem}[\PROBLEM{Maximum Rooted Triple Compatibility
      (MaxRTC)}]\ \\
    \begin{tabular}{ll}
      \emph{Input:}    & A set of (rooted) triples $\mathscr{R}$
      and an integer $k$.\\
      \emph{Question:} & Is there a compatible subset $\mathscr{R}^*\subseteq 
      \mathscr{R}$ such that $|\mathscr{R}^*|\geq |\mathscr{R}|-k$?
    \end{tabular}
  \end{problem}
  
  \begin{theorem}{\cite[Thm.~1]{Jansson:01}}
    \PROBLEM{MaxRTC} is NP-complete.
  \end{theorem}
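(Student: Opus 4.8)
The plan is to establish that \PROBLEM{MaxRTC} lies in NP and is NP-hard. Membership in NP is immediate: a certificate is a subset $\mathscr{R}^*\subseteq\mathscr{R}$, its cardinality is checked directly, and its compatibility is decided in polynomial time by running \texttt{BUILD} on $(\mathscr{R}^*, L(\mathscr{R}^*))$, whose correctness is guaranteed by the Aho-graph characterization in Prop.~\ref{prop:ahograph}. Since \texttt{BUILD} itself either returns a displaying tree or reports incompatibility, the verifier needs no further guessing, so a yes-certificate is validated in polynomial time and \PROBLEM{MaxRTC}$\,\in\,$NP.

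For NP-hardness I would give a polynomial-time reduction from a standard NP-complete problem such as \PROBLEM{3-SAT}. The structural fact to exploit is that on any fixed three leaves $\{x,y,z\}$ a tree displays \emph{at most one} of the resolved triples $xy|z$, $xz|y$, $yz|x$; placing two of them into $\mathscr{R}$ therefore forces at least one deletion. This yields a mechanism for encoding binary choices: for each variable I would introduce a leaf gadget together with a pair (or small family) of mutually exclusive triples, the surviving member recording the truth value, arranged so that any budget-respecting compatible subset must commit to exactly one value per variable. For each clause I would add a small family of triples that is jointly incompatible precisely when all its literals are falsified, so that satisfying a clause saves one deletion relative to falsifying it. Taking the deletion budget $k$ to be the number of forced per-variable deletions plus the number of clauses then makes the instance a yes-instance of \PROBLEM{MaxRTC} if and only if the formula is satisfiable.

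The delicate part, and the main obstacle, is controlling \emph{global} incompatibility. In contrast to a purely local constraint system, triple compatibility is a recursive property: by Prop.~\ref{prop:ahograph} the Aho graph must be disconnected at every level of the \texttt{BUILD} recursion, so conflicts can be induced by constellations of triples that are pairwise compatible. One must therefore engineer the gadgets and their shared leaves so that the only obstructions to compatibility are the intended conflicts, with no spurious cross-gadget incompatibilities, and conversely so that every satisfying assignment admits a concrete globally consistent tree realizing the claimed number of retained triples. Establishing both directions — that an optimal set of deletions can always be localized to the intended conflicts, and that a satisfying assignment yields an explicit witnessing tree — is where essentially all of the technical effort resides, and it is precisely this construction that is carried out in \cite{Jansson:01}.
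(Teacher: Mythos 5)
You should first note what the paper actually does with this statement: it is not proved there at all, but imported verbatim from \cite{Jansson:01}, so there is no in-paper argument to compare against. Your NP-membership paragraph is correct and complete on its own: a candidate subset $\mathscr{R}^*\subseteq\mathscr{R}$ serves as a certificate, and its compatibility is decidable in polynomial time by running \texttt{BUILD}, whose correctness rests on Prop.~\ref{prop:ahograph}. That half needs nothing more.

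The hardness half, however, is a plan rather than a proof, and the gap is exactly the part you flag yourself. You never construct the variable or clause gadgets, and neither correctness direction is argued: (a) that every satisfying assignment yields a tree displaying at least $|\mathscr{R}|-k$ of the triples, and (b) that every compatible subset of that cardinality can be decoded into a satisfying assignment. The difficulty is not cosmetic. The one structural fact you do establish --- that a tree displays at most one of $xy|z$, $xz|y$, $yz|x$ --- only forces deletions among triples on the \emph{same} three leaves, whereas a SAT reduction needs conflicts that propagate \emph{between} gadgets through shared leaves; as you yourself observe, such conflicts are governed by the recursive disconnection condition of Prop.~\ref{prop:ahograph}, under which pairwise-compatible triple sets can still be jointly incompatible. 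Without exhibiting concrete gadgets and proving that an optimal deletion set can always be localized to the intended conflicts, the budget accounting (``one forced deletion per variable plus one per falsified clause'') is unsupported, and deferring this construction to \cite{Jansson:01} concedes the theorem rather than proving it. So as a blind attempt the proposal establishes membership in NP but leaves NP-hardness unproven; its essential technical content coincides with the very citation the paper itself relies on, which is consistent with the paper but does not constitute an independent proof.
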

  
  \begin{theorem}
    \PROBLEM{LDT-M} is NP-complete. 
    \label{thm:LDT-M-NP}
  \end{theorem}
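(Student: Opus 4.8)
The plan is to show membership in NP and then hardness by reduction from \PROBLEM{MaxRTC}, which is NP-complete by the cited result of \citet{Jansson:01}. Membership is immediate: an edit set $F$ with $|F|\le k$ is a polynomial-size certificate, and by Corollary~\ref{cor:LDTpoly} one can decide in polynomial time whether $(G'=(V,E\star F),\sigma)$ is an LDT graph. For hardness I would, given an instance $(\mathscr{R},k)$ of \PROBLEM{MaxRTC} with $\mathscr{R}$ a triple set on a leaf set $M$, build a colored graph $(G_{\mathscr{R}},\sigma)$ with $\sigma\colon V(G_{\mathscr{R}})\to M$ by introducing, for each triple $AB|C\in\mathscr{R}$, a private ``gadget'' on three fresh vertices forming an induced path (a $P_3$) whose central vertex is colored $C$ and whose two endpoints are colored $A$ and $B$; the whole graph is the disjoint union of these gadgets. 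First I would check that this construction is faithful: $G_{\mathscr{R}}$ is a disjoint union of $P_3$'s, hence a cograph, and it is properly colored since each gadget uses the three pairwise-distinct colors of its triple. Moreover, because the center of any witnessing $P_3$ must have two neighbours, every induced $P_3$ lies inside a single gadget, so $\Sri(G_{\mathscr{R}},\sigma)=\mathscr{R}$. By Theorem~\ref{thm:characterization}, $(G_{\mathscr{R}},\sigma)$ is therefore an LDT graph exactly when $\mathscr{R}$ is compatible, which is what makes the modification problem track triple compatibility.

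For the forward direction I would take a compatible subset $\mathscr{R}^*\subseteq\mathscr{R}$ with $|\mathscr{R}\setminus\mathscr{R}^*|\le k$ and destroy, for each removed triple, the induced $P_3$ of its gadget by a single edit: deleting one of its two edges (for the deletion and editing versions) or inserting the missing endpoint-endpoint edge to turn the gadget into a triangle (for the completion version). In every case the resulting graph is still a disjoint union of $P_3$'s, triangles, and copies of $K_2+K_1$, hence a properly colored cograph, and its color-triple set is exactly $\mathscr{R}^*$; by Theorem~\ref{thm:characterization} it is an LDT graph obtained with at most $k$ edits. This single construction handles all three operations $\star\in\{\setminus,\cup,\Delta\}$ simultaneously.

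The crux is the backward direction, and the step I expect to be the main obstacle is ensuring that the number of edits lower-bounds the number of triples that must be discarded, uniformly and without losing a spurious constant factor (a naive count charges each edit to the two gadgets incident to its endpoints, giving only $|\mathscr{R}|-2k$). Here the key observation I would isolate is a robustness property of the encoding: for the gadget of $AB|C$, as long as its internal $P_3$ survives in $G'=(V,E\star F)$ (its two edges remain present and its endpoints stay non-adjacent), that $P_3$ witnesses $AB|C\in\Sri(G',\sigma)$, and this witness is unaffected by any edges inserted between gadgets, since such cross edges never alter the adjacencies among the gadget's own three vertices. Consequently, if $(G',\sigma)$ is an LDT graph, so that $\Sri(G',\sigma)$ is compatible by Theorem~\ref{thm:characterization}, then every triple of $\mathscr{R}$ \emph{absent} from $\Sri(G',\sigma)$ must have had its gadget's $P_3$ broken, which forces an edit with both endpoints inside that gadget. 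As distinct gadgets are vertex-disjoint, these internal edits are charged to distinct gadgets, so there are at most $|F|\le k$ of them. Hence at least $|\mathscr{R}|-k$ triples of $\mathscr{R}$ remain in $\Sri(G',\sigma)$ and, being a subset of a compatible set, form a compatible subset of the required size. This establishes the equivalence of the two instances for the deletion, completion, and editing variants at once; the robustness observation is precisely what prevents a single cross-gadget edit from being counted against two different triples, keeping the reduction tight.
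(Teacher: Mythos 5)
Your proposal is correct and follows essentially the same route as the paper's proof: NP membership via the polynomial-time recognition of LDT graphs (Cor.~\ref{cor:LDTpoly}), and hardness by the identical reduction from \PROBLEM{MaxRTC} using vertex-disjoint $P_3$ gadgets with $\Sri(G_{\mathscr{R}},\sigma)=\mathscr{R}$, one edit per discarded triple in the forward direction, and a count of modified gadgets in the backward direction. Your explicit ``robustness'' observation---that cross-gadget insertions never alter a gadget's internal adjacencies, so each edit breaks at most one witnessing $P_3$---is exactly what the paper's terser statement that at most $k$ of the vertex-disjoint $P_3$'s are modified relies on.
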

  \begin{proof} 
    Since LDT graphs can be recognized in polynomial time (cf.\
    Cor.~\ref{cor:LDTpoly}), a given solution can be verified in polynomial
    time. Thus, \PROBLEM{LDT-M} is contained in NP.
    
    We now show NP-hardness by reduction from \PROBLEM{MaxRTC}.  Let
    $(\mathscr{R},k)$ be an instance of this problem, i.e., $\mathscr{R}$ is
    a set of triples and $k$ is a non-negative integer.  We construct a
    colored graph $(G_\mathscr{R}=(L,E),\sigma)$ as follows: For each triple
    $r_i = xy|z\in \mathscr{R}$, we add three vertices $x_i,y_i,z_i$, two
    edges $x_iz_i$ and $y_iz_i$, and put $\sigma(x_i) = x$, $\sigma(y_i) = y$
    and $\sigma(z_i) = z$.  Hence, $(G_\mathscr{R},\sigma)$ is properly
    colored and the disjoint union of paths on three vertices $P_3$.  In
    particular, therefore, $(G_\mathscr{R},\sigma)$ does not contain an
    induced $P_4$, and is therefore a properly colored cograph (cf.\
    Prop.~\ref{prop:cograph}).  By definition and construction, we have
    $\mathscr{R} = \Sri(G_\mathscr{R},\sigma)$.
    
    First assume that \PROBLEM{MaxRTC} with input $(\mathscr{R}, k)$ has a
    yes-answer. In this case let $\mathscr{R}^*\subseteq \mathscr{R}$ be a
    compatible subset such that $|\mathscr{R}^*| \geq |\mathscr{R}| - k$.
    For each of the triples $r_i= xy|z\in \mathscr{R}\setminus\mathscr{R}^*$,
    we add the edge $x_iy_i$ to $G_\mathscr{R}$ or remove the edge $x_iz_i$
    from $G_\mathscr{R}$ for \PROBLEM{LDT-E/C} and \PROBLEM{LDT-D},
    respectively, to obtain the graph $G^*$.  In both cases, we eliminate the
    corresponding triple $xy|z$ from $\Sri(G^*,\sigma)$.  By construction,
    therefore, we observe that $\Sri(G^*,\sigma) = \mathscr{R}^*$ is
    compatible.  Moreover, since we have never added edges between distinct
    $P_3$s, all connected components of $G^*$ are of size at most
    three. Therefore, $G^*$ does not contain an induced $P_4$, and thus
    remains a cograph. By Thm.~\ref{thm:characterization}, the latter
    arguments imply that $(G^*,\sigma)$ is an LDT graph. Since $(G^*,\sigma)$
    was obtained from $(G_\mathscr{R},\sigma)$ by using
    $|\mathscr{R}\setminus\mathscr{R}^*| \leq k$ edge modifications, we
    conclude that \PROBLEM{LDT-M} with input $(G_\mathscr{R},\sigma, k)$ has
    a yes-answer.
    
    For the converse, suppose that \PROBLEM{LDT-M} with input
    $(G_\mathscr{R},\sigma, k)$ has a yes-answer with a solution
    $(G^* = (L,E\star F),\sigma)$, i.e., $(G^*,\sigma)$ is an LDT graph and
    $|F|\le k$.  By Thm.~\ref{thm:characterization}, $\Sri(G^*,\sigma)$ is
    compatible.  Let $\mathscr{R}^*$ be the subset of
    $\mathscr{R} = \Sri(G_\mathscr{R},\sigma)$ containing all triples of
    $\mathscr{R}$ for which the corresponding induced $P_3$ in
    $G_\mathscr{R}$ remains unmodified and thus, is still an induced $P_3$ in
    $G^*$.  By construction, we have
    $\mathscr{R}^*\subseteq \Sri(G^*,\sigma)$.  Hence, $\mathscr{R}^*$ is
    compatible.  Moreover, since $|F|\le k$, at most $k$ of the
    vertex-disjoint $P_3$s have been modified. Therefore, we conclude that
    $|\mathscr{R}^*|\ge |\mathscr{R}|-k$.
    
    In summary, \PROBLEM{LDT-M} is NP-hard.  
  \end{proof}
  
  \begin{theorem}
    \PROBLEM{rsF-C} and \PROBLEM{rsF-E} are NP-complete.
    \label{thm:rsF-M-NP}
  \end{theorem}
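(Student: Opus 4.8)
The plan is to establish membership in NP and then prove NP-hardness of both variants by reduction from the corresponding cluster graph modification problems, exploiting the complement relationship between complete multipartite graphs and disjoint unions of cliques. For membership, I would invoke Cor.~\ref{cor:auxfitch1}: rs-Fitch graphs are recognizable in polynomial time, so given a candidate edge set $F$ with $|F|\le k$ one verifies in polynomial time that $(G'=(V,E\star F),\sigma)$ is an rs-Fitch graph. Hence both \PROBLEM{rsF-C} and \PROBLEM{rsF-E} lie in NP.

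The heart of the argument is a device that renders the coloring condition of Thm.~\ref{thm:char-rsFitch}(ii) vacuous. By Obs.~\ref{obs:Fitch}, every rs-Fitch graph is complete multipartite, and by Cor.~\ref{cor:surj-rsF}, every complete multipartite graph equipped with a \emph{non-surjective} coloring is an rs-Fitch graph. Thus, if I fix a coloring $\sigma\colon V\to M$ that is not surjective (for instance, giving each vertex its own color and adding one further, unused color to $M$, so that $|M|\ge 2$), then — since edge modifications leave $\sigma$ unchanged — a modified graph $(G',\sigma)$ is an rs-Fitch graph \emph{if and only if} $G'$ is complete multipartite. This collapses the rs-Fitch requirement to the purely structural property of being complete multipartite.

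The remaining step is complementation. A graph is complete multipartite if and only if its complement is a disjoint union of complete graphs (a cluster graph), and one checks that $\overline{G\cup F}=\overline{G}\setminus F$ and $\overline{G\Delta F}=\overline{G}\Delta F$. Given an instance $(H,k)$ of \PROBLEM{Cluster Deletion} (resp.\ \PROBLEM{Cluster Editing}), both of which are NP-complete \cite{Shamir:04}, I set $G\coloneqq\overline{H}$ and endow $V(G)$ with a non-surjective coloring $\sigma$ as above; this construction is clearly polynomial. Completing $G$ to a complete multipartite graph by adding a set $F$ of at most $k$ edges then corresponds exactly to deleting those $k$ edges from $H$ to obtain a cluster graph, so \PROBLEM{rsF-C} on $(G,\sigma,k)$ is a yes-instance iff \PROBLEM{Cluster Deletion} on $(H,k)$ is. Likewise, editing $G$ with an edge set $F$ of size at most $k$ corresponds, via $\overline{G\Delta F}=\overline{G}\Delta F$, to editing $H=\overline{G}$ with the same set $F$, so \PROBLEM{rsF-E} on $(G,\sigma,k)$ is equivalent to \PROBLEM{Cluster Editing} on $(H,k)$.

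I expect the main obstacle to be precisely the coloring condition (ii) of Thm.~\ref{thm:char-rsFitch}: without controlling it, a complete multipartite completion or editing need not be an rs-Fitch graph — indeed this is exactly the phenomenon that leaves \PROBLEM{rsF-D} open — and the reductions would fail to be faithful. The non-surjectivity trick supplied by Cor.~\ref{cor:surj-rsF} is what neutralizes this difficulty and lets the complement correspondence go through cleanly; once it is in place, the equivalences above are routine and yield NP-hardness, which together with membership in NP gives NP-completeness of both \PROBLEM{rsF-C} and \PROBLEM{rsF-E}.
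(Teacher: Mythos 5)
Your proof is correct and takes essentially the same route as the paper: both establish NP membership via polynomial-time recognizability and prove hardness by reduction from \PROBLEM{Cluster Deletion} and \PROBLEM{Cluster Editing} through complementation, after fixing a coloring that makes condition (ii) of Thm.~\ref{thm:char-rsFitch} automatic so that the rs-Fitch requirement collapses to being complete multipartite. The only cosmetic difference is how the coloring condition is discharged: the paper colors all vertices distinctly and verifies directly that $\auxfitch(\sigma,\mathcal{I}\setminus\{I'\})$ is disconnected, whereas you additionally insert an unused color and invoke Cor.~\ref{cor:surj-rsF}, an equally valid (and arguably tidier) shortcut.
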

  \begin{proof}
    Since rs-Fitch graphs can be recognized in polynomial time, 
    a given solution can be verified as being a yes- or no-answer
    in polynomial time. Thus, \PROBLEM{rsF-C/E}$\in NP$.
    
    Consider an arbitrary graph $G$ and an integer $k$. We construct an
    instance $(G,\sigma,k)$ of \PROBLEM{rsF-C/E} by coloring all vertices
    distinctly.  Then condition (ii) in Thm.~\ref{thm:char-rsFitch} is always
    satisfied.  To see this, we note that for $k>1$ there are no edges
    between colors in the auxiliary graph $\auxfitch(\sigma,\mathcal{I})$
    such that their corresponding unique vertices are in distinct independent
    sets $I, I'\in \mathcal{I}$.  The problem therefore reduces to
    completion/editing of $(G,\sigma)$ to a complete multipartite graph,
    which is equivalent to a complementary deletion/editing of the complement
    of $(G,k)$ to a disjoint union of cliques, i.e., a cluster graph. Both
    \PROBLEM{Cluster Deletion} and \PROBLEM{Cluster Editing} are NP-hard
    \cite{Shamir:04}.  
  \end{proof}
  
  Although \PROBLEM{Cluster Completion} is polynomial (it is solved by
  computing the transitive closure), \PROBLEM{rsF-D} remains open: Consider
  a colored complete multipartite graph $(G,\sigma)$ that is not an
  rs-Fitch graph. Then solving \PROBLEM{Cluster Completion} on the
  complement returns $(G,\sigma)$, which by construction is not a solution
  to \PROBLEM{rsF-D}.
  
  \subsection{Editing LDT Graphs to Fitch Graphs}

  \begin{lemma}
    There is a linear-time algorithm to solve Problem \ref{problem:Fcomp}
    for every cograph $G$.
    \label{lem:editing}
  \end{lemma}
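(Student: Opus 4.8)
The plan is to reduce Problem~\ref{problem:Fcomp} to \textsc{Cluster Deletion} on the complement graph $\overline{G}$ and to exploit that the latter problem is solvable greedily on cographs. The starting point is the elementary observation that a graph $H=(V,E_H)$ is complete multipartite if and only if its complement $\overline{H}$ is a disjoint union of complete graphs (a \emph{cluster graph}). Indeed, the parts $I_1,\dots,I_k$ of a complete multipartite $H$ become exactly the connected components of $\overline{H}$, each of which is a clique, since non-adjacency inside a part turns into adjacency in $\overline{H}$ while adjacency between distinct parts turns into non-adjacency; conversely the components of any cluster graph are precisely the parts of the complementary complete multipartite graph. As already remarked after the statement of Problem~\ref{problem:Fcomp}, the coloring $\sigma$ plays no role here and can be dropped throughout.

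Next I would make the duality between completion and deletion precise. A candidate set $Q$ of ``possible edges'' is by definition a set of non-edges of $G$, hence a set of edges of $\overline{G}$. Adding $Q$ to $G$ corresponds exactly to removing $Q$ from $\overline{G}$, and by the previous paragraph $((V,E(G)\cup Q),\sigma)$ is complete multipartite if and only if $(V,E(\overline{G})\setminus Q)$ is a cluster graph. Since the quantity $|Q|$ to be minimized is identical in both views, a minimum-cardinality solution of Problem~\ref{problem:Fcomp} is precisely a minimum-cardinality edge set whose deletion turns $\overline{G}$ into a cluster graph, i.e.\ an optimal solution of \textsc{Cluster Deletion} on the instance $\overline{G}$.

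It then remains to solve \textsc{Cluster Deletion} on $\overline{G}$ efficiently. Since $G$ is a cograph and cographs are closed under complementation (by Prop.~\ref{prop:cograph}, $G$ is $P_4$-free if and only if $\overline{G}$ is, as $P_4$ is self-complementary), the graph $\overline{G}$ is again a cograph; moreover its discriminating cotree is obtained from that of $G$ simply by exchanging the two inner labels $0$ and $1$. By \citet{Gao:13}, \textsc{Cluster Deletion} admits an optimal greedy solution on cographs that can be read off this cotree: at a disjoint-union node one takes the union of the clique partitions returned for the children, while at a join node one combines the children's cliques greedily (largest with largest), where \cite[Thm.~1]{Gao:13} guarantees that the resulting sequence of clique sizes -- and hence the number of retained edges -- is optimal. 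Reading off the retained edges yields an optimal clique partition of $\overline{G}$, whose complementary edge set is the desired minimum completion $Q$.

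For the running time, the cotree of $G$, and thus of $\overline{G}$, can be computed in linear time \cite{Corneil:81a}, and the greedy recursion visits each cotree node once while manipulating only the multisets of clique sizes produced by its children. The main obstacle is therefore not the reduction, which is immediate, but realizing the greedy merges at the join nodes within a single linear-time pass rather than recomputing partitions from scratch; I would address this by carrying at each node only the sorted size-sequence of its clique partition together with pointers to the underlying vertex sets, so that the combined cost of all merges remains linear, yielding an overall linear-time algorithm.
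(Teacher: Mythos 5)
Your proposal follows essentially the same route as the paper's proof: complementing $G$ (a cograph, hence $\overline{G}$ is also a cograph) turns Problem~\ref{problem:Fcomp} into \textsc{Cluster Deletion}, which is solved optimally by the greedy maximum clique partition of \citet{Gao:13}, with the cotree providing the efficient implementation. Note only that your closing linear-time claim for the merge steps is asserted rather than proved -- the paper's own detailed cotree recursion in the Technical Part is bounded by $\mathcal{O}(|V(G)|^2)$, with the linear-time statement resting, as in your argument, on the cited linear-time cotree and maximum-clique computations rather than on a full accounting of the merges.
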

  \begin{proof}
    Instead of inserting in the cograph $G$ the minimum number of edges
    necessary to reach a complete multipartite graph, we consider the
    equivalent problem of \emph{deleting} a minimal set $Q$ of edges from its
    complement $\overline{G}$, which is also a cograph, to obtain the
    complement of a complete multipartite graph, i.e., the disjoint union of
    complete graphs.  This problem is known as the \textsc{Cluster Deletion}
    problem \cite{Shamir:04}, which is known to have an polynomial-time
    solution for cographs \cite{Gao:13}: A greedy maximum clique partition of
    $G$ is obtained by recursively removing a maximum clique $K$ from $G$,
    see also \cite{Dessmark:07}. For cographs, the greedy maximum clique
    partitions are the solutions of the \textsc{Cluster Deletion} problem
    \cite[Thm.~1]{Gao:13}. The \textsc{Maximum Clique} problem on cographs
    can be solved in linear time using the co-tree of $G$ \cite{Corneil:81},
    which can also be obtained in linear time \cite{Corneil:81}.  
  \end{proof}
  
  An efficient algorithm to solve the \textsc{Cluster Deletion} problem for
  cographs can be devised by making use of the recursive construction of a
  cograph along its discriminating cotree $(T,t)$. For all $u\in V(T)$, we 
  have
  \begin{equation*}
    G[u] = \begin{cases}
      \DX\bigcupdot_{v\in\child(u)} G[v] & \text{ if } t(u)=0 \\
      \DX\bigjoin_{v\in\child(u)} G[v]   & \text{ if } t(u)=1 \\
      \DX (\{u\},\emptyset)             & \text{ if } u \text{ is a leaf } 
    \end{cases}
  \end{equation*}
  Denote by $\mathscr{P}(u)$ the optimal clique partition of the cograph
  implied by the subtree $T(u)$ of the discriminating cotree $(T,t)$. We
  think of $\mathscr{P}(u) := [Q_1(u),Q_2(u),\dots]$ as an ordered list, such
  that $|Q_i(u)|\ge |Q_j(u)|$ if $i<j$. It will be convenient to assume that
  the list contains an arbitrary number of empty sets acting as an identity
  element for the join and disjoint union operation. With this convention,
  the optimal clique partitions $\mathscr{P}(u)$ satisfy the recursion:
  \begin{equation*}
    \mathscr{P}(u) = \begin{cases}
      \DX\bigcup_{v\in\child(u)} \mathscr{P}(v) & \text{ if } t(u)=0 \\
      \DX\left[ \bigcup_{v\in\child(u)} Q_i(v) \quad \Big| \; i=1,2,\dots 
      \right]
      & \text{ if } t(u)=1 \\
      \DX[\{u\},\emptyset,\dots] & \text{ if } u \text{ is a leaf } 
    \end{cases}
  \end{equation*}
  In the first case, where $t(u)=0$, we assume that the union 
  operation to obtain $\mathscr{P}(u) = [Q_1(u),Q_2(u),\dots]$ maintains the 
  property $|Q_i(u)|\ge |Q_j(u)|$ if $i<j$. In an implementation, this can 
  e.g.\ 
  be achieved using $k$-way merging where $k=|\child(u)|$.
  
  To see that the recursion is correct, it suffices to recall that the greedy
  clique partition is optimal for cographs as input \cite{Gao:13} and to
  observe the following simple properties of cliques in cographs
  \cite{Corneil:81}: (i) a largest clique in a disjoint union of graphs is
  also a largest clique in any of its components. The optimal clique
  partition of a disjoint union of graphs is, therefore, the union of the
  optimal clique partitions of the constituent connected components. (ii)
  For a join of two or more graphs $G_i$, each maximum size clique $Q$ is the
  join of a maximum size clique of each constituent. The next largest clique
  disjoint from $Q=\bigjoin_i Q_i$ is, thus, the join of a largest cliques
  disjoint from $Q_i$ in each constituent graph $G_i$. Thus a greedy clique
  partition of $G$ is obtained by size ordering the clique partitions of
  $G_i$ and joining the $k$-largest cliques from each.
  
  The recursive construction of $\mathscr{P}(\rho_T)$ operates directly on
  the discriminating cotree $(T,t)$ of the cograph $G$.  For each node $u$,
  the effort is proportional to $|L(T(u))| \log(\deg(u))$ for the
  $\deg(u)$-wise merge sort step if $t(u)=0$ and proportional to $|L(T(u))|$ for
  the merging of the $k$-th largest clusters for $t(u)=1$. Using
  $\sum_u \deg(u)|L(T(u))|\le |L(T)|\sum_u \deg(u)\le |L(T)|2|E(T)|$ together
  with $|E(T)|=|V(T)|-1$ and $|V(T)|\leq 2 |L(T)|-1$ (cf.\ \cite[Lemma
  1]{Hellmuth:15a}), we obtain
  $\sum_u \deg(u)|L(T(u))| \in \mathcal{O}(|L(T)|^2) =
  \mathcal{O}(|V(G)|^2)$, that is, a quadratic upper bound on the running
  time.
  
\end{appendix}

\bibliographystyle{plainnat}
\bibliography{preprint2-Rbelow}

\end{document}